\newtheorem{theorem}{Theorem}[section]
\newtheorem{claim}[theorem]{Claim}
\newtheorem{corollary}[theorem]{Corollary}
\newtheorem{lemma}[theorem]{Lemma}
\newtheorem{definition}[theorem]{Definition}
\newtheorem{proposition}[theorem]{Proposition}
\newcommand{\mutualent}{\mathrm I}
\newcommand{\X}{\mathcal{X}}
\renewcommand{\hat}{\widehat}
\newcommand{\calD}{\mathcal{D}}
\newcommand{\calR}{\mathcal{R}}
\newcommand{\calA}{\sfM}
\newcommand{\Adv}{\mathrm{Adv}}
\newcommand{\zo}{\{-1,1\}}
\newcommand{\ve}{\varepsilon}
\newcommand{\epsl}{\epsilon}
\newcommand{\de}{\delta}
\newcommand{\lam}{\lambda}
\newcommand{\bbE}{\mathbb E}
\newcommand{\Var}{\mathbb Var}
\newcommand{\Ent}{\mathrm H}
\newcommand{\Ber}{\text{Ber}}
\newcommand{\sfM}{\mathsf{M}}
\newcommand{\Imi}{\mathsf{Im}}
\newcommand{\calG}{\mathcal{G}}
\newcommand{\B}{B}
\newcommand{\e}{a}
\newcommand{\Apr}{\mathsf{Apr}}
\newcommand{\imi}{im}
\newcommand{\apr}{apr}
\newcommand{\Deltasm}{\Delta}
\newcommand{\Deltalg}{\Gamma}
\newcommand{\Cb}{\zeta}
\newcommand{\I}{\mathrm I}
\newcommand{\In}{\mathcal I}
\newcommand{\Ra}{\mathcal R}
\newcommand{\one}{\mathbb 1}
\newcommand{\an}{o}
\newcommand{\sfO}{\mathsf{O}}
\newcommand{\Ee}{A}
\newcommand{\indep}{\perp\!\!\! \perp}
\newcommand{\tcoins}{t\text{-Coins Problem}}
\newcommand{\simtcoins}{\text{Simultaneous }t\text{-Coins Problem}}
\newcommand{\domin}{t}
\newcommand{\mic}{MIC}
\newcommand{\pos}{p}
\newcommand{\miccoin}{MIC_{cond}}
\newcommand{\warn}[1]{{\color{red}{#1}}}
\newcommand{\Jnote}[1]{[{\color{black}Jiapeng:}{ \color{red}#1}] }
\definecolor{mypink}{RGB}{219, 48, 122}
\title{A New Information Complexity Measure for Multi-pass Streaming with Applications}
\author{Mark Braverman\thanks{Princeton University. Email: \texttt{mbraverm@cs.princeton.edu}}
\and
Sumegha Garg\thanks{Rutgers University. Email: \texttt{sumegha.garg@rutgers.edu}}
\and
Qian Li\thanks{Shenzhen lnternational Center For
Industrial and Applied Mathematics,
Shenzhen Research Institute of Big
Data. Email: \texttt{liqian.ict@gmail.com}}
\and
Shuo Wang\thanks{Shanghai Jiao Tong University. Email: \texttt{s.wangg2002@gmail.com}}
\and
David P. Woodruff\thanks{Carnegie Mellon University. Email: \texttt{dwoodruf@andrew.cmu.edu}}
\and
Jiapeng Zhang\thanks{University of Southern California. Email: \texttt{jiapengz@usc.edu}}
}
\date{}
\begin{document}
\maketitle

\thispagestyle{empty}

\begin{abstract}
We introduce a new notion of information complexity for multi-pass streaming problems and use it to resolve several important questions in data streams:
\begin{enumerate}
\item In the coin problem, one sees a stream of $n$ i.i.d. uniform bits and one would like to compute the majority with constant advantage. We show that any constant pass algorithm must use $\Omega(\log n)$ bits of memory, significantly extending an earlier $\Omega(\log n)$ bit lower bound for single-pass algorithms of Braverman-Garg-Woodruff (FOCS, 2020). 
This also gives the first $\Omega(\log n)$ bit lower bound for the problem of approximating a counter up to a constant factor in worst-case turnstile streams for more than one pass.

\item In the needle problem, one either sees a stream of $n$ i.i.d. uniform samples from a domain $[t]$,
or there is a randomly chosen ``needle" $\alpha \in[\domin]$ for which each item independently is chosen to equal $\alpha$ with probability $p$, and is otherwise uniformly random in $[\domin]$. The problem of distinguishing these two cases is central to understanding the space complexity of the frequency moment estimation problem in random order streams. We show tight multi-pass space bounds for this problem for every $p < 1/\sqrt{n \log^3 n}$, resolving an open question of Lovett and Zhang (FOCS, 2023); even for $1$-pass our bounds are new. To show optimality, we improve both lower and upper bounds from existing results.
\end{enumerate}
Our information complexity framework significantly extends the toolkit for proving multi-pass streaming lower bounds, and we give a wide number of additional streaming applications of our lower bound techniques, including multi-pass lower bounds for $\ell_p$-norm estimation, $\ell_p$-point query and heavy hitters, and compressed sensing problems. 
\end{abstract}

\clearpage
\setcounter{page}{1}

\section{Introduction}\label{sec:intro}
Streaming problems with stochastic inputs have been popularly studied in the streaming community \cite{GM07,ToC16,Andoni08,CJP08,GM09,DBLP:conf/esa/CrouchMVW16,SODA2020randomordermatching,braverman2020coin,lovett2023streaming}, which have applications to diverse areas including learning theory \cite{DBLP:conf/focs/Raz16,Sharan19,diakonikolas2019communication,Colt22} and cryptography \cite{dinur2016memory,tcc-2018-28986,DBLP:conf/eurocrypt/JaegerT19,euro2019}. In this setting, one sees a stream of i.i.d. samples from some underlying distribution. As the samples are i.i.d., this is a special case of the well-studied random order streaming model. In this paper, we will consider streaming problems (with stochastic inputs) that are allowed multiple passes over their input. Surprisingly, even the most basic problems in data streams are not resolved in the stochastic setting. We discuss two such problems below.

\paragraph{Coin Problem.} If the stream is $X_1, X_2, \ldots, X_n$, with each $X_i$ being independently and uniformly drawn from $\{-1,1\}$, then the {\it coin problem} is to compute the sum of these bits up to additive error $O(\sqrt{n})$, which gives a non-trivial advantage for estimating the majority. This can be solved trivially in $O(\log n)$ bits of memory by storing the sum of input bits, and was recently shown to require $\Omega(\log n)$ bits of memory in \cite{braverman2020coin}. However, if we allow two or more passes over the stream, the only known lower bound is a trivial $\Omega(1)$ bits. 

Naturally, the coin problem is closely related to the fundamental question of counting the number of elements in a data stream --  maintain a counter $C$ in a stream under a sequence of updates of the form $C \leftarrow C + 1$ or $C \leftarrow C-1$, which is arguably the most basic question you could ask. More generally, one would like to approximate $C$ up to a constant multiplicative factor. One can solve this exactly using $\lceil \log_2 n \rceil$ bits of memory, where $n$ is the length of the stream.  This bound is tight for deterministic algorithms even if you allow a large poly$(n)$ approximation factor~\cite{ABJSSWZ22}. It is also tight for constant factor approximation randomized $1$-pass algorithms, via a reduction from the augmented indexing problem \cite{KNW10}. In fact, the lower bound of \cite{braverman2020coin} for the coin problem implies that the randomized $1$-pass lower bound holds even when the algorithm succeeds with high probability over an input of $i.i.d.$ coin flips. Despite our understanding of $1$-pass algorithms, we are not aware of an $\Omega(\log n)$ lower bound for this basic problem of approximate counting for more than one pass, even for worst-case streams, once two or more passes are allowed. For $k$-pass streaming algorithms, a weaker lower bound of $\Omega((\log n)^{1/k})$ can be derived using communication complexity lower bounds for the Greater-Than function\footnote{Briefly, given inputs $x,y\in [n^{0.1}]$ to the two-player CC problem for Greater-than function, Alice adds $x\cdot  n^{0.9}$ 1s to the stream and Bob adds $y\cdot n^{0.9}$ number of $-1$s. Determining the sign of $x-y$, or estimating $x-y$ up to additive error $n^{0.9}$ requires $\Omega((\log n)^{1/k})$ randomized $k$-round communication since it solves Greater-Than.}~\cite{miltersen1995data,viola2015communication}. 
There is work on approximate counting lower bounds of \cite{AHL16}, which analyzes a single linear sketch and therefore can only apply to single pass streaming algorithms. It also requires at least exponentially long streams in the value $n$ of the final count, despite the count being $O(n)$ in absolute value at any time during the stream.

\paragraph{Needle problem.} Here the goal is to distinguish between streams $X_1,\cdots,X_n$ sampled from two possible underlying distributions: let $\domin=\Omega(n)$,
\begin{itemize}
\item \textbf{Uniform distribution $ \boldsymbol{D_0}$:} each $X_i$ is picked independently and uniformly at random from the domain $[\domin]$, and
\item \textbf{Needle distribution $ \boldsymbol{D_1}$:}  the distribution first uniformly samples an element $\alpha$ from $[\domin]$ (we call it the \textit{needle}). Then, each item $X_i$ independently with probability $p$ equals $\alpha$, and otherwise is sampled uniformly from $[\domin]$. 
\end{itemize}
Lovett and Zhang \cite{lovett2023streaming} shows a lower bound of $\Omega \left (\frac{1}{p^2 n \log n} \right )$ bits for any constant number of passes to distinguish $\boldsymbol{D_0}$ from $\boldsymbol{D_1}$, and left it as an open problem to improve this bound. Using an algorithm for frequency moment estimation, the work of Braverman et al. \cite{braverman2014optimal} shows that for $p = o \left (\frac{1}{n^{2/3}} \right )$, there is a single-pass upper bound of $O \left (\frac{1}{p^2n} \right )$ bits. For larger $p$, the best known algorithm is to store the previous $O \left (\frac{1}{p^2n}  \right )$ items in a stream, using $O \left (\frac{\log n}{p^2n} \right )$ bits, and check for a collision. Thus, for every $p$ there is a gap of at least $\log n$ in known upper and lower bounds, and for the important case of $p> \Omega \left (\frac{1}{n^{2/3}} \right )$, the gap is $\Theta(\log^2 n)$. We note that the work of Lovett and Zhang also requires $t = \Omega(n^2)$ in its lower bounds\footnote{Here we choose the domaint to be $[t]$ and $n$ the number of samples to be consistent with our notation for the coin problem. Unfortunately, this is exactly the opposite notation used in \cite{lovett2023streaming}}, which limits its applications to frequency moment estimation. 

The needle and coin problems are related to each other if $p = \Theta \left (\frac{1}{\sqrt{n}} \right )$ and when the algorithm has access to a random string that is not counted towards the space complexity\footnote{This is referred to as the public coin model in communication complexity, and we may naturally view this problem as an $n$-player communication game.}.
One can randomly hash each universe element in a stream in the needle problem to $\{-1,1\}$ -- under the needle distribution $ \boldsymbol{D_1}$, the absolute value of the sum of bits is likely to be an additive $\Theta(\sqrt{n})$ larger than in the uniform distribution $ \boldsymbol{D_0}$, and thus the coin problem is at least as hard as the needle problem. However, the needle problem for $p = \Theta \left (\frac{1}{\sqrt{n}} \right )$ could be strictly easier than the coin problem.  

We stress that the coin and needle problems are arguably two of the most fundamental problems in data streams. Indeed, a vast body of work has considered estimating the $q$-th frequency moment $F_q = \sum_{i=1}^\domin |x_i|^q$, for an underlying $\domin$-dimensional vector $x$ undergoing positive and negative updates to its coordinates. If $\domin = 1$, this problem is at least as hard as the coin problem. The lower bound of \cite{braverman2020coin} thus gave the first $\Omega(\log n)$ lower bound for single pass $F_2$-estimation in the bounded deletion data stream model \cite{JW18}, even when one does not charge the streaming algorithm for storing its randomness. As the lower bound of \cite{braverman2020coin} was actually an information cost lower bound, it gave rise to the first direct sum theorem for solving multiple copies of the $F_2$-estimation problem, showing that solving $r$ copies each with constant probability requires $\Omega(r \log n)$ bits in this data stream model. Lower bounds for the coin problem were shown to imply additional lower bounds in random order and bounded deletion models for problems such as point query, and heavy hitters; see  \cite{braverman2020coin} for details. For $q > 2$, if we set $p = \Theta \left (\frac{1}{n^{1-1/q}} \right )$ in the needle problem, then it is not hard to see that $F_q$ differs by a constant factor for distributions {\bf $\boldsymbol{D}_0$} and {\bf $\boldsymbol{D}_1$} with large probability. In this case, the lower bound of \cite{lovett2023streaming} gives an $\Omega \left (\frac{n^{1-2/q}}{\log n} \right )$ lower bound for frequency moment estimation in the random order insertion-only data stream model for any constant number of passes, provided $t = \Omega(n^2)$. There is also a long sequence of prior work on this problem \cite{Andoni08,ToC16, DBLP:conf/esa/CrouchMVW16, DBLP:journals/algorithmica/McGregorPTW16}, which obtains polynomially worse lower bounds (though does not require $t = \Omega(n^2)$). The single-pass arbitrary order stream $O \left (n^{1-2/q} \right )$ upper bound of \cite{braverman2014optimal} for $q > 3$ matches this up to a logarithmic factor, and a central question in data streams is to remove this logarithmic factor, as well as the requirement that $t = \Omega(n^2)$. 

\subsection{Our Contributions}
\label{sec: contribution}
We give a new multi-pass notion of information complexity that gives a unified approach to obtain lower bounds for the coin and the needle problem for any number $k$ of passes. We note that the measure of information we use is a generalization of the notion of information complexity for $k = 1$ pass given in \cite{braverman2020coin}. Namely, we define the $k$-pass information complexity notion by 
\begin{align*}
\mic(\sfM,\mu)&:=\sum_{i=1}^{k}\sum_{j=1}^n\sum_{\ell=1}^i \mutualent\left(\sfM_{(i,j)};X_{\ell}\mid \sfM_{(\leq i,\ell-1)}, \sfM_{(\leq i-1,j)}\right)\\ &+\sum_{i=1}^{k}\sum_{j=1}^n\sum_{\ell=j+1}^n \mutualent\left(\sfM_{(i,j)};X_{\ell}\mid \sfM_{(\leq i-1,\ell-1)}, \sfM_{(\leq i-1,j)}\right),
\end{align*}
where $(X_1,\cdots,X_n)\sim \mu$ and $\sfM_{(i,j)}$ represents the $j$-th memory state in the $i$-th pass. We will set $\mu$ to be the uniform distribution over $\zo^n$ in the coin problem and the uniform distribution $\boldsymbol{D}_0$ in the needle problem. When $\mu$ is clear from the context, we will drop it from the notation and write $\mic(\sfM)$. The primary challenge in establishing lower bounds for multi-pass streaming algorithms arises from the fact that the streaming data loses its independence when multiple passes are used. To mitigate this, the idea of $\mic(\sfM)$ is to capture some residual 
 independence by carefully fixing some memory states in the previous pass (see Section \ref{sec:upper}). To make this notion useful, we show that $\mic(\sfM)$ is upper bounded by $2ksn$ as the following lemma (see proof in Section \ref{sec:upper}): 
\begin{restatable}[]{lemma}{multiupper}
\label{lem:Multi-passupperbound}
Assuming that $(X_1,X_2,\cdots,X_n)$ are drawn from a product distribution $\mu$. Then, for any $k$-pass streaming algorithm $\calA$ with memory size $s$ running on input stream $X_1,\cdots,X_n$, it holds that: 
\[
\mic(\calA,\mu)\leq 2 k sn.
\]
\end{restatable}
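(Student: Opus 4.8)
The plan is to bound each of the two double summations defining $\mic(\calA,\mu)$ by $ksn$. Fix a pass $i\in[k]$ and a memory index $j\in[n]$; I will show that the inner sum over $\ell$ coming from the first summation, namely $\sum_{\ell=1}^{j}\mutualent\big(\sfM_{(i,j)};X_\ell\mid \sfM_{(\le i,\ell-1)},\sfM_{(\le i-1,j)}\big)$, is at most $s$, and likewise that the inner sum coming from the second summation, $\sum_{\ell=j+1}^{n}\mutualent\big(\sfM_{(i,j)};X_\ell\mid \sfM_{(\le i-1,\ell-1)},\sfM_{(\le i-1,j)}\big)$, is at most $s$. Summing $s+s$ over the $kn$ pairs $(i,j)$ then gives $\mic(\calA,\mu)\le 2ksn$. (The algorithm's internal randomness enters only as additional independent sources and does not affect the argument below, so one may think of the memory states as deterministic functions of the stream.)

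Consider the first summation with $(i,j)$ fixed, and write its inner sum as $\sum_{\ell=1}^{j}\mutualent(\sfM_{(i,j)};X_\ell\mid C_\ell)$, where $C_\ell$ is the prescribed conditioning. The two structural facts I would use are: (i) the $C_\ell$ form an increasing chain, with $C_{\ell+1}$ obtained from $C_\ell$ by adjoining the single pass-$i$ state $\sfM_{(i,\ell)}$ (the earlier-pass column-$\ell$ states are already present in $\sfM_{(\le i-1,j)}$ since $\ell\le j$); and (ii) $\mutualent(\sfM_{(i,j)};X_\ell\mid C_{\ell+1})=0$ — once the post-$X_\ell$ state $\sfM_{(i,\ell)}$ of pass $i$ lies in the conditioning, $X_\ell$ is ``cut off'' from $\sfM_{(i,j)}$, which is then produced from $\sfM_{(i,\ell)}$ only by reading $X_{\ell+1},\dots,X_j$; this is exactly the ``residual independence'' that $\mic$ is engineered to expose, and it is verified by a d-separation check using that $\mu$ is a product distribution. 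Combining (i), (ii) with the chain rule for mutual information gives $\mutualent(\sfM_{(i,j)};X_\ell\mid C_\ell)\le\mutualent(\sfM_{(i,j)};\sfM_{(i,\ell)}\mid C_\ell)+\mutualent(\sfM_{(i,j)};X_\ell\mid C_{\ell+1})=\mutualent(\sfM_{(i,j)};\sfM_{(i,\ell)}\mid C_\ell)$, and summing over $\ell$ the right-hand terms telescope into a single mutual information:
\[
\sum_{\ell=1}^{j}\mutualent(\sfM_{(i,j)};X_\ell\mid C_\ell)\ \le\ \mutualent\big(\sfM_{(i,j)};\sfM_{(i,1)},\dots,\sfM_{(i,j)}\,\big|\,C_1\big)\ \le\ \Ent(\sfM_{(i,j)})\ \le\ s.
\]
The inner sum of the second summation is handled by the identical mechanism, with the role of $\sfM_{(i,\ell)}$ played by the previous-pass states: there the conditioning grows by the column-$\ell$ states of passes $\le i-1$, and for $\ell>j$ the coordinate $X_\ell$ reaches $\sfM_{(i,j)}$ only through the carry-over state $\sfM_{(i,0)}=\sfM_{(i-1,n)}$, all of whose routes from $X_\ell$ are blocked once the column-$\ell$ state $\sfM_{(i-1,\ell)}$ is conditioned on; the same telescoping yields $\sum_{\ell=j+1}^{n}\mutualent(\sfM_{(i,j)};X_\ell\mid\cdots)\le\mutualent\big(\sfM_{(i,j)};\sfM_{(\le i-1,n)}\!\setminus\!\sfM_{(\le i-1,j)}\,\big|\,\sfM_{(\le i-1,j)}\big)\le\Ent(\sfM_{(i,j)})\le s$. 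Adding over all $(i,j)\in[k]\times[n]$ proves $\mic(\calA,\mu)\le 2ksn$.

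The main obstacle I anticipate is establishing the conditional independence $\mutualent(\sfM_{(i,j)};X_\ell\mid C_{\ell+1})=0$ in each case, i.e.\ checking that the specific memory states placed in the conditioning genuinely block every path by which $X_\ell$ can influence $\sfM_{(i,j)}$. This is precisely where the multi-pass difficulty resides: because $X_\ell$ is reread on every pass it is a priori correlated with $\sfM_{(i,j)}$ along many routes, and one must verify that the asymmetric choice of conditioning (states up through pass $i$ in the first summation, only up through pass $i-1$ in the second) severs all of them — this is most delicate in the second summation, where $X_\ell$ has not yet been processed in pass $i$ and affects $\sfM_{(i,j)}$ solely through the pass-boundary carry-over. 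The remaining ingredients — the telescoping identity, the bound $\Ent(\sfM_{(i,j)})\le s$, and the bookkeeping needed to treat the carry-over identification $\sfM_{(i,0)}=\sfM_{(i-1,n)}$ and the internal randomness uniformly — are routine.
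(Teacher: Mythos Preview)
Your overall plan (bound the inner $\ell$-sum by $s$ for each fixed $(i,j)$, then sum) is exactly the paper's strategy, but your execution has a genuine gap in step (i): the conditioning sets $C_\ell=(\sfM_{(\le i,\ell-1)},\sfM_{(\le i-1,j)})$ do \emph{not} form an increasing chain. Here $\sfM_{(\le i,\ell-1)}$ denotes only the \emph{column} $\ell-1$ of states $(\sfM_{(1,\ell-1)},\dots,\sfM_{(i,\ell-1)})$, not all columns up to $\ell-1$; passing from $C_\ell$ to $C_{\ell+1}$ replaces the column-$(\ell-1)$ states by the column-$\ell$ states, so $C_\ell\not\subseteq C_{\ell+1}$. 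Your parenthetical justification, that the earlier-pass column-$\ell$ states are ``already present in $\sfM_{(\le i-1,j)}$,'' misreads the notation: $\sfM_{(\le i-1,j)}$ is only the column-$j$ vector $(\sfM_{(1,j)},\dots,\sfM_{(i-1,j)})$, not all columns $\le j$. Consequently the chain-rule identity you want, $\I(\sfM_{(i,j)};X_\ell\mid C_\ell)\le \I(\sfM_{(i,j)};\sfM_{(i,\ell)}\mid C_\ell)+\I(\sfM_{(i,j)};X_\ell\mid C_{\ell+1})$, is not a valid instance of the chain rule, and the telescoping collapses.

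The fix, which is exactly what the paper does, is to first use a conditional-independence fact (the one you flag as the ``main obstacle'') to \emph{enlarge} $C_\ell$ to the genuinely increasing conditioning $(\sfM_{(\le i,\le \ell-1)},\sfM_{(\le i-1,j)},X_{<\ell})$ at no cost: one shows
\[
\I\big(\sfM_{(i,j)};\,\sfM_{(\le i,\le \ell-2)},X_{<\ell}\ \big|\ \sfM_{(\le i,\ell-1)},\sfM_{(\le i-1,j)}\big)=0,
\]
which lets you add the earlier columns and inputs into the conditioning. Only after this enlargement do the conditioning sets become nested, and then the chain-rule telescoping you describe goes through to give $\I(\sfM_{(i,j)};X_{\le j},\sfM_{(\le i,\le j)}\mid \sfM_{(\le i-1,j)})\le s$. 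The second summation is handled the same way. So your high-level picture is right, but the ``routine'' step you set aside is in fact the step that makes the telescoping legal.
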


Note that this multi-pass information complexity notion applies to any streaming problem as long as it is defined on a product distribution. As we will see in the technical sections, this notion is useful for proving multi-pass streaming lower bounds via various approaches, such as round elimination as well as randomized communication complexity.

 Just as the measure of information in \cite{braverman2020coin} was crucial for $1$-pass applications, such as the amortized complexity of approximate counting in insertion streams \cite{ahny22}, we will show our notions have a number of important applications and can be used to obtain multi-pass lower bounds for both the coin and needle problems. We will define and motivate our information complexity notion more below, but we first describe its applications.

\subsubsection{The Coin Problem}
We give tight lower bounds on the information complexity of the coin problem, significantly extending the results of \cite{braverman2020coin} for the $1$-pass setting to the multi-pass setting. We then give a new multi-pass direct sum theorem for solving multiple copies, and use it for streaming applications. 

\paragraph{Multi-Pass Coin Problem.}
We give the first multi-pass lower bound for the coin problem.
\begin{theorem}\label{thm:coin}(Multi-Pass Coin Problem)
    Given a stream of $n$ i.i.d. uniformly random bits, any $k$-pass streaming algorithm which outputs the majority of these bits with probability $1-\gamma$ for a small enough constant $\gamma > 0$, requires $\Omega(\frac{\log n}{k})$ bits of memory. 
\end{theorem}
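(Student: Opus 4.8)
The plan is to deduce the bound directly from the information-complexity machinery already set up. By \Cref{lem:Multi-passupperbound}, any $k$-pass algorithm $\calA$ with memory size $s$ satisfies $\mic(\calA,\mu)\le 2ksn$, where $\mu$ is the uniform distribution on $\zo^n$ (a product distribution). So it suffices to prove that every $k$-pass algorithm that outputs the majority with probability $1-\gamma$ has $\mic(\calA,\mu)=\Omega(n\log n)$; rearranging then gives $s=\Omega(\tfrac{\log n}{k})$. Thus the whole task reduces to lower bounding $\mic(\calA,\mu)$, and that is where I would put all the work.

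I would read $\mic(\calA,\mu)$ as a sum, over cut positions $j\in[n]$, of the information forced to cross the cut separating the first $j$ coins from the last $n-j$, aggregated over all $k$ passes. Grouping the terms of $\mic$ by the index $j$ of the boundary state $\sfM_{(i,j)}$ appearing in them, write $\mic(\calA,\mu)=\sum_{j=1}^n\mic_j$, where $\mic_j$ collects, over all passes $i\le k$ and coin positions $\ell$, the appropriately conditioned mutual information between $\sfM_{(i,j)}$ and $X_\ell$. The goal is then to show $\mic_j=\Omega(\log n)$ for every ``balanced'' cut $j\in[n/4,3n/4]$; since there are $\Omega(n)$ such $j$, this yields $\mic(\calA,\mu)=\Omega(n\log n)$.

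To get the per-cut bound, I would fix a balanced $j$ and view the execution of $\calA$ with the stream cut at $j$ as a $2k$-round two-party protocol: Alice holds $X_{\le j}$, Bob holds $X_{>j}$, in pass $i$ Alice sends the boundary state $\sfM_{(i,j)}$ (computed from her coins and the previous state she received) while Bob replies with $\sfM_{(i,n)}$, and at the end Bob outputs $\calA$'s guess for $\mathrm{sign}(X_1+\cdots+X_n)$. The extra conditioning baked into $\mic$ --- fixing the previous-pass boundary states at positions $\ell-1$ and $j$ --- is exactly the device that makes the two sides of the cut conditionally independent, so a direct calculation should show that $\mic_j$ is, up to constants, at least the internal information cost of this protocol under the uniform input distribution. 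It then remains to argue that any $2k$-round protocol computing $\mathrm{sign}(X_1+\cdots+X_n)$ with advantage $\gamma$ has information cost $\Omega(\log n)$ \emph{regardless of the round count}: since Bob knows his partial sum $T=X_{>j}$ exactly, determining $\mathrm{sign}(S_j+T)$ with constant advantage forces the transcript to resolve Alice's partial sum $S_j$ across $\Theta(\log n)$ dyadic scales of its $\Theta(\sqrt n)$-wide typical range, each scale contributing a fresh $\Omega(1)$; this is in essence the multi-scale random-walk analysis of \cite{braverman2020coin}, which transfers here because extra rounds do not make information cheaper --- in contrast to communication, which is why the na\"ive Greater-Than reduction in \Cref{sec:intro} only yields $\Omega((\log n)^{1/k})$.

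The hard part will be the middle step: showing rigorously that the particular conditioning pattern in $\mic$ restores enough conditional independence to extract, simultaneously across all $k$ passes, a bona fide two-party protocol whose information cost is dominated by $\mic_j$ --- i.e.\ that no information is lost or double-counted when moving between consecutive passes --- and then importing the single-pass multi-scale lower bound of \cite{braverman2020coin} into this conditioned, multi-pass setting. Once these pieces are in place, the theorem follows by combining the $\Omega(n\log n)$ lower bound with the $2ksn$ upper bound of \Cref{lem:Multi-passupperbound}.
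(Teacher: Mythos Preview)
Your approach has a fatal gap at the core step. You claim that any $2k$-round two-party protocol for the sign-of-sum problem at a cut $j$ --- Alice holds $X_{\le j}$, Bob holds $X_{>j}$, output $\mathrm{sign}(\sum X_i)$ --- must have information cost $\Omega(\log n)$ ``regardless of the round count''. This is false even for one round: Alice can send her partial sum $A=\sum_{i\le j}X_i$ rounded to the nearest multiple of $\gamma\sqrt n$. On the typical event $|A|\le C_\gamma\sqrt n$ this is one of $O_\gamma(1)$ values, hence $O(1)$ bits; Bob outputs $\mathrm{sign}(A'+B)$, which errs only when $|\sum X_i|\le O(\gamma\sqrt n)$, an event of probability $O(\gamma)$. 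So the per-cut two-party information cost is $O(1)$, and your sum over $\Theta(n)$ cuts yields only $O(n)$, far short of the $\Omega(n\log n)$ you need. Your intuition that ``extra rounds do not make information cheaper'' is simply wrong here; already one round is cheap. The $\Omega(\log n)$ in \cite{braverman2020coin} does \emph{not} come from a single cut: it comes from the double sum $\sum_j\sum_{\ell\le j} \I(\sfO_j;X_\ell\mid \sfO_{\ell-1})$, where the conditioning on $\sfO_{\ell-1}$ varies with $\ell$ and has no two-party interpretation. The multi-scale analysis tracks how the running memory state must reveal information about the walk $\sum_{i\le \ell}X_i$ at \emph{every} time $\ell$, not merely across one fixed cut.

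The paper's route is genuinely different and avoids this trap. It builds a \emph{one-pass} algorithm $\sfO$ that simulates all $k$ passes of $\sfM$ in parallel: before seeing the stream $Y$, $\sfO$ samples the end-of-pass states $(\sfM_0,\dots,\sfM_{k-1})$, then online modifies $Y$ to an input $X'$ consistent with those end states and runs all passes simultaneously. The subtle point is that $\sfO$ must also recover $\sum Y_j$ from $\sum X'_j$; the paper shows the total modification $\sum|\one_{Y_j\neq X'_j}|$ is small in expectation (bounded via $\Ent(\sfM_{<k})$), and designs a separate low-entropy subroutine to track $\sum(Y_j-X'_j)$ approximately. One then applies the single-pass bound of \cite{braverman2020coin} to $\sfO$ and reads off $\miccoin(\sfM)\ge \Omega(n\log n)$. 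There is no per-cut decomposition and no two-party lower bound in the argument.
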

Theorem \ref{thm:coin} is a significant strengthening of \cite{braverman2020coin} which held only for $k = 1$ pass. Although the work of \cite{BGZ21} allows for a larger bias on its coins, it also held only for $k = 1$ pass. 

As discussed before, we can interpret the coins as updates in $\{-1,1\}$ to a counter $C$ initialized to $0$. Adjoining a prefix of $\alpha \sqrt{n}$ $1$s to the stream for a large enough constant $\alpha > 0$, we have that by bounds on the maximum deviation for a $1$-dimensional random walk that $C$ will be non-negative at all points during the stream, which corresponds to the {\it strict turnstile streaming model} -- where one can only delete previously inserted items. We also have that the final value of $C$ will deviate from its expectation $\alpha \sqrt{n}$ by an additive $\Omega(\sqrt{n})$ with constant probability, that is, it is anti-concentrated. Consequently, Theorem \ref{thm:coin} implies the following.

\begin{theorem}\label{thm:counter}(Multi-Pass Counter in Strict Turnstile Streams) 
    Any $k$-pass strict turnstile streaming algorithm which counts the number of insertions minus deletions in a stream of length $n$ up to a small enough constant multiplicative factor and with probability at least $1-\gamma$ for a small enough constant $\gamma > 0$, requires $\Omega(\frac{\log n}{k})$ bits of memory.  
\end{theorem}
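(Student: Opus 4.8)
The plan is to prove Theorem~\ref{thm:counter} by a black-box reduction from the multi-pass coin problem, i.e.\ from Theorem~\ref{thm:coin}. Given a coin instance $X_1,\dots,X_n\in\zo$, we interpret $X_i=+1$ as an insertion and $X_i=-1$ as a deletion, prepend a prefix of $L:=\lceil\alpha\sqrt n\rceil$ insertions for a sufficiently large constant $\alpha>0$, and feed the resulting stream $\sigma:=(+1)^{L}\,X_1\cdots X_n$, of length $N=n+L=\Theta(n)$, to the hypothetical $k$-pass strict-turnstile counting algorithm $\mathcal B$ with memory $s$. At the start of each of its $k$ passes the simulator re-feeds the fixed prefix $(+1)^{L}$ itself --- this needs no memory and no access to the $X_i$'s --- so the $k$ passes of $\mathcal B$ over $\sigma$ are simulated by $k$ passes over $X_1\cdots X_n$ using exactly $s$ bits of memory. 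At the end the simulator outputs ``majority $=+1$'' iff $\mathcal B$'s estimate $\hat C$ of the count satisfies $\hat C\ge\alpha\sqrt n$.

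Two standard facts make this go through. First, $\sigma$ is a legal strict-turnstile stream with high probability: writing $S_j=\sum_{i\le j}X_i$, by Doob's maximal inequality applied to the exponential martingale (equivalently, the reflection principle together with a Chernoff/Hoeffding bound) we get $\Pr[\min_{j\le n}S_j<-\alpha\sqrt n]\le 2e^{-\alpha^2/2}$, which is at most one third of the small constant tolerated by Theorem~\ref{thm:coin} once $\alpha$ is a large enough constant; call this good event $E$. On $E$ every prefix of $\sigma$ has a nonnegative running count (it equals $L$ right after the prefix and $L+S_j\ge 0$ thereafter), so conditioned on $E$ the input is a valid instance for $\mathcal B$, and averaging over $\mathcal B$'s randomness it errs with probability at most its error parameter $\gamma$ on such inputs. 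Second, the final count $C=L+S_n$ is anti-concentrated: by a standard anti-concentration bound (e.g.\ Berry--Esseen, or a direct counting argument) there is a constant $\eps_1>0$ with $\Pr[|S_n|\le\eps_1\sqrt n]$ again at most a third of the coin-problem budget, so with the remaining probability either $C\ge(\alpha+\eps_1)\sqrt n$ (when the true majority is $+1$) or $C\le(\alpha-\eps_1)\sqrt n+1$ (when it is $-1$).

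Now fix the approximation factor to be $1+\eps_0$ for a small enough constant $\eps_0$ depending on $\alpha$ and $\eps_1$, say $\eps_0:=\eps_1/(2\alpha)$; this is the ``small enough constant multiplicative factor'' of the statement. If $\mathcal B$ succeeds on $\sigma$ and $|S_n|\ge\eps_1\sqrt n$, then in the first case $\hat C\ge(1-\eps_0)(\alpha+\eps_1)\sqrt n>\alpha\sqrt n$ and in the second case $\hat C\le(1+\eps_0)\bigl((\alpha-\eps_1)\sqrt n+1\bigr)<\alpha\sqrt n$ for all large enough $n$, so the simulator recovers the majority correctly. A union bound over the three bad events --- $\mathcal B$ fails, $E$ fails, or $|S_n|\le\eps_1\sqrt n$ --- shows the simulator solves the coin problem with probability at least $1-\gamma_{\mathrm{coin}}$, where $\gamma_{\mathrm{coin}}$ is the small constant from Theorem~\ref{thm:coin}, provided $\mathcal B$'s error $\gamma\le\gamma_{\mathrm{coin}}/3$. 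Since the simulator uses $s$ bits and $k$ passes, Theorem~\ref{thm:coin} forces $s=\Omega(\log n/k)=\Omega(\log N/k)$, which is exactly the claimed bound.

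The reduction itself is routine; the only point requiring care is the \emph{joint} choice of constants. The prefix scale $\alpha$ must be a large enough constant so that the $\pm1$ random walk stays nonnegative --- hence $\sigma$ is genuinely \emph{strict} turnstile --- outside a $\gamma_{\mathrm{coin}}/3$-fraction of inputs, while $\eps_1$ must be small enough to control the anti-concentration error; the admissible approximation slack $\eps_0$ is then forced to be a (small) constant multiple of $\eps_1/\alpha$. One just has to verify that these three constraints are simultaneously satisfiable, which they are precisely because $\eps_0$ only needs to be a fixed positive constant. No idea beyond Theorem~\ref{thm:coin} is needed.
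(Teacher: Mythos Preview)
Your proposal is correct and takes essentially the same approach as the paper's proof of Theorem~\ref{thm:counterF}: prepend $\Theta(\sqrt n)$ insertions so that the running count stays nonnegative with high probability, use anti-concentration of $S_n$ to separate the two majority cases by a $\Theta(\sqrt n)$ gap, then union-bound the three bad events and invoke Theorem~\ref{thm:coin}. Your write-up is just more explicit than the paper's about the simulation detail (re-feeding the deterministic prefix each pass at no memory cost), the threshold test $\hat C\ge\alpha\sqrt n$, and the calibration of $\eps_0$ against $\alpha,\eps_1$; the paper uses a Markov bound on the expected maximum deviation where you use Doob/reflection, but these are interchangeable here.
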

By constant multiplicative factor, we mean to output a number $C'$ for which $(1-\epsilon)C \leq C' \leq (1+\epsilon)C$ for a constant $\epsilon > 0$. For insertion-only streams where no deletions of items are allowed, non-trivial algorithms based on Morris counters achieve $O(\log \log n)$ bits \cite{M,F,G10,NY22}. We rule out any non-trivial algorithm for strict turnstile streams for any constant number of passes. 

To obtain further applications, we first show a direct sum theorem for multi-pass coins.

\begin{theorem}\label{thm:directSum} (Direct Sum for Multi-Pass Counter) 
Suppose $k < \log n$ and $t < n^c$ for a sufficiently small constant $c > 0$.
Given $t$ independent streams each of $n$ i.i.d. uniformly random bits, any $k$-pass streaming algorithm which outputs a data structure such that, with probability $1-\gamma$ for a small enough constant $\gamma > 0$ over the input, the algorithm's randomness, and over a uniformly random $j \in [t]$, outputs the majority bit of the $j$-th stream, requires $\Omega \left (\frac{t \log n}{k} \right )$ bits of memory. 
\end{theorem}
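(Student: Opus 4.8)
The plan is to lift the single-stream lower bound of Theorem~\ref{thm:coin}, or rather its underlying information-complexity statement, to the $t$-fold setting via the new $\mic$ measure and the additivity guaranteed by Lemma~\ref{lem:Multi-passupperbound}. Concretely, suppose $\calA$ is a $k$-pass algorithm with memory size $s$ that solves the $t$-fold problem: given $t$ independent streams $X^{(1)},\dots,X^{(t)}$, each of $n$ i.i.d.\ uniform bits, it outputs a data structure from which, for a uniformly random $j\in[t]$, one can recover the majority of $X^{(j)}$ with probability $1-\gamma$. The first step is an averaging argument: since the success probability is $1-\gamma$ over the joint choice of inputs, randomness, and $j$, for at least a $(1-O(\sqrt\gamma))$-fraction of indices $j$ the algorithm recovers the majority of stream $j$ with probability $1-O(\sqrt\gamma)$ over the remaining randomness; call these indices \emph{good}. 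For a good $j$, fixing the inputs $X^{(j')}$ for $j'\neq j$ (to a typical value) yields a $k$-pass algorithm $\calA_j$ acting on a single stream of $n$ i.i.d.\ uniform bits that solves the coin problem with constant advantage, and whose memory states are just the states of $\calA$.

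The core step is to relate the information complexity of $\calA$ on the product distribution $\mu = \mathrm{Unif}(\zo^n)^{\otimes t}$ to the sum, over good $j$, of the information cost incurred ``within coordinate $j$.'' The $\mic$ measure is defined coordinatewise in the stream positions, so I would define, for each $j$, a quantity $\mic_j(\calA)$ that is exactly the expression in the definition of $\mic$ but with $X_\ell$ replaced by the $\ell$-th symbol $X^{(j)}_\ell$ of the $j$-th stream and all conditioning augmented by $X^{(<j)}$ and $X^{(>j)}$ (equivalently, run the streams one after another as a single length-$tn$ stream and group the terms by which block of $n$ coordinates they touch). Because the streams are independent and concatenated, $\mic(\calA,\mu) \ge \sum_{j=1}^t \mic_j(\calA)$ — each $\mic_j$ term is a sub-sum of the full $\mic$ sum after an appropriate chain-rule regrouping, using that conditioning on the other (independent) coordinates does not decrease the relevant mutual informations. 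Meanwhile, for each good $j$, the multi-pass coin lower bound — in its information-cost form, which is what the proof of Theorem~\ref{thm:coin} actually establishes — gives $\mic_j(\calA) \ge \Omega(\log n)$, since conditioned on a typical fixing of the other streams, $\calA$ restricted to block $j$ is a $k$-pass coin solver and its internal $\mic$ is precisely $\mic_j$. Summing over the $\ge (1-O(\sqrt\gamma))t$ good indices gives $\mic(\calA,\mu) \ge \Omega(t\log n)$. Combining with Lemma~\ref{lem:Multi-passupperbound}, $2ksn \ge \mic(\calA,\mu) \ge \Omega(t\log n)$, hence $s = \Omega\!\left(\frac{t\log n}{kn}\right)$ — which is off by a factor of $n$ from the claim, so a direct ``single long stream'' reduction is too lossy.

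To get the stated $\Omega(t\log n/k)$ bound one must instead run the $t$ streams \emph{in parallel} rather than concatenated: in pass $i$, the algorithm reads position $\ell$ of all $t$ streams (or the data-structure formulation already presents the $t$ streams as one composite object), so that there are only $n$ memory states per pass, each of size $s$, and Lemma~\ref{lem:Multi-passupperbound} applied to the composite product distribution on $(\zo^t)^n$ still gives $\mic(\calA,\mu)\le 2ksn$. The information decomposition then must be carried out so that the per-coordinate sum over stream positions $\ell\in[n]$, summed over the $t$ coordinates inside each $X_\ell\in\zo^t$, recovers $\sum_j \mic_j(\calA)$; this uses the chain rule $\mutualent(\sfM;X^{(1)}_\ell,\dots,X^{(t)}_\ell\mid \cdot) = \sum_j \mutualent(\sfM;X^{(j)}_\ell\mid X^{(<j)}_\ell,\cdot)$ together with independence across $j$ to match each summand to a term of $\mic_j$ with the correct conditioning. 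Then $2ksn\ge \mic(\calA,\mu)\ge \sum_{\text{good }j}\mic_j(\calA)\ge \Omega(t\log n)$ forces $s=\Omega(t\log n/(kn))$ — still lossy unless the per-stream bound is $\Omega(n\log n/k)$, which it is not. The resolution, and the point I expect to be the main obstacle, is that the correct accounting charges each good stream $j$ an information cost of $\Omega(\log n)$ \emph{against its own $n$ states}, while the budget per state is $s$ \emph{per stream-copy}: one should apply the coin lower bound to the conditional algorithm $\calA_j$ whose memory is the full composite state of size (effectively) $s$ used to serve stream $j$, and the key quantitative input is that solving stream $j$'s coin problem with advantage forces $\Omega(\log n/k)$ bits to be ``devoted'' to coordinate $j$ in a sense that is additive across $j$ by independence — i.e.\ one proves $\sum_j \mic_j(\calA)\le \mic(\calA,\mu)\le 2ksn$ but simultaneously $\mic_j(\calA)=\Omega(n\log n /k)$? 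No: the honest route, and what I would actually write, is to observe that the single-stream argument behind Theorem~\ref{thm:coin} already yields, for each pass $i$ and each good stream $j$, that $\sum_{\ell}\mutualent(\sfM_{(i,\ell)}^{(j)};X^{(j)}_\ell\mid \cdots)=\Omega((\log n)/k)$ contributions \emph{cannot all be small}, and that the composite state at step $\ell$ has size $s$ total which must simultaneously carry $\Omega(\log n/k)$ bits of information about \emph{each} of the $\Omega(t)$ good streams' suffixes; by independence these information quantities add, giving $s=\Omega(t\log n/k)$ directly at the level of a single fixed composite state. Making this ``simultaneous information'' step rigorous — showing that the $t$ contributions genuinely add inside one state of size $s$ rather than being double-counted — via the chain rule and the product structure of $\mu$, is where the real work lies; the averaging over good $j$, the fixing of other streams, and the invocation of Lemma~\ref{lem:Multi-passupperbound} are then routine, and the hypotheses $k<\log n$ and $t<n^c$ enter to ensure the constant-advantage reductions and the error budget $\gamma$ survive the averaging.
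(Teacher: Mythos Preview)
Your proposal contains a quantitative error and a structural gap, and it diverges substantially from the paper's argument.

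\textbf{The quantitative error.} The single-coin information-cost lower bound is $\miccoin(\sfM)\ge\Omega(n\log n)$ (Theorem~\ref{th:main}, condition~\eqref{eq:1}), not $\Omega(\log n)$; the memory bound $s=\Omega((\log n)/k)$ then follows from $\miccoin(\sfM)\le ksn$. So your sentence ``unless the per-stream bound is $\Omega(n\log n/k)$, which it is not'' is incorrect: the per-stream bound is in fact $\Omega(n\log n)$. In your parallel setup (stream length $n$, composite alphabet $\zo^t$) this would give $2ksn\ge t\cdot\Omega(n\log n)$ and hence the desired $s=\Omega(t\log n/k)$, \emph{provided} the decomposition $\mic(\calA,\mu)\ge\sum_s\mic_s(\calA)$ held with each $\mic_s$ matching the single-coin measure.

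\textbf{The structural gap.} That decomposition is not the routine step you describe. The chain rule applied to $\I(\sfM_{(i,j)};X_\ell\mid\cdot)$ with $X_\ell\in\zo^t$ introduces conditioning on $X^{(<s)}_\ell$ only, i.e.\ the $\ell$-th symbols of earlier streams; to invoke the single-coin bound for stream $s$ with the others fixed you need conditioning on \emph{all} of $X^{(-s)}$. These conditionings do not match, and once memory states are present in the conditioning, adding the remaining coordinates of $X^{(-s)}$ can move the mutual information in either direction (Property~\ref{itemmi2} would require $\I(X^{(-s)};X^{(s)}_\ell\mid\sfM_{(i,j)},\cdot)=0$, which there is no reason to expect). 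Your claim that ``conditioning on the other (independent) coordinates does not decrease the relevant mutual informations'' is unjustified.

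\textbf{What the paper actually does.} The paper does not prove a direct sum for $\mic$ at all. It works with the $t$ streams \emph{interleaved} in a good order (Definition~\ref{def:gdord}) as a single length-$nt$ stream, and extends the one-pass simulation from the single-coin proof: it builds a one-pass algorithm $\sfO$ that samples $m'_0,\dots,m'_{k-1}\sim(\sfM_0,\dots,\sfM_{k-1})$, rewrites $Y\mapsto X'$ to be consistent with these end states, and runs $t$ separate approximate trackers $\Apr^s$ for $\sum_{j\in J_s}(Y_j-X'_j)$. This yields $IC(\sfO)\le\miccoin(\sfM)+nt\cdot t\cdot O(\log\log n+\lambda\log n)$, and then the \emph{one-pass} $t$-coins lower bound $IC(\sfO)\ge\Omega(nt^2\log n)$ from~\cite{braverman2020coin} (Theorem~\ref{thmbgw:tcoins}) is invoked; the direct sum is thus inherited from the one-pass setting, not reproved. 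Combining with $\miccoin(\sfM)\le ks\cdot nt$ gives $s=\Omega(t\log n/k)$, and the hypotheses $k,t<n^\lambda$ are used so that the $t$ trackers' overhead is absorbed.
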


As an example application of this theorem, in \cite{MWY13} the following problem was studied for a real number $p \in [0,2]$: given vectors $v_1, \ldots, v_t \in \{-\textrm{poly}(n), \ldots, \textrm{poly}(n)\}^d$, estimate a constant fraction  of the $\|v_1\|_p, \ldots, \|v_t\|_p$ up to a sufficiently small constant multiplicative factor with constant probability, where for a $d$-dimensional vector $y$, the $p$-norm\footnote{For $p < 1$ the quantity $\|v\|_p$ is not a norm, but it is still a well-defined quantity. With a standard abuse of notation, we will refer to it as a $p$-norm.} $\|y\|_p = \left (\sum_{i=1}^d |y_i|^p \right )^{1/p}.$ We will refer to this problem as {\sf Multi-$\ell_p$-Estimation}. 
The best upper bound is $O(t \cdot \log n)$, which follows just by solving each instance independently with constant probability and using $O(\log n)$ bits \cite{KNW10}. An $\Omega(t \log \log n + \log n)$ randomized lower bound follows for any $O(1)$-pass streaming algorithm by standard arguments\footnote{The $\Omega(t \log \log n)$ bound follows just to record the output, while the $\Omega(\log n)$ lower bound follows via a reduction from the Equality problem, as in \cite{alon1996space}.}. 
We note that if we do not charge the streaming algorithm for its randomness, then the $O(1)$ pass lower bound for {\sf Multi-$\ell_p$-Estimation} is an even weaker $\Omega(t \log \log n)$.

By using Theorem \ref{thm:directSum} and having each vector $v_i$ in the {\sf Multi-$\ell_p$-Estimation} problem correspond to a single counter, we can show the following: 

\begin{theorem}[Multi-Pass Multi-$\ell_p$-Estimation]\label{thm:multiLpEstimation}
Suppose $k < \log n$ and $t < n^c$ for a sufficiently small constant $c > 0$.
Any $k$-pass streaming algorithm which solves the {\sf Multi-$\ell_p$-Estimation} Problem on $t$ instances of a stream of $n$ updates for each vector, solving each $\ell_p$-norm estimation problem up to a small enough constant factor with probability $1-\gamma$ for a sufficiently small constant $\gamma$ , requires $\Omega \left (\frac{t \log n}{k} \right )$ bits of memory. 
\end{theorem}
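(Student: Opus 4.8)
The plan is to reduce the Direct Sum problem for Multi-Pass Counters (Theorem \ref{thm:directSum}) to {\sf Multi-$\ell_p$-Estimation}, so that any space-efficient algorithm for the latter yields one for the former. Given $t$ independent streams $\sigma_1,\dots,\sigma_t$, each consisting of $n$ i.i.d.\ uniform bits in $\{-1,1\}$, I would first prepend to each stream $\sigma_j$ a block of $\alpha\sqrt{n}$ ``$+1$'' updates for a sufficiently large constant $\alpha>0$; as observed in the paragraph preceding Theorem \ref{thm:counter}, with constant probability the running partial sum of $\sigma_j$ stays non-negative throughout (so we are in the strict turnstile regime) and the final counter value $C_j$ deviates from its mean $\alpha\sqrt{n}$ by $\Omega(\sqrt{n})$ with constant probability — i.e.\ it is anti-concentrated around a value of order $\sqrt{n}$. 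We then let the $j$-th vector $v_j$ in the {\sf Multi-$\ell_p$-Estimation} instance be the $1$-dimensional vector whose single coordinate is the counter $C_j$, and feed the concatenation (in an arbitrary but fixed interleaving, one pass at a time) of these $t$ padded streams to the assumed {\sf Multi-$\ell_p$-Estimation} algorithm.

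The key point is that for a $1$-dimensional vector, $\|v_j\|_p = |C_j|$ for every $p\in[0,2]$, so a constant-factor estimate of $\|v_j\|_p$ is exactly a constant-factor estimate of the absolute counter value $|C_j|$. Since $C_j$ is non-negative on the relevant event and anti-concentrated at scale $\sqrt{n}$ around $\alpha\sqrt{n}$, a sufficiently good multiplicative approximation of $|C_j|$ lets us decide whether $C_j$ lies above or below its median, which in turn determines the majority bit of the original stream $\sigma_j$ (the padding is a fixed constant, so majority of $\sigma_j$ is equivalent to the sign of $C_j - \alpha\sqrt{n}$). I would fix $\epsilon$ and $\gamma$ small enough that the estimation guarantee, combined with the anti-concentration bound, correctly recovers the majority of a uniformly random $\sigma_j$ with probability $1-\gamma'$ for the constant $\gamma'$ required in Theorem \ref{thm:directSum}, absorbing the constant-probability loss from the ``padding stays non-negative / is anti-concentrated'' event into $\gamma'$ via a union bound and a Markov argument over the random index $j$. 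Because the reduction is a pure pre/post-processing that uses no additional memory beyond a fixed number of counters of size $O(\log n)$ (which are already accounted for, or can be folded into the stream itself by noting the padding length is deterministic), a $k$-pass {\sf Multi-$\ell_p$-Estimation} algorithm with memory $o(t\log n / k)$ would give a $k$-pass solver for the Direct Sum problem with the same asymptotic memory, contradicting Theorem \ref{thm:directSum}. Hence $\Omega(t\log n / k)$ bits are necessary.

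The main obstacle I anticipate is bookkeeping the probability of success carefully through the reduction: we need the {\sf Multi-$\ell_p$-Estimation} algorithm's per-instance guarantee (which holds for \emph{each} fixed $j$, or for a constant fraction of $j$'s) to translate into the ``random $j$'' guarantee of Theorem \ref{thm:directSum}, while simultaneously conditioning on the event that the prepended walk stays non-negative and that $C_j$ is anti-concentrated — events that are themselves only of constant probability per stream. The resolution is to choose the deviation threshold and the constants $\alpha,\epsilon,\gamma$ so that all these constant-probability events can be combined by a union bound over a single random coordinate $j$ (not over all $t$), keeping the overall error below the threshold of Theorem \ref{thm:directSum}; the anti-concentration estimate for a $1$-dimensional $\pm1$ random walk is standard (Berry–Esseen or an explicit binomial tail bound), so no new technical machinery is needed beyond what is already invoked for Theorem \ref{thm:counter}.
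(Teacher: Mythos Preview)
Your proposal is correct and follows essentially the same route as the paper's proof (Theorem~\ref{thm:multiLpEstimationF}): reduce from the $t$-coins direct sum bound by taking each $v_j$ to be a $1$-dimensional counter, prepend $\Theta(\sqrt{n})$ ``$+1$'' updates so that the final value is nonnegative with high probability, use anti-concentration of the binomial so that the majority of $\sigma_j$ is determined by whether $|v_j|=\|v_j\|_p$ lies above or below a fixed threshold, and invoke the direct sum lower bound (the paper cites Corollary~\ref{cor:tcoinsfixed}, which is the formal version of Theorem~\ref{thm:directSum}). Your probability bookkeeping---applying the union bound only for the single random index $j$ rather than over all $t$---is exactly what the paper does as well.
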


Another important streaming question is the {\sf $\ell_2$-Point Query Problem}: given an underlying $d$-dimensional vector $x \in \{-\textrm{poly}(n), \ldots, \textrm{poly}(n)\}^d$ that undergoes a sequence of positive and negative additive updates to its coordinates, for each $j \in \{1, \ldots, d\}$, one should output $x_j$ up to an additive error $\epsilon \|x\|_2$ with constant probability. Related to this question is the {\sf $\ell_2$-Heavy Hitters Problem} which asks to output a set $S$ which (1) contains all indices $i$ for which $x_i^2 \geq \epsilon \|x\|_2^2$, and (2) does not contain any index $i$ for which $x_i^2 \leq \frac{\epsilon^2}{2} \|x\|_2^2$. Further, for all $i \in S$, one should output an estimate $\hat{x}_i$ with $|\hat{x}_i - x_i| \leq \epsilon \|x\|_2$.
In \cite{braverman2020coin}, for both of these problems an $\Omega(\epsilon^{-2} \log n)$ memory lower bound was shown for single-pass algorithms on length-$n$ streams, which improved the previous best known $\Omega(\epsilon^{-2} \log d)$ lower bounds when the stream length $n$ is much larger than the dimension $d$ of the input vectors. Notably, the lower bounds in \cite{braverman2020coin} only hold for single pass algorithms. 

By having each coordinate of an underlying vector $x$ correspond to a counter, we can also use Theorem \ref{thm:directSum} to solve the {\sf $\ell_2$-Point Query Problem} and the {\sf $\ell_2$-Heavy Hitters Problem}. Here we also use that the Euclidean norm of the underlying vector is concentrated. 

\begin{theorem}[Multi-Pass Point Query and Heavy Hitters] \label{thm:PQHH}
Suppose $k < \log n$ and $\epsilon^{-2} < n^c$ for a sufficiently small constant $c > 0$.
Any $k$-pass streaming algorithm which, with probability $1-\gamma$ for a sufficiently small constant $\gamma > 0$, solves the {\sf $\ell_2$-Point Query Problem} or the {\sf $\ell_2$-Heavy Hitters Problem} on a vector $x \in \{-\textrm{poly}(n), \ldots, \textrm{poly}(n)\}^d$ in a stream of $n$ updates,  requires $\Omega \left (\frac{\epsl^{-2} \log n}{k} \right )$ bits of memory. 
\end{theorem}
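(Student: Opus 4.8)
The plan is to obtain Theorem~\ref{thm:PQHH} as a corollary of the direct-sum lower bound for multi-pass counters, Theorem~\ref{thm:directSum}, by making each of $t=\Theta(\epsilon^{-2})$ independent counters a distinct coordinate of the underlying vector and reading every counter's majority bit out of the point-query / heavy-hitters data structure. I would first treat the {\sf $\ell_2$-Point Query Problem}. Fix a sufficiently small constant $c_0>0$, put $t=c_0\epsilon^{-2}$ and $n'=\lfloor n/t\rfloor$; since $\epsilon^{-2}<n^{c}$ for a small constant $c$, both $\log n'=\Theta(\log n)$ and $t<(n')^{c}$ hold, so Theorem~\ref{thm:directSum} applies with its ``$n$'' instantiated as $n'$ and its ``$t$'' as $t$ (the hypothesis $k<\log n'$ follows from $k<\log n$ after adjusting a constant). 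Given $t$ independent streams $Y^{(1)},\dots,Y^{(t)}\in\zo^{n'}$, I embed them into a vector $x\in\mathbb Z^{t}$ by treating each bit $Y^{(j)}_\ell$ as the additive update $x_j\leftarrow x_j+Y^{(j)}_\ell$; in total this is $tn'\le n$ updates, and each coordinate stays in $\{-n,\dots,n\}$. A $k$-pass point-query algorithm $\calA$ using $s$ bits, run on this (labeled, arbitrary-order) update stream, then yields a $k$-pass algorithm for the direct-sum problem using the same $s$ bits: the simulator forwards each labeled update to $\calA$, and on query $j$ outputs $\mathrm{sign}(\hat x_j)$, where $\hat x_j$ is $\calA$'s estimate of $x_j$, as its guess for the majority bit of $Y^{(j)}$.

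For correctness I would argue as follows. Each $x_j=\sum_{\ell=1}^{n'}Y^{(j)}_\ell$ is a sum of $n'$ independent Rademacher variables, so $\bbE[x_j^2]=n'$ and $\Var(x_j^2)=O(n'^2)$; since the $x_j^2$ are independent, Chebyshev gives $\|x\|_2^2=\sum_j x_j^2\le 2tn'$ except with probability $O(1/t)=o(1)$ — this is precisely where $t$ (hence $\epsilon$) must be large, and for $\epsilon=\Theta(1)$ the claimed bound $\Omega(\epsilon^{-2}\log n/k)=\Omega(\log n/k)$ already follows from Theorem~\ref{thm:coin} applied to a single counter. On the event $\|x\|_2^2\le 2tn'$ one has $2\epsilon\|x\|_2\le 2\sqrt{2c_0}\,\sqrt{n'}$, whereas Berry--Esseen gives $\Pr[\,|x_j|<2\sqrt{2c_0}\,\sqrt{n'}\,]\le\Pr[\,|N(0,1)|<2\sqrt{2c_0}\,]+o(1)\le\gamma/2$ once $c_0$ is small enough. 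Hence, over the input and a uniformly random $j$, with probability $1-\gamma/2-o(1)$ we have $|x_j|>2\epsilon\|x\|_2\ge 2|\hat x_j-x_j|$, so $\mathrm{sign}(\hat x_j)$ is the majority bit of $Y^{(j)}$; together with $\calA$'s failure probability $\gamma$, the simulated algorithm solves the direct-sum problem with probability $1-O(\gamma)$, and Theorem~\ref{thm:directSum} then forces $s=\Omega(t\log n'/k)=\Omega(\epsilon^{-2}\log n/k)$.

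For the {\sf $\ell_2$-Heavy Hitters Problem} I would run the same embedding and use that the data structure returns a set $S$ together with estimates $\hat x_i$, $i\in S$, satisfying $|\hat x_i-x_i|\le\epsilon\|x\|_2$: conditioned on $\|x\|_2^2\le 2tn'$, every coin coordinate has $x_j^2=\Theta(\epsilon^2\|x\|_2^2)$ with a constant depending only on $c_0$, so for $c_0$ small it clears the inclusion threshold of the guarantee, whence $j\in S$ and $\hat x_j$ is available, after which the argument is identical to the point-query case. I expect this heaviness step to be the main obstacle of the proof — given that Theorem~\ref{thm:directSum} supplies essentially all the technical content — because at most $O(\epsilon^{-1})$ coordinates can ever be reported, each truly-heavy coordinate carrying an $\Omega(\epsilon)$-fraction of $\|x\|_2^2$ in squared mass, so one must calibrate the count $t=\Theta(\epsilon^{-2})$ and the construction against the exact inclusion/exclusion thresholds of the heavy-hitters guarantee to ensure all $\Theta(\epsilon^{-2})$ of our coin coordinates are reported; this is precisely what delivers $\Omega(\epsilon^{-2}\log n/k)$ rather than the weaker $\Omega(\epsilon^{-1}\log n/k)$ that a coarser embedding would give. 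The remaining items — faithfulness of the simulation, propagation of the parameter constraints (choosing the constant in ``$\epsilon^{-2}<n^{c}$'' small relative to that in Theorem~\ref{thm:directSum}), and the Berry--Esseen / Chebyshev estimates — are routine.
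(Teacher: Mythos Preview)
Your approach is essentially the paper's: embed $t=\Theta(\epsilon^{-2})$ independent coin instances as distinct coordinates, concentrate $\|x\|_2^2$ (the paper cites a remark from \cite{braverman2020coin}, you use Chebyshev), anti-concentrate each $|x_j|$ (the paper uses the pointwise bound $\binom{n}{n/2}/2^n=O(1/\sqrt{n})$, you use Berry--Esseen), and invoke the multi-pass direct-sum bound. Your stated ``main obstacle'' for the heavy-hitters case is a miscount: an $\epsilon^2$-heavy coordinate carries an $\epsilon^2$-fraction (not $\epsilon$-fraction) of $\|x\|_2^2$ in squared mass, so up to $\epsilon^{-2}$ coordinates can be simultaneously heavy, and your own inequality $x_j^2>4\epsilon^2\|x\|_2^2$ on the good events already forces $j\in S$ --- no further calibration is needed, and the paper's proof does none.
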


Our $\Omega(\epsilon^{-2} \log n)$ bit lower bound for the {\sf $\ell_2$-Heavy Hitters Problem} can be applied to the {\sf Sparse Recovery Problem} in compressed sensing (see, e.g., \cite{PW11,GLPS17}), which involves an input vector $x \in \{-\textrm{poly}(n), \ldots, \textrm{poly}(n)\}^d$ in a stream, and asks to output an $r$-sparse vector $\hat{x}$ for which
\begin{eqnarray}\label{eqn:recovery}
\|\hat{x}-x\|_2 \leq \Delta \cdot \|x-x_r\|_2,
\end{eqnarray}
where $x_r$ is $x$ with all but the top $r$ coordinates set to $0$. Here $\Delta > 1$ is any fixed constant. 

A standard parameter of sparse recovery is the Signal-to-Noise Ratio (SNR), which is defined to be $\frac{\|x_r\|_2^2}{\|x\|_2^2}$. 
The SNR is at most $1$, and if it is $1$, there is a trivial $\Omega(r (\log n + \log d))$ bit lower bound. Indeed, since the guarantee of (\ref{eqn:recovery}) has multiplicative error, we must have $\hat{x} = x = x_r$ in this case, and it takes $\Omega(r (\log n + \log d))$ bits to encode, for each of the $r$ non-zero locations in $x$, its location and its value. However, when the SNR is a constant bounded away from $1$, this encoding argument no longer applies. Indeed, while one can show an $\Omega(r \log d)$ bits lower bound to encode the identities of $r$ locations, each of their values can now be approximated up to a small multiplicative constant, and so encoding their values requires only $\Omega(r \log \log n)$ bits. 

While an $\Omega(r + \log \log n)$ {\it measurement} lower bound is known for multi-pass streaming algorithms \cite{PW13} for constant SNR bounded away from $1$, perhaps surprisingly in the data stream model, an $\Omega(r \log n)$ bits lower bound for streams of length $n$ and SNR bounded away from $1$ was unknown. As our lower bound for the {\sf $\ell_2$-Heavy Hitters Problem} only requires recovering a large constant fraction of the $\ell_2$-heavy hitters, all of which are comparable in magnitude in our hard instance, and the Euclidean norm is concentrated, we in fact can obtain a lower bound for the {\sf Sparse Recovery Problem} even if the SNR is a constant bounded away from $1$. We note that there is an $O(\log \log n)$-pass streaming algorithm which uses $O(r \log n \log \log n)$ bits of memory to solve the sparse recovery problem for any SNR, see \cite{NSW018} which builds upon \cite{IPW11} (see the text after the proof of Theorem 3.7 in \cite{IPW11} on how to obtain an exactly $r$-sparse output). Our lower bound is thus tight up to  $\textrm{poly}(\log \log n)$ factors.

\begin{theorem}\label{thm:compressedSensing}
(Bit Complexity of Sparse Recovery). 
Suppose $k < \log n$ and $r < n^c$ for a sufficiently small constant $c > 0$.
Any $k$-pass streaming algorithm which, with probability $1-\gamma$ for $\gamma > 0$ a small constant, solves the {\sf Sparse Recovery Problem} for constant SNR in $(0,1)$, requires
$\Omega(\frac{r \log n}{k})$ bits of memory. 
\end{theorem}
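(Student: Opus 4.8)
The plan is to reduce the $\ell_2$-Heavy Hitters lower bound (Theorem~\ref{thm:PQHH}) to the Sparse Recovery Problem, exploiting the fact that our hard instance for heavy hitters has all heavy coordinates of comparable magnitude and a concentrated Euclidean norm. Concretely, set $\epsilon = \Theta(1/\sqrt{r})$ in Theorem~\ref{thm:PQHH}, so that the guaranteed memory lower bound becomes $\Omega(\epsilon^{-2}\log n / k) = \Omega(r \log n / k)$; the hypothesis $\epsilon^{-2} < n^c$ translates into exactly the hypothesis $r < n^c$ that we assume here. First I would recall the structure of the heavy-hitters hard distribution: an underlying vector $x$ in which roughly $r$ designated coordinates each have value about $\pm \|x\|_2/\sqrt{r}$ (up to constant factors), the remaining mass is spread so that $\|x\|_2$ is concentrated within a constant factor, and the task is to recover a large constant fraction of these $r$ comparable heavy hitters together with approximate values.

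The key step is to show that any algorithm solving Sparse Recovery with parameter $r$ and a fixed constant $\Delta > 1$, on this instance, must in fact identify (and approximately evaluate) a constant fraction of the $r$ heavy coordinates. I would argue as follows. On the hard instance, the tail $\|x - x_r\|_2$ is only a constant factor smaller than $\|x\|_2$ when the SNR $\|x_r\|_2^2/\|x\|_2^2$ is a constant bounded away from $1$ — this is precisely the regime the theorem addresses. The recovery guarantee $\|\hat x - x\|_2 \le \Delta \|x - x_r\|_2$ then forces $\hat x$, which is $r$-sparse, to agree with $x$ on all but a small constant fraction of the $\ell_2$-mass carried by the heavy coordinates. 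Since each heavy coordinate carries a $\Theta(1/r)$ fraction of $\|x\|_2^2$, missing more than a small constant fraction of them — or estimating more than a small constant fraction of them poorly — would already violate the bound for an appropriate choice of constants. Hence $\hat x$ yields, up to the allowed slack, a valid output for the $\ell_2$-Heavy Hitters Problem (output $S = \operatorname{supp}(\hat x)$ and the estimates $\hat x_i$), after possibly adjusting the separation constants in the definition of heavy hitters by rescaling $\epsilon$. Care is needed here because the heavy-hitters definition has a gap between the ``must include'' threshold $\epsilon\|x\|_2^2$ and the ``must exclude'' threshold $\frac{\epsilon^2}{2}\|x\|_2^2$; I would verify that the constant-factor recovery error, spread over at most $r$ coordinates, cannot push a genuinely non-heavy coordinate above the exclusion threshold nor drop too many genuinely heavy ones below inclusion, again by choosing the constants in the reduction appropriately.

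The main obstacle I anticipate is the direction of the implication concerning the norm: Sparse Recovery only promises a bound \emph{relative to the tail} $\|x - x_r\|_2$, so to make the reduction bite I must ensure the hard instance genuinely has constant SNR bounded away from $1$ (so the tail is $\Theta(\|x\|_2)$ and the recovery error is a controlled constant fraction of $\|x\|_2$) while \emph{simultaneously} keeping the $r$ planted heavy coordinates individually large enough — $\Theta(\|x\|_2^2/r)$ each — that a constant number of recovery errors costs more mass than $\Delta$ times the tail allows, for the chosen constants. Balancing these two requirements (enough ``noise'' to have non-trivial SNR, enough ``signal per heavy coordinate'' to be detectable) is the delicate part; it amounts to a careful accounting of the constants $\Delta$, the SNR, $\epsilon$, and $\gamma$. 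Once the instance is set up correctly, the remainder is bookkeeping: a correct Sparse Recovery algorithm on this instance is, with probability $1-\gamma'$ for a slightly larger constant $\gamma'$, also a correct $\ell_2$-Heavy Hitters algorithm with parameter $\epsilon = \Theta(1/\sqrt r)$, so Theorem~\ref{thm:PQHH} gives the claimed $\Omega(r\log n / k)$ bit lower bound.
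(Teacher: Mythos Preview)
Your high-level strategy is sound, but the paper takes a more direct route and your proposal has a real gap in the SNR-control step.

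The paper does \emph{not} reduce to the Heavy Hitters theorem; it reduces directly to the $t$-Coins Problem (Corollary~\ref{cor:tcoinsfixed}) with $t=r$. The construction is a $3r$-dimensional vector whose first $r$ coordinates are the $r$ coin instances and whose last $2r$ coordinates are set \emph{deterministically} to a small value $\zeta C'\sqrt{n}$. This deterministic padding is the device that pins down the SNR: it guarantees $\|x-x_r\|_2^2 \le 2r\zeta^2(C')^2 n$ regardless of how the random coin sums fall, so for any constant $\Delta$ one can choose $\zeta$ so that the recovery guarantee forces $\|x-\hat x\|_2^2 < \gamma r(C')^2 n$, hence at most a $\gamma$-fraction of the first $r$ coordinates are off by more than $C'\sqrt{n}$, which suffices to read off the majority for a random index and invoke Corollary~\ref{cor:tcoinsfixed}.

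Your plan, by contrast, tries to feed the Sparse Recovery algorithm the Heavy Hitters hard instance and then invoke Theorem~\ref{thm:PQHH} as a black box. But that instance (as built in the proof of Theorem~\ref{thm:PQHHF}) has \emph{all} $C''\epsilon^{-2}=\Theta(r)$ coordinates equal to independent coin sums of comparable magnitude; there is no separate ``remaining mass'' as you describe. Consequently the SNR is either essentially $1$ (if the number of coordinates is at most $r$) or is a random quantity depending on which coin sums happen to be largest, and the tail $\|x-x_r\|_2$ is not under your control in the way the argument needs. You correctly flag this as ``the delicate part'', but you do not supply a mechanism to resolve it; the fix is precisely the paper's trick of appending deterministic low-magnitude coordinates, at which point you are no longer using Theorem~\ref{thm:PQHH} as a black box but rather rebuilding the direct reduction from Corollary~\ref{cor:tcoinsfixed}.
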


\subsubsection{The Needle Problem}
Lovett and Zhang \cite{lovett2023streaming} recently showed the following lower bound for the needle problem. 

\begin{theorem}[\cite{lovett2023streaming}]\label{thm:lz23}
    Any $k$-pass streaming algorithm $\calA$ which distinguishes between the uniform and needle distributions with high probability, where $p$ denotes the needle probability, $n$
    the stream length, and $s$ the space, satisfies $k sp^2n \log(n) = \Omega(1)$, provided the domain size $t = \Omega(n^2)$.
\end{theorem}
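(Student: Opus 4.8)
\emph{A proof proposal.}
The plan is to recast the $k$-pass, $s$-space streaming algorithm $\calA$ as an $n$-player one-way communication protocol: in pass $i$, player $j$ receives the memory state $\sfM_{(i,j-1)}$, reads $X_j$, and produces $\sfM_{(i,j)}$, with the state wrapping from player $n$ in pass $i$ to player $1$ in pass $i+1$; the transcript is $kn$ messages of $s$ bits each. I would then measure progress through the information quantity $\mic(\calA,\boldsymbol{D}_0)$ of the excerpt, which by Lemma~\ref{lem:Multi-passupperbound} already satisfies $\mic(\calA,\boldsymbol{D}_0)\le 2ksn$. The theorem reduces to the matching ``completeness'' estimate: any $\calA$ that separates $\boldsymbol{D}_0$ from $\boldsymbol{D}_1$ with constant advantage must have $\mic(\calA,\boldsymbol{D}_0)=\Omega\!\left(\tfrac{1}{p^{2}\log n}\right)$ (using $t=\Theta(n^{2})$, so that the information needed to pin down the needle is $O(\log(p^2 t))=O(\log n)$ bits). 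Combining the two inequalities yields $ksp^{2}n\log n=\Omega(1)$.

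For the completeness estimate I would adopt the coupling description of the needle distribution: draw $X\sim\boldsymbol{D}_0$, draw a needle $\alpha\sim\mathrm{Unif}[t]$ and an independent set $S\subseteq[n]$ containing each index with probability $p$, and reset $X_i\leftarrow\alpha$ for $i\in S$; since $t=\Omega(n^2)$, the spurious collisions between $\alpha$ and the un-reset coordinates perturb the resulting law by only $o(1)$ in total variation, so it suffices to analyze this planted model. The structural heart of the matter is that the needle is invisible in any single coordinate --- each $X_i$ is still marginally uniform on $[t]$ --- so $\calA$ can only detect it through a \emph{pair} of coordinates that both land on $\alpha$ (probability $p^{2}$ per pair), and even then the memory must have retained, from the first copy, enough of a fingerprint of $\alpha$ to recognize the second, which only $O(\log n)$ bits can provide. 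I would turn this into a lower bound on $\mic$ via a hybrid / cut-and-paste argument: using the chain rule for KL divergence (with Pinsker to convert the constant statistical separation into $\Omega(1)$ divergence) across the $kn$ rounds of the protocol, the separation between the planted and null transcripts splits into per-round, per-coordinate contributions that match the conditional mutual-information terms summed in $\mic(\calA,\boldsymbol{D}_0)$; bounding each such term from above by $O(p^{2}\log n)$ --- which is exactly what the carefully chosen conditioning in the definition of $\mic$ is designed to permit, since freezing a prefix of memory states leaves only a controlled amount of dependence among the remaining coordinates --- forces the number of nonnegligible terms, hence $\mic(\calA,\boldsymbol{D}_0)$, to be $\Omega(1/(p^{2}\log n))$.

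The step I expect to be the main obstacle is the \emph{shared} nature of the needle: the natural decomposition of the signal is over pairs of coordinates, but distinct pairs are correlated through the common value $\alpha$, so one cannot invoke a product-case direct sum directly. I would handle this by conditioning the analysis on $\alpha$ (it is sampled once, outside the stream, so treating it as auxiliary randomness does not change $\calA$'s behaviour while making the remaining coordinates conditionally independent), and then arguing that even an algorithm informed of the value class $\alpha$ would need to accumulate $\Omega(1/(p^{2}\log(p^2 t)))$ worth of needle collisions; a more hands-off alternative is a second-moment ($\chi^{2}$) computation that compares the transcript under \emph{two} independent planted needles against the null, which sidesteps the conditioning at the cost of more delicate control of the $\chi^{2}$ inner product across passes. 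A secondary nuisance is aligning the exact placement of the conditioning so that Lemma~\ref{lem:Multi-passupperbound} applies verbatim while the hybrid decomposition still goes through, and so that the logarithmic factor comes out as $\log n$ rather than $\log t$ in the polynomial-$t$ regime; reconciling these demands --- the reason for the intricate conditioning pattern $\sfM_{(\le i,\ell-1)},\sfM_{(\le i-1,j)}$ in the definition of $\mic$ --- is where I expect most of the bookkeeping to lie.
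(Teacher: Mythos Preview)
This theorem is not proved in the paper; it is a cited result from \cite{lovett2023streaming}, stated as the prior state of the art that the present paper then improves upon (Theorem~\ref{thm:streaming_lower_bound} removes both the $\log n$ factor and the $t=\Omega(n^2)$ restriction). So there is no ``paper's own proof'' to compare against.

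That said, your sketch is built on this paper's $\mic$ notion and Lemma~\ref{lem:Multi-passupperbound}, which postdate \cite{lovett2023streaming}; you are therefore effectively aiming at a weakened form of the paper's own Lemma~\ref{thm:lowerbound}. The paper's route to that lemma is quite different from your hybrid/$\chi^2$ outline and loses no $\log n$: it writes $\boldsymbol{D}_1=\sum_S a_S\,\boldsymbol{D}^S$ as a mixture over random sets $S\subseteq[n]$ (each index in $S$ independently with probability $2p$), shows by Markov that for a constant fraction of $S$ the algorithm still separates $\boldsymbol{D}_0$ from the local needle distribution $\boldsymbol{D}^S$, and for each such $S$ reduces to an $|S|$-party communication problem (\textsf{MostlyEq}, Definition~\ref{def:mostlyeq}) with $\Omega(1)$ information complexity (Lemma~\ref{lem:ICresultforCCproblem}). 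Since every individual term of $\mic(\calA)$ appears in the local cost $\mic^S$ with probability $\Theta(p^2)$, averaging over $S$ yields $\mic(\calA)\ge\Omega(1/p^2)$ directly.

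Your plan has a genuine gap at its load-bearing step. You assert that each per-round, per-coordinate contribution in the KL decomposition is $O(p^2\log n)$, but you provide no hybrid sequence and no argument tying the conditioning pattern $\sfM_{(\le i,\ell-1)},\sfM_{(\le i-1,j)}$ to such a bound; the ``$O(\log n)$-bit fingerprint of $\alpha$'' heuristic is an intuition about the algorithm's strategy, not an inequality on conditional mutual information. Conditioning on $\alpha$ as you propose does make the coordinates independent, but it also changes the problem (the algorithm now faces a \emph{known} needle), and your fallback $\chi^2$-with-two-planted-needles across multiple passes is exactly the technical difficulty that both \cite{lovett2023streaming} and this paper are engineered to control---you would need an explicit bound on the cross-pass $\chi^2$ inner product, which you do not supply. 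The missing idea is the decomposition over $S$ together with the reduction to a clean communication primitive; that is what replaces the unstructured per-term estimate you are reaching for.
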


While this lower bound is nearly tight, it was conjectured by \cite{Andoni08,ToC16,DBLP:conf/esa/CrouchMVW16,lovett2023streaming} that the additional $\log(n)$ term can be removed, and it also was plausible that the $t = \Omega(n^2)$ restriction could be removed. This conjecture is for good reason, as for $n = \Theta(\domin)$ and $p \ll  \frac{1}{n^{2/3}}$ and $k = 1$, an upper bound for estimating frequency moments of \cite{braverman2014optimal} shows that $s p^2 n = O(1)$. Indeed, the upper bound of \cite{braverman2014optimal} shows how to estimate $F_q = \sum_{i=1}^t f_i^q$ up to an arbitrarily small but fixed constant factor in $O(t^{1-2/q})$ bits of memory and a single pass, for any $q > 3$. Notice that in distribution $\boldsymbol{D_0}$, we could choose a proper $n= \Theta(t)$ such that $F_q = n +o(n)$ with high probability. On the other hand, for distribution $\boldsymbol{D_1}$, we have that $F_q > (p \cdot n)^q$, and so if $p = \Theta(1/n^{1-1/q})$, these two distributions can be distinguished by the algorithm of \cite{braverman2014optimal}. In this case the conjecture would say $s = \Omega(1/(n p^2)) = \Omega(n^{1-2/q}) = \Omega(t^{1-2/q})$, which matches the space upper bound of \cite{braverman2014optimal}. 


We resolve this conjecture. As a consequence, several other streaming lower bounds mentioned by \cite{MPTW12,DBLP:conf/esa/CrouchMVW16, lovett2023streaming} can be improved automatically. Also, our results also imply that the frequency estimation problem for $q>2$ is as hard in the random order model as in the arbitrary order model (both are $\Omega(t^{1-2/q})$). 


\begin{restatable}[Multi-Pass Lower Bound for the Needle Problem]{theorem}{mainthm}
\label{thm:streaming_lower_bound}
Any $k$-pass streaming algorithm $\sfM$ with space $s$ that distinguishes $ \boldsymbol{D_0}$ and $ \boldsymbol{D_1}$ with high probability satisfies
$k sp^2n = \Omega(1)$, where $p$ denotes the needle probability and $n\leq t/100$ denotes the number of samples. 
\end{restatable}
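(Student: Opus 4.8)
The plan is to combine the generic bound $\mic(\sfM,\boldsymbol{D_0})\le 2ksn$ from Lemma~\ref{lem:Multi-passupperbound} with a matching lower bound on $\mic(\sfM,\boldsymbol{D_0})$ that any good algorithm must obey. First note we may assume $p^2n=o(1)$: otherwise $ksp^2n\ge p^2n=\Omega(1)$ already (using $ks\ge 1$), and there is nothing to prove. So it suffices to show that if a $k$-pass, space-$s$ algorithm $\sfM$ distinguishes $\boldsymbol{D_0}$ from $\boldsymbol{D_1}$ with constant advantage, then $\mic(\sfM,\boldsymbol{D_0})=\Omega(1/p^2)$; plugging this into the upper bound gives $2ksn=\Omega(1/p^2)$, i.e.\ $ksp^2n=\Omega(1)$. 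Throughout, all information is measured while $\sfM$ runs on the uniform stream $\boldsymbol{D_0}$, which is exactly the product distribution $\mu$ in the definition of $\mic$; the hypothesis $n\le t/100$ is used only to ensure that two independent uniform symbols collide with probability $1/t\le 1/(100n)$, so that the needle is a genuine perturbation of $\boldsymbol{D_0}$.

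To lower-bound $\mic(\sfM,\boldsymbol{D_0})$, the key is to identify what ``signal'' the algorithm is actually detecting and to charge the cost of detecting it to $\mic$. Writing $U$ for the uniform law on $[t]$ and expanding the likelihood ratio $d\boldsymbol{D_1}/d\boldsymbol{D_0}(x)=\mathbb{E}_{\alpha}\prod_{i=1}^n\bigl(1+p(t\,\mathbf{1}[x_i=\alpha]-1)\bigr)$, the order-$p^0$ term is $1$, the order-$p^1$ term vanishes after averaging over the needle $\alpha$ (a single planted copy of a uniform symbol is still uniform), and the leading, order-$p^2$ term equals $p^2\sum_{i<j}\bigl(t\,\mathbf{1}[x_i=x_j]-1\bigr)$ --- i.e.\ up to scaling the centered number of collisions in the stream. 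Hence, to first order, distinguishing $\boldsymbol{D_1}$ from $\boldsymbol{D_0}$ amounts to correlating the algorithm's bounded acceptance function $g$ with this collision statistic, and $|\langle g,\,d\boldsymbol{D_1}/d\boldsymbol{D_0}-1\rangle_{\boldsymbol{D_0}}| \lesssim p^2\sum_{i<j}\bigl|\langle g,\,t\,\mathbf{1}[X_i=X_j]-1\rangle_{\boldsymbol{D_0}}\bigr| + (\text{order }\ge p^3)$. The crucial estimate is that, for a single pair $i<j$, a streaming algorithm can correlate with the indicator $\mathbf{1}[X_i=X_j]$ only through whatever its memory remembers about $X_i$ by the time it reads $X_j$: one shows $|\langle g,\,t\,\mathbf{1}[X_i=X_j]-1\rangle_{\boldsymbol{D_0}}|$ is at most a constant times the information the memory states carry about $X_i$ just before position $j$, summed over the $k$ passes --- and each of those conditional mutual informations, conditioned on the appropriate staircase of earlier memory states, is one of the summands of $\mic(\sfM,\boldsymbol{D_0})$. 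Summing over all pairs then yields $(\text{advantage})\lesssim p^2\,\mic(\sfM,\boldsymbol{D_0})$ (modulo the higher-order terms), and therefore $\mic(\sfM,\boldsymbol{D_0})=\Omega(1/p^2)$, as desired.

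I expect two places to require real work. The first, and the reason the new measure is needed, is making the ``memory remembers $X_i$ at position $j$'' estimate precise across $k$ passes. In one pass the memory states form a Markov chain and the quantity is unambiguous, but with $k$ passes each symbol is revisited $k$ times and one must condition on exactly the sets appearing in $\mic$ --- the memory states $\sfM_{(\le i,\ell-1)}$ together with $\sfM_{(\le i-1,j)}$ --- so that, under $\boldsymbol{D_0}$, the residual randomness of $X_\ell$ is independent of everything conditioned upon; only then does a planted collision surface through those $\mic$ terms and through nothing else. This is the ``residual independence under a careful fixing of previous-pass memory states'' flagged before Lemma~\ref{lem:Multi-passupperbound}; it is also what lets us avoid the extra $\log n$ of \cite{lovett2023streaming} (whose argument paid $\log t=\Theta(\log n)$ to reason about the needle's location inside a size-$\Omega(n^2)$ domain) and drop the $t=\Omega(n^2)$ hypothesis entirely, keeping only $n\le t/100$. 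The second is controlling the order-$\ge p^3$ terms of the expansion, equivalently the contribution of streams in which the needle occurs three or more times; here the regime assumption $p^2n=o(1)$ --- which is free, as noted above --- together with the same pairwise charging argument applied to all pairs of needle occurrences should suffice, and I view this bookkeeping as the main remaining technical obstacle.
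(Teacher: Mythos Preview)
Your high-level plan --- combine the generic upper bound $\mic(\sfM,\boldsymbol{D_0})\le 2ksn$ of Lemma~\ref{lem:Multi-passupperbound} with a lower bound $\mic(\sfM,\boldsymbol{D_0})=\Omega(1/p^2)$ --- is exactly the paper's structure. The gap is in how you argue the lower bound. Your ``crucial estimate'' asserts that the pairwise correlation $|\langle g,\,t\mathbf{1}[X_i=X_j]-1\rangle_{\boldsymbol{D_0}}|$ is at most a constant times the mutual information the memory carries about $X_i$ just before position $j$. This is false: the correct relation is square-root, not linear. Indeed, that correlation equals the algorithm's advantage in telling $\boldsymbol{D_0}$ from the distribution in which $X_j$ is forced to equal $X_i$; that advantage is bounded by the expected total-variation distance between the conditional law of $X_i$ (given the relevant memory states) and the uniform law, which by Pinsker is at most $\sqrt{\I/2}$ and in general no better. (Counterexample: fix $A\subseteq[t]$ with $|A|=t/2$ and let the memory store a single bit equal to $\mathbf{1}[X_i\in A]$ with probability $\tfrac12+\epsilon$ and to $1-\mathbf{1}[X_i\in A]$ otherwise; then $\I=\Theta(\epsilon^2)$ while the correlation is $\Theta(\epsilon)$.) With the correct $\sqrt{\I}$ bound, summing over $\binom{n}{2}$ pairs and applying Cauchy--Schwarz yields only $\text{advantage}\lesssim p^2 n\sqrt{\mic}$, hence $\mic\gtrsim 1/(p^4n^2)$, far weaker than $\Omega(1/p^2)$. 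A related issue: your power-series expansion of the likelihood ratio is not controlled term by term once $pn$ is large (the degree-$k$ contribution scales like $(pn)^k/k!$), and your assumption $p^2n=o(1)$ does not prevent $pn\to\infty$, so the ``order $\ge p^3$'' remainder is not a perturbation at all in the interesting regime.

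The paper avoids both problems by a different decomposition. Rather than expand in $p$, it writes $\boldsymbol{D_1}=\bbE_S[\boldsymbol{D}^S]$ where $S\subseteq[n]$ is a random set of candidate needle positions (each index independently with probability $2p$) and $\boldsymbol{D}^S$ plants a needle with probability $1/2$ at each position of $S$ and is uniform elsewhere. For each ``good'' $S$ (one on which $\sfM$ still has small error), distinguishing $\boldsymbol{D_0}$ from $\boldsymbol{D}^S$ is recast as an $|S|$-party \textsf{MostlyEq} communication problem and then reduced to the $\mathrm{AND}$ information-complexity lower bound of Gronemeier and Jayram, giving $\mic^S=\Omega(1)$ \emph{directly}, with no pairwise-correlation or Pinsker step. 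Since each summand of $\mic$ appears in $\mic^S$ for at most an $O(p^2)$ fraction of sets $S$, averaging over $S$ gives $\mic\ge \Omega(1)/O(p^2)=\Omega(1/p^2)$. The conceptual point is that the $\Omega(1)$ information bound is obtained per subset $S$ by a black-box communication reduction, not by summing square roots over pairs.
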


If we use the algorithm of \cite{braverman2014optimal} to the needle problem by the reduction discussed above, we can conclude that our lower bound for the needle problem is tight when $p\ll \frac{1}{n^{2/3}}$. However, we further improve the upper bound by giving a new algorithm and show that:
\begin{theorem}[Improved Upper Bound for the Needle Problem]
    There exists a one-pass streaming algorithm that distinguishes $\boldsymbol{D}_0$ and $\boldsymbol{D}_1$ with high probability and uses $O(\frac{1}{p^2n})$ bits of space when $p\leq \frac{1}{\sqrt{n\log^3 n}}$. 
\end{theorem}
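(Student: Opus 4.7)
The plan is to exhibit a one-pass algorithm using $s := O(1/(p^2 n))$ bits. The key observation is that in the hypothesized regime $p \le 1/\sqrt{n\log^3 n}$ we have $s \ge \log^3 n$, so polylogarithmic working memory is available. The naive algorithm that stores $\Theta(s)$ recent stream items and checks for a duplicate uses $O(s \log n)$ bits (the extra $\log n$ coming from storing each item identity); the target is to shave this $\log n$ factor. The strategy is to replace storage of explicit items by a compact fingerprint-based collision detector.

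At a high level, the algorithm has two ingredients. \emph{(i) Subsampling.} Each stream position is retained independently with probability $q = \Theta(1/(pn))$, so that the retained set has $\Theta(1/p)$ items in expectation and, under $\boldsymbol{D}_1$, $\Theta(1)$ needles in expectation. To make this high-probability rather than constant-probability, one runs $\Theta(\log n)$ independent parallel copies and declares $\boldsymbol{D}_1$ if any copy detects. \emph{(ii) Fingerprint-based collision detection.} Each retained item $X_i$ is hashed via a pairwise-independent (or $4$-wise independent) hash $h: [t]\to [M]$ with $M = \Theta(1/p^2)$, and its fingerprint is inserted into a bit-array (or count-sketch) of size $\Theta(s)$ via a secondary hash; the detector fires as soon as two fingerprints collide in the same slot.

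The analysis has two parts. Under $\boldsymbol{D}_1$, Chernoff shows that in a constant fraction of the $\Theta(\log n)$ copies the subsample contains $\ge 2$ needles; they share the fingerprint $h(\alpha)$ and therefore collide in the detector, so the overall detection probability is $1-o(1)$. Under $\boldsymbol{D}_0$, all retained items are uniform on $[t]$; pairwise independence of $h$ bounds the expected number of true fingerprint collisions among the $\Theta(1/p)$ samples by $O(1/(p^2 M)) = O(1)$, and a Chebyshev (or a stronger $4$-wise independent) argument combined with a careful choice of the secondary-hash range keeps the false-positive probability at $o(1)$ across all copies. The total space is $\Theta(s)$ bits for the bit-array plus $O(\log\log n)$ auxiliary bits for counters and indices, so it fits within the $\log^3 n$ slack.

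The main obstacle is parameter calibration: one must simultaneously choose $q$, $M$, the secondary-hash range, and the number of amplifying copies so that their joint cost is exactly $O(s)$ rather than $O(s \cdot \mathrm{polylog}\, n)$. The hypothesis $p \le 1/\sqrt{n\log^3 n}$ provides precisely the $\log^3 n$ slack that absorbs the Chernoff concentration for needles, the Chebyshev tail for spurious collisions, and the bit-precision of the sketch. A secondary difficulty is the dependence on the domain size $t$: when $t$ is close to the minimum $100n$ the rate of random collisions under $\boldsymbol{D}_0$ is non-negligible, and the proof may require partitioning the stream into blocks, aggregating block-level fingerprint sketches, and re-tuning $M$ as a function of $t$ to preserve the $O(s)$ space bound.
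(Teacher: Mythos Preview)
Your subsample-and-fingerprint approach has a genuine gap at the parameter-counting level, before any careful calibration. To retain at least two copies of the needle after subsampling you need rate $q = \Omega(1/(pn))$, so the retained sample has $m = \Theta(nq) = \Theta(1/p)$ items. You then propose to detect a duplicate among these $m$ items using a bit-array (or Bloom-filter/count-sketch) of $s = \Theta(1/(p^2 n))$ bits. But whenever $pn > 1$ (which is the entire nontrivial regime) we have $m = 1/p > 1/(p^2 n) = s$: there are more retained items than there are bits in the sketch. Any secondary hash into $[s]$ then has load factor $m/s = \Theta(pn) \ge 1$, so under $\boldsymbol{D}_0$ essentially every slot is hit multiple times and the detector fires with probability $1 - o(1)$. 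Storing the actual primary fingerprints does not help either: $m$ fingerprints of $\log M = \Theta(\log n)$ bits each cost $\Theta((\log n)/p)$ bits, which exceeds $s$ unless $p \le 1/(n\log n)$, a far narrower range than the claimed $p \le 1/\sqrt{n\log^3 n}$. The ``Chebyshev tail'' you invoke cannot salvage this because the false-positive rate is $\Theta(1)$, not a small quantity whose tail you can control.

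The paper's algorithm avoids this obstruction by never trying to detect a single collision among $\Theta(1/p)$ stored items. Instead it partitions the stream into $\Theta(pn)$ consecutive groups of size $\Theta(1/p)$ each and, independently per group, samples a random $\Theta(p)$-fraction of the \emph{domain} (not of the stream). For each sampled domain subset it tracks a constant number of small hash-bucket counters across subsequent groups, incrementing a counter only when an element in the subset and in the right hash bucket appears. The key structural fact is temporal persistence: under $\boldsymbol{D}_1$ the needle lands in each group with constant probability, so the counter that happens to cover the needle grows linearly in the number of groups, whereas any other counter has expected increment $O(1/C_2)$ per group and dies (by a Hoeffding bound) after $O(1)$ groups with probability $1 - e^{-\Omega(r)}$ after $r$ rounds. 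Running this in parallel over $1/(p^2 n)$ domain blocks and using Bernstein's inequality to bound the total number of live counters is what gives the $O(1/(p^2 n))$ space bound; the $\log^3 n$ slack is consumed by the $O(\log n)$ lifespan cap on counters and the deviation term in the concentration, not by amplification copies or hash precision.
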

Our upper bound improves upon  \cite{braverman2014optimal}, and shows that our lower bound for the needle problem is indeed tight for any $p \leq \frac{1}{\sqrt{n\log^3 n}} $. The remaining gap only exists in the range of $p > \frac{1}{\sqrt{n\log^3 n}} $. 

In fact, for $p \geq \frac{1}{\sqrt{n}}$, for the $n$-player communication problem, where each player has a stream item and the players speak one at a time from left to right in the message passing model (see, e.g., \cite{jw23} for upper bounds for a number of problems in this model), we show that the problem can be solved by having each player send at most $O((\log \log n) (\log \log \log n))$ bits to the next player. It is not quite a streaming algorithm, as the players need to know their identity, but would be a streaming algorithm if we also allow for a clock, so that we know the number $i$ for the $i$-th stream update, for each $i$. 

\begin{theorem}[Upper Bound for Communication Game]\label{thm:sqrtn}
    There exists an $n$-player one-round communication protocol that distinguishes $\boldsymbol{D}_0$ and $\boldsymbol{D}_1$ with high probability and uses $O((\log \log n )(\log \log \log n))$ bits of space for any $p\geq \frac{1}{\sqrt{n}}$. 
\end{theorem}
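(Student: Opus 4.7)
My plan is to distinguish $\boldsymbol{D}_0$ from $\boldsymbol{D}_1$ by estimating the second frequency moment $F_2 := \sum_{j=1}^{t} f_j^2$ to within a constant multiplicative factor. A short second-moment calculation (using $t \geq 100n$) shows $\mathbb{E}_{\boldsymbol{D}_0}[F_2] = n(1 + o(1))$ whereas the needle contributes $\Theta(p^2 n^2) = \Omega(n)$ additional mass in $\boldsymbol{D}_1$, giving $\mathbb{E}_{\boldsymbol{D}_1}[F_2] \geq (1+c)n$ for an absolute constant $c > 0$ whenever $p \geq 1/\sqrt{n}$; both quantities are concentrated, so a constant-factor estimate of $F_2$ already separates the two distributions with constant advantage.

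\textbf{Step 1 (AMS estimator).} Using public randomness, the players share a hash $h : [t] \to \{-1, +1\}$ from a $4$-wise independent family. Writing $S_i := \sum_{j \leq i} h(X_j)$, we have $\mathbb{E}[S_n^2] = F_2$ and $\mathbb{V}[S_n^2] \leq 2 F_2^2$, so a single $S_n$ gives a constant-factor estimate of $F_2$ with constant probability. If each player could transmit $S_i$ exactly, this would solve the problem, but $|S_i|$ can be of order $\sqrt{n \log n}$, requiring $\Theta(\log n)$ bits per message.

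\textbf{Step 2 (approximate counter).} To bring the per-message cost down to $O(\log \log n)$ bits, I would replace the exact counter by one sign bit together with a Morris-style $(1 + 1/\log n)$-approximation of $|S_i|$; this fits in $O(\log \log n)$ bits since $|S_i| \leq n$. Player $i$ reads the approximate $(\mathrm{sgn}(S_{i-1}), |S_{i-1}|)$, adds $h(X_i) \in \{-1, +1\}$, and re-encodes. Because we only need $S_n^2$ up to a constant factor, the total multiplicative error over $n$ lossy updates can be absorbed by tuning the Morris precision to $1 + 1/(C \log n)$ for a sufficiently large constant $C$; the sign bit is handled by reverting to an exact count whenever $|S_i|$ falls below a small $\mathrm{polylog}(n)$ threshold.

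\textbf{Step 3 (boosting).} One trial succeeds with only constant probability, so I would run $R = \Theta(\log \log \log n)$ independent parallel copies with independent hashes and take the majority vote, giving the claimed total of $O((\log \log n)(\log \log \log n))$ bits per message and success probability $1 - 2^{-\Omega(R)}$. The use of the clock is essential here, since the Morris encoding/decoding rule depends on the step $i$. The main obstacle I anticipate is analysing the approximate counter on the \emph{signed} random walk $S_i$: Morris counters are classically analysed for monotone non-decreasing values, whereas $S_i$ can cross zero many times when its magnitude is small. The delicate step will be showing that the hybrid scheme (exact when $|S_i|$ is small, Morris when it is large) accumulates at most a constant multiplicative error on $S_n^2$ while fitting in the claimed message budget; once this is in hand, the reduction of Step 1 finishes the proof.
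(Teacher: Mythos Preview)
Your Step 2 has a genuine gap that cannot be repaired within the stated budget. A Morris-style counter with base $b=1+\gamma$ contributes variance $\approx\gamma\,\hat V$ to the estimator at \emph{every} step, whether the update is $+1$ or $-1$: the unbiased increment/decrement rule moves the estimate by $\pm b^c$ with probability $\approx b^{-c}$, giving per-step variance $b^c-1=(b-1)\hat V$. Summed along the walk this yields $\mathrm{Var}(\hat V_n)\approx\gamma\sum_{i\le n}\mathbb E|S_i|=\Theta(\gamma\,n^{3/2})$, since $\mathbb E|S_i|=\Theta(\sqrt i)$ for a simple random walk. For $\hat V_n$ to be within a constant factor of $|S_n|=\Theta(\sqrt n)$ you therefore need $\gamma=O(1/\sqrt n)$, and then the counter has $\log_b n=\Theta((\log n)/\gamma)=\Omega(\sqrt n\log n)$ states, i.e.\ $\Omega(\log n)$ bits---no saving over storing $S_i$ exactly. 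Your exact-mode threshold does not help, because the variance is accumulated precisely during the long stretches where $|S_i|=\Theta(\sqrt n)$ and you are in Morris mode. This obstruction is the same one behind the paper's Theorem~\ref{thm:counter} (constant-factor counting under random $\pm 1$ updates already needs $\Omega(\log n)$ bits), and the paper explicitly remarks right after stating the present theorem that the ``hash to $\pm 1$ and look at the sum'' route is suboptimal here.

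The paper's protocol avoids any running sum. It partitions the stream into $\sqrt n$ groups of size $\sqrt n$, hashes $[t]$ to a \emph{constant}-size alphabet $[C_2]$, and for each group $i$ samples a random subset $h_1(i)\subset[t]$ of density $\Theta(1/\sqrt n)$. Each group launches $C_2$ fresh counters; a counter survives the next group only if some element of that group lands in the counter's originating subset and matches its hash value, and it is killed whenever its hit rate drops below $1/3$. Under $\boldsymbol D_0$ a counter survives $r$ groups with probability $\le e^{-\Omega(r)}$, so at most $O(\log\log n)$ counters are alive at any time, each storing only its hash value, lifespan, and hit count in $O(\log\log\log n)$ bits. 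Under $\boldsymbol D_1$ a counter that catches the needle survives with constant probability; once some counter reaches lifespan $\Theta(\log\log n)$ it is tracked alone for $\Theta(\log n)$ further groups to confirm. This pick-and-drop scheme, not approximate counting, is what achieves the $O((\log\log n)(\log\log\log n))$ bound.
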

Theorem \ref{thm:sqrtn} shows that for $p = \frac{1}{\sqrt{n}}$, the needle problem is strictly easier than the coin problem, and thus the abovementioned algorithm for the needle problem, by first reducing to the coin problem, is suboptimal. Indeed, in the same communication model or for streaming algorithms with a clock, our $\Omega \left (\frac{\log n}{k} \right )$ lower bound in Theorem \ref{thm:coin} applies. Thus, for a constant number $k$ of passes, the coin problem requires $\Omega(\log n)$ bits of memory whereas the needle problem with $p = \frac{1}{\sqrt{n}}$ can be solved with $O((\log \log n )(\log \log \log n))$ bits of memory, showing that there exists a separation between the two problems in the $n$-player communication model. 

{\bf Remark.} Note that our algorithm not only applies to the needle problem, but also could be adapted to a more general setting where the needle is randomly ordered while non-needle items could be in an arbitrary order with some constraints. We would further discuss this in Section \ref{sec:algo1}. 



  %
  Our improved lower bound for the needle problem can be used to obtain optimal lower bounds in the random order model for arguably the most studied problem in the data stream literature, namely, that of approximating the frequency moments. Starting with the work of Alon, Matias, and Szegedy \cite{alon1996space}, there has been a huge body of work on approximating the frequency moments in arbitrary order streams, see, e.g., \cite{ccc03,BJKS04,Jay09,IW05,bhuvanagiri2006simpler,MW10,andoni2011streaming,Gan11,ganguly_et_al:LIPIcs:2018:9062}, and references therein. As mentioned above, Braverman et al. \cite{braverman2014optimal} gave an upper bound of $O(t^{1-2/q})$ for constant approximation for all $q>3$, which is optimal for arbitrary order insertion streams. 


A number of works have also studied the frequency moment estimation problem in {\it randomly ordered} streams. While the $O(t^{1-2/q})$ bit upper bound of \cite{braverman2014optimal} still holds, we did not have a matching lower bound. Chakrabarti, Cormode, and McGregor \cite{ToC16} gave the first non-trivial $\Omega(t^{1-3/q})$ lower bound. A follow-up paper by Andoni et al. \cite{Andoni08} improved this lower bound to $\Omega(t^{1-2.5/q})$. Recently, a lower bound of $\Omega(n^{1-2/q}/\log n)$ was shown by Lovett and Zhang \cite{lovett2023streaming}  provided $t = \Omega(n^2)$. Since a stream of i.i.d. samples is automatically randomly ordered, our Theorem \ref{thm:streaming_lower_bound} resolves this long line of work, giving an $\Omega(t^{1-2/q})$ lower bound. We thus improve the lower bound of \cite{lovett2023streaming} by a logarithmic factor, and also remove the requirement that $t = \Omega(n^2)$. The application to frequency moments follows by applying our theorem with $t = \Theta(n)$ and $p = 1/n^{1-1/q}$ and arguing that the needle problem gives rise to a constant factor gap in the value of the $q$-th frequency moment in the two cases. We note that the work of \cite{gh09} claimed to obtain an $\Omega(t^{1-2/q})$ lower bound for frequency moment estimation in a random order, but was later retracted due to an error which has been pointed out in multiple places, e.g., \cite{DBLP:journals/algorithmica/McGregorPTW16} retracts its lower bounds and points out the error\footnote{This error has also been confirmed with the authors of \cite{gh09} in 2016, and no fix with their techniques seems possible.} in \cite{gh09}.

%
%

There are other related problems to frequency moment estimation that we also obtain improved lower bounds for, such as frequency moment estimation of sub-sampled streams. McGregor et al. \cite{MPTW12} studied streaming problems in a model where the stream comes in so rapidly that you only can see each element independently with a certain probability. Our 
Theorem \ref{thm:streaming_lower_bound} gives an optimal lower bound for this problem as well, via the reduction in \cite{MPTW12}. Another example concerns stochastic streaming problems such as collision probability estimation studied by Crouch et al. \cite{DBLP:conf/esa/CrouchMVW16}. They provided several lower bounds based on the needle lower bound of \cite{Andoni08}. Our Theorem \ref{thm:streaming_lower_bound} automatically improves their lower bounds via the same reductions. 





\subsection{Technical Overview} 

In Subsection \ref{subsec:singlecoinoverview}, we give a technical overview of the $\Omega\left(\frac{\log n}{k}\right)$ memory lower bound for any $k$-pass streaming algorithm that solves the coin problem, that is, computes the majority bit with at least 0.999 advantage when input is drawn from the uniform distribution on $\zo^n$. In Subsection \ref{subsec:tcoinoverview}, we also give a brief overview of the direct sum theorem for solving multiple instances of the coin problem simultaneously using a multi-pass streaming algorithm. In Subsection \ref{sec:overviewofneedle}, we discuss the techniques used in the proof of the $kp^2s n =\Omega(1)$ trade-off for the $k$-pass  needle problem with parameters: $n$ the length of data stream, $p$ the needle probability and $s$ the space complexity. In Subsection \ref{sec:algooverview}, we briefly introduce the idea of our new upper bounds and algorithms that solves the needle problem efficiently. 

\subsubsection{Multi-pass Streaming Lower Bound for the Coin Problem}\label{subsec:singlecoinoverview}
As in \cite{braverman2020coin}, we show an information-variance tradeoff for any $k$-pass streaming algorithm that computes the majority bit on a length-$n$ stream. Simple counting gives a $\log n$ memory streaming algorithm for computing majority, whereas \cite{braverman2020coin} remarked that to compute the majority with at least 0.999 advantage over the uniform distribution, the expected variance of the sum of input bits, conditioned on the output of the algorithm, should be a constant factor less than the maximum variance possible. However, the techniques of \cite{braverman2020coin} are especially tailored for one-pass algorithms. Let $\sfO$ be a one-pass streaming algorithm for computing majority. Then \cite{braverman2020coin} showed that at most one of the following two statements can be true (assume $\epsl,\delta>0$ to be appropriate constants):
\begin{enumerate}
\item \label{it:po1} $IC(\sfO)=\sum_{j=1}^n \sum_{\ell=1}^j \I(\sfO_j;X_\ell|\sfO_{\ell -1}) \le \delta n\log n$.
\item \label{it:po2} $\bbE_{o_n\sim \sfO_n}\left[\Var\left(\sum_j X_j | \sfO_n=o_n\right)\right] \le (1-\epsl)n$.
\end{enumerate}
In other words, if the output of algorithm $\sfO$ reduces the variance of the sum\footnote{We want some mathematical quantity to measure how much information the output of algorithm $\sfO$ has about the sum/count of the input bits, and variance turns out to be the ideal measure for \cite{braverman2020coin} as well as for our paper.}, then it needs to have high information cost. Let us look more closely at their information complexity measure: $X$ is a length-$n$ stream of $i.i.d.$ uniform bits and $\sfO_j$ is the random variable for the memory state after reading $j$ input bits. Thus, the information cost ($IC(\sfO)$) measures the mutual information between the $j$-th memory state and the $\ell$-th input bit ($\ell\le j$) conditioned on the previous memory state (summed over $j,\ell$). This information measure is only defined for one-pass algorithms.  Furthermore, $IC(\sfO)$ is a useful measure only for product distributions on $X$ as otherwise it cannot be related to the memory used by algorithm $\sfO$. For multi-pass streaming algorithms, however, even after the first pass, the input distribution for the second pass might have spurious correlations (for example, the first pass can remember the parity of $X_1$ and $X_2$). Therefore, we need significantly new techniques to generalize the proof of \cite{braverman2020coin} to multi-pass streaming algorithms. 

Many of the multi-pass streaming lower bounds\footnote{For example, see the multi-pass streaming lower bounds for graph problems in \cite{mps1}.} use a round elimination or a pass elimination technique, and embed a hard instance for $(k-1)$-pass streaming algorithms within the input distribution obtained by conditioning on the output of the first pass of a $k$-pass streaming algorithm. The hope of such a technique is that the output of the $k$-pass algorithm can be used to answer the $(k-1)$-pass problem. Even for $2$ passes, one major challenge in using such a technique for the problem of computing the majority is as follows. The $1$-pass bound of \cite{braverman2020coin} lower bounds an information cost measure that is averaged over the time-steps, thus allowing the output of the first pass to store even $\sqrt{n}$ bits of information about the input. It is not clear how to embed the hard instance for a $1$-pass algorithm (which is the uniform distribution) into such a conditional distribution so that the majority of the instances are correlated; one needs to use the fact that the first pass does not compute the majority at any time-step but can store $n^{\Omega(1)}$ bits of information about the input.

{\bf Our technique.} Instead, to circumvent these challenges, we develop a new simulation technique to prove our multi-pass streaming lower bounds for the coin problem. We give a technical overview of the proof for $k=2$ passes and the beauty of our technique is that it readily generalizes to larger values of $k$ in one shot. Given a stream $X$ of $n$ $i.i.d.$ uniform bits, let $\sfM$ be a two-pass algorithm that goes over $X$ twice in order and computes the majority -- that is, the expected variance of the sum of input bits conditioned on the output of the second pass is a constant factor less than that of the maximum. Informally, for ease of discussion, we refer to the variance reduction as $\sfM$ approximating $\sum_j X_j$ up to an additive error\footnote{This is actually crucial as our proof technique does not work with large multiplicative errors, which is a bottleneck for generalizing  single pass memory lower bounds for the coin problem with larger biases~\cite{BGZ21} to multiple passes. } of $\epsl\sqrt{n}$. Using $\sfM$, we construct a one-pass algorithm $\sfO$, which given a stream of $n$ $i.i.d.$ uniform bits $Y$, also approximates $\sum_j Y_j$ up to an additive error of $\sim\epsl\sqrt{n}$. Let $\sfM_i$ represent the random variable for the output (or memory state) at the end of the $i$-th pass of $\sfM$ ($i\in\{1,2\}$). Algorithm $\sfO$ simulates both passes of $\sfM$ in parallel, for which it needs the output of the first pass or the starting state of the second pass before reading the input $Y$.  $\sfO$ samples the end state of the first pass $m_1\sim \sfM_1$ and executes the two passes assuming the first pass ends in $m_1$. For this, $\sfO$ modifies the input $Y$ to a valid input $X'$ for $\sfM$ such that the output of the first pass of $\sfM$ on $X'$ would be $m_1$. We show that such a modification is possible in a streaming fashion; at the $j$-th time-step, $\sfO$ modifies input bit $Y_j$ to $X'_j$ knowing $m_1$ and the $(j-1)$-th memory states of the two passes. The goal of such a simulation is to be able to use the one-pass lower bound of~\cite{braverman2020coin} for $\sfO$ and thus, we need two things: 1) to obtain an approximation for $\sum_j Y_j$ using the approximation for $\sum_j X'_j$ output by $\sfM_2$ -- the output after the second pass of algorithm $\sfM$, and 2) to relate $IC(\sfO)$ (defined in Item \ref{it:po1}) to a new information measure for $k$-pass algorithms. 

{\bf Approximating $\sum_j Y_j$ using the output of the 2-pass algorithm $\sfM$.} As we noted above, it is challenging to show that $|\sum_j Y_j-\sum_j X'_j| < \epsl \sqrt{n} $, where the marginal distribution on $X'$ is the conditional distribution on reaching $m_1$ at the end of the first pass. Especially, if we just assume a memory upper bound on $m_1$, $|\sum_j Y_j-\sum_j X'_j|$ can be much greater than $ \sqrt{n}$. To bypass this challenge, we maintain a sketch of the modification, that is, of $\sum_j Y_j-\sum_j X'_j$. Storing the difference exactly requires $\log n$ memory, which we cannot afford if we want to use the $\Omega(\log n)$ memory lower bound for single-pass algorithms. However, we are able to  approximate $\sum_j Y_j-\sum_j X'_j$ up to an additive error of $\epsl\sqrt{n}$ using low memory - we actually only give an approximation algorithm whose memory states have low entropy, which is \emph{sufficient} for lifting information cost lower bounds. The key observation that makes such an approximation possible is: $\sfO$ does not need to drastically modify $Y$ (which has the uniform distribution) to obtain $X'$, as the expected KL divergence of the conditional distribution of $X'$ on reaching $m_1$ from the uniform distribution is bounded by the entropy of the random variable $\sfM_1$. We assume $\Ent(\sfM_1)<n^{\delta}$, but this can be much greater than $\log n$; this is vital for our generalization to solving multiple copies of the coin problem simultaneously. 

{\bf New information measure for $k$-pass algorithms.} At the $j$-th time-step, algorithm $\sfO$ stores $\sfM_1$ and the $j$-th memory states for the two passes of $\sfM$ (in addition to the approximation for modifying input $Y$ to $X'$). Looking at the single-pass information measure for this simulation algorithm $\sfO$, we get a natural expression for the an information measure for $k$-pass algorithms $\sfM$ as follows:
\[\miccoin(\sfM)=\sum_{j=1}^n\sum_{\ell=1}^j \I\left(\sfM_{(\le k,j)};X_{\ell}|\sfM_{< k},\sfM_{(\le k,\ell-1)}\right).\]
Here, $X$ is drawn from the uniform distribution over $\zo^n$, and $\sfM_{(i,j)}$ represents the $j$-th memory state for the $i$-th pass. This is related to our $\mic$ notion as follows, and can be of independent interest.
 \[
 \miccoin(\sfM) \leq \sum_{r=1}^{k}\sum_{i=1}^n\sum_{\ell=1}^i \mutualent(\sfM_{(r,i)};X_{\ell}\mid \sfM_{(\leq r,\ell-1)}, \sfM_{(\leq r-1,i)}) \leq \mic(\sfM). 
 \]
 By upper bounding $\mic$ as in Lemma \ref{lem:Multi-passupperbound}, we obtain an upper bound on $\miccoin$. 

{\bf Solving multiple instances of the coin problem. }\label{subsec:tcoinoverview}
We generalize our multi-pass streaming lower bounds to solving multiple instances of the coin problem simultaneously. Informally, given $t$ interleaved input streams generated by $n$ $i.i.d.$ uniform bits each, the goal of a multi-pass streaming algorithm is to output the majority of an arbitrary stream at the end of $k$ passes.  We show that any $k$-pass streaming algorithm that solves the $\tcoins$ requires $\Omega(\frac{t\log n}{k} )$ bits of memory (for $t<n^\delta$). As for the single coin case, we reduce the multiple coin case to the analogous result for one-pass streaming algorithms proven by \cite{braverman2020coin}. We simulate the multi-pass algorithm for the $\tcoins$ using a one-pass algorithm that maintains $t$ approximations for modifying each input stream to a valid stream for the $k$-pass algorithm. For the generalization, we utilize the fact that the single coin simulation works even when the output of the first pass has poly($n$) entropy; for the $\tcoins$ problem, we work with memories as large as $t\log n$.

\subsubsection{Multi-pass Streaming Lower Bound for the Needle Problem}\label{sec:overviewofneedle}
Since we have shown that $\mic(\sfM,\boldsymbol{D}_0)$ is upper bounded by $2ksn$, it suffices to give an $\Omega(1/p^2)$ lower bound for $\mic(\sfM,\boldsymbol{D}_0)$ as we formally present in Lemma \ref{thm:lowerbound}. In the following, we give the intuition behind Lemma \ref{thm:lowerbound}. Its formal proof can be found in Section \ref{sec:mainlemma}. We also provide a more detailed proof sketch for this lemma in Section \ref{sec:intromain}. 

\textit{In the needle problem, we use the notion $\mic(\calA, \boldsymbol{D}_0)$ as we defined before, where $\boldsymbol{D}_0$ stands for the uniform distribution.} For simplicity, we write $\mic(\calA)$ in the needle problem, and it could be easily distinguished from the notion $\miccoin(\calA)$ used in the coin problem. 

\begin{restatable}[]{lemma}{multilower}\label{thm:lowerbound}
In the needle problem, if a $k$-pass streaming algorithm $\calA$ distinguishes between $ \boldsymbol{D_0}$ and $ \boldsymbol{D_1}$ with high probability, then we have
$\mic(\calA)=\Omega(1/p^2).$
\end{restatable}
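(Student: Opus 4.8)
The plan is to contrapose: assuming $\mic(\calA,\boldsymbol{D}_0)=o(1/p^2)$, I will show that $\calA$ cannot distinguish $\boldsymbol{D}_0$ from $\boldsymbol{D}_1$ with high probability. The guiding intuition is that $\mic$ controls, term by term, how much each memory state ``knows'' about each individual stream item $X_\ell$, conditioned on the appropriate surrounding memory states; if this total is much smaller than $1/p^2$, then no memory state can detect the presence of a needle. Concretely, under $\boldsymbol{D}_1$ the needle $\alpha$ appears (in expectation) in $pn$ of the $n$ coordinates, and the ``signal'' that a fixed memory state $\sfM_{(i,j)}$ receives about $\alpha$ from a single needle-occurrence is of order $p$ in statistical-distance-squared terms, so across $pn$ occurrences and $kn$ memory states the accumulated detectable signal is of order $k n \cdot pn \cdot (\text{per-item information})$; forcing this to be $\Omega(1)$ is exactly $\mic = \Omega(1/p^2)$ after the $2ksn$ normalization. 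The technical vehicle will be a hybrid/coupling argument: I will construct a coupling between the execution of $\calA$ on $\boldsymbol{D}_0$ and on $\boldsymbol{D}_1$ (with a freshly sampled needle $\alpha$) such that the two executions produce the same transcript of memory states $\sfM_{(\le k, \le n)}$ unless some ``bad event'' fires, and bound the probability of the bad event by a quantity controlled by $\mic(\calA)$.

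The key steps, in order. First, I would reduce $\boldsymbol{D}_1$ to a more tractable planted model: instead of each item independently being the needle with probability $p$, condition on the set $S\subseteq[n]$ of needle-positions and on $\alpha$; since $n\le t/100$, with high probability $|S|\le 2pn$ and the non-needle items are still essentially uniform, so it suffices to bound distinguishing advantage for a typical $(S,\alpha)$. Second, I would run a hybrid argument over passes and/or over the positions in $S$: starting from the all-uniform stream, I flip the positions in $S$ to equal $\alpha$ one at a time (or in groups), and at each hybrid step bound the change in the distribution of the full memory transcript. Third — the crux — for a single flipped position $\ell\in S$, I must show that changing $X_\ell$ from uniform to the fixed value $\alpha$ changes the transcript distribution by an amount whose square is bounded by the sum of the $\mic$-terms involving index $\ell$, i.e. by $\sum_{i,r} \mutualent(\sfM_{(i,\ell)}\text{ or }\sfM_{(r,i)}; X_\ell \mid \text{conditioning})$. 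This is where the precise conditioning pattern in the definition of $\mic$ — conditioning on $\sfM_{(\le i,\ell-1)}$ and $\sfM_{(\le i-1, j)}$, which is exactly the ``residual independence'' structure the authors highlight — must be used: the conditioning should make $X_\ell$ independent of the already-revealed memory states so that a chain-rule/Pinsker argument converts mutual information into statistical distance on the newly produced states. Fourth, I would sum these per-position bounds over $\ell\in S$ (using Cauchy–Schwarz to pass from a sum of square-roots of informations to $\sqrt{|S|}\cdot\sqrt{\mic}$), obtaining total advantage $O(\sqrt{pn}\cdot\sqrt{\mic(\calA)})$ against a fixed needle; averaging the needle location introduces the extra $\sqrt{pn}$ or a $p$ factor to land on advantage $O(pn\cdot\sqrt{\mic}/\sqrt{?})$ — the exact bookkeeping is what forces the clean conclusion that advantage $\Omega(1)$ requires $\mic=\Omega(1/p^2)$.

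I expect the main obstacle to be Step three: correctly exploiting the asymmetric, pass-indexed conditioning in the definition of $\mic$ so that the information terms genuinely upper-bound the transcript perturbation caused by planting one needle-occurrence. In a multi-pass execution the memory state $\sfM_{(i,\ell)}$ depends on $X_\ell$ through \emph{all} previous passes as well as the current one, so a naive conditioning does not render $X_\ell$ conditionally independent of the relevant ``past''; the two separate sums in the definition of $\mic$ (one for $\ell\le j$, one for $\ell>j$, with different conditioning sets) are presumably designed precisely to handle the two ways $X_\ell$ can influence $\sfM_{(i,j)}$ (within the current pass up to position $j$, versus carried over from an earlier pass), and I would need to split the hybrid perturbation analysis along the same dichotomy and verify that each piece is absorbed by the matching sum. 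A secondary subtlety is that I only have control of $\mic$ under $\boldsymbol{D}_0$, whereas the perturbed executions are not literally $\boldsymbol{D}_0$; handling this requires keeping the hybrids close enough to $\boldsymbol{D}_0$ that the $\boldsymbol{D}_0$-information bounds remain valid up to lower-order error, which is plausible because each individual flip is a small perturbation but must be tracked carefully through the chain rule. A cleaner alternative I would also attempt is a direct second-moment / Fourier-analytic estimate on $\mic(\calA,\boldsymbol{D}_0) - \mic(\calA,\boldsymbol{D}_1)$-type quantities, but I expect the coupling route above to be the more robust one, and the detailed proof sketch promised in Section~\ref{sec:intromain} presumably follows essentially this skeleton.
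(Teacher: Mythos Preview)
Your coupling/hybrid approach is genuinely different from the paper's, but it has a quantitative gap that looks fatal: as sketched it yields at best $\mic(\calA)=\Omega(1/(pn))$, not the required $\Omega(1/p^2)$. Tracing your Step four: summing the per-flip Pinsker bounds over the $|S|\approx pn$ needle positions and applying Cauchy--Schwarz gives advantage at most $\sqrt{|S|\cdot\sum_{\ell\in S} I_\ell}\le\sqrt{pn\cdot\mic}$, so constant advantage only forces $\mic=\Omega(1/(pn))$. The ``extra $\sqrt{pn}$ or a $p$ factor'' you hope averaging over $\alpha$ will supply is not actually there---averaging over the needle value is already what converts per-$\alpha$ KL into mutual information via $\mathbb E_\alpha[D_{\mathrm{KL}}(\cdot\mid X_\ell=\alpha\,\|\,\cdot)]=\I(\cdot;X_\ell)$, and contributes no further small factor.

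The structural reason the hybrid falls short is that detecting the needle is a \emph{collision} phenomenon: a single position equaling a uniformly random $\alpha$ carries no signal; only coincidences between \emph{pairs} of positions do. A one-position-at-a-time hybrid bounded by the triangle inequality is blind to this pair structure. The paper's proof exploits it head-on. It decomposes $\boldsymbol D_1$ over a random candidate set $S\subseteq[n]$ (each index included independently with probability $2p$, so on $S$ the needle probability is $1/2$), and shows via a reduction to an $|S|$-party \textsf{MostlyEq}/\textsf{AND} communication problem that whenever $\calA$ distinguishes $\boldsymbol D_0$ from the local needle distribution $\boldsymbol D^S$ one has
\[
\mic^S \;:=\;\sum_{i=1}^{k-1}\sum_{j=1}^{|S|}\sum_{\ell=1}^{j}\I\bigl(\sfM_{(i,p_{j+1}-1)};X_{p_\ell}\mid \sfM_{(\le i,p_\ell-1)},\sfM_{(\le i-1,p_{j+1}-1)}\bigr)\;=\;\Omega(1).
\]
Each term of $\mic(\calA)$ is indexed by a \emph{pair} of stream positions $(j',\ell')$, and that term appears in $\mic^S$ exactly when both $j'+1$ and $\ell'$ lie in $S$, an event of probability $\approx(2p)^2$. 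Hence $\mathbb E_S[\mic^S]\le 4p^2\cdot\mic(\calA)$, and a Markov argument over $S$ gives $\mic(\calA)=\Omega(1/p^2)$. The $p^2$ arises precisely from the double sum over $(j,\ell)$ built into the definition of $\mic$; any argument that handles needle positions singly will miss one factor of $p$. If you want to repair your route, the missing ingredient is a lower bound on a \emph{pairwise} information quantity (morally, ``memory before one candidate needle position must remember earlier candidate positions''), and that is exactly what the \textsf{MostlyEq} reduction supplies.
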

Let us first consider the special case when $p=1/2$. A useful observation is that the needle problem with $p=1/2$ is very similar to the \textsf{MostlyDISJ} communication problem \cite{KPW21}. Viewing the needle problem with $p=1/2$ as a multiparty communication problem (we name it \textsf{MostlyEq} in Definition \ref{def:mostlyeq}), we can show that its information complexity is $\Omega(1)$ (Lemma \ref{lem:ICresultforCCproblem}). As a direct consequence, the mutual information $\I(\sfM;X)$ between $\sfM=(\sfM_{(i,j)})_{i\in[k],j\in[n]}$ and input $X=(X_1,\cdots,X_n)$ is also $\Omega(1)$. Then, we can prove that $\mic(\sfM) \geq \I(\sfM, X) \geq \Omega(1)$ with some information theory calculations.

Now, let us consider general $p\leq 1/2$. We write $\boldsymbol{D_1}(p)$ to make the parameter $p$ in the needle distribution $\boldsymbol{D}_1$ explicit. Note that for a general $p\leq 1/2$, its needle distribution $\boldsymbol{D_1}(p)$ is a mixture of the needle distribution $\boldsymbol{D_1}(1/2)$ with $p=1/2$ and the uniform distribution $\boldsymbol{D_0}$. That is to say, a data stream $X_1,\cdots,X_n$ from $\boldsymbol{D_1}(p)$ can also be generated as follows: 
\begin{enumerate}
\item let $S$ be a random subset of $[n]$ where each $i\in [n]$ is contained in $S$ independently with probability $2p$;
\item for each $i\in S$, pick $X_i$ following the needle distribution $\boldsymbol{D_1}(1/2)$;
\item for each $i\notin S$, pick $X_i$ following the uniform distribution $\boldsymbol{D_0}$.
\end{enumerate}
Thus, solving the needle problem with general $p$ is equivalent to solving the needle problem with $p=1/2$ hiding in a secret location $S$. Because the streaming algorithm $\sfM$ does not know $S$, if $\sfM$ solves the needle problem for general $p$, then $\sfM$ must solve the needle problem with $p=1/2$ located at $S$ simultaneously for most choices of $S$. 

If $\sfM$ solves the needle problem with $p=1/2$ located at $S=\{\pos_1,\pos_2,\cdots,\pos_{m}\}$, then as shown above, it has 
\begin{equation}\label{eq:p=1/2}
\mic^{S}:=\sum_{i=1}^{k-1}\sum_{j=1}^m\sum_{\ell=1}^i \I(\sfM_{(i,p_{j+1}-1)};X_{p_\ell}\mid \sfM_{(\leq i,p_{\ell}-1)},\sfM_{(\leq i-1,p_{j+1}-1)})=\Omega(1).
\end{equation}
where we define $\pos_{m+1}$ as $\pos_1$. Finally, by taking an expectation over $S$, the L.H.S. of  \eqref{eq:p=1/2} is about $\mic(\sfM)/O(p^2)$ since each term $\mutualent(\sfM_{(i,j)};X_{\ell}\mid \sfM_{(\leq i,\ell-1)}, \sfM_{(\leq i-1,j)})$ or $\mutualent(\sfM_{(i,j)};X_{\ell}\mid \sfM_{(\leq i-1,\ell-1)}, \sfM_{(\leq i-1,j)})$ of $\mic(\sfM)$ appears in the L.H.S. of \eqref{eq:p=1/2} for an $O(p^2)$ fraction of $S$.

\subsubsection{Algorithms for the Needle Problem}\label{sec:algooverview}
%

We introduce the idea behind the following two algorithms: 
(1) $\calA_1$ that solves the needle problem with $p\geq 1/\sqrt{n}$ in $O(\log \log n (\log\log\log n))$ space in the communication model; (2) $\calA_2$ that solves the needle problem with $p\leq 1/\sqrt{n\log^3 n}$ in $O\big(1/(p^2t)\big)$ space in the streaming model.
As with previous work \cite{braverman2014optimal,braverman2020coin,wz21} for finding $\ell_2$-heavy hitters we partition the stream items into contiguous groups (in previous work, these groups contain $\Theta(\sqrt{n})$ stream items). These works sample $O(1)$ items in each group and track them over a small number of future groups - this is sometimes called {\it pick and drop sampling}. A major difference between our algorithms and these is that we cannot afford to store the identity of an item and track it, as that would require messages of length at least $\log n$ bits. 
A natural idea is to instead track a small hash of an item but there will be a huge number of collisions throughout the stream if we use fewer than $\log n$ bits.

We start by choosing each group to be of size $\Theta(1/p)$, so each group has one occurrence of the needle with constant probability under distribution $\boldsymbol{D}_1$. We discuss the algorithm $\calA_1$ for  $p=1/\sqrt{n}$ first, so the group size is $\sqrt{n}$.  This can be generalized to any $p\geq 1/\sqrt{n}$ (see Section \ref{sec:upper_bound}). 
For universe $[t]$, we randomly sample a hash function projecting $[t]$ to $[C_2]$, where $C_2$ is a constant. For each group, we randomly sample a subset of $[t]$ with size $C_1 t/\sqrt{n}$. Then, $\calA_1$ runs in $\sqrt{n}$ rounds, and processes one data group in each round. In round $i$, we set $C_2$ new counters: for each $j\in [C_2]$, we set a counter that tracks $j$; and when processing the following groups, we check if each element $x$ exists in $i$'s random subset; if it is in, we update its corresponding counter (the counter of group $i$ with the hash value of $x$). After processing each group, we check each counter to see if it is at least a constant times the number of groups processed after it was initialized. If not, we drop it. The intuition is that (1) a counter that does not track the needle survives $r$ rounds with probability less than $e^{-cr}$, and (2) a counter tracking the needle has constant probability to survive. We refer the reader to a slightly longer intuition in Section \ref{sec:algo1}. We may accumulate many counters and the space may hit our $O((\log \log n)(\log \log \log n))$ bound - if so, we throw away all counters and start over. We show that at the critical time when we start processing the needle we will not throw it away.

For the second algorithm $\calA_2$, the idea is to divide the domain $[t]$ into $1/(p^2n)$ \textit{blocks} and simultaneously run $1/(p^2n)$ algorithms similar to $\calA_1$ on each block, checking if the needle exists. We show we only need $O(1/(p^2n))$ space in total when $p\leq 1/\sqrt{n \log^3 n}$ holds. This algorithm is in the standard streaming model since we can afford an additive $\log n$ bit counter. See Section \ref{sec:algo2}. 

\subsection{Future Directions}

 Using the single-pass notion of information complexity in \cite{braverman2020coin}, that we extend to multiple passes, Brown, Bun and Smith \cite{Colt22} showed single-pass streaming lower bounds for several learning problems. A natural question is if our multi-pass techniques can be useful for  learning problems. 
Another potential application is that of Dinur \cite{euro2019}, who shows streaming lower bounds for distinguishing random functions from random permutations.  Also, Kamath et al. \cite{KPW21} study the heavy hitters problem for $O(1)$ pass algorithms. These results have a logarithmic factor gap and use more classical notions of information complexity. Can our techniques apply here?


\subsection{Organization}
In Section \ref{sec:prelim}, we give preliminaries. In Section \ref{sec:upper}, we give upper bounds on our multi-pass information complexity notions. We then lower bound the information complexity of the coin problem in Section \ref{sec:single}, while we give our lower bound  for the needle problem in Section \ref{sec:mainlemma}. In Section \ref{sec:upper_bound}, we give two new algorithms for the needle problems to improve the upper bounds. Section \ref{sec:application} gives our other streaming applications.
\subsection{Acknowledgements}
The authors thank three
anonymous STOC reviewers for their helpful suggestions on this paper. 

Mark Braverman is supported in part by the NSF Alan T. Waterman Award, Grant No. 1933331, a Packard Fellowship in Science and Engineering, and the Simons Collaboration on Algorithms and Geometry. 

Qian Li is supported by Hetao Shenzhen-Hong Kong Science and Technology  Innovation Cooperation Zone Project (No.HZQSWS-KCCYB-2024016). 

David P. Woodruff's research is supported in part by a Simons Investigator Award, and part of this work was done while visiting the Simons Institute for the Theory of Computing. 

Jiapeng Zhang's research is supported by NSF CAREER award 2141536.

\newpage

\tableofcontents

\newpage 

\section{Preliminaries}\label{sec:prelim}

Let $[k]$ denote the set of natural numbers $\{1,2,\ldots,k\}$. Unless mentioned otherwise, we assume $\log$ to be base $2$. We will denote bits by the sets $\{0, 1\}$ and $\zo$ interchangeably. We use capital letters such $X, Y, Z,$ etc., to denote random variables and $x, y, z,$ etc., to denote the values these random variables take. We use the notation $(Z_1\indep Z_2 | Z_3)$ to denote conditional independence -- finite random variables $Z_1$ and $Z_2$ are independent conditioned on $Z_3$, if and only if  
\[\Pr[Z_1=z_1 | Z_2=z_2,Z_3=z_3]= \Pr[Z_1=z_1 | Z_3=z_3], \text{ for all values }z_1,z_2, z_3.\] Given a probability distribution $\calD:\X\rightarrow [0,1]$, we use the notation $x\sim \calD$ when value $x$ is sampled according to distribution $\calD$.   Similarly, we use the notation $z\sim Z$ to denote the process that $Z$ takes value $z$ with probability $\Pr[Z=z]$. We use $\Ber(q)$ to denote the Bernoulli distribution which takes value $1$ with probability $q$ and $0$ with probability $1-q$. We use notations $\bbE[Z]$ and $\Var(Z)$ to denote the expectation and variance of random variable $Z$ respectively. $\bbE[Z|Y=y]$ and $\Var(Z|Y=y)$ denote the expectation and variance of random variable $Z$ conditioned on the event $Y=y$.

\paragraph{Basics of information theory.} Given a random variable $Z$, $\Ent(Z)$ denotes the Shannon entropy of $Z$, that is, $\Ent(Z)=\sum_{z}\Pr(Z=z)\log  (1/\Pr(Z=z))$. We also use 
$\Ent(\calD)$ to denote the entropy of the probability distribution $\calD$. $\I(X;Y|Z)$ represents the mutual information between $X$ and $Y$ 
conditioned on the random variable $Z$. $\I(X;Y|Z)=\Ent(X|Z)-\Ent(X|Y,Z)$, where $\Ent(X|Y)=\bbE_{y\sim Y}\Ent(X|Y=y)\le \Ent(X)$. 
Next, we describe some of the properties of mutual information used in the paper. 
\begin{enumerate}
\item \label{itemmi1} (Chain Rule) $\I(XY;Z)=\I(X;Z)+\I(Y;Z|X)$.
\item \label{itemmi4} For discrete random variables, $X,Y,Z$, $\I(X;Y|Z)=0\iff X\indep Y|Z$.
\item \label{itemmi2} If $\I(Z_1;Z_2|X,Y) = 0$, then $\I(X;Z_2|Y) \ge \I(X;Z_2|Y,Z_1)$.
\item \label{itemmi3} If $\I(Z_1;Z_2|Y) = 0$, then $\I(X;Z_2|Y)\le \I(X;Z_2|Y,Z_1)$.
\end{enumerate}
Property \ref{itemmi1} follows from the chain rule for entropy ($\Ent$). For property \ref{itemmi4}, it is easy to see that if $X\indep Y|Z$, then $\I(X;Y|Z)=0$; the other direction uses strict concavity of the $\log$ function. Properties \ref{itemmi2} and \ref{itemmi3} follow from the observation that 
\[\I(X;Z_2|Y)+\I(Z_1;Z_2|X,Y)=\I(X,Z_1;Z_2|Y)=\I(Z_1;Z_2|Y)+\I(X;Z_2|Y,Z_1).\]
As mutual information is non-negative, if $\I(Z_1;Z_2|X,Y)=0$, then $\I(X;Z_2|Y)\ge \I(X;Z_2|Y,Z_1)$ (because $\I(Z_1;Z_2|Y)\ge 0$) and if $\I(Z_1;Z_2|Y)=0$, then $\I(X;Z_2|Y)\le \I(X;Z_2|Y,Z_1)$.

\paragraph{Multi-pass streaming algorithms.} Given a stream of $n$ input elements, $x_1,\ldots,x_n\in\X^n$, we say $\sfM$ is a $k$-pass algorithm (for $k\ge 1$) when it goes over the entire stream $k$ times in order. We use $\sfM_{(i,j)}$, for $i\in [k], j\in [n]$, to denote the random variable representing the memory state of $\sfM$ in the $i$-th pass after reading $j$ input elements. Let $\sfM_0=\sfM_{(1,0)}$ denote the starting memory state and for ease of notation, let $\sfM_{(i+1,0)}=\sfM_{(i,n)}$ for all $i\in[k]$. We say algorithm $\sfM$ uses private randomness if at every time-step, $\sfM$ uses independent randomness to transition to the next memory state. Let $\Ra_{(i,j)}^\sfM$ ($i\in [k], j\in [n]$) denote the random variable for private randomness used by algorithm $\sfM$ at the $j$-th step of the $i$-th pass.
Let $\Ra^\sfM=\{\Ra_{(i,j)}^\sfM\}_{i\in[k],j\in[n]}$ denote the set of all randomness used by $\sfM$. 

We use lowercase letters  $m_{(i,j)}$ and $r_{(i,j)}$ to represent instantiations of these random variables, $\sfM_{(i,j)}$ and $\Ra_{(i,j)}$ respectively. We use $f_{(i,j)}^\sfM$ to represent the transition function at the $j$-th time-step of the $i$-th pass, that is, \[m_{(i,j)}=f_{(i,j)}^\sfM\left(x_j, m_{(i,j-1)},r_{(i,j)}\right).\] Let $\sfM(x,r)$ denote the output when $\sfM$ executes $k$-passes using input $x$ and randomness $r$. 
We use notation $[a,b]$ in the subscript to represent random variables indexed from $a$ to $b$, for example, $\sfM_{([1,i],j)}$ represents $j$-th memory states for the first $i$ passes, that is, $\sfM_{(1,j)},\ldots, \sfM_{(i,j)}$. We use notations $<b$, $\le b$ in the subscript to represent all the corresponding random variables with index less than $b$ or at most $b$ respectively. For example, $\sfM_{(i,\le j)}$ represents random variables $\sfM_{(i,[0,j])}$.

\paragraph{Information cost for multi-pass streaming algorithms. } As we mentioned before, for a given distribution $\mu$ over $\X^n$, we define the information cost of a $k$-pass protocol $\sfM$ on distribution $\mu$ by:
\begin{align*}
\mic(\sfM,\mu)=&\sum_{i=1}^{k}\sum_{j=1}^n\sum_{\ell=1}^j \left(\sfM_{(i,j)};X_{\ell}\mid \sfM_{(\leq i,\ell-1)}, \sfM_{(\leq i-1,j)}\right)\\
&+\sum_{i=1}^{k}\sum_{j=1}^n\sum_{\ell=j+1}^n \left(\sfM_{(i,j)};X_{\ell}\mid \sfM_{(\leq i-1,\ell-1)}, \sfM_{(\leq i-1,j)}\right).
\end{align*}
Here, $X$ is drawn from $\mu$, and random variables for memory states depend both on the randomness of the input as well as private randomness used by the algorithm. When $\mu$ is clear from context, we will drop it from the notation. For the multi-pass lower bound on the coin problem, we use a different information cost notion for multi-pass streaming algorithms, which might be of independent interest. Below, we condition on the end memory states of all the passes, which allows us to consider information learnt by $j$th memory states of all passes simultaneously; this is particularly useful for analysing one-pass simulation of $\sfM$.  For ease of notation, we use $\sfM_i$ to denote the end memory state of the $i$th pass, that is, $\sfM_i=\sfM_{(i,n)}$ for all $i\in[k]$. Subsequently,  $\sfM_{(< i)}$ represents $\sfM_{[0,i-1]}$ -- end memory states of first $i-1$ passes including the starting memory state. 

\[\miccoin(\sfM,\mu)=\sum_{j=1}^n\sum_{\ell=1}^j \I\left(\sfM_{(\le k,j)};X_{\ell}|\sfM_{< k},\sfM_{(\le k,\ell-1)}\right).\]
In Section \ref{sec:upper}, we will prove that for all  product distributions $\mu$, $\miccoin(\sfM,\mu)\le \mic(\sfM,\mu)$.

\paragraph{Concentration inequalities.} As commonly used, the mathematical constant $e$ denotes Euler's number. In this paper, we will use two concentration inequalities -- the Chernoff bound~\cite{chernoff} and Berstein's inequality~\cite{bernstein1,bernstein2}. Let $Z_1,Z_2,\ldots,Z_n$ be independent random variables in $\{0,1\}$, and $\nu=\bbE\left[\sum_{j=1}^nZ_j\right]$. Then, by the Chernoff bound,
\begin{equation}
\label{eq:ch1}
\Pr\left[\left|\sum_{j=1}^nZ_j-\nu\right|\ge \delta \nu\right]\le 2\exp{\left(-\frac{\delta^2\nu}{3}\right)}, \;\forall\; 0<\delta<1, \text{ and}
\end{equation}
\begin{equation}
\label{eq:ch2}
\Pr\left[\sum_{j=1}^nZ_j\ge (1+\delta)\nu\right]\le \exp{\left(-\frac{\delta^2\nu}{2+\delta}\right)}, \;\forall\; \delta>0.
\end{equation}
Note that, for all $\delta\ge 1$, $\frac{\delta}{2+\delta}\ge \frac{1}{3}$. Therefore, Equation \eqref{eq:ch2} can be rewritten as
\begin{equation}
\label{eq:ch3}
\Pr\left[\sum_{j=1}^nZ_j\ge \nu+t\right]\le \exp{\left(-\frac{t}{3}\right)}, \;\forall\; t\ge \nu.
\end{equation}

Under Bernstein's inequality, let $Z_1,Z_2,\ldots,Z_n$ be bounded independent random variables such that $\forall i, |Z_i|<\tau$, $\nu=\bbE\left[\sum_{j=1}^nZ_j\right]$ and $\sigma=\frac{1}{n}\sum_{j=1}^n\Var(Z_i)$. Then for all $t>0$
\begin{equation}
\label{eq:ber}
\Pr\left[\left|\sum_{j=1}^nZ_j-\nu\right|\ge t\right]\le 2\exp{\left(-\frac{\frac{1}{2}t^2}{n\sigma+\frac{1}{3}\tau t}\right)}. 
\end{equation}

We will also use the Cauchy-Schwarz inequality, which can be stated as follows: for real numbers $u_1, u_2,\ldots, u_n$ and $v_1,v_2,\ldots, v_n$, 
\[\left(\sum_{i=1}^nu_i\cdot  v_i\right)^2\le \left(\sum_{i=1}^n u_i^2\right) \left(\sum_{i=1}^n v_i^2\right).\]

\section{Properties of New Multi-Pass IC Notion}\label{sec:upper}

In this section, we will prove some important properties of our IC notion. Let $\sfM$ be a $k$-pass streaming algorithm that uses $s$ bits of memory.
Particularly, we prove that, when $X_1,\dots,X_n$ are drawn from a product distribution $\mu$, the following holds:
\begin{itemize}
    \item $\mic(\sfM,\mu)\leq 2ksn$;
    \item $\mic(\sfM,\mu)\geq \miccoin(\sfM,\mu)$.
\end{itemize}

\subsection{Useful Facts about Mutual Information}
Here, we will state several important properties of the conditional mutual information between the memory states and the input data stream without proofs. We defer the detailed proofs to Appendix \ref{apendix:sec3}. 
First, in Claim \ref{cl:indep1}, we show that if we condition on the $j$th memory states and the end memory states, then there is no correlation between the inputs before and after the $j$th time-step. 

\begin{restatable}[]{claim}{indepi}\label{cl:indep1}
Given a stream of $n$ input elements $X_1,\ldots, X_n$ drawn from a product distribution, that is, 
\[\Pr[X_1=x_1,X_2=x_2,\ldots,X_n=x_n]=\Pr[X_1=x_1]\cdot\Pr[X_2=x_2]\cdot\ldots\cdot\Pr[X_n=x_n],\]
let $\sfM$ be a $k$-pass protocol using private randomness $r\sim \Ra^\sfM$. Then $\forall j\in[n]$, 
\begin{equation}
\label{eqcl:indep1}
\I\left(X_{[1,j]}, \Ra_{(\le k, [1,j])}; X_{[j+1,n]}, \Ra_{(\le k, [j+1,n])}\mid \sfM_{\le i}, \sfM_{(\leq i+1, j)}\right)=0, \; \forall i\in\{0,1,\ldots,k-1\} \text{ and }
\end{equation}
\begin{equation}
\label{eqcl:indep2}
\I\left(X_{[1,j]}, \Ra_{(\le k, [1,j])}; X_{[j+1,n]}, \Ra_{(\le k, [j+1,n])}\mid \sfM_{\le i}, \sfM_{(\le i, j)}\right)=0, \;\forall i\in\{0,1,\ldots,k\}.
\end{equation}

\end{restatable}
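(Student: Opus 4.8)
## Proof proposal for Claim~\ref{cl:indep1}

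The plan is to prove both identities by induction on $j$, exploiting the Markov structure of a streaming algorithm: once we know the memory state at time $j$ (in every pass we have touched so far), the transition functions going forward depend on the input only through $X_{[j+1,n]}$ and fresh private randomness, while the transitions up to time $j$ depended on the input only through $X_{[1,j]}$. Concretely, I would first set up notation: write $U_j := (X_{[1,j]}, \Ra_{(\le k,[1,j])})$ and $V_j := (X_{[j+1,n]}, \Ra_{(\le k,[j+1,n])})$, so that $U_j$ and $V_j$ partition all the ``sources of randomness'' of the execution, and are \emph{unconditionally} independent because $\mu$ is a product distribution and the private coins are mutually independent. The key point is that every memory state $\sfM_{(i,j')}$ with $j' \le j$ is a deterministic function of $U_j$ alone (via the composition of transition functions $f^\sfM_{(i',j'')}$), and every memory state $\sfM_{(i,j')}$ with $j' > j$ is a deterministic function of $V_j$ together with the ``seam'' states $\{\sfM_{(i',j)}\}_{i'}$ at the cut $j$ — but actually just a function of $V_j$ and $\sfM_{(i',j)}$ for the relevant passes, since the execution after the cut is entirely determined by the state at the cut and the subsequent inputs/coins.

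The main step is then a clean application of the following elementary fact about conditional independence: if $A \indep B$ (unconditionally), and $C = g(A)$, $D = h(B, C)$ are functions as described, then $A \indep B \mid (C, D)$ — more precisely $\I(A;B \mid C) = 0$ implies, combined with $D$ being a function of $(B,C)$, that $\I(A ; B \mid C, D) = 0$ whenever $\I(A ; D \mid B, C) = 0$, which holds since $D$ is a deterministic function of $(B,C)$. Let me instead organize it directly: I want $\I(U_j ; V_j \mid \sfM_{\le i}, \sfM_{(\le i+1,j)}) = 0$. The conditioning variables $W := (\sfM_{\le i}, \sfM_{(\le i+1,j)})$ split as $W = (W_{\text{pre}}, W_{\text{post}})$ where $W_{\text{pre}}$ collects those memory states at time $\le j$ in passes $1,\dots,i+1$ (a function of $U_j$) and $W_{\text{post}}$ collects those memory states at time $n$ in passes $1,\dots,i$ that are ``in the future'' of the cut — but here one must be careful: $\sfM_{\le i} = \sfM_{[0,i]}$ denotes end-of-pass states $\sfM_{(i',n)}$, which lie after the cut $j$ and hence are functions of $V_j$ and the seam states $\sfM_{(i',j)}$. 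So I would rewrite $W$ as: $W_{\text{pre}} = \{\sfM_{(i',j)} : i' \le i+1\}$ (the seam, a function of $U_j$) and $W_{\text{post}} = \{\sfM_{(i',n)} : i' \le i\}$ (a function of $(V_j, W_{\text{pre}})$). Since $U_j \indep V_j$ and $W_{\text{pre}}$ is a function of $U_j$, we have $\I(U_j ; V_j \mid W_{\text{pre}}) = 0$; and since $W_{\text{post}}$ is a function of $(V_j, W_{\text{pre}})$, adding it to the conditioning cannot create dependence — formally $\I(U_j ; V_j \mid W_{\text{pre}}) = \I(U_j ; V_j, W_{\text{post}} \mid W_{\text{pre}}) \ge \I(U_j ; V_j \mid W_{\text{pre}}, W_{\text{post}})$, using the chain rule and non-negativity (this is essentially Property~\ref{itemmi3}, or a direct chain-rule argument). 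This gives \eqref{eqcl:indep1}; \eqref{eqcl:indep2} is the same argument with $i+1$ replaced by $i$ in the seam (and the range of $i$ shifted to include $k$), since then both ``halves'' again decompose cleanly along the cut.

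The step I expect to be the main obstacle is getting the bookkeeping of \emph{which} memory states are functions of $U_j$ versus functions of $V_j$ exactly right, particularly across pass boundaries and with the convention $\sfM_{(i+1,0)} = \sfM_{(i,n)}$. In particular one has to verify that the seam $\{\sfM_{(i',j)}\}_{i' \le i+1}$ is genuinely a function of $U_j$ only — this requires that, in each of the first $i+1$ passes, processing elements $1,\dots,j$ uses only $X_{[1,j]}$ and coins $\Ra_{(i',[1,j])}$, \emph{and} that the start state of pass $i'$ is itself the seam state of pass $i'-1$ at time $n$... wait — no: the start state of pass $i'$ is $\sfM_{(i'-1,n)}$, which is a future state. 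This is the subtle point: to compute $\sfM_{(i'+1,j)}$ we need $\sfM_{(i',n)}$, which is \emph{not} a function of $U_j$. I would resolve this by being honest that the correct cut is not ``pass-by-pass'' but a diagonal/staircase cut, OR — more likely matching the intended proof — by observing that $W = (\sfM_{\le i}, \sfM_{(\le i+1,j)})$ already \emph{includes} $\sfM_{\le i} = \{\sfM_{(i',n)}\}_{i' \le i}$ as conditioning variables, so conditioned on $W$ the seam states $\sfM_{(i',j)}$ for $i' \le i+1$ are indeed determined by $X_{[1,j]}$, coins $\Ra_{(\le i+1,[1,j])}$, and the already-conditioned-on end states of earlier passes. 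I would therefore phrase the induction to carry the statement ``conditioned on $\sfM_{\le i}$, the variables $(\sfM_{(\le i+1,j)}, U_j)$ and $(V_j, \text{future states})$ are independent'' and peel off one time step (or one pass) at a time. I would also invoke Claim~\ref{cl:indep1}-style reasoning recursively: the case $i$ should follow from the case $i-1$ at $j = n$ (which is exactly \eqref{eqcl:indep2} feeding into \eqref{eqcl:indep1}), turning the two displays into a coupled induction on the pass index as well. Once the cut structure is pinned down, each inductive step is a one-line application of the product-distribution independence plus ``functions of conditioned variables don't hurt,'' so the real work is purely in the combinatorial set-up.
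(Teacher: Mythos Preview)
Your proposal is correct and, after the self-correction midway through, lands on exactly the paper's approach: a coupled induction on the pass index $i$, where \eqref{eqcl:indep2} at level $i$ yields \eqref{eqcl:indep1} at level $i$ (by conditioning on $\sfM_{(i+1,j)}$, a function of $U_j$ given $\sfM_{\le i}$), which in turn yields \eqref{eqcl:indep2} at level $i{+}1$ (by conditioning on $\sfM_{i+1}$, a function of $V_j$ given the seam). The paper's proof is precisely this two-step-per-pass chain-rule argument, so there is no real difference in strategy or technique.
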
 
\noindent
Claim \ref{cl:indep1} immediately gives the following two corollaries. 
\begin{restatable}[]{corollary}{indepii}\label{cor:indep1}
Given a stream of $n$ input elements $X_1,\ldots, X_n$ drawn from a product distribution. Let $\sfM$ be a $k$-pass protocol using private randomness $r\sim \Ra^\sfM$. Then $\forall j\in[n]$,  $i\in[k]$,
\[\I\left(X_{[1,j]}, \sfM_{(\le i, [0,j-1])}; X_{[j+1,n]}, \sfM_{(\le i, [j+1,n])}\mid \sfM_{< i}, \sfM_{(\le i, j)}\right)=0.\]
\end{restatable}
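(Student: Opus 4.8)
\textbf{Proof plan for Corollary \ref{cor:indep1}.}

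The plan is to derive this corollary from Claim \ref{cl:indep1} by two applications: once with the end-of-pass conditioning \eqref{eqcl:indep2} and once more to strip off the private-randomness variables, reorganizing the index ranges along the way. First I would note that in \eqref{eqcl:indep2}, taking the value $i$ there to be our $i$, we get that conditioned on $\sfM_{\le i}$ (the end memory states of the first $i$ passes) and $\sfM_{(\le i, j)}$ (the $j$-th memory states of the first $i$ passes), the pair $\left(X_{[1,j]}, \Ra_{(\le k, [1,j])}\right)$ is independent of $\left(X_{[j+1,n]}, \Ra_{(\le k, [j+1,n])}\right)$. The memory states $\sfM_{(\le i, [0,j-1])}$ appearing on the left of the corollary's expression are, in pass $r\le i$, functions of $X_{[1,j-1]}$ and $\Ra_{(r,[1,j-1])}$ together with the starting state $\sfM_{(r,0)} = \sfM_{(r-1,n)}$, so they are measurable with respect to $\left(X_{[1,j]}, \Ra_{(\le k, [1,j])}, \sfM_{< i}\right)$; but $\sfM_{< i}$ is itself in the conditioning (it is part of $\sfM_{\le i}$). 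Likewise $\sfM_{(\le i,[j+1,n])}$ in pass $r\le i$ is a function of $X_{[j+1,n]}$, $\Ra_{(r,[j+1,n])}$, and the state $\sfM_{(r,j)}$, which again lies in the conditioning. Hence on the conditioned side, $\sfM_{(\le i,[0,j-1])}$ is a deterministic function of the ``left block'' and $\sfM_{(\le i,[j+1,n])}$ is a deterministic function of the ``right block.''

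Given that, the key step is the standard fact that adjoining to each side of a conditional-independence statement a variable that is a deterministic function of that side (together with the conditioning) preserves conditional independence. Concretely, if $(A \indep B \mid C)$ and $U = g(A,C)$, $V = h(B,C)$, then $(A,U \indep B,V \mid C)$, since conditioned on $C$ the joint law factorizes and appending functions of each coordinate cannot create dependence. Applying this with $A = \left(X_{[1,j]}, \Ra_{(\le k,[1,j])}\right)$, $B = \left(X_{[j+1,n]}, \Ra_{(\le k,[j+1,n])}\right)$, $C = \left(\sfM_{\le i}, \sfM_{(\le i,j)}\right)$, $U = \sfM_{(\le i,[0,j-1])}$, $V = \sfM_{(\le i,[j+1,n])}$, I obtain independence of $\left(X_{[1,j]}, \Ra_{(\le k,[1,j])}, \sfM_{(\le i,[0,j-1])}\right)$ and $\left(X_{[j+1,n]}, \Ra_{(\le k,[j+1,n])}, \sfM_{(\le i,[j+1,n])}\right)$ conditioned on $\left(\sfM_{\le i}, \sfM_{(\le i,j)}\right)$. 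Finally, by the data-processing / monotonicity property of mutual information under dropping variables from each side (equivalently, property \ref{itemmi4} that zero mutual information equals conditional independence, plus the fact that marginals of conditionally independent variables are conditionally independent), restricting the left side to $\left(X_{[1,j]}, \sfM_{(\le i,[0,j-1])}\right)$ and the right side to $\left(X_{[j+1,n]}, \sfM_{(\le i,[j+1,n])}\right)$ still yields conditional independence, and rewriting $\sfM_{\le i} = \left(\sfM_{< i}, \sfM_{(i,n)}\right)$ and absorbing $\sfM_{(i,n)}$ appropriately (it is part of $\sfM_{(\le i,[j+1,n])}$ after reindexing, or can be kept in the conditioning) gives exactly the claimed identity $\I\left(X_{[1,j]}, \sfM_{(\le i,[0,j-1])}; X_{[j+1,n]}, \sfM_{(\le i,[j+1,n])}\mid \sfM_{< i}, \sfM_{(\le i,j)}\right)=0$.

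The main obstacle I anticipate is purely bookkeeping rather than conceptual: being careful about which memory states are functions of which input/randomness blocks once the starting state of each pass is $\sfM_{(r,0)} = \sfM_{(r-1,n)}$, so that the ``measurability with respect to one side'' claims are actually correct. In particular one must check that $\sfM_{(i,n)}$ — the very last state of pass $i$, which is part of $\sfM_{\le i}$ on the conditioned side of \eqref{eqcl:indep2} but appears within the ``right block'' $\sfM_{(\le i,[j+1,n])}$ in the corollary — is handled consistently; the cleanest route is to keep $\sfM_{(i,n)}$ in the conditioning throughout (it is harmless to condition on a function of the right block, and removing it from the conditioning in the final statement only strips information, which can only help the CI). I would also double-check the direction: we want $\I = 0$, so we only ever \emph{drop} variables from the two sides of an established CI statement or \emph{add} variables that are functions of a single side — both operations preserve $\I=0$ — so no subtle inequality goes the wrong way.
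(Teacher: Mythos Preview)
Your overall strategy---start from Claim~\ref{cl:indep1}, adjoin to each side the memory states that are deterministic functions of that side together with the conditioning, then drop the randomness variables---is exactly the paper's approach. The gap is in your choice of starting point and the step where you try to get rid of $\sfM_{(i,n)}$ from the conditioning.

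You invoke \eqref{eqcl:indep2} with parameter $i$, which conditions on $\sfM_{\le i}=(\sfM_{<i},\sfM_{(i,n)})$ and $\sfM_{(\le i,j)}$. The corollary, however, only conditions on $\sfM_{<i}$ and $\sfM_{(\le i,j)}$. Your proposed fix---``removing it from the conditioning \ldots\ only strips information, which can only help the CI''---is false as a general principle: conditioning can both create and destroy independence. Concretely, from $\I(A;B\mid C',Z)=0$ with $Z$ a component of $B$ you cannot conclude $\I(A;B\mid C')=0$; take $A=Z$, $B=(Z,B')$ with $B'$ independent of everything, $C'$ trivial. To actually drop $\sfM_{(i,n)}$ you would need the extra fact $\I\big(\text{left block};\,\sfM_{(i,n)}\mid \sfM_{<i},\sfM_{(\le i,j)}\big)=0$, and establishing that already requires \eqref{eqcl:indep1} with index $i-1$.

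The clean fix is to start from \eqref{eqcl:indep1} with parameter $i-1$ instead of \eqref{eqcl:indep2}: that equation conditions precisely on $\sfM_{\le i-1}=\sfM_{<i}$ and $\sfM_{(\le i,j)}$, so no extra variable needs to be removed. From there your ``adjoin deterministic functions, then drop'' argument goes through verbatim, and this is exactly what the paper does.
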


Corollary \ref{cor:indep2} shows that the input at the $j$th time-step $X_j$ is independent of the $(j-1)$th memory state of any pass if we condition on the $(j-1)$th memory states and end memory states of the previous passes. Note that, without this conditioning, $\sfM_{(2,j-1)}$ can contain information about $X_j$ if the first pass remembers it.
\begin{restatable}[]{corollary}{indepiii}\label{cor:indep2}
Given a stream of $n$ input elements $X_1,\ldots, X_n$ drawn from a product distribution. Let $\sfM$ be a $k$-pass protocol using private randomness $r\sim \Ra^\sfM$. Then $\forall i\in\{0,\ldots, k-1\}, j\in[n]$, 
\[ \I\left(X_j; \sfM_{(i+1,j-1)}\mid \sfM_{\le i}, \sfM_{(\le i, j-1)}\right)=0.\]

\end{restatable}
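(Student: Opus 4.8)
The plan is to obtain Corollary~\ref{cor:indep2} directly from the second conditional-independence identity~\eqref{eqcl:indep2} of Claim~\ref{cl:indep1}, applied one time-step earlier. Fix $i\in\{0,\dots,k-1\}$ and $j\in[n]$. First I would record the structural fact that $\sfM_{(i+1,j-1)}$ is a deterministic function of the prefix data $\big(X_{[1,j-1]},\Ra_{(i+1,[1,j-1])}\big)$ together with the state $\sfM_{(i+1,0)}=\sfM_{(i,n)}$ at which pass $i+1$ begins; this is seen by unrolling the transitions $m_{(i+1,j')}=f_{(i+1,j')}^{\sfM}(x_{j'},m_{(i+1,j'-1)},r_{(i+1,j')})$ for $j'=1,\dots,j-1$ starting from $m_{(i+1,0)}=m_{(i,n)}$. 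The key point is that $\sfM_{(i,n)}$ is one of the variables that make up $\sfM_{\le i}$ (it equals $\sfM_i$ when $i\ge1$ and $\sfM_0$ when $i=0$), so $\sfM_{(i+1,j-1)}$ is a function of the block $\big(X_{[1,j-1]},\Ra_{(\le k,[1,j-1])}\big)$ together with the conditioning variables $\big(\sfM_{\le i},\sfM_{(\le i,j-1)}\big)$.

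Next, for $j\ge2$ I would invoke~\eqref{eqcl:indep2} with its cut placed at index $j-1$ (legal since $j-1\in[n]$ and $i\in\{0,\dots,k\}$), which states precisely that the block $\big(X_{[1,j-1]},\Ra_{(\le k,[1,j-1])}\big)$ and the block $\big(X_{[j,n]},\Ra_{(\le k,[j,n])}\big)$ are conditionally independent given $\big(\sfM_{\le i},\sfM_{(\le i,j-1)}\big)$. Since $\sfM_{(i+1,j-1)}$ is a function of the first block and the conditioning, while $X_j$ is a coordinate of the second block, the data-processing inequality for conditional mutual information bounds $\I\!\left(X_j;\sfM_{(i+1,j-1)}\mid \sfM_{\le i},\sfM_{(\le i,j-1)}\right)$ from above by this zero quantity, hence it equals $0$. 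The boundary case $j=1$ is immediate: $\sfM_{(i+1,0)}=\sfM_{(i,n)}$ is itself part of the conditioning $\sfM_{\le i}$, so $\I\!\left(X_1;\sfM_{(i+1,0)}\mid \sfM_{\le i},\sfM_{(\le i,0)}\right)=0$ trivially. If one prefers to avoid citing data processing as a black box, one can instead note that $\I(A;B\mid W)=0$ forces $A\indep B$ under every fixing $W=w$, so that any function of $(A,w)$ is independent of any function of $B$, and then average over $w$.

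I do not expect a genuine obstacle here: the whole argument is Claim~\ref{cl:indep1} together with the elementary fact that passing to a deterministic function of one side preserves conditional independence. The only part requiring care is the index bookkeeping --- using the convention $\sfM_{(i+1,0)}=\sfM_{(i,n)}$ to see that the starting state of pass $i+1$ already lies inside $\sfM_{\le i}$, checking that shifting the cut from $j$ to $j-1$ keeps us in the range for which~\eqref{eqcl:indep2} is stated, and separately handling the corner case $j=1$.
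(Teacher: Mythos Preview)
Your proposal is correct and follows essentially the same approach as the paper: both invoke \eqref{eqcl:indep2} at index $j-1$, use that $\sfM_{(i+1,j-1)}$ is a deterministic function of $\sfM_i$, $X_{[1,j-1]}$, and $\Ra_{(i+1,[1,j-1])}$, and then conclude via data processing (the paper phrases this as a chain-rule expansion rather than citing data processing directly, but it is the same content). Your explicit handling of the boundary case $j=1$ is a nice touch that the paper leaves implicit.
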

\noindent
We then generalize the statements above to a more general setting to fit the definition of $\mic$: 
\begin{claim}\label{claim:micindependence}
    Given a stream of $n$ input elements $X_1,\ldots, X_n$ drawn from a product distribution. Let $\sfM$ be a $k$-pass protocol using private randomness $r\sim \Ra^\sfM$. 
    Then, $\forall i\in\{1,\ldots, k\}, j,\ell\in[n]$, if $j\geq \ell$, it holds that
    \begin{enumerate}
        \item $\I(X_{[\ell,j]},\calR_{(\leq k,[\ell,j])};X_{[1,\ell-1]},\calR_{(\leq k,[1,\ell-1])},X_{[j+1,n]},\calR_{(\leq k,[j+1,n])}\mid \sfM_{(\leq i,\ell-1)},\sfM_{(\leq i-1,j)})= 0$;
        \item $\I(X_{[\ell,j]},\calR_{(\leq k,[\ell,j])};X_{[1,\ell-1]},\calR_{(\leq k,[1,\ell-1])},X_{[j+1,n]},\calR_{(\leq k,[j+1,n])}\mid \sfM_{(\leq i,\ell-1)},\sfM_{(\leq i,j)})= 0$.
    \end{enumerate}
\end{claim}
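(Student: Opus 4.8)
\textbf{Proof proposal for Claim \ref{claim:micindependence}.}

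The plan is to deduce both items from Claim \ref{cl:indep1} and Corollary \ref{cor:indep1} by a ``slicing'' argument applied twice: once to peel off the suffix after time-step $j$, and once to peel off the prefix before time-step $\ell$. First I would fix $i\in\{1,\dots,k\}$ and $j\ge\ell$, and consider the conditioning set $\sfM_{(\le i,\ell-1)},\sfM_{(\le i-1,j)}$ (for item 1; item 2 is identical with $\sfM_{(\le i,j)}$ in place of $\sfM_{(\le i-1,j)}$). The key point is that this conditioning set is exactly of the form handled by Corollary \ref{cor:indep1} with its ``time-step'' parameter set to $j$: writing Corollary \ref{cor:indep1} out, conditioned on $\sfM_{(<i,j)}$ together with $\sfM_{(\le i, j)}$ — equivalently on $\sfM_{(\le i, j)}$ together with $\sfM_{(\le i-1,j)}$, which is the same thing since $\sfM_{(<i,j)}=\sfM_{(\le i-1,j)}$ — we already get that $\bigl(X_{[1,j]},\sfM_{(\le i,[0,j-1])}\bigr)$ is independent of $\bigl(X_{[j+1,n]},\sfM_{(\le i,[j+1,n])}\bigr)$; and the private randomness $\calR_{(\le k,[j+1,n])}$ used strictly after step $j$ is a function of fresh coins independent of everything before, so it can be adjoined to the ``future'' block, and similarly $\calR_{(\le k,[1,j])}$ to the ``past'' block. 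Hence, conditioned on $\sfM_{(\le i,j)}$ (hence a fortiori on the richer set $\sfM_{(\le i,\ell-1)},\sfM_{(\le i,j)}$ via property \ref{itemmi3}, since $\sfM_{(\le i,\ell-1)}$ is a function of $X_{[1,\ell-1]},\calR$ which lies in the past block and is therefore conditionally independent of the future block), the block $\bigl(X_{[1,j]},\calR_{(\le k,[1,j])}\bigr)$ is independent of $\bigl(X_{[j+1,n]},\calR_{(\le k,[j+1,n])}\bigr)$.

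Next I would split the past block $[1,j]$ at time $\ell$. Here I invoke Corollary \ref{cor:indep1} (or the relevant case of Claim \ref{cl:indep1}, equation \eqref{eqcl:indep1}/\eqref{eqcl:indep2}) again, this time with time parameter $\ell-1$: conditioned on $\sfM_{(\le i,\ell-1)}$ together with the appropriate end-states, the prefix $\bigl(X_{[1,\ell-1]},\calR_{(\le k,[1,\ell-1])},\sfM_{(\le i,[0,\ell-2])}\bigr)$ is independent of $\bigl(X_{[\ell,n]},\calR_{(\le k,[\ell,n])},\sfM_{(\le i,[\ell,n])}\bigr)$. The conditioning set in the claim contains $\sfM_{(\le i,\ell-1)}$ and $\sfM_{(\le i-1,j)}$ (resp. $\sfM_{(\le i,j)}$); since $\sfM_{(\le i-1,j)}$ — and also $\sfM_{(\le i,j)}$ — is a function of the ``$\ge\ell$'' side, conditioning additionally on it preserves the independence of the prefix from the $[\ell,n]$ side by property \ref{itemmi2}/\ref{itemmi3}. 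Combining the two slicings: conditioned on the claim's conditioning set, the three blocks ``$[1,\ell-1]$-stuff'', ``$[\ell,j]$-stuff'', ``$[j+1,n]$-stuff'' (each augmented with its corresponding private randomness) are mutually independent, which is exactly the asserted vanishing of the mutual information in both items. The chain-rule identity for mutual information recorded after property \ref{itemmi3} lets me assemble the pairwise/conditional independences into the single stated equality.

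The one place requiring care — and what I expect to be the main obstacle — is the bookkeeping of \emph{which conditioning sets are legitimate}. Corollary \ref{cor:indep1} is stated for conditioning on $\sfM_{<i},\sfM_{(\le i,j)}$, whereas the claim conditions on $\sfM_{(\le i,\ell-1)},\sfM_{(\le i-1,j)}$, and these do not literally coincide: the claim conditions on strictly fewer memory states than Corollary \ref{cor:indep1}. So I cannot simply quote the corollary; I need to argue that \emph{dropping} the extra conditioning preserves the independence. This is where I must be disciplined about using properties \ref{itemmi2} and \ref{itemmi3}: to add a conditioning variable $W$ and keep a mutual information at $0$, I need $W$ conditionally independent of one of the two sides given what is already conditioned on; to remove $W$ I need the dual statement. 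Concretely, I would first establish the cleanest independence from the fully-conditioned version (Corollary \ref{cor:indep1} with all memory states), then carefully \emph{remove} the states $\sfM_{(i,j)},\dots$ that appear in the corollary's conditioning but not in the claim's, checking at each removal that the removed state is conditionally independent of the ``other side'' of the split — which holds precisely because such a state is a deterministic function of the data/randomness lying on one side of the corresponding time-cut. Organizing these removals in the right order (remove $[j+1,n]$-side states before slicing at $j$, remove $[1,\ell-1]$-side states before slicing at $\ell$) is the crux; once the order is fixed, each individual step is a one-line application of property \ref{itemmi2} or \ref{itemmi3}. I would also double-check the degenerate cases $\ell=1$, $\ell=j$, and $i=1$ (where $\sfM_{(\le i-1,j)}$ is the empty/starting configuration) separately, as these make some blocks empty but the argument goes through verbatim.
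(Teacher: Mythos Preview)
Your proposal rests on a misreading of Corollary~\ref{cor:indep1}: there the conditioning set is $\sfM_{<i},\sfM_{(\le i,j)}$, and $\sfM_{<i}$ denotes the \emph{end-of-pass} states $\sfM_0,\dots,\sfM_{i-1}$ (see the notation paragraph in Section~\ref{sec:prelim}), not the $j$-th states $\sfM_{(1,j)},\dots,\sfM_{(i-1,j)}$. So when you write ``$\sfM_{(<i,j)}$ together with $\sfM_{(\le i,j)}$ --- equivalently $\sfM_{(\le i-1,j)}$'', you have silently replaced end states by intermediate states. As a result the conditioning sets of Corollary~\ref{cor:indep1} and of Claim~\ref{claim:micindependence} are \emph{incomparable}: the corollary has end states the claim lacks, and the claim has the $(\ell-1)$-states the corollary lacks. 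A related slip is the assertion that $\sfM_{(\le i,\ell-1)}$ ``is a function of $X_{[1,\ell-1]},\calR$ which lies in the past block''; this is only true for $i=1$, since already $\sfM_{(2,\ell-1)}$ depends on $\sfM_1=\sfM_{(1,n)}$ and hence on all of $X$.

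The deeper problem is the ``remove conditioning'' step. To drop a variable $Z$ from $\I(A;B,C\mid W,Z)=0$ and conclude $\I(A;B,C\mid W)=0$ via property~\ref{itemmi3}, you need $\I(Z;A\mid W)=0$. You justify this by saying $Z$ (an end state) is a deterministic function of one side. But if $Z=f(W,C)$ then $\I(Z;A\mid W)\le\I(C;A\mid W)$, and the right-hand side being zero is precisely the statement you want \emph{after} the removal --- the argument is circular. The paper's intended proof (omitted as ``similar to Claim~\ref{cl:indep1}'') is not a reduction to Claim~\ref{cl:indep1} but a direct induction that only ever \emph{adds} conditioning: start from $\I(A;B,C\mid\sfM_0)=0$ and, for $m=1,2,\dots$, alternately adjoin $\sfM_{(m,\ell-1)}$ (which, given the states already conditioned on, is determined by the outer-block data and randomness) and then $\sfM_{(m,j)}$ (determined by the middle-block data given $\sfM_{(m,\ell-1)}$). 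Each addition preserves the vanishing by the same chain-rule trick as in the proof of Claim~\ref{cl:indep1}; stopping after $\sfM_{(i,\ell-1)}$ yields item~1, one more step yields item~2. Your two-slice plan fails exactly because the cuts at $\ell-1$ and at $j$ must be interleaved pass by pass in this zigzag; they cannot be done in two separate batches.
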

\noindent
This claim could be proved by similar arguments to the proof of Claim \ref{cl:indep1}. To avoid repetitiveness, we omit its proof. 
\subsection{Upper Bound of Multi-Pass IC}
\multiupper*
\begin{proof}  
We prove this upper bound by the following two statements: 
\begin{enumerate}
    \item For each pass $i$, it holds $$\sum_{j=1}^n\sum_{{\ell}=1}^j \I(\sfM_{(i,j)};X_{\ell}\mid\sfM_{(\leq i,{\ell}-1)},\sfM_{(\leq i-1,j)})\leq s\cdot n.$$
    \item For each pass $i$, it holds $$\sum_{j=1}^n\sum_{{\ell}=j+1}^n \I(\sfM_{(i,j)};X_{\ell}\mid\sfM_{(\leq i-1,{\ell}-1)},\sfM_{(\leq i-1,j)})\leq s\cdot n.$$
\end{enumerate}
We begin with the first claim:
\begin{align*}
&\sum_{j=1}^n\sum_{{\ell}=1}^j 
\I(\sfM_{(i,j)};X_{\ell}\mid\sfM_{(\leq i,{\ell}-1)},\sfM_{(\leq i-1,j)})\\ \leq & \sum_{j=1}^n\sum_{{\ell}=1}^j \I(\sfM_{(i,j)};X_{\ell},\sfM_{(\leq i,\leq {\ell}-2)},X_{< {\ell}}\mid\sfM_{(\leq i,{\ell}-1)},\sfM_{(\leq i-1,j)})  
\\
=& \sum_{j=1}^n\sum_{{\ell}=1}^j\I(\sfM_{(i,j)};\sfM_{(\leq i,\leq {\ell}-2)},X_{< {\ell}}\mid\sfM_{(\leq i,{\ell}-1)},\sfM_{(\leq i-1,j)})+\I(\sfM_{(i,j)};X_{\ell}\mid\sfM_{(\leq i,\leq {\ell}-1)},\sfM_{(\leq i-1,j)},X_{< {\ell}}) \tag{Chain rule}\\
= & \sum_{j=1}^n\sum_{{\ell}=1}^j  \I(\sfM_{(i,j)};X_{\ell}\mid\sfM_{(\leq i,\leq {\ell}-1)},\sfM_{(\leq i-1,j)},X_{< {\ell}})\tag{explained below}\\
\leq & \sum_{j=1}^n\sum_{{\ell}=1}^j  \I(\sfM_{(i,j)};X_{\ell}, \sfM_{(\leq i,{\ell})}\mid\sfM_{(\leq i,\leq {\ell}-1)},\sfM_{(\leq i-1,j)},X_{< {\ell}})\\
= & \sum_{j=1}^n \I(\sfM_{(i,j)};X_{\leq j}, \sfM_{(\leq i,\leq j)}\mid\sfM_{(\leq i-1,j)})\tag{Chain rule} \\
\leq & \;\;s\cdot n.
\end{align*}
 The second equality comes from the following fact:
\begin{align}\label{fact1}
\I(\sfM_{(i,j)};\sfM_{(\leq i,\leq {\ell}-2)},X_{< {\ell}}\mid \sfM_{(\leq i,{\ell}-1)},\sfM_{(\leq i-1,j)})=0,
\end{align}
which is implied by the stronger statement of Claim \ref{claim:micindependence}, that is, 
$$\I\big( X_{[1,\ell-1]}, X _{[j+1,n]},\Ra_{(\leq k,[1,\ell-1])}, \Ra_{(\leq k,[j+1,n])}; X_{[\ell,j]},\Ra_{(\leq k,[\ell,j])}\mid \sfM_{(\leq i,{\ell}-1)},\sfM_{(\leq i-1,j)} \big)=0.$$ 
This is because $\sfM_{(\leq i,\leq {\ell}-2)},X_{< {\ell}}$ are a deterministic function of $ X_{[1,\ell-1]}$, $X _{[j+1,n]}$,$\Ra_{(\leq k,[1,\ell-1])}$, $\Ra_{(\leq k,[j+1,n])}$,
given $\sfM_{(\leq i-1,j)}$,
whereas $\sfM_{(i,j)}$ could be fully determined by $( X_{[\ell,j]},\Ra_{(\leq k,[\ell,j])})$,
given $\calA_{(\leq i,{\ell}-1)}$.  This concludes the first claim.

Similarly, for the second claim, we have: 
\begin{align*}
&\sum_{j=1}^n\sum_{{\ell}=j+1}^n \I(\sfM_{(i,j)};X_{\ell}\mid\sfM_{(\leq i-1,{\ell}-1)},\sfM_{(\leq i-1,j)})\\ \leq & \sum_{j=1}^n\sum_{{\ell}=j+1}^n \I(\sfM_{(i,j)};X_{\ell},\sfM_{(\leq i-1,[j+1,{\ell}-2])},X_{[j+1,{\ell}-1]}\mid\sfM_{(\leq i-1,{\ell}-1)},\sfM_{(\leq i-1,j)}) \\ =& \sum_{j=1}^n\sum_{{\ell}=j+1}^n\I(\sfM_{(i,j)};\sfM_{(\leq i-1,[j+1,{\ell}-2])},X_{[j+1,{\ell}-1]}\mid\sfM_{(\leq i-1,{\ell}-1)},\sfM_{(\leq i-1,j)})\\ &\;\;\;\;\;\;\;\;\;\;\;+\I(\sfM_{(i,j)};X_{\ell}\mid\sfM_{(\leq i-1,{\ell}-1)},\sfM_{(\leq i-1,j)},\sfM_{(\leq i-1,[j+1,{\ell}-2])},X_{[j+1,{\ell}-1]})\tag{Chain rule} \\
\leq & \sum_{j=1}^n\sum_{{\ell}=j+1}^n \I(\sfM_{(i,j)};X_{\ell}, \sfM_{(\leq i-1,{\ell})}\mid\sfM_{(\leq i-1,[j+1,{\ell}-1])},\sfM_{(\leq i-1,j)},X_{[j+1,{\ell}-1]})\\
= & \sum_{j=1}^n \I(\sfM_{(i,j)};X_{[j+1,n]}, \sfM_{(\leq i-1,[j+1,n])}\mid\sfM_{(\leq i-1,j)}) \leq s\cdot n
\end{align*}
Now, combining the two claims, we have 
\[\mic(\calA,\mu)\leq 2k sn,\]
as desired. \qedhere

\subsection{Upper Bound $\miccoin$ by $\mic$}

\begin{lemma}\label{lem:upperboundofmic'}
 Let $(X_1,X_2,\cdots,X_n)$ be drawn from a product distribution $\mu$. Then, for any $k$-pass streaming algorithm $\calA$ on the $n$-bit input stream, we have: 
    \[\mic(\calA,\mu)\geq \sum_{i=1}^k\sum_{j=1}^n\sum_{{\ell}=1}^j \I(\sfM_{(i,j)};X_{\ell}\mid\sfM_{(\leq i,{\ell}-1)},\sfM_{(\leq i-1,j)})\geq \miccoin(\calA,\mu).\]
    As a corollary,  $\miccoin(\calA,\mu)\leq k s n $, where $s$ is the amount of memory used by $\sfM$. 
\end{lemma}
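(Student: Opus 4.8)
The plan is to prove the two stated inequalities in turn. The first inequality, namely that $\mic(\calA,\mu)$ dominates $\sum_{i,j,\ell\le j}\I(\sfM_{(i,j)};X_\ell\mid\sfM_{(\le i,\ell-1)},\sfM_{(\le i-1,j)})$, is essentially immediate: the latter sum is exactly the first of the two double-sums appearing in the definition of $\mic(\calA,\mu)$, and the second double-sum is a sum of (conditional) mutual informations, each of which is nonnegative. So here I would simply point to the definition of $\mic$ and invoke nonnegativity of mutual information. The final ``as a corollary'' clause then follows by chaining this with the bound $\mic(\calA,\mu)\le 2ksn$ from Lemma \ref{lem:Multi-passupperbound} — actually we want the sharper $ksn$, so I would instead observe that the first claim inside the proof of Lemma \ref{lem:Multi-passupperbound} already shows $\sum_{i=1}^k\sum_{j=1}^n\sum_{\ell=1}^j \I(\sfM_{(i,j)};X_\ell\mid\sfM_{(\le i,\ell-1)},\sfM_{(\le i-1,j)})\le k\cdot sn$, and combine this with the second inequality of the present lemma.

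The substantive part is the second inequality, $\sum_{i=1}^k\sum_{j=1}^n\sum_{\ell=1}^j \I(\sfM_{(i,j)};X_\ell\mid\sfM_{(\le i,\ell-1)},\sfM_{(\le i-1,j)})\ge \miccoin(\calA,\mu)$, where by definition $\miccoin(\calA,\mu)=\sum_{j=1}^n\sum_{\ell=1}^j \I(\sfM_{(\le k,j)};X_\ell\mid \sfM_{<k},\sfM_{(\le k,\ell-1)})$. My approach is to fix $j$ and $\ell\le j$ and show that the single term on the right, $\I(\sfM_{(\le k,j)};X_\ell\mid \sfM_{<k},\sfM_{(\le k,\ell-1)})$, is at most the corresponding block $\sum_{i=1}^k \I(\sfM_{(i,j)};X_\ell\mid\sfM_{(\le i,\ell-1)},\sfM_{(\le i-1,j)})$ on the left. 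Recall $\sfM_{(\le k,j)}=(\sfM_{(1,j)},\dots,\sfM_{(k,j)})$ and $\sfM_{<k}=\sfM_{[0,k-1]}$ are the end states of the first $k-1$ passes (plus the start state). The natural move is the chain rule for mutual information, peeling off the passes $i=1,\dots,k$ one at a time:
\begin{align*}
\I(\sfM_{(\le k,j)};X_\ell\mid \sfM_{<k},\sfM_{(\le k,\ell-1)})
=\sum_{i=1}^{k}\I\big(\sfM_{(i,j)};X_\ell\mid \sfM_{<k},\sfM_{(\le k,\ell-1)},\sfM_{(\le i-1,j)}\big).
\end{align*}
So it remains to show, for each $i$, that
\[
\I\big(\sfM_{(i,j)};X_\ell\mid \sfM_{<k},\sfM_{(\le k,\ell-1)},\sfM_{(\le i-1,j)}\big)\le \I\big(\sfM_{(i,j)};X_\ell\mid \sfM_{(\le i,\ell-1)},\sfM_{(\le i-1,j)}\big);
\]
i.e., conditioning on the extra variables $\sfM_{<k}$ and $\sfM_{(>i,\ell-1)}$ (the parts of $\sfM_{<k},\sfM_{(\le k,\ell-1)}$ not already present on the right) does not increase the mutual information. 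The tool for this is property \ref{itemmi3} from the preliminaries: adding a conditioning variable $Z_1$ can only decrease $\I(X;Z_2\mid Y)$ provided $\I(Z_1;Z_2\mid Y)=0$, i.e. provided the new variable is conditionally independent of $X_\ell$ given the base conditioning. Here I would take $Z_2=X_\ell$, $Y=(\sfM_{(\le i,\ell-1)},\sfM_{(\le i-1,j)})$ (extended incrementally), and $Z_1$ ranging over the added memory states, and verify the required conditional independence: the future-pass states $\sfM_{(i',\ell-1)}$ for $i'>i$ at time $\ell-1$, together with all end states $\sfM_{<k}$, are conditionally independent of $X_\ell$ once we condition on $\sfM_{(\le i,\ell-1)}$ together with $\sfM_{(\le i-1,j)}$ — this should follow from Claim \ref{cl:indep1}/Corollary \ref{cor:indep2}/Claim \ref{claim:micindependence}, which say precisely that conditioning on the appropriate prefix of memory states decouples $X_\ell$ from everything that lies "after" position $\ell-1$ in the stream order (noting all these later states are functions of $X_{[\ell,n]}$ and $X_{[1,\ell-1]}$ together with randomness, and $X_\ell$ is part of the "future" block). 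I would apply property \ref{itemmi3} repeatedly, adding one memory-state variable at a time so the conditional-independence hypothesis can be checked cleanly via Claim \ref{claim:micindependence}.

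The main obstacle I anticipate is bookkeeping: getting the exact list of conditioning variables right so that at every incremental step the conditional-independence hypothesis of property \ref{itemmi3} is genuinely an instance of Claim \ref{claim:micindependence} (or a one-line consequence of it), rather than something that needs a fresh argument. In particular one must be careful about the end states $\sfM_{<k}=\sfM_{[0,k-1]}$: these include states from passes $\ge i$ evaluated at time $n\ge j\ge \ell-1$, so they lie in the "future" of position $\ell-1$ and should decouple from $X_\ell$ after conditioning on $\sfM_{(\le i,\ell-1)}$, but one has to present them so that Claim \ref{claim:micindependence} (which is stated for a single pass index and two specific conditioning sets) actually applies — likely by first absorbing them, in the correct order of passes and times, before peeling off $\sfM_{(i,j)}$. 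Once the order of operations is pinned down, each step is a mechanical application of properties \ref{itemmi1} and \ref{itemmi3} together with Claim \ref{claim:micindependence}, so I would write it as: fix $j,\ell$; apply the chain rule; then for $i=k,k-1,\dots,1$ (or building up) repeatedly invoke property \ref{itemmi3}, citing Claim \ref{claim:micindependence} for each conditional-independence check; sum over $i$, then over $\ell\le j$ and $j$; and finally combine with the first-claim bound inside the proof of Lemma \ref{lem:Multi-passupperbound} to get $\miccoin(\calA,\mu)\le ksn$.
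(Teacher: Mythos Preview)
Your plan for the first inequality and the corollary is fine; the gap is in the second inequality. First a direction issue: Property~\ref{itemmi3} says that under $\I(Z_1;Z_2\mid Y)=0$, extra conditioning can only \emph{increase} $\I(X;Z_2\mid Y)$, not decrease it. What you need is Property~\ref{itemmi2}, whose hypothesis is $\I(Z_1;Z_2\mid X,Y)=0$ with the extra $X=\sfM_{(i,j)}$ inside the conditioning. But the more serious problem is that the needed conditional independence is false for $i\le k-2$. After your full chain rule, to bound the $i$-th summand you must drop $Z_1=(\sfM_{<k},\sfM_{(>i,\ell-1)})$, and Property~\ref{itemmi2} would require $\I(Z_1;X_\ell\mid \sfM_{(\le i,\ell-1)},\sfM_{(\le i,j)})=0$. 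This fails: $Z_1$ contains $\sfM_{i+1}=\sfM_{(i+1,n)}$, which depends on $X_\ell$ through pass $i{+}1$ reading the stream, and that dependence is not blocked by $\sfM_{(\le i,\ell-1)},\sfM_{(\le i,j)}$ since neither fixes $\sfM_{(i+1,j)}$. (Concretely, let pass $i{+}1$ copy $X_\ell$ forward to $\sfM_{i+1}$; then $\I(\sfM_{i+1};X_\ell\mid \sfM_{(\le i,\ell-1)},\sfM_{(\le i,j)})>0$.) Claim~\ref{claim:micindependence} decouples $X_{[\ell,j]}$ only from \emph{outside-block} functions of the conditioning set, and $\sfM_{i+1}$ is not one. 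Your stated hypothesis $\I(Z_1;X_\ell\mid \sfM_{(\le i,\ell-1)},\sfM_{(\le i-1,j)})=0$ fails for the same reason, already at $\sfM_i$.

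The paper avoids this by peeling one pass at a time rather than chain-ruling all $k$ at once. Splitting off only pass $k$ yields $\I(\sfM_{(k,j)};X_\ell\mid \sfM_{<k},\sfM_{(\le k,\ell-1)},\sfM_{(\le k-1,j)})$, from which $\sfM_{<k}$ can be dropped by Property~\ref{itemmi2} because now the full $\sfM_{(\le k,j)}$ is available and every end state $\sfM_{i'}$ with $i'<k$ is a function of $\sfM_{(i',j)}$ plus outside-block data. The remaining piece $\I(\sfM_{(\le k-1,j)};X_\ell\mid\cdots)$ only needs $\sfM_{k-1}$ and $\sfM_{(k,\ell-1)}$ dropped, and these \emph{are} outside-block functions given $\sfM_{(\le k-1,j)},\sfM_{(\le k-1,\ell-1)}$; what remains is exactly $\miccoin$ for $k{-}1$ passes, and one recurses. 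The key point is that $\sfM_{(\le k',j)}$ stays bundled as the ``$X$'' of Property~\ref{itemmi2} at every stage, so the blocking state $\sfM_{(i',j)}$ needed for each independence check is always present.
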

\begin{proof}
    We have the following inequality: 
    \begin{align*}
        \miccoin(\calA,\mu)=& \sum_{j=1}^n\sum_{\ell=1}^j \I(\calA_{(\leq k,j)};X_{\ell}\mid \calA_{(<k,n)},\calA_{(\leq k,\ell-1)}) \\
        =& \sum_{j=1}^n\sum_{\ell=1}^j \I(\calA_{(\leq k-1,j)};X_{\ell}\mid \calA_{(<k-1,n)},\calA_{(\leq k-1,\ell-1)},\calA_{(k-1,n)},\calA_{(k,{\ell}-1)})\\
        &+\;\;\;\;\;\;\;\;\;\I(\calA_{(k,j)};X_{\ell}\mid \calA_{(<k,n)},\calA_{(\leq k,{\ell}-1)},\calA_{(\leq k-1,j)})\tag{Chain rule} \\
        \leq& \sum_{j=1}^n\sum_{\ell=1}^j \I(\calA_{(\leq k-1,j)};X_{\ell}\mid \calA_{(<k-1,n)},\calA_{(\leq k-1,\ell-1)})+\I(\calA_{(k,j)};X_{\ell}\mid \calA_{(\leq k,{\ell}-1)},\calA_{(<k,j)}).
    \end{align*}

    \noindent    
   The inequality comes from two observations together with Property \ref{itemmi2}: 
    \begin{itemize}
        \item $\I(\calA_{(\le k-1,n)},\calA_{(k,{\ell}-1)};X_{\ell}\mid \calA_{(<k,j)},\calA_{(\leq k-1,\ell -1)}) = 0$, and thus, 
        \[\I(\calA_{(\leq k-1,j)};X_{\ell}\mid \calA_{(< k,n)},\calA_{(\leq k,\ell-1)})\leq \I(\calA_{(\leq k-1,j)};X_{\ell}\mid \calA_{(<k-1,n)},\calA_{(\leq k-1,\ell-1)});\]
        \item $\I(\calA_{(<k,n)};X_{\ell}\mid \calA_{(\leq k,j)},\calA_{(\leq k,\ell-1)}) = 0$, and thus, $$\I(\calA_{(k,j)},X_{\ell}\mid \calA_{(<k,n)},\calA_{\leq k,{\ell}-1},\calA_{(<k,j)}) \leq  \I(\calA_{(k,j)},X_{\ell}\mid \calA_{(\leq k,{\ell}-1)},\calA_{(<k,j)}).$$
        
    \end{itemize}
    The two observations could be proved by the similar arguments to the proof of Equation~\eqref{fact1}, and we omit the proofs here. Then, we do the decomposition process above recursively and conclude that: 
    \[
    \sum_{i=1}^k\sum_{j=1}^n\sum_{{\ell}=1}^j \I(\sfM_{(i,j)};X_{\ell}\mid\sfM_{(\leq i,{\ell}-1)},\sfM_{(\leq i-1,j)}) \geq \miccoin(\calA,\mu).     \qedhere
    \]
\end{proof}
\end{proof}

\section{Multi-Pass Lower Bound for the Coin Problem}\label{sec:single}
In this section, we assume that $X_1,\ldots, X_n$ are drawn from the uniform distribution over $\zo^n$; we will drop $\mu$ for the rest of the section. The main theorem of this section is Theorem \ref{th:main} in Subsection \ref{subsec:single}, which
proves an $\Omega(\log n)$ lower bound on $\miccoin(\sfM,\mu)$ for any $k$-pass algorithm $\sfM$ that solves the coin problem (or computes majority of input bits with large enough constant advantage). In Subsection \ref{sec:multiple}, we extend this lower bound to solving multiple instances of the coin problem simultaneously.

\paragraph{Additional notations.}  We will use notation $\one$ to denote the indicator function, which takes in a Boolean expression (as a subscript) and outputs $1$ if the expression is true and $0$ if it is false. For example, $\one_{X\neq 0}=\begin{cases}0 \text{ if } X=0\\
1 \text{ otherwise}
\end{cases}$. 
 In this section, we will also use laws of total expectation and total variance, which can be stated as follows. By the law of total expectation,
\[\bbE_{y\sim Y}[Z|Y=y]=\bbE[Z],\]
that is, the expected value of the conditional expectation of $Z$ given $Y$ is the same as the expected value of $Z$. While representing conditional expectation, we will also use the notation $\bbE_Y$ instead of $\bbE_{y\sim Y}$. Next, by the law of total variance, the expected conditional variance of $Z$ given $Y$ is at most the variance of $Z$. More generally,
\[\bbE_{X,Y}[\Var(Z|X=x,Y=y)]\le \bbE_{X}\Var(Z|X=x).\]
We give a short proof of the law of total variance:
\begin{align*}
\bbE_{X,Y}[\Var(Z|X=x,Y=y)]
&=\bbE_{X,Y}[\bbE[Z^2|X=x,Y=y]-(\bbE[Z|X=x,Y=y])^2]\\
&=\bbE[Z^2]-\bbE_{X,Y}(\bbE[Z|X=x,Y=y])^2\\
&\le \bbE[Z^2]-\bbE_{X}(\bbE_Y[\bbE[Z|X=x,Y=y]])^2\tag{Jensen's inequality}\\
&=\bbE[Z^2]-\bbE_{X}(\bbE[Z|X=x])^2\\
&=\bbE_{X}[\Var(Z|X=x)].
\end{align*}

We will prove our $k$-pass lower bound for computing majority (or approximating sum) on the uniform distribution by reducing it to the one-pass lower bound proven by \cite{braverman2020coin}, which is stated as follows: 
\begin{theorem}[\cite{braverman2020coin}, Corollary 14]\label{thm:previous1}
Given a stream of $n$ uniform $\zo$ bits $X_1,\ldots, X_n$, let $\sfO$ be a one-pass algorithm that uses private randomness. For all $\epsl>c_0 n^{-\frac{1}{20}}$, there exists $\de\ge c_1 \epsl^5$ (for small enough
	constant $c_1>0$ and large enough constant $c_0>0$), such that if
		\begin{equation}
	\label{eqbgw:1}
	IC(\sfO)=\sum_{j=1}^n\sum_{\ell=1}^j\I(\sfO_i;X_{\ell}|\sfO_{\ell-1}) \le \de n \log n, 
	\end{equation}
	then 
	\begin{equation}
	\label{eqbgw:2}
	\bbE_{\sfO_n}\left[\left(\bbE\left[ \sum_{j=1}^n X_j ~ \middle| ~
\sfO_n=o_n \right]\right)^2  \right] \le \epsl n.  
	\end{equation}
 Here, $\sfO_j$ represents the memory state of the one-pass algorithm $\sfO$ after reading $j$ input elements.

\end{theorem}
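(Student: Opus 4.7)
The plan is a martingale-difference argument that reduces the variance-reduction statement to a per-time-step bound in terms of conditional mutual information. Let $S := \sum_{j=1}^n X_j$ and define $M_j := \bbE[S \mid \sfO_j]$ for $j = 0, 1, \ldots, n$. Since $X_1, \ldots, X_n$ are i.i.d.\ uniform, $M_0 = 0$, and the tower property makes $(M_j)$ a martingale with respect to the filtration generated by $\sfO_0, \ldots, \sfO_n$. Orthogonality of martingale differences then gives $\bbE[M_n^2] = \sum_{j=1}^n \bbE[(M_j - M_{j-1})^2]$, so it suffices to bound the right-hand side by $\epsl n$.

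The key step is to control each increment by pieces of $IC(\sfO)$. Expand $M_j - M_{j-1} = \sum_{\ell \leq j} \Delta_\ell^{(j)}$, where $\Delta_\ell^{(j)} := \bbE[X_\ell \mid \sfO_j] - \bbE[X_\ell \mid \sfO_{j-1}]$; terms with $\ell > j$ vanish since the memory only depends on bits seen so far. The diagonal ($\ell = j$) contribution is $\bbE[X_j \mid \sfO_j]$, and Pinsker's inequality applied to the binary $X_j$ yields $\bbE[(\bbE[X_j \mid \sfO_j])^2] \leq 2\I(\sfO_j; X_j \mid \sfO_{j-1})$, which matches the $\ell = j$ term of $IC(\sfO)$ exactly. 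For $\ell < j$, observe that for fixed $\ell$ the sequence $\bigl(\bbE[X_\ell \mid \sfO_j]\bigr)_{j \geq \ell}$ is again a martingale, so by orthogonality $\sum_{j > \ell} \bbE[(\Delta_\ell^{(j)})^2] = \bbE[(\bbE[X_\ell \mid \sfO_n])^2] - \bbE[(\bbE[X_\ell \mid \sfO_\ell])^2]$. Pinsker again bounds this by $2\I(X_\ell; \sfO_n \mid \sfO_{\ell-1})$, and by a data-processing and averaging argument this can be dominated by the combination of terms $\I(\sfO_j; X_\ell \mid \sfO_{\ell-1})$ appearing in $IC(\sfO)$.

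Assembling these bounds, and applying Cauchy--Schwarz to handle the cross-terms $\bbE[\Delta_\ell^{(j)} \cdot \Delta_{\ell'}^{(j)}]$ with $\ell \neq \ell'$, one obtains $\bbE[M_n^2]$ bounded by a polynomial function of $IC(\sfO) \leq \de n \log n$. The target $\bbE[M_n^2] \leq \epsl n$ forces the scaling $\de = c_1 \epsl^5$: the Pinsker step loses a square-root factor when the posterior advantage is large, and closing this gap requires the BGW-style boosting technique, where an algorithm achieving variance reduction $\epsl n$ is converted into one distinguishing a slightly biased coin stream from the uniform one with advantage polynomial in $\epsl$, whose information cost can be bounded more directly. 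The cumulative polynomial loss through these reductions is what yields the factor $\epsl^5$.

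The hard part I anticipate is controlling the cross-terms $\bbE[\Delta_\ell^{(j)} \cdot \Delta_{\ell'}^{(j)}]$ with $\ell, \ell' < j$. These capture how $\sfO_j$ simultaneously shifts the joint posterior of $(X_\ell, X_{\ell'})$, and naively they are not bounded by per-bit mutual informations. The resolution is to exploit the ``upper-triangular'' form of $IC(\sfO)$: each term $\I(\sfO_j; X_\ell \mid \sfO_{\ell-1})$ conditions on the memory right before $X_\ell$ was read, which decouples $X_\ell$ from the future bits $X_{\ell+1}, \ldots, X_j$ at the time of conditioning. A chain-rule-style decomposition along the time axis then bounds the cross-terms in a way that aligns with the structure of $IC(\sfO)$.
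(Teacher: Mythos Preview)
First, a framing note: this theorem is not proved in the present paper --- it is quoted verbatim from \cite{braverman2020coin} and used as a black box in the proof of Theorem~\ref{th:main}. So there is no ``paper's own proof'' to compare against here; the relevant comparison is to the argument in \cite{braverman2020coin}.

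That said, your proposal has a genuine scaling gap that would prevent it from closing. Your martingale decomposition and Pinsker bounds relate $\bbE[M_n^2]$ to quantities like $\sum_\ell \I(X_\ell;\sfO_n\mid\sfO_{\ell-1})$ or, after Cauchy--Schwarz on the cross-terms, to something at most of order $IC(\sfO)$. But the hypothesis is $IC(\sfO)\le \de n\log n$ with $\de$ a \emph{constant}, so any bound of the form $\bbE[M_n^2]\lesssim IC(\sfO)$ yields only $\lesssim \de n\log n$, not $\le \epsl n$. The entire difficulty of the BGW result is absorbing that $\log n$: one must show that achieving constant-fraction variance reduction forces the information cost to be $\Omega(n\log n)$, not merely $\Omega(n)$. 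Your outline never engages with where the extra $\log n$ comes from --- the sentence ``one obtains $\bbE[M_n^2]$ bounded by a polynomial function of $IC(\sfO)\le\de n\log n$'' is exactly the step that fails. The actual argument in \cite{braverman2020coin} is considerably more delicate: it tracks how the posterior on $\sum_j X_j$ moves across $\Theta(\log n)$ dyadic scales and shows that each scale crossed costs $\Omega(1)$ information per step, which is what produces the $\log n$ factor in the lower bound. Your ``boosting'' paragraph gestures at this but does not supply the mechanism; the $\epsl^5$ dependence likewise arises from specific losses in that multi-scale analysis, not from Cauchy--Schwarz on cross-terms.
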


\subsection{Multi-Pass Lower Bound for Solving Single Coin Problem}\label{subsec:single}

Our main theorem for the $k$-pass coin problem is stated as follows:
\begin{theorem}\label{th:main}
	Let $\sfM$ be a $k$-pass algorithm on a stream of $n$ $i.i.d.$ uniform $\zo$ bits, $X_1,\ldots,X_n$. For all constants $\ve>0$ and $n$ greater than a sufficiently large constant, there exists constants $\de,\lam>0$, such that if $k<n^\lam$,
		\begin{equation}
	\label{eq:1}
	\miccoin(\sfM) \le \de n \log n\;\;\;\;\;\text{and}\;\;\;\;\; \forall i \in \{0,\ldots,k\},\;\Ent(\sfM_{i})\le n^\lam,
	\end{equation}
	then 
	\begin{equation}
	\label{eq:2}
	\bbE_{\sfM_{(k,n)}}\left(\bbE\left[ \sum_{j=1}^n X_j ~ \middle| ~
\sfM_{(k,n)} \right]  \right)^2 \le \ve n.  
	\end{equation}
\end{theorem}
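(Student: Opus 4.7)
The strategy is to reduce the $k$-pass bound to the single-pass bound of Theorem~\ref{thm:previous1} by simulation. Suppose for contradiction that (\ref{eq:1}) holds while (\ref{eq:2}) fails, so $\bbE_{\sfM_{(k,n)}}\left(\bbE\left[\sum_j X_j \mid \sfM_{(k,n)}\right]\right)^2 > \ve n$. I will construct from $\sfM$ a one-pass algorithm $\sfO$ on a fresh stream $Y_1,\ldots,Y_n$ of i.i.d.\ uniform $\zo$ bits for which $IC(\sfO) \le \de' n \log n$ and simultaneously $\bbE_{\sfO_n}\left(\bbE\left[\sum_j Y_j \mid \sfO_n\right]\right)^2 > \ve' n$, for constants $\de',\ve'$ meeting the hypotheses of Theorem~\ref{thm:previous1}. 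The contradiction then yields (\ref{eq:2}).

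\textbf{Construction of $\sfO$.} Before reading $Y$, $\sfO$ uses private randomness to draw pass-end states $(m_1,\ldots,m_{k-1})$ from the marginal law of $\sfM_{(<k)}$; these fit in $\sfO$'s memory because $\Ent(\sfM_i) \le n^\lambda$ with $\lambda$ small. Then $\sfO$ streams $Y_1,\ldots,Y_n$ while maintaining (a) the tuple $(\sfM_{(1,j)},\ldots,\sfM_{(k,j)})$ of $j$-th memory states of a hypothetical execution of $\sfM$ on a rewritten stream $X'$, conditioned on $\sfM_{(<k)} = (m_1,\ldots,m_{k-1})$, and (b) a low-entropy approximate sketch $Z_j$ of $D_j := \sum_{\ell \le j}(Y_\ell - X'_\ell)$ with additive error $o(\sqrt n)$. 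At step $j$, the bit $X'_j$ is sampled from $\Pr\bigl[X_j = \cdot \mid \sfM_{(\le k,j-1)}, \sfM_{(<k)}\bigr]$ (which, by Corollary~\ref{cor:indep2}, equals the conditional law seen by $\sfM$ in its actual execution), coupled with $Y_j$ so that the total variation cost of the rewrite is small in expectation. Finally $\sfO_n := (\sfM_{(k,n)}, Z_n)$, and the estimator for $\sum_j Y_j$ is the estimator extracted from $\sfM_{(k,n)}$ for $\sum_j X'_j$, corrected by $Z_n$.

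\textbf{Information cost analysis.} The central inequality I need is $IC(\sfO) \le \miccoin(\sfM) + O(n \cdot \Ent(Z_n)) + O(\Ent(\sfM_{(<k)}))$. The first term arises because, for each $j \ge \ell$, the mutual information $\I(\sfO_j; Y_\ell \mid \sfO_{\ell-1})$ is controlled by $\I(\sfM_{(\le k, j)}; X_\ell \mid \sfM_{(<k)}, \sfM_{(\le k, \ell-1)})$ once the tight coupling of $X'_\ell$ and $Y_\ell$ is used, and these are exactly the summands of $\miccoin(\sfM)$. The remaining terms are lower-order contributions from the sketch and from initially storing $(m_1,\ldots,m_{k-1})$. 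Since the expected KL divergence of the conditional law of $X'$ from the uniform distribution is bounded by $\Ent(\sfM_{(<k)}) \le k n^\lambda$, a Pinsker-type bound gives $\bbE[|D_n|] = O(\sqrt{n \cdot k n^\lambda})$, so $Z$ can be encoded with $o(\log n)$ bits of entropy. Taking $\lambda$ small enough, $IC(\sfO) \le \de' n \log n$ as required.

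\textbf{Main obstacle.} The hardest step is jointly designing the coupling of $(Y_\ell, X'_\ell)$, the sketch $Z$, and the information-theoretic accounting so that all three demands are simultaneously met: (i) $Z$ must have entropy well below $\log n$ to avoid inflating $IC(\sfO)$; (ii) $Z$ must track $D_j$ with error $o(\sqrt n)$ throughout the stream with high probability, so that the estimator recovers $\sum_j Y_j$ up to the same $O(\sqrt{\ve n})$ accuracy as $\sfM$ recovers $\sum_j X'_j$; and (iii) the accounting must genuinely reduce $IC(\sfO)$ to $\miccoin(\sfM)$, which is precisely why $\miccoin$ is defined with conditioning on $\sfM_{(<k)}$ rather than only on past $j$-th states — this conditioning produces the product-like structure of $X'$ that the one-pass bound of Theorem~\ref{thm:previous1} requires.
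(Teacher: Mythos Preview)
Your proposal is correct and follows essentially the same approach as the paper: sample the pass-end states $\sfM_{<k}$, simulate all $k$ passes of $\sfM$ in parallel on a rewritten stream $X'$ coupled to $Y$, maintain a low-entropy sketch of $\sum_j(Y_j-X'_j)$, and reduce to the one-pass bound of Theorem~\ref{thm:previous1}. The paper implements your sketch $Z$ as a concrete subsampling-based algorithm $\Apr$ (Algorithm~\ref{al:apr}), which achieves entropy $O(\log\log n + \lambda\log n)$ per step and additive error $O(\ve\sqrt n)$ with high probability; this is precisely the ``main obstacle'' you flagged, and the paper's resolution matches the mechanism your plan anticipates.
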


Theorem \ref{th:main} and Lemma \ref{lem:upperboundofmic'}, along with (\cite{braverman2020coin}, Claim 6)-- which proved an $\Omega(n)$ lower bound on the L.H.S. of Equation \eqref{eq:2} for any algorithm whose output computes majority with 0.999 advantage,  give us the following corollary.

\begin{corollary}
\label{cor:singlemaj}
Let $X_1,X_2,\ldots, X_n$ be a stream of $i.i.d.$ uniform $\zo$ bits. Let $k<\log n$, and $\sfM$ be a $k$-pass streaming algorithm (possibly using private randomness)  which goes over the stream $k$ times in order and outputs the majority bit with at least 0.999 probability (over the input distribution and private randomness). Then $\sfM$ uses at least $\Omega\left(\frac{\log n}{k}\right)$ memory. 
\end{corollary}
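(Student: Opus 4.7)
The plan is to obtain Corollary \ref{cor:singlemaj} as an immediate combination of Theorem \ref{th:main}, Lemma \ref{lem:upperboundofmic'}, and the cited Claim 6 of \cite{braverman2020coin}, via a contrapositive argument. The external ingredient says that if the output of $\sfM$ agrees with the majority bit with probability at least $0.999$, then there is a universal constant $c>0$ such that
\[
\bbE_{\sfM_{(k,n)}}\Bigl(\bbE\Bigl[\sum_{j=1}^n X_j\,\Bigm|\,\sfM_{(k,n)}\Bigr]\Bigr)^{2}\;\ge\; c\,n.
\]
(The statement in \cite{braverman2020coin} is phrased for the output, but since the output is a deterministic function of $\sfM_{(k,n)}$ together with the private randomness used at the final step, the same lower bound passes to $\sfM_{(k,n)}$ by noting that finer conditioning can only increase the expectation of the squared conditional mean.) With this in hand, the rest is arithmetic on constants.

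I would pick $\ve := c/2$ in Theorem \ref{th:main}, which in turn fixes the constants $\de=\de(\ve)$ and $\lam=\lam(\ve)$ appearing there, and then suppose for contradiction that $\sfM$ uses $s < \de\log n/k$ bits of memory. Both hypotheses in \eqref{eq:1} of Theorem \ref{th:main} are then easy to verify. Each memory state $\sfM_i$ lives in $\{0,1\}^s$, so $\Ent(\sfM_{i})\le s\le \log n\le n^\lam$ for $n$ sufficiently large; and similarly $k<\log n\le n^\lam$, so the blanket assumption $k<n^\lam$ in the theorem is in force. For the information hypothesis, Lemma \ref{lem:upperboundofmic'} gives $\miccoin(\sfM)\le k\cdot s\cdot n$, and substituting the assumed bound on $s$ yields $\miccoin(\sfM)< \de\, n\log n$. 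Theorem \ref{th:main} therefore concludes $\bbE_{\sfM_{(k,n)}}\bigl(\bbE[\sum_j X_j\mid \sfM_{(k,n)}]\bigr)^2\le \ve n=(c/2)n$, contradicting the $cn$ lower bound above. Hence $s\ge \de\log n/k=\Omega(\log n/k)$.

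The \emph{only point to handle carefully} is the passage between the output of $\sfM$ and the final memory state $\sfM_{(k,n)}$, so that Claim 6 of \cite{braverman2020coin} applies to $\sfM_{(k,n)}$ rather than only to the output; since the output of any memory-bounded streaming algorithm is determined by its final state and fresh private randomness, this is a single line. Beyond that, there is no genuine obstacle: the substantive content has already been loaded into Theorem \ref{th:main} (the variance-vs-information trade-off) and Lemma \ref{lem:upperboundofmic'} (the $\miccoin(\sfM)\le ksn$ inequality), and the role of the corollary's proof is just to select the constants correctly and chain those three inequalities.
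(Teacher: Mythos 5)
Your proposal is correct and follows the same route as the paper: the paper derives Corollary \ref{cor:singlemaj} exactly by combining Theorem \ref{th:main}, the bound $\miccoin(\sfM)\le ksn$ from Lemma \ref{lem:upperboundofmic'}, and Claim 6 of \cite{braverman2020coin}, precisely the contrapositive constant-chasing you spell out. Your extra remark about passing from the output to the final memory state $\sfM_{(k,n)}$ (via independence of the fresh randomness and the fact that finer conditioning only increases the expected squared conditional mean) is a valid and harmless elaboration of a step the paper leaves implicit.
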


\proof[Proof of Theorem \ref{th:main}] Let $\sfM$ be a $k$-pass algorithm on a stream of $n$ $i.i.d.$ uniform $\zo$ bits, $X_1,\ldots,X_n$, such that $k\le n^\lam$, $\Ent(\sfM_i)\le n^\lam, \; \forall i\in\{0,\ldots,k\}$ and 
\[\bbE_{\sfM_{(k,n)}}\left(\bbE\left[ \sum_{j=1}^n X_j ~ \middle| ~
\sfM_{(k,n)} \right]  \right)^2 \ge \ve n.\]
Using $\sfM$, we construct a single pass algorithm $\sfO$ (Algorithm \ref{al:a}) such that, given a stream of $n$ $i.i.d.$ uniform $\zo$ bits $Y_1,\ldots, Y_n$, the following holds
\begin{enumerate}
\item for the information cost of $\sfO$ (Lemma \ref{cl:ainf}):\begin{equation}\label{eq:thmmain1}IC(\sfO)\; = \;\sum_{j=1}^n \sum_{\ell=1}^j \I(\sfO_{j};Y_\ell| \sfO_{\ell-1})\;\le \; \miccoin(\sfM) + n\cdot \left(50+6\log \log n +\log{\left(\frac{k\cdot n^{\lambda}}{\ve^2}\right)}\right).
\end{equation}
Recall, $\sfO_j$ denotes the random variable for the memory state of $\sfO$ after reading $j$ bits\footnote{Note that the simulation algorithm $\sfO$ uses private randomness.}. 
\item for the output of algorithm $\sfO$ (Lemma \ref{cl:asum}): \begin{align*}
\bbE_{\sfO_n}\left[\left(\bbE\left[ \sum_{j=1}^n Y_j ~ \middle| ~
\sfO_n =\an_n\right]\right)^2  \right] \;\ge\; \frac{\ve}{2}\cdot n. 
\end{align*}
\end{enumerate}
Theorem \ref{thm:previous1} implies that there exists $\de_\ve>0$ such that $IC(\sfO)\ge  \de_\ve \cdot n\log n$. If $\lam<\frac{\de_\ve}{10}$, then for sufficiently large $n$, 
\[50+6\log \log n +\log{\left(\frac{k\cdot n^{\lambda}}{\ve^2}\right)}\le 50+6\log \log n +\log{\left(\frac{1}{\ve^2}\right)}+2\lam\log{n}< \frac{\de_\ve}{2} \log n.\]
Therefore, Equation \eqref{eq:thmmain1} implies that $\miccoin(\sfM)\ge IC(\sfO)-\frac{\de_\ve}{2}\cdot n \log n\ge \frac{\de_\ve}{2}\cdot n \log n$. Taking $\de=\de_\ve/2$ proves the theorem. 
 
\paragraph{Construction of \ $\sfO$.} Informally, $\sfO$ executes $k$ passes of $\sfM$ in parallel. Before reading the input bits $Y_1,\ldots, Y_n$, $\sfO$ samples memory states at the end of first $k-1$ passes from the joint distribution on $(\sfM_0,\ldots,\sfM_{k-1})$. $\sfO$ then modifies the given input $Y$ to $X'$ such that the parallel execution of the $k-1$ passes of the algorithm $\sfM$ on $X'_1,\ldots, X'_n$ end in the sampled memory states. $\sfO$ also maintains an approximation for the modification, that is of $\sum_{j=1}^n (X'_j-Y_j)$; this helps $\sfO$ to compute $\sum Y_j$ as long as $\sfM$ computes $\sum X'_j$ after $k$ passes. As we want $\sfO$ to have comparable information cost to that of $\sfM$, the approximation of the modification should take low memory\footnote{Note that it takes $\log n$ bits of memory to store $\sum_{j=1}^n (X'_j-Y_j)$ exactly.}. The key observation that makes such an approximation possible is: since the KL divergence of the distribution $X$, conditioned on reaching memory states $\sfM_0,\ldots,\sfM_{k-1}$, from the uniform distribution is bounded by the entropy of $(\sfM_0,\ldots,\sfM_{k-1})$ (which we assume to be $<< n$), algorithm $\sfO$ does not need to drastically modify $Y$ (which has a uniform distribution). Still, we cannot afford to store the modification exactly; however, a cruder approximation suffices, which can be computed using low memory. 

As described above, algorithm $\sfO$ has two components, 1) imitate $k$ passes of $\sfM$ simultaneously, and 2) maintain an approximation for modifying input $Y$ to a valid input $X'$ for the first $k-1$ passes of $\sfM$. To formally describe algorithm $\sfO$ (in Section \ref{subsec:a}), we first state these two components separately as algorithms $\Imi$ (in Section \ref{subsec:im}) and $\Apr$ (in Section \ref{subsec:apr}) respectively.

\subsubsection{Single-Pass Algorithm $\Imi$ Imitating $k$ Passes of $\sfM$}\label{subsec:im}
Recall that $\sfM$ is a $k$-pass algorithm that runs on a stream of $n$ $i.i.d.$ uniform $\zo$ bits, $X_1, X_2,\ldots, X_n$. We describe algorithm $\Imi$ in Algorithm \ref{al:im}. Let $\Imi_j$ (where $j\in[n]$) represent the random variable for the memory state of Algorithm $\Imi$ after reading $j$ inputs bits, and $\Imi_0$ be a random variable for the starting memory state for the algorithm. The input $Y$ to the algorithm $\Imi$ is drawn from the uniform distribution on $\zo^n$. 

Let $\sfM'_i$ denote the random variable associated with value $m'_i$ ($i\in\{0,1,\ldots, k-1)$. The distribution of $\sfM'_{<k}$ is defined at Step \ref{st:im3} of Algorithm \ref{al:im}. Let $\{\sfM'_{(i,j)}\}_{i\in[k],j\in\{0,\ldots,n\}}$ denote the random variables associated with values $\{m'_{(i,j)}\}_{i\in[k],j\in\{0,\ldots,n\}}$. The distribution of $\sfM'_{(i,j)}$ ($j\in[n]$) is defined at Step \ref{st:im2} of Algorithm \ref{al:im}, and of $\sfM'_{(i,0)}$ is defined at Step \ref{st:im4}. Let $\{X'_j\}_{j\in[n]}$ denote the random variable for value $x'_j$ in Step \ref{st:im1} of Algorithm \ref{al:im}. These distributions depend on the joint distribution on $(X,\sfM_{\le k}, \sfM_{(\le k, [1,n])})$ and the uniform distribution of $Y$. 

\begin{claim}\label{cl:imd}
The joint distribution on $X, \sfM_{< k}, \{\sfM_{(\le k, [0,n])}\}$ is identical to that on $X',\sfM'_{< k}, \{\sfM'_{(\le k, [0,n])}\}$.
\end{claim}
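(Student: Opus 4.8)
\textbf{Proof proposal for Claim \ref{cl:imd}.}

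The plan is to prove this by induction on the time-step $j$, showing that after reading $j$ input bits the joint distribution of all random variables ``seen so far'' in the two processes coincides. Concretely, I would set up the induction hypothesis to assert that the joint law of
\[
\bigl(X_{[1,j]},\ \sfM_{<k},\ \sfM_{(\le k,[0,j])}\bigr)
\quad\text{equals that of}\quad
\bigl(X'_{[1,j]},\ \sfM'_{<k},\ \sfM'_{(\le k,[0,j])}\bigr),
\]
where on the primed side everything is the output of Algorithm \ref{al:im} run on a uniform input $Y$. Note that $\sfM'_{<k}$ (and hence $\sfM'_{(\le k,0)}$, fixed at Steps \ref{st:im3}--\ref{st:im4}) is sampled \emph{before} any input bit is read, so the base case $j=0$ is exactly the statement that $\Imi$ samples $(\sfM'_0,\ldots,\sfM'_{k-1})$ from the correct marginal --- which is true by construction at Step \ref{st:im3}, together with the convention $\sfM'_{(i+1,0)}=\sfM'_{(i,n)}$ matched by Step \ref{st:im4}.

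For the inductive step, the key point is the way $\Imi$ chooses $X'_j$ at Step \ref{st:im1}: conditioned on the already-fixed values $\sfM'_{<k}$ and the current states $\sfM'_{(\le k,j-1)}$ (and its own internal randomness used so far), $\Imi$ draws $X'_j$ from exactly the conditional distribution of $X_j$ given $\bigl(\sfM_{<k},\sfM_{(\le k,j-1)}\bigr)$ in the original process. Then, at Step \ref{st:im2}, for each pass $i$ it applies the true transition function $f^{\sfM}_{(i,j)}$ (re-sampling fresh private randomness $\Ra_{(i,j)}$ with the same law as $\sfM$ uses), to get $\sfM'_{(i,j)}$ from $\sfM'_{(i,j-1)}$ and $X'_j$. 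So I would verify:
\begin{enumerate}
\item the conditional law of $X'_j$ given the primed history matches the conditional law of $X_j$ given the corresponding original history --- here one uses that in the original process, by Claim \ref{cl:indep1} / Corollary \ref{cor:indep2}, conditioning additionally on $\sfM_{(\le k,j-1)}$ does not change the conditional law of $X_j$ beyond what is captured, and more importantly that $\Imi$ is \emph{defined} to sample from precisely this conditional;
\item given $X'_j$ and $\sfM'_{(\le k,j-1)}$ and the end-states $\sfM'_{<k}$, the update rule producing $\sfM'_{(\le k,j)}$ is the same (same transition functions, same independent private-randomness distribution) as the rule producing $\sfM_{(\le k,j)}$ from $X_j,\sfM_{(\le k,j-1)},\sfM_{<k}$ in the original --- noting that in the original process the $i$-th pass memory at step $j$ is likewise a function of $X_j$, $\sfM_{(i,j-1)}$ and fresh randomness, and that conditioning on the later end-states $\sfM_{<k}$ is consistent because $\Imi$ pre-committed to them and the update is carried out to be consistent (this is exactly the content that Algorithm \ref{al:im} maintains $\sfM'_{(i,n)}=\sfM'_i$ at the end of each pass).
\end{enumerate}
Combining (1) and (2) with the induction hypothesis, a one-line computation of the joint density factorization (chain rule: new joint $=$ old joint $\times$ conditional of $X'_j$ $\times$ conditional of the updates) gives the step from $j-1$ to $j$. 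Setting $j=n$ yields the claim.

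\textbf{Main obstacle.} The delicate part is the consistency of pre-committing to the end-of-pass states $\sfM'_{<k}$ while generating $X'_j$ and the per-step states forward in time. In the original process $\sfM$, the end-states $\sfM_{i}=\sfM_{(i,n)}$ are \emph{determined} by the input and randomness, not sampled up front; so one must argue that sampling them first and then sampling the input conditioned on them, step by step, reproduces the same joint law --- i.e.\ that the forward conditional structure of $\Imi$ is a faithful ``reverse-then-forward'' factorization of the original joint distribution. This is where the product structure of the input (so that future bits are independent of past states given the right conditioning, via Claim \ref{cl:indep1}) is used: it guarantees that the step-$j$ conditional of $X'_j$ given $(\sfM'_{<k},\sfM'_{(\le k,j-1)})$ is well-defined and equals the original conditional, and that no ``impossible'' configuration gets positive probability. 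I expect the write-up to mostly be careful bookkeeping of which variables each object is a function of, with the induction itself being short once the conditioning is set up correctly.
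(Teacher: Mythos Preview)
Your overall approach---induction on $j$ with a chain-rule factorization of the joint law---is exactly the paper's. However, your description of item (2) misreads Step~\ref{st:im2} of Algorithm~\ref{al:im}. You write that $\Imi$ ``applies the true transition function $f^{\sfM}_{(i,j)}$ (re-sampling fresh private randomness $\Ra_{(i,j)}$ with the same law as $\sfM$ uses)''. That is not what the algorithm does: Step~\ref{st:im2} samples $\sfM'_{(\le k,j)}$ from the \emph{conditional} law of $\sfM_{(\le k,j)}$ given $\bigl(\sfM_{(\le k,j-1)},\sfM_{<k},X_j\bigr)$, in particular conditioning on the pre-committed end-states $m'_{<k}$. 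If one merely applied the transition functions with fresh independent randomness, the resulting end states $\sfM'_{(i,n)}$ would have no reason to equal the pre-sampled $m'_i$, and the claimed equality of joint laws would fail. Your ``Main obstacle'' paragraph shows you sense this issue, but the write-up of step~(2) must be corrected to reflect that both the $X'_j$ draw (Step~\ref{st:im1}) and the memory-state draw (Step~\ref{st:im2}) are conditional samplings given the end-states.

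With that correction, the inductive step becomes the paper's: one invokes Corollary~\ref{cor:indep1} (not Corollary~\ref{cor:indep2}) twice---once to show
\[
\Pr\bigl[X_j\mid \sfM_{<k},\sfM_{(\le k,j-1)}\bigr]=\Pr\bigl[X_j\mid \sfM_{<k},X_{[1,j-1]},\sfM_{(\le k,[0,j-1])}\bigr],
\]
and once to show the analogous equality for $\sfM_{(\le k,j)}$ given $X_j$---so that the conditionals $\Imi$ samples from coincide with the chain-rule factors of the original joint.
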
 
In Algorithm \ref{al:im}, the random variables  $X',\sfM'_{< k}, \{\sfM'_{(\le k, [0,n])}\}$ are sampled with the aim of replicating the joint distribution. For a formal proof of the above claim, refer to Appendix \ref{sec:apim}. 

\begin{algorithm}[H]
\caption{Single pass algorithm $\Imi$ imitating $k$ passes of $\sfM$}
\label{al:im}
\textbf{Input}: a stream of $n$ bits $y_1,\ldots, y_n$, drawn from uniform distribution on $\zo^n$
\begin{algorithmic}[1]
\STATE \label{st:im3} Sample $m'_0,m'_1,\ldots,m'_{k-1}\sim (\sfM_0,\sfM_1,\ldots,\sfM_{k-1})$
\COMMENT{$\Imi$ samples memory states for the end of first $k-1$ passes}
\STATE $\imi_0\leftarrow (m'_0,m'_1,\ldots,m'_{k-1})$
\COMMENT{$\Imi$ stores these memory states for the entire algorithm}
\STATE \label{st:im4} $\forall i\in[k]$, $m'_{(i,0)}\leftarrow m'_{(i-1)}$  
\COMMENT{Starting memory states for the $k$ passes of $\sfM$}
\FOR {$j=1$ to $n$} 
\STATE \label{st:im1} $\beta_j\leftarrow \left(\Pr\left[X_j=1\;|\; \sfM_{(\le k,j-1)}=m'_{(\le k,j-1)}, \sfM_{< k}=m'_{<k}\right]-\frac{1}{2}\right)$ \COMMENT{Can be calculated using $im_{j-1}$}
\IF {$\beta_j > 0$} 
\IF{$y_j=1$}
 \STATE $x_j'\leftarrow y_j$ 
 \ELSE
 \STATE $x_j'\leftarrow y_j$ with probability $1-2\beta_j$, and $x_j'\leftarrow 1$ otherwise
 \ENDIF
\ELSIF{$\beta_j \le 0$}
\IF{$y_j=1$}
 \STATE  $x_j'\leftarrow y_j$ with probability $1+2\beta_j$, and $x_j'\leftarrow -1$ otherwise
 \ELSE
 \STATE $x_j'\leftarrow y_j$
\ENDIF
\ENDIF
 \STATE \label{st:im2} Sample $(m'_{(1,j)},m'_{(2,j)}, \ldots, m'_{(k,j)})$ from the joint distribution on \[\left((\sfM_{(1,j)},\sfM_{(2,j)},\ldots, \sfM_{(k,j)})~\middle|~(\sfM_{(\le k,j-1)}=m'_{(\le k,j-1)},\; \sfM_{< k}=m'_{< k},\; X_j=x_j'\right)\]\\
\COMMENT{Given $im_{(j-1)}$, $\Imi$ executes $j$th time-step for all passes of $\sfM$}
 \STATE $im_j\leftarrow (m'_{(1,j)},m'_{(2,j)}, \ldots, m'_{(k,j)},m'_0,m'_1,\ldots,m'_{k-1})$\\
 \COMMENT{At the $j$th time-step, $\Imi$ stores these memory states}
\ENDFOR 
\end{algorithmic}
\end{algorithm}

\begin{claim}\label{cl:iminf}
The information cost of algorithm $\Imi$ is at most the information cost of the $k$-pass algorithm $\sfM$, that is,
\[IC(\Imi)=\sum_{j=1}^n\sum_{\ell=1}^j \I\left(\Imi_{j};Y_{\ell}|\Imi_{\ell-1}\right) \le \sum_{j=1}^n\sum_{\ell=1}^j \I\left(\sfM_{(\le k,j)};X_{\ell}|\sfM_{< k},\sfM_{(\le k,\ell-1)}\right)=\miccoin(\sfM).\]
Here, $X$ and $Y$ are both drawn from the uniform distribution on $\zo^n$.

\end{claim}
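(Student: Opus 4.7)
The plan is to bound each term $\I(\Imi_j; Y_\ell \mid \Imi_{\ell-1})$ individually and then sum. The first observation is that the memory state $\Imi_j = (\sfM'_{(\le k, j)}, \sfM'_{<k})$ stores the sampled ``end-of-pass'' memory $\sfM'_{<k}$ at every step of the simulation, so $\sfM'_{<k}$ is already determined by $\Imi_{\ell-1}$. Hence
\[
\I(\Imi_j; Y_\ell \mid \Imi_{\ell-1}) \;=\; \I\!\left(\sfM'_{(\le k, j)};\, Y_\ell \,\middle|\, \sfM'_{(\le k, \ell-1)},\, \sfM'_{<k}\right).
\]

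Next, I would establish the Markov chain
\[
Y_\ell \;\longrightarrow\; X'_\ell \;\longrightarrow\; \sfM'_{(\le k, j)}
\qquad\text{conditioned on}\qquad (\sfM'_{(\le k, \ell-1)}, \sfM'_{<k}).
\]
To see this, note that at time $\ell$ of Algorithm~\ref{al:im} the bit $X'_\ell$ is produced from $Y_\ell$ using the value $\beta_\ell$ (which is determined by $\sfM'_{(\le k, \ell-1)}, \sfM'_{<k}$) together with fresh private randomness independent of everything else. For each $i \ge \ell$, the state $\sfM'_{(\le k, i)}$ is then sampled from its conditional distribution given $(\sfM'_{(\le k, i-1)}, \sfM'_{<k}, X'_i)$, using further independent randomness, and for $i > \ell$ the bit $X'_i$ is produced from the independent input $Y_i$ using only $(\sfM'_{(\le k, i-1)}, \sfM'_{<k})$ and fresh randomness. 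Inductively, none of these later steps depends on $Y_\ell$ once $X'_\ell$ (and the earlier memory states) are fixed, which is precisely the Markov property above.

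By the data-processing inequality applied to this Markov chain,
\[
\I\!\left(\sfM'_{(\le k, j)};\, Y_\ell \,\middle|\, \sfM'_{(\le k, \ell-1)},\, \sfM'_{<k}\right)
\;\le\;
\I\!\left(\sfM'_{(\le k, j)};\, X'_\ell \,\middle|\, \sfM'_{(\le k, \ell-1)},\, \sfM'_{<k}\right).
\]
Then I invoke Claim~\ref{cl:imd}, which guarantees that the joint distributions of $\bigl(X, \sfM_{<k}, \sfM_{(\le k, [0,n])}\bigr)$ and $\bigl(X', \sfM'_{<k}, \sfM'_{(\le k, [0,n])}\bigr)$ are identical. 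Since the right-hand side above is a functional of only these variables, it equals $\I\!\left(\sfM_{(\le k, j)};\, X_\ell \,\middle|\, \sfM_{(\le k, \ell-1)},\, \sfM_{<k}\right)$. Summing over all $1 \le \ell \le j \le n$ yields exactly $\miccoin(\sfM)$ on the right and $IC(\Imi)$ on the left, completing the proof.

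The main obstacle is the careful verification of the Markov property in the second paragraph: I have to be explicit that the private randomness driving the later modifications (the Bernoulli coin flips in Algorithm~\ref{al:im}) and the later input bits $Y_{\ell+1}, \dots, Y_n$ are mutually independent of $Y_\ell$ given the conditioning, so that the dependence of $\sfM'_{(\le k, j)}$ on $Y_\ell$ is channelled entirely through $X'_\ell$. Everything else is a one-line invocation of data processing and of the distributional identity from Claim~\ref{cl:imd}.
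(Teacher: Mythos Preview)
Your proposal is correct and follows essentially the same approach as the paper: both arguments reduce to showing the Markov relation $Y_\ell \to X'_\ell \to \Imi_j$ conditioned on $\Imi_{\ell-1}$ (the paper phrases this as $\I(\Imi_j;Y_\ell\mid X'_\ell,\Imi_{\ell-1})=0$ via a chain-rule decomposition through $\Imi_\ell$, you phrase it directly as data processing), and then invoke Claim~\ref{cl:imd} to replace the primed variables by the original $\sfM$ variables.
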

The inequality follows  from a data processing inequality on $Y\rightarrow X'\rightarrow$ Algorithm \ref{al:im}, and the equivalence of random variables stated in Claim \ref{cl:imd}. See Appendix \ref{sec:apim} for a formal proof.

\subsubsection{Low Information Algorithm for Approximating Sum}\label{subsec:apr}


We develop $\Apr$ for the general problem of approximating the sum of $n$ elements, each in $\{-1,0,1\}$. The problem is as follows: given parameters $\gamma>0$, $\B>\gamma \sqrt{n}$, and a stream of $n$ elements $\e_1,\ldots,\e_n \in \{-1,0,1\}$ jointly drawn from a distribution $\calD$ (such that $\bbE_{\e\sim \calD}\left[\sum_{j=1}^n \one_{\e_j\neq 0}\right]\le \B$), the aim is to output $\left(\sum_{j=1}^n \e_j\right)$ up to an additive error of $\gamma \sqrt{n}$. Let $R^{\Apr}=(R_1^\Apr,\ldots,R_n^\Apr)$, where $R_j^\Apr$ denotes the random variable for private randomness used by algorithm $\Apr$ at the $j$th time-step; we formalize the error guarantee as $\bbE_{\e\sim \calD, r\sim R^\Apr} \left[\left|\Apr(\e,r)-\sum_{j=1}^n \e_j\right|\right]\le \gamma\sqrt{n}$. Additionally, we establish that the streaming algorithm $\Apr$ (described in Algorithm \ref{al:apr}) has low information cost -- the memory state at each time-step has low entropy, that is, $\forall i\in [n]$, $\Ent (\Apr_i)\sim 2\log{\left( \frac{\B}{\gamma\sqrt{n}}\right)}$. Note that, the exact computation of $\sum_j \e_j$ requires $\log n$ memory. 

Informally, $\Apr$ samples each $\e_j$ with probability $p\sim \frac{\B}{\gamma^2 n}$ and maintains their sum using a counter $\Deltasm$. It is easy to see that $\Deltasm/p$ is an approximation of $\sum_j \e_j$ (with additive error $\gamma\sqrt{n}$) as long as $\sum_j \one_{\e_j\neq 0}$ is bounded by $\sim\B$. As $\B$ is an upper bound only on the expectation of $\sum_j \one_{\e_j\neq 0}$, the algorithm $\Apr$ needs to find another way to approximate the sum whenever $\sum_j \one_{\e_j\neq 0}>> B$. For this, $\Apr$ maintains two more counters $\Cb$ and $\Deltalg$, where $\Cb$ counts the number of elements $\e_j$ sampled in the sum $\Deltasm$, and $\Deltalg$ stores $\sum_j \e_j$ exactly whenever counter $\Cb$ becomes $ >> p \B$. $\Apr$ is formally described in Algorithm \ref{al:apr}.

\begin{algorithm}[H]
\caption{Algorithm $\Apr$ for approximate sum}
\label{al:apr}
\textbf{Input stream}: $\e_1,\ldots, \e_n$, drawn from joint distribution $\calD$ on $\{-1,0,1\}^n$\\
\textbf{Given parameters}: $\gamma>0$, $ \B> \gamma \sqrt{n}$\\
\textbf{Goal}: $\forall \e\in\{-1,0,1\}^n$, $\left|\Apr(\e,r)-\sum_{j=1}^n \e_j\right|\le \frac{\gamma}{2}\sqrt{n}$ with probability at least $1-\frac{1}{n^3}$ over the private randomness $r\sim R^\Apr$
\begin{algorithmic}[1]
\STATE Let $p =\min\left\{ 6000\log^2 n\cdot \left(\frac{\B}{\gamma^2 n}\right),1\right\}\;$ \hspace{1cm}\COMMENT{probability of sampling}
\STATE $\Deltasm\leftarrow 0$ \hspace{1cm}\COMMENT{$\Deltasm$ maintains an approximation for $p\cdot\left(\sum_j \e_j\right)$} 
\STATE $\Cb\leftarrow 0$ \hspace{1cm}\COMMENT{$\Cb$ approximates $p\cdot\left(\sum_j \one_{\e_j\neq 0}\right)$}
\STATE $\Deltalg\leftarrow 0$ \hspace{1cm}\COMMENT{$\Deltalg$ computes $\left(\sum_j \e_j\right)$ exactly when $\Deltasm/p$ is not a good approximation}
\FOR{$j=1$ to $n$}
\IF{$\Cb<  20\log n \cdot p \B$}
\STATE \label{step:r} Let $r_j$ be $1$ with probability $p$ and $0$ otherwise 
\IF{$r_j=1$}
\STATE $\Deltasm\leftarrow \Deltasm + \e_j$
\STATE $\Cb\leftarrow \Cb+\one_{\e_j\neq 0}$ \hspace{1cm}\COMMENT{Sample $\e_j$ and update the counters with probability $p$}
\ENDIF
\ELSIF{$\Cb\ge 20\log n \cdot p \B$}
\STATE $\Deltalg\leftarrow \Deltalg + \e_j$ 
\ENDIF 

\ENDFOR
\RETURN $\max\{\min\{\Deltasm/p + \Deltalg, n\},-n\}\;$ \hspace{1cm}\COMMENT{Project $\Deltasm/p + \Deltalg$ within $[-n,n]$}
\end{algorithmic}
\end{algorithm}
In Propositions \ref{cl:aprsum} and \ref{cl:aprinf}, we establish the approximation and information cost guarantees for the algorithm $\Apr$. Before, we note that $\Apr$ uses private randomness at Step \ref{step:r} of the algorithm and define $R_j^\Apr$ to be a $\Ber(p)$ random variable for all $j\in[n]$. 

\begin{proposition}\label{cl:aprsum}
$\forall \gamma>\frac{4}{\sqrt{n}}, \B>\gamma\sqrt{n}$, 
the output of Algorithm \ref{al:apr} ($\Apr)$ on every input stream $\e\in\{-1,0,1\}^n$ satisfies the following with probability at least $1-\frac{1}{n^3}$ over the private randomness $r\sim R^\Apr$,
\[\left|\Apr(\e,r)-\sum_{j=1}^n \e_j\right|\le \frac{\gamma}{2}\sqrt{n},\]
which further implies that $\forall$ distribution $\calD$ on $\{-1,0,1\}^n$, 
\[\bbE_{\e\sim \calD,r\sim R^\Apr} \left[\left(\Apr(\e, r)-\sum_{j=1}^n \e_j\right)^2\right]\le \gamma^2n.\]
\end{proposition}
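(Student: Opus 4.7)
The plan is to express the algorithm's error as a stopped martingale and apply Freedman's martingale Bernstein inequality. Fix the input $a \in \{-1,0,1\}^n$ and set $N_t := \sum_{j \le t} \one_{a_j \ne 0}$; extend the algorithm's randomness to i.i.d.\ variables $r_1, \ldots, r_n \sim \Ber(p)$ (bits past the effective stopping time simply go unused). Let $\tau$ be the first index $t \in [n]$ at which the ``ideal'' counter $\tilde{\zeta}_t := \sum_{j \le t} r_j \one_{a_j \ne 0}$ reaches the threshold $20 \log n \cdot \pos B$, and set $\tau = n$ if it never does. Up to time $\tau$ the algorithm's $\Cb$ agrees with $\tilde{\zeta}_t$, so the algorithm's unprojected output equals $\sum_{j \le \tau} r_j a_j / p + \sum_{j > \tau} a_j$, and the signed error is exactly $M_\tau$, where
\[M_t := \sum_{j \le t}\!\Big(\frac{r_j a_j}{p} - a_j\Big)\]
is a martingale in the filtration of $r_1, \ldots, r_t$ with increments bounded by $1/p$ and cumulative conditional variance at most $N_t / p$.

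First I would show that $N_\tau$ is small with overwhelming probability. If $N \le 40 \log n \cdot B$ this is automatic; otherwise let $T$ be the smallest $t$ with $N_t = \lceil 40 \log n \cdot B \rceil$, so that $\bbE[\tilde{\zeta}_T] = p N_T \ge 40 \log n \cdot pB$. The definition of $p$ together with $B \ge \gamma \sqrt{n}$ gives $pB \ge 6000 \log^2 n$, so the Chernoff bound in Eq.~\eqref{eq:ch1} with $\delta = \tfrac{1}{2}$ yields
\[\Pr\!\Big[\tilde{\zeta}_T < 20 \log n \cdot pB\Big] \le 2\exp\!\big(-\Omega(\log n \cdot pB)\big) \le \exp\!\big(-\Omega(\log^3 n)\big) \ll n^{-4},\]
and on the complementary event $\tau \le T$, so $N_\tau \le 40 \log n \cdot B + 1$ in either sub-case.

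Next I would invoke Freedman's inequality on $(M_t)_{t \le T}$ (with $T := n$ in the first sub-case). The total conditional variance is at most $N_T/p \le 41 \log n \cdot B / p$. Plugging in $p = 6000 \log^2 n \cdot B/(\gamma^2 n)$ and $B \ge \gamma\sqrt{n}$, one gets $V \le \gamma^2 n / (150 \log n)$ and $1/p \le \gamma\sqrt{n}/(6000 \log^2 n)$. With $\lambda := \gamma\sqrt{n}/2$ the dominant term in Freedman's denominator is $V$ while the $\lambda/p$ correction is subdominant, so the exponent becomes $\Omega(\log n)$ and
\[\Pr\!\Big[\max_{t \le T} |M_t| > \tfrac{\gamma\sqrt{n}}{2}\Big] \le n^{-4}.\]
Since $|M_\tau| \le \max_{t \le T} |M_t|$ on the event $\tau \le T$, a union bound yields $\big|\Apr(a,r) - \textstyle\sum_j a_j\big| \le \gamma\sqrt{n}/2$ with probability at least $1 - n^{-3}$, proving the first claim. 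The second claim follows at once: the projection to $[-n,n]$ forces the error to lie in $[-2n, 2n]$ deterministically, so
\[\bbE_r\!\Big[\Big(\Apr(a,r) - \textstyle\sum_j a_j\Big)^{\!2}\Big] \le \Big(\tfrac{\gamma\sqrt{n}}{2}\Big)^{\!2} + \frac{(2n)^2}{n^3} \le \gamma^2 n,\]
using $\gamma > 4/\sqrt{n}$ (so $\gamma^2 n \ge 16$ and $4/n \le \gamma^2 n/4$); averaging over $\e \sim \calD$ preserves the bound. The main subtlety is arranging the martingale framework so that Freedman applies with the \emph{stopped} variance $N_\tau/p$ instead of the potentially much larger $N/p$; this is precisely why the algorithm's threshold rule is designed to cap $N_\tau$ at $O(\log n \cdot B)$.
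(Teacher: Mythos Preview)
Your proof is correct and follows essentially the same strategy as the paper: first use Chernoff to show that by the time the number of nonzero elements exceeds $\Theta(B\log n)$ the counter $\zeta$ has crossed its threshold with very high probability (so $N_\tau$ is bounded), and then control the sampling error $M_\tau$ via a Bernstein-type tail bound. The only cosmetic difference is that the paper applies the scalar Bernstein inequality to each fixed time $\ell\le\ell_a$ and union-bounds over $\ell$, whereas you package the same computation into a single application of Freedman's maximal inequality; the variance and increment bounds you plug in are identical, and the resulting exponents match up to constants.
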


\proof[Proof Sketch] It's easy to see that $\Deltasm/p$ approximates $\left(\sum_j \e_j\right)$ up to an additive error of $O(\gamma\sqrt{n}$),  with high probability over $r$, as long as $\left(\sum_j \one_{\e_j\neq 0}\right) \le 80\B \log n $. After counter $\Cb$ crosses $20\log n \cdot p \B$, $\Deltalg$ exactly computes the sum of the remaining elements. Thus, it remains to show that with high probability over $r$, $\Cb$ reaches $20\log n \cdot p \B$ before the number of non-zero elements in the stream becomes larger that $80\B \log n$.
See Appendix \ref{sec:apapr} for a detailed proof. 
\qed


\begin{proposition}\label{cl:aprinf}
$\forall \gamma>\frac{4}{\sqrt{n}}, \B>\gamma\sqrt{n}$, distributions $\calD$ on $\{-1,0,1\}^n$ such that $\bbE_{\e\sim \calD}\left[\sum_{j=1}^n \one_{\e_j\neq 0}\right]\le \B$, memory states of Algorithm \ref{al:apr} ($\Apr$) satisfies the following:
\[\forall j\in\{0,\ldots,n\}, \Ent(\Apr_j) \le 40+6\log \log n+2\log \left({\frac{B}{\gamma \sqrt{n}}}\right)\]
Here, $\Apr_j$ denotes the random variable for $\Apr$'s memory state after reading $j$ input elements, and it depends on input $\e$, as well as the private randomness $r\sim\Ra^\Apr$ used by the algorithm.

\end{proposition}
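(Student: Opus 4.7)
The memory state $\Apr_j$ is the triple $(\Deltasm, \Cb, \Deltalg)$, so I plan to control $\Ent(\Apr_j)$ by subadditivity: $\Ent(\Apr_j) \le \Ent(\Deltasm, \Cb) + \Ent(\Deltalg)$, and handle the two pieces separately.

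For the pair $(\Deltasm, \Cb)$, I would use that the algorithm never increments $\Cb$ past $20\log n \cdot pB + 1$, and that each update to $\Deltasm$ of size $|\e_j| = \one_{\e_j\ne 0}$ is matched by the same increment of $\Cb$, giving the invariant $|\Deltasm| \le \Cb$. Hence the number of admissible pairs is at most $\sum_{c=0}^{\lceil 20\log n\cdot pB\rceil+1}(2c+1)$, and using $pB \le 6000 \log^2 n \cdot \bigl(B/(\gamma\sqrt n)\bigr)^2$ from the definition of $p$, taking logarithms yields a bound of the form $O\bigl(\log\log n + \log(B/(\gamma\sqrt n))\bigr)$ matching the stated constants. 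To obtain the claimed $2\log(B/(\gamma\sqrt n))$ coefficient rather than a weaker $4\log(B/(\gamma\sqrt n))$ that crude counting would give, I would replace the support bound on $\Cb$ with the maximum-entropy bound for nonnegative integer random variables with known mean — using $\bbE[\Cb] \le p\cdot\bbE_\calD\!\left[\sum_j \one_{\e_j\ne 0}\right] \le pB$ gives $\Ent(\Cb) \le \log(pB+1) + O(1)$, and then Jensen's inequality on $\Ent(\Deltasm \mid \Cb) \le \log(2\Cb+1)$ keeps the total joint entropy down to $2\log(pB) + O(1)$.

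To bound $\Ent(\Deltalg)$, I would observe that $\Deltalg$ remains $0$ unless the uncapped counter $\Cb^*(n) := \sum_{j=1}^n R_j^{\Apr}\cdot \one_{\e_j\ne 0}$ reaches the threshold $20\log n \cdot pB$. Since $\Cb^*(n)$ is a sum of independent Bernoullis with $\bbE[\Cb^*(n)] \le pB$, and since $pB \ge 4$ in every admissible parameter regime (either $p=1$, in which case $pB = B > \gamma\sqrt n > 4$, or $p<1$, in which case the formula $p = 6000\log^2 n\cdot B/(\gamma^2 n)$ combined with $B > \gamma\sqrt n$ gives $pB \ge 6000\log^2 n$), the Chernoff bound \eqref{eq:ch3} applied with $t = 19\log n\cdot pB$ gives $\Pr[\Cb^*(n) \ge 20\log n\cdot pB] \le e^{-19\log n\cdot pB/3} \le 1/n^{25}$. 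Since $|\Deltalg| \le n$ deterministically, this yields $\Ent(\Deltalg) \le H_2(1/n^{25}) + n^{-25}\log(2n+1) = o(1)$, which can be absorbed into the additive constant.

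Summing the two pieces gives the stated bound. The main obstacle I anticipate is the accounting for the numerical constants: the naive union of worst-case support bounds is off by a factor of $2$ in the coefficient of $\log(B/(\gamma\sqrt n))$, so the key step is invoking the mean-constrained maximum-entropy bound on $\Cb$ instead of its worst-case range, while being careful that the Chernoff tail for $\Deltalg$ holds uniformly in both regimes $p=1$ and $p<1$.
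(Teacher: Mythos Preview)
Your overall decomposition matches the paper's: split $\Ent(\Apr_j)$ over the three counters, bound $\Ent(\Deltasm,\Cb)$ by a range/support argument, and bound $\Ent(\Deltalg)$ by showing $\Deltalg=0$ with high probability. There is, however, a genuine gap in your $\Deltalg$ step.

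\textbf{The $\Deltalg$ argument is flawed.} You assert that $\Cb^*(n)=\sum_j R_j^{\Apr}\one_{\e_j\ne 0}$ is a sum of independent Bernoullis and invoke the Chernoff bound \eqref{eq:ch3} to get $\Pr[\Cb^*(n)\ge 20\log n\cdot pB]\le 1/n^{25}$. But the summands $R_j^{\Apr}\one_{\e_j\ne 0}$ are \emph{not} independent in general: the $R_j^{\Apr}$ are i.i.d.\ and independent of $\e$, but the indicators $\one_{\e_j\ne 0}$ carry the (arbitrary) dependence of $\calD$. As a concrete counterexample, let $\calD$ put mass $B/n$ on the all-nonzero string and mass $1-B/n$ on the all-zero string; then $\bbE_{\calD}[\sum_j\one_{\e_j\ne 0}]=B$, yet conditioned on the all-nonzero input $\Cb^*(n)$ has mean $np\gg 20\log n\cdot pB$ whenever $B\ll n/\log n$, so $\Pr[\Cb^*(n)\ge 20\log n\cdot pB]\ge B/n-o(1)$, far from $1/n^{25}$. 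The paper handles this with a two-stage argument: first Markov on $\calD$ gives $\Pr_{\e}[\sum_j\one_{\e_j\ne 0}\ge 4B\log n]\le 1/(4\log n)$; then for each such fixed $\e$, Chernoff over $r$ alone shows $\Cb$ never reaches the threshold except with probability $\le 1/n$. The combined bound $\Pr[\Deltalg_j\ne 0]\le 1/(3\log n)$ is much weaker than your claimed $1/n^{25}$, but it is exactly what is needed: expanding the entropy of a $\{-n,\ldots,n\}$-valued variable that equals $0$ with probability $1-1/(3\log n)$ gives $\Ent(\Deltalg_j)\le 2$. (A shorter patch to your argument: just apply Markov directly to $\Cb^*(n)$, using $\bbE[\Cb^*(n)]\le pB$, to get $\Pr[\Cb^*(n)\ge 20\log n\cdot pB]\le 1/(20\log n)$, which already suffices.)

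\textbf{On the coefficient of $\log(B/(\gamma\sqrt n))$.} Your mean-constrained max-entropy refinement is a nice touch, but it does not reduce the coefficient from $4$ to $2$: both the paper's support bound and your refinement yield $\Ent(\Deltasm,\Cb)\le 2\log(pB)+O(\log\log n)$, and since $pB\le 6000\log^2 n\cdot (B/(\gamma\sqrt n))^2$ one has $\log(pB)=2\log(B/(\gamma\sqrt n))+O(\log\log n)$ either way. The stated $2\log(B/(\gamma\sqrt n))$ in the proposition appears to be a slip; the paper's own proof in fact delivers $4\log(B/(\gamma\sqrt n))$, and the downstream use in Lemma~\ref{cl:ainf} and Theorem~\ref{th:main} is insensitive to this constant.
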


\proof[Proof Sketch]
At every time-step, $\Apr$ maintains three counters $\Deltasm$, $\Cb$ and $\Deltalg$. As the algorithm stops increasing counts $\Deltasm$ and $\Cb$ whenever $\Cb$ reaches $20\log n\cdot p\B$, entropy of both $\Deltasm$ and $\Cb$ is at most $O(\log {(20\log n\cdot p\B)})=O(\log\log n +\log(\B^2/\gamma^2n))$. To bound the entropy of counter $\Deltalg$, it is enough to show that, with high probability, $\Deltalg$ remains 0 at all time-steps.  Firstly, we show that, with high probability over stream $\e$ being drawn from $\calD$, $\left(\sum_j \one_{\e_j\neq 0}\right)\le 4\B\log n$. Next, we show that, whenever the number of non-zero elements in the stream are less than $4\B\log n$, $\Cb$ remains below $20\log n\cdot p\B$ with high probability over $r$; thus, $\Deltalg$ remains 0. See Appendix \ref{sec:apapr} for a formal proof. 
\qed

\subsubsection{Single-Pass Algorithm $\sfO$ for Computing Majority Using $k$-Pass Algorithm $\sfM$}\label{subsec:a}
In this subsection, we describe the one-pass algorithm $\sfO$ that approximates the sum almost as well as the $k$-pass algorithm $\sfM$, while having similar information cost to $\sfM$. $\sfO$ runs Algorithm \ref{al:im} ($\Imi$) to imitate the $k$-passes of $\sfM$ -- $\Imi$ modifies input bit $y_j$ at the $j$-th time-step to bit $x'_j$.  In parallel, $\sfO$ runs Algorithm \ref{al:apr} ($\Apr$) on the modification -- the $j$-th input element to $\Apr$ is $(y_j-x'_j)\in\{-1,0,1\}$. After reading $y_j$, $\sfO$ runs $j$th time-steps of algorithms $\Imi$ and $\Apr$ (the input to $\Apr$ is generated on the fly), and stores the $j$th memory states of both the algorithms. While describing $\sfO$ formally in Algorithm \ref{al:a} (where parameters $\gamma$ and $\B$ would be decided later), we will restate algorithm $\Imi$ and use $\Apr$ as a black-box. As used in Subsection \ref{subsec:apr}, $\Ra_j^\Apr$ represents the private randomness used by algorithm $\Apr$ at the $j$th time-step and $\Apr_j$ represents the random variable for $j$th memory state ($r_j$ and $apr_j$ represent their instantiations). The input to $\Apr$ is denoted by $\e$. Let $f^{\Apr}_j$ $(j\in[n])$ represent the $j$th transition function for algorithm $\Apr$, that is, $apr_j=f^{\Apr}_j(apr_{j-1},\e_j,r_j)$. Let $\Apr_j(apr_{j-1},\e_j)$ denote the random variable for the $j$th memory state, when the $j$th input element is $\e_j$, $(j-1)$th memory state is $apr_{j-1}$ and private randomness $r_j$ is drawn from $\Ra^\Apr_j$. 
\begin{algorithm}[H]
\caption{Single pass algorithm $\sfO$ using $k$-pass algorithm $\sfM$ for computing majority}
\label{al:a}
\textbf{Input}: a stream of $n$ $i.i.d$ uniform $\zo$ bits $Y_1,\ldots, Y_n$\\
\textbf{Given parameters}: $\gamma>0, \B>\gamma \sqrt{n}$\\
\textbf{Goal}: approximate $\sum_{j=1}^n Y_j$.
\begin{algorithmic}[1]
\STATE  Sample $m'_0,m'_1,\ldots,m'_{k-1}\sim (\sfM_0,\sfM_1,\ldots,\sfM_{k-1})$
\COMMENT{Sample memory states for the end of the  first $k-1$ passes}
\STATE $\imi_0\leftarrow (m'_0,m'_1,\ldots,m'_{k-1})$
\STATE $\forall i\in[k]$, $m'_{(i,0)}\leftarrow m'_{(i-1)}$  
\COMMENT{Starting memory states for the $k$ passes of $\sfM$}
\STATE Initialize Algorithm \ref{al:apr} ($\Apr$) with parameters $\gamma$ and $\B$ 
\STATE Sample $\apr_0\sim \Apr_0$ \COMMENT{For Algorithm \ref{al:apr}, the starting state is deterministic}
\FOR {$j=1$ to $n$} 
\STATE \label{st:a1} $\beta_j\leftarrow \left(\Pr\left[X_j=1\;|\; \sfM_{(\le k,j-1)}=m'_{(\le k,j-1)}, \sfM_{< k}=m'_{<k}\right]-\frac{1}{2}\right)$ \COMMENT{Can be calculated using $im_{j-1}$}
\IF {$\beta_j \ge 0$} 
\IF{$y_j=1$}
 \STATE $x_j'\leftarrow y_j$ 
 \ELSE
 \STATE $x_j'\leftarrow y_j$ with probability $1-2\beta_j$, and $x_j'\leftarrow 1$ otherwise
 \ENDIF
\ELSIF{$\beta_j \le 0$}
\IF{$y_j=1$}
 \STATE  $x_j'\leftarrow y_j$ with probability $1+2\beta_j$, and $x_j'\leftarrow -1$ otherwise
 \ELSE
 \STATE $x_j'\leftarrow y_j$
\ENDIF
\ENDIF \label{st:a2}
\STATE $\e_j\leftarrow (y_j-x_j')$\COMMENT{Setting $j$th input element to $\Apr$}
\STATE Sample $apr_j\sim \Apr_j(apr_{j-1}, \e_j)$ \COMMENT{Run $j$th step of $\Apr$ given $apr_{j-1}$}
 \STATE  Sample $(m'_{(1,j)},m'_{(2,j)}, \ldots, m'_{(k,j)})$ from the joint distribution on \[\left((\sfM_{(1,j)},\sfM_{(2,j)},\ldots, \sfM_{(k,j)})~\middle|~(\sfM_{(\le k,j-1)}=m'_{(\le k,j-1)},\; \sfM_{< k}=m'_{< k},\; X_j=x_j'\right)\]\\
\COMMENT{Given $im_{(j-1)}$, execute $j$th time-step for all passes of $\sfM$}
 \STATE $im_j\leftarrow (m'_{(1,j)},m'_{(2,j)}, \ldots, m'_{(k,j)},m'_0,m'_1,\ldots,m'_{k-1})$\\
\STATE \label{st:a3} $\an_j\leftarrow (im_j,apr_j)$\COMMENT{$j$th memory state of algorithm $\sfO$}
\ENDFOR \\
\textbf{Output}: $\an_n=(im_n, apr_n)=(m'_{(\le k,n)},m'_{<k}, apr_n)$
\end{algorithmic}
\end{algorithm}

The random variables $Y,X'$ are as defined in Subsection \ref{subsec:im}, where $Y$ is drawn from uniform distribution on $\zo^n$ and $X'_j$ corresponds to value $x_j'$ as in Algorithm \ref{al:a}. Let $\calD$ be the joint distribution generated by Algorithm \ref{al:a} on inputs to $\Apr$ that is, the joint distribution on $(Y_1-X'_1,Y_2-X'_2,\ldots, Y_n-X'_n)$. Let $\Ee_j$ be the random variable for the $j$th input element to $\Apr$, that is, $\Ee_j=Y_j-X'_j$.  Let $\{\sfM'_{(i,j)}\}_{i\in[k],j\in\{0,\ldots,n\}}$ and $\sfM'_{<k}$ be random variables as defined in Subsection \ref{subsec:im} (these are random variables for corresponding values that appear in Algorithm \ref{al:a}). We prove the following claim for distribution $\calD$, when the end states of the $k$-pass algorithm $\sfM$ has bounded entropy.

\begin{claim}\label{cl:acomb}
$\bbE_{\e\sim \calD}\left[\sum_{j=1}^n\one_{\e_j\neq 0}\right]=\bbE\left[\sum_{j=1}^n\one_{Y_j\neq X'_j}\right]\le \sqrt{n\cdot \Ent(\sfM_{<k})}$.
\end{claim}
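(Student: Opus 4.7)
The plan is to express the expected modification count in terms of the biases $\beta_j$ used by the simulation, reduce it to a sum of mutual informations via Pinsker, and then bound that sum by $\Ent(\sfM_{<k})$ through a chain rule tailored to the multi-pass structure.

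First, by direct inspection of Step~\ref{st:im1} of Algorithm~\ref{al:im}, when $Y_j$ is uniform on $\zo$ and the stored values are $(m'_{(\le k, j-1)}, m'_{<k})$, the probability that $X'_j \neq Y_j$ equals $|\beta_j|$, where $\beta_j := \Pr[X_j = 1 \mid \sfM_{(\le k, j-1)} = m'_{(\le k, j-1)}, \sfM_{<k} = m'_{<k}] - \tfrac12$. (In the $\beta_j > 0$ branch, a flip can only occur when $Y_j = -1$, and it does so with probability $2\beta_j$, giving marginal flip probability $\beta_j$; the other branch is symmetric.) Claim~\ref{cl:imd} lets me identify these primed quantities with the unprimed ones, yielding $\bbE\sum_j \one_{X'_j \neq Y_j} = \sum_j \bbE|\beta_j|$, where the outer expectation is over $C_j := (\sfM_{(\le k, j-1)}, \sfM_{<k})$. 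Cauchy-Schwarz then gives $\sum_j \bbE|\beta_j| \le \sqrt{n \sum_j \bbE\beta_j^2}$, so it suffices to bound $\sum_j \bbE\beta_j^2$. Pinsker's inequality for $\Ber(\tfrac12 + \beta_j)$ versus $\Ber(\tfrac12)$ gives $\beta_j^2 \le \tfrac12 D_{\mathrm{KL}}(P(X_j \mid C_j) \,\|\, U)$, and because $X_j$ is uniform marginally the expected KL equals the mutual information: $\bbE\beta_j^2 \le \tfrac12 \I(X_j; C_j)$. Hence the claim reduces to showing $\sum_j \I(X_j; C_j) \le 2\,\Ent(\sfM_{<k})$.

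For this information bound, I decompose $\I(X_j; C_j) = \I(X_j; \sfM_{<k}) + \I(X_j; \sfM_{(\le k, j-1)} \mid \sfM_{<k})$ and further bound the second term by $\I(X_j; \sfM_{(\le k, j-1)} \mid \sfM_{<k}) \le \I(X_j; X_{<j} \mid \sfM_{<k}) + \I(X_j; \sfM_{(\le k, j-1)} \mid X_{<j}, \sfM_{<k})$ by adding $X_{<j}$ to the conditioning. Summing over $j$, the first two kinds of terms collapse via the standard chain rule (using $X_j \perp X_{<j}$) into $\sum_j[\I(X_j; \sfM_{<k}) + \I(X_j; X_{<j} \mid \sfM_{<k})] = \sum_j \I(X_j; \sfM_{<k} \mid X_{<j}) = \I(X; \sfM_{<k}) \le \Ent(\sfM_{<k})$. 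For the remaining sum, the key structural observation is that once $\sfM_{<k}$ is conditioned on, each $\sfM_{(i, j-1)}$ becomes a deterministic function of $(X_{<j}, \Ra^\sfM_{(\le k, <j)})$ alone, because the starting state $\sfM_{(i,0)} = \sfM_{i-1}$ of every pass is already encoded in $\sfM_{<k}$. Data processing then gives $\I(X_j; \sfM_{(\le k, j-1)} \mid X_{<j}, \sfM_{<k}) \le \I(X_j; \Ra^\sfM \mid X_{<j}, \sfM_{<k})$, and the chain rule yields $\sum_j \I(X_j; \Ra^\sfM \mid X_{<j}, \sfM_{<k}) = \I(X; \Ra^\sfM \mid \sfM_{<k})$; finally the identity $\I(X; \Ra^\sfM \mid \sfM_{<k}) = \I(\sfM_{<k}; \Ra^\sfM \mid X) - \I(\sfM_{<k}; \Ra^\sfM) \le \Ent(\sfM_{<k} \mid X) \le \Ent(\sfM_{<k})$, using the marginal independence $X \perp \Ra^\sfM$, completes the bound by $\Ent(\sfM_{<k})$. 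Adding the two halves gives $\sum_j \I(X_j; C_j) \le 2\,\Ent(\sfM_{<k})$, from which the Cauchy-Schwarz step yields $\bbE\sum_j \one_{X'_j \neq Y_j} \le \sqrt{n\,\Ent(\sfM_{<k})}$.

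The main obstacle is structural: in a multi-pass algorithm $\sfM_{(i, j-1)}$ for $i \ge 2$ depends genuinely on \emph{all} of $X$ (the starting state of pass $i$ is the end state of pass $i-1$), so one cannot naively claim that $\sfM_{(\le k, j-1)}$ is a function of only $X_{<j}$ and randomness, nor that $X_j$ is independent of $\sfM_{(\le k, j-1)}$. Conditioning on $\sfM_{<k}$ is precisely what peels off this dependence and restores the one-sided Markov structure that makes the chain rule above work; this is the same reason the $\mic$ and $\miccoin$ definitions force $\sfM_{<k}$ into the conditioning in the first place, and the careful split via $X_{<j}$ is exactly what controls the residual randomness-mediated leakage by $\Ent(\sfM_{<k})$.
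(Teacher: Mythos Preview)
Your proof is correct and follows the same three-step architecture as the paper: (i) identify the flip probability with $|\beta_j|$, (ii) convert to mutual information via a Pinsker-type inequality plus Cauchy--Schwarz, and (iii) bound $\sum_j \I(X_j;\sfM_{(\le k,j-1)},\sfM_{<k})$ by $\Ent(\sfM_{<k})$.

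The only substantive difference is in step (iii). The paper peels off one pass at a time: it writes
\[
\I\bigl(X_j;\sfM_{(\le k,j-1)},\sfM_{<k}\bigr)=\sum_{i=0}^{k-1}\Bigl[\I(X_j;\sfM_i\mid \sfM_{(\le i,j-1)},\sfM_{<i})+\I(X_j;\sfM_{(i+1,j-1)}\mid \sfM_{(\le i,j-1)},\sfM_{\le i})\Bigr],
\]
kills the second summand via the structural independence of Corollary~\ref{cor:indep2}, and bounds the remaining sum over $j$ by $\Ent(\sfM_i\mid \sfM_{<i})$, yielding exactly $\Ent(\sfM_{<k})$. Your route instead adds $X_{<j}$ to the conditioning, data-processes $\sfM_{(\le k,j-1)}$ down to the private randomness $\Ra^\sfM$, and uses the clean identity $\I(X;\Ra^\sfM\mid \sfM_{<k})=\I(\sfM_{<k};\Ra^\sfM\mid X)-\I(\sfM_{<k};\Ra^\sfM)\le \Ent(\sfM_{<k})$; this is more generic (it never invokes Corollaries~\ref{cor:indep1}--\ref{cor:indep2}) but incurs a factor of $2$, which you then absorb via the sharper Pinsker constant $\beta_j^2\le \tfrac12 D_{\mathrm{KL}}$ versus the paper's elementary $\beta_j^2\le D_{\mathrm{KL}}$. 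Both arrive at exactly $\sqrt{n\,\Ent(\sfM_{<k})}$.
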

Informally, we can relate the probability of modification at step $j$ to the information that end memory states have about $X_j$, conditioned on the previous memory states. The claim follows from the fact that the sum of this information over $j$, is bounded by the entropy of the end states. Note that the above claim is tight if an end state computes the majority bit. See Appendix \ref{sec:apsingle} for a formal proof.
Recall that, $\sfM$ is a $k$-pass algorithm such that $\Ent(\sfM_i)\le n^\lambda$ for all $i\in\{0,\ldots,k\}$.  
Claim \ref{cl:acomb} immediately gives the following corollary.
\begin{corollary}\label{cor:acomb}
As $\Ent(\sfM_i)\le n^\lambda, \forall i\in\{0,\ldots, k\}$, 
\[\bbE_{\e\sim \calD}\left[\sum_{j=1}^n\one_{\e_j\neq 0}\right]=\bbE\left[\sum_{j=1}^n\one_{Y_j\neq X'_j}\right]\le \sqrt{n}\cdot \sqrt{k n^\lambda}.\]
\end{corollary}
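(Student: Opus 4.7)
The corollary is a direct consequence of Claim \ref{cl:acomb} combined with subadditivity of entropy, so the plan is quite short.

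First, I would invoke Claim \ref{cl:acomb} directly to reduce the task to upper bounding $\Ent(\sfM_{<k})$. Recall from the notation section that $\sfM_{<k}$ stands for $\sfM_{[0,k-1]} = (\sfM_0, \sfM_1, \ldots, \sfM_{k-1})$, namely the tuple of the starting state together with the end memory states of the first $k-1$ passes of $\sfM$. Thus this is a joint entropy over $k$ random variables.

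The key (and only) step is subadditivity of Shannon entropy applied to this tuple:
\[
\Ent(\sfM_{<k}) \;=\; \Ent(\sfM_0, \sfM_1, \ldots, \sfM_{k-1}) \;\le\; \sum_{i=0}^{k-1} \Ent(\sfM_i) \;\le\; k \cdot n^{\lambda},
\]
where the final inequality uses the hypothesis $\Ent(\sfM_i) \le n^{\lambda}$ for every $i \in \{0, 1, \ldots, k\}$. Plugging this bound into Claim \ref{cl:acomb} yields
\[
\bbE\left[\sum_{j=1}^n \one_{Y_j \neq X'_j}\right] \;\le\; \sqrt{n \cdot \Ent(\sfM_{<k})} \;\le\; \sqrt{n \cdot k n^{\lambda}} \;=\; \sqrt{n}\cdot \sqrt{k n^{\lambda}},
\]
which is exactly the claimed bound. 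The equality $\bbE_{\e \sim \calD}[\sum_j \one_{\e_j \neq 0}] = \bbE[\sum_j \one_{Y_j \neq X'_j}]$ is immediate from the definition $\e_j = Y_j - X'_j$ (both $Y_j$ and $X'_j$ are in $\zo$, so $\e_j = 0$ iff $Y_j = X'_j$).

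There is no real obstacle here: the substantive work was carried out in Claim \ref{cl:acomb}, where modifications are bounded via KL divergence / mutual information between end states and input bits. The corollary only packages that with the trivial entropy bound $\Ent(\sfM_{<k}) \le k n^{\lambda}$ so that subsequent applications can reference an explicit-in-$(n, k, \lambda)$ upper bound on the total expected modification, which will be needed when invoking Proposition \ref{cl:aprsum} with $\B \approx \sqrt{k n^{1+\lambda}}$.
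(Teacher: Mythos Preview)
Your proposal is correct and matches the paper's approach exactly: the paper states that the corollary follows immediately from Claim \ref{cl:acomb}, and the only detail to fill in is the subadditivity bound $\Ent(\sfM_{<k}) \le \sum_{i=0}^{k-1}\Ent(\sfM_i) \le k n^{\lambda}$, which you supply.
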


Corollary \ref{cor:acomb} suggests a value for parameter $\B$ that Algorithm \ref{al:a} should run Algorithm $\Apr$ on, so as to use approximation guarantees from Claim \ref{cl:aprsum}. Let $\sfO$ be Algorithm \ref{al:a} with parameters $\gamma=\frac{\ve}{10}$ and $\B=\sqrt{k\cdot n^{1+\lambda}} $. We prove the following lemmas regarding information cost and output of algorithm $\sfO$.   See Appendix \ref{sec:apsingle} for detailed proofs of these lemmas. 

\begin{lemma}\label{cl:ainf}
For all $\ve>\frac{100}{\sqrt{n}}$, $\lambda>0$, $IC(\sfO)\le \miccoin(\sfM)+n\cdot \left(50+6\log \log n +\log{\left(\frac{k\cdot n^{\lambda}}{\ve^2}\right)}\right)$.
\end{lemma}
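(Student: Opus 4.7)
The plan is to compare $IC(\sfO)$ term-by-term with $IC(\Imi)$ and then invoke Claim~\ref{cl:iminf} and Proposition~\ref{cl:aprinf}. For each pair $\ell\le j$, I would first rewrite
\[
\I(\sfO_j; Y_\ell \mid \sfO_{\ell-1}) = \I(\Imi_j,\Apr_j; Y_\ell \mid \Imi_{\ell-1},\Apr_{\ell-1})
\]
using the identity $\I(A;B\mid C,D)=\I(A,D;B\mid C)-\I(D;B\mid C)$ with $D=\Apr_{\ell-1}$. Since $Y_\ell$ is uniform and independent of everything from before time $\ell$, the term $\I(\Apr_{\ell-1};Y_\ell\mid \Imi_{\ell-1})=0$; a second chain rule on the remaining piece then yields the clean decomposition
\[
\I(\sfO_j; Y_\ell \mid \sfO_{\ell-1}) = \I(\Imi_j; Y_\ell \mid \Imi_{\ell-1}) + \I(\Apr_j,\Apr_{\ell-1}; Y_\ell \mid \Imi_{\ell-1},\Imi_j).
\]
Summing the first summand over $j$ and $\ell\le j$ gives precisely $IC(\Imi)\le \miccoin(\sfM)$ by Claim~\ref{cl:iminf}, so the task reduces to bounding the residual by $n\cdot\max_j\Ent(\Apr_j)$.

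For the residual, the first observation is that conditioned on $\Imi_{\ell-1}$, the ``past'' variables $(Y_{<\ell},R^\Imi_{<\ell},R^\Apr_{<\ell})$ from which $\Apr_{\ell-1}$ is computed are independent of the ``future'' variables from which $(Y_\ell,\Imi_j)$ are computed. This gives $\I(\Apr_{\ell-1};Y_\ell\mid \Imi_{\ell-1},\Imi_j)=0$, so the residual term collapses to $\I(\Apr_j;Y_\ell\mid \Imi_j,\Imi_{\ell-1},\Apr_{\ell-1})$. Next I would introduce the intermediate memory $(\Imi_\ell,\Apr_\ell)$: since it is a deterministic function of $(\Imi_{\ell-1},\Apr_{\ell-1},Y_\ell,R^\Imi_\ell,R^\Apr_\ell)$, and since $\Apr_j\perp Y_\ell\mid \Imi_\ell,\Apr_\ell,\Imi_j$ (because $\Apr_j$ is reached from $(\Imi_\ell,\Apr_\ell)$ through future randomness alone), a short chain-rule argument yields
\[
\I(\Apr_j; Y_\ell \mid \Imi_j,\Imi_{\ell-1},\Apr_{\ell-1}) \le \I(\Apr_j; \Imi_\ell,\Apr_\ell \mid \Imi_j,\Imi_{\ell-1},\Apr_{\ell-1}).
\]

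To conclude, I would fix $j$ and telescope this upper bound in $\ell$. Conditioned on $(\Imi_{\ell-1},\Apr_{\ell-1},\Imi_j)$, the earlier states $(\Imi_{[0,\ell-2]},\Apr_{[0,\ell-2]})$ are independent of $(\Imi_\ell,\Apr_\ell)$, so Property~3 of mutual information allows me to enlarge the conditioning to the entire prefix, after which the chain rule gives the telescoping identity
\[
\sum_{\ell\le j}\I(\Apr_j; \Imi_\ell,\Apr_\ell \mid \Imi_{[0,\ell-1]},\Apr_{[0,\ell-1]},\Imi_j) = \I(\Apr_j; \Imi_{[1,j]},\Apr_{[1,j]}\mid \Imi_0,\Apr_0,\Imi_j) \le \Ent(\Apr_j).
\]
Proposition~\ref{cl:aprinf} with $B=\sqrt{kn^{1+\lambda}}$ and $\gamma=\ve/10$ then gives $\Ent(\Apr_j)\le 50+6\log\log n+\log(kn^\lambda/\ve^2)$, and summing over $j\in[n]$ completes the proof. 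The main obstacle I anticipate is cleanly justifying the conditional-independence facts, in particular the ``past-vs-future'' decomposition conditioned on $\Imi_{\ell-1}$, which hinges on the fact that in Algorithm~\ref{al:a} both $\Imi$ and $\Apr$ draw fresh private randomness at each step and $\Apr$'s update depends on $\Imi$ only through $(\Imi_{\ell-1},Y_\ell,R^\Imi_\ell)$.
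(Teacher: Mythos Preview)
Your proposal is correct and follows essentially the same approach as the paper: split each $\I(\sfO_j;Y_\ell\mid\sfO_{\ell-1})$ into an $\Imi$-part bounded by Claim~\ref{cl:iminf} and an $\Apr$-residual, then use the one-pass Markov structure of $(\Imi_\ell,\Apr_\ell)$ to telescope the residual into $\Ent(\Apr_j)$ and invoke Proposition~\ref{cl:aprinf}. The only cosmetic differences are that (i) you pull $\Apr_{\ell-1}$ out of the conditioning first to obtain the exact identity $\I(\sfO_j;Y_\ell\mid\sfO_{\ell-1})=\I(\Imi_j;Y_\ell\mid\Imi_{\ell-1})+\I(\Apr_j,\Apr_{\ell-1};Y_\ell\mid\Imi_{\ell-1},\Imi_j)$, whereas the paper applies the chain rule in the other order and then uses Property~\ref{itemmi2} to drop $\Apr_{\ell-1}$; and (ii) you telescope over $(\Imi_\ell,\Apr_\ell)$ while the paper telescopes over $(\Imi_{\ell-1},\Apr_{\ell-1},Y_\ell)$. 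One small point to tighten: in your Step~4 the inequality actually needs $\Apr_j\perp Y_\ell\mid \Imi_\ell,\Apr_\ell,\Imi_j,\Imi_{\ell-1},\Apr_{\ell-1}$ (with the extra past conditioning), but this follows from the same past/future Markov argument you already invoke.
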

Once we have Claim \ref{cl:iminf} and Proposition \ref{cl:aprinf} in place, Lemma \ref{cl:ainf} follows from careful disentanglement of the information costs for subroutines $\Imi$ and $\Apr$ used in Algorithm $\sfO$. 

\begin{lemma}\label{cl:asum}
For all $\ve>\frac{100}{\sqrt{n}}$, 
\[\bbE_{\sfM_{(k,n)}}\left(\bbE\left[\sum_{j=1}^n X_j\middle| \sfM_{(k,n)}=m_{(k,n)}\right]\right)^2\ge \ve n\implies \bbE_{\sfO_{n}}\left(\bbE\left[\sum_{j=1}^n Y_j\middle| \sfO_n=\an_n\right]\right)^2\ge \frac{\ve}{2} n.\]
\end{lemma}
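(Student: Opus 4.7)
The plan is to bound $\bbE_{\sfO_n}(\bbE[\sum Y_j\mid \sfO_n])^2$ from below by exhibiting an explicit estimator $g(\sfO_n)$ of $\sum Y_j$ whose mean-squared error is at most $(1-\ve/2)n$, and then using the identity
\[
\bbE_{\sfO_n}\bigl(\bbE[\textstyle\sum Y_j\mid \sfO_n]\bigr)^2 \;=\; \bbE\bigl[(\textstyle\sum Y_j)^2\bigr] \;-\; \bbE\bigl[\Var(\textstyle\sum Y_j\mid \sfO_n)\bigr] \;=\; n \;-\; \bbE\bigl[\Var(\textstyle\sum Y_j\mid \sfO_n)\bigr],
\]
together with the minimality property $\Var(\sum Y_j\mid \sfO_n)\le \bbE[(\sum Y_j - g(\sfO_n))^2\mid \sfO_n]$ which holds for every function $g$ of $\sfO_n$.

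Let $h(m):=\bbE[\sum_j X_j\mid \sfM_{(k,n)}=m]$. Since $\sfO_n$ explicitly contains $m'_{(\le k,n)}$ (hence $m'_{(k,n)}$) and $\apr_n$, the function
\[
g(\sfO_n) \;:=\; h\bigl(m'_{(k,n)}\bigr) \;+\; \apr_n
\]
is a valid statistic. I would then write $\sum_j Y_j - g(\sfO_n) = \bigl(\sum_j X'_j - h(m'_{(k,n)})\bigr) + \bigl(\sum_j \e_j - \apr_n\bigr)$, where $\e_j = Y_j-X'_j$ is the input fed to $\Apr$ inside Algorithm~\ref{al:a}, and bound the MSE term by term. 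For the first summand, Claim~\ref{cl:imd} gives that $(X',\sfM'_{(k,n)})$ has the same joint distribution as $(X,\sfM_{(k,n)})$, so
\[
\bbE\bigl[(\textstyle\sum_j X'_j - h(m'_{(k,n)}))^2\bigr] \;=\; \bbE\bigl[\Var(\textstyle\sum_j X_j\mid \sfM_{(k,n)})\bigr] \;=\; n - \bbE_{\sfM_{(k,n)}}(\bbE[\textstyle\sum_j X_j\mid \sfM_{(k,n)}])^2 \;\le\; (1-\ve)n,
\]
using the hypothesis of the lemma. For the second summand, Proposition~\ref{cl:aprsum} applied with the simulation-induced distribution $\calD$ on the $\e_j$'s, parameters $\gamma=\ve/10$ and $\B=\sqrt{k\,n^{1+\lam}}$ (the latter being valid by Corollary~\ref{cor:acomb}, which upper bounds $\bbE[\sum_j \one_{\e_j\neq 0}]$), yields $\bbE[(\sum_j \e_j - \apr_n)^2]\le \gamma^2 n = \ve^2 n/100$.

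The step I expect to be the main obstacle, and the reason one cannot simply invoke $(a+b)^2\le 2a^2+2b^2$, is controlling the cross term, because doubling the first contribution would already exceed $n$. Instead I would apply the Cauchy-Schwarz inequality directly to the cross term, giving
\[
2\,\bigl|\bbE[(\textstyle\sum_j X'_j - h(m'_{(k,n)}))(\textstyle\sum_j \e_j - \apr_n)]\bigr| \;\le\; 2\sqrt{(1-\ve)n \cdot \gamma^2 n} \;\le\; 2\gamma n \;=\; \tfrac{\ve}{5}\,n.
\]
Combining the three bounds gives $\bbE[(\sum_j Y_j - g(\sfO_n))^2]\le (1-\ve) n + \tfrac{\ve^2}{100}n + \tfrac{\ve}{5}n \le (1-\tfrac{\ve}{2})n$ for all sufficiently small $\ve>0$ (and the hypothesis $\ve>100/\sqrt n$ ensures $\gamma>4/\sqrt n$ so that Proposition~\ref{cl:aprsum} applies). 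Finally, plugging into the identity above yields
\[
\bbE_{\sfO_n}\bigl(\bbE[\textstyle\sum_j Y_j\mid \sfO_n]\bigr)^2 \;\ge\; n - \bigl(1-\tfrac{\ve}{2}\bigr)n \;=\; \tfrac{\ve}{2}\,n,
\]
which is the desired conclusion.
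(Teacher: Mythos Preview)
Your proposal is correct and follows essentially the same route as the paper. The paper works with conditional variances directly, decomposing $\sum_j Y_j=(\sum_j Y_j-\sum_j X'_j)+\sum_j X'_j$ and bounding each piece via the law of total variance (using $\sfM'_{(k,n)}$ for the second term and subtracting the $\Apr$ output as a constant for the first), then applying Cauchy--Schwarz to the cross term; your explicit-estimator formulation $g(\sfO_n)=h(m'_{(k,n)})+\Apr(\e,r)$ is the same argument packaged more transparently. One cosmetic point: what you call $\apr_n$ in $g$ should strictly be the \emph{output} $\Apr(\e,r)$ (a deterministic function of the final $\Apr$ memory state $apr_n$), which is exactly what Proposition~\ref{cl:aprsum} controls; the paper commits the same mild abuse of notation.
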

Intuitively, Lemma \ref{cl:asum} shows that if output of the $k$-pass algorithm $\sfM$ gives information about $\sum_j X_j$ (measured by reduction in the variance), then the output of one-pass algorithm $\sfO$ also gives information about $\sum_j Y_j$ -- sum of the input stream to $\sfO$. The former guarantee implies the output of $\sfO$ contains information about $\sum_{j}X_j'$ (the modified input); as $\sfO$ stores an approximation for $\sum_j (Y_j -X'_j)$, this implies that it also has information about $\sum_j Y_j$. All that remains to show is that the approximation for modification has an additive error of at most $O(\ve\sqrt{n})$, with high probability. For this, we use Claim \ref{cl:acomb} and $\ell_2$ approximation guarantee for $\Apr$ from Claim \ref{cl:aprsum}. 

\subsection{Lower Bound for Solving Multiple Copies Simultaneously}\label{sec:multiple}

In this section, we will define the $\tcoins$ -- given $t$ interleaved input streams generated by $n$ $i.i.d.$ uniform $\zo$ bits each, output the majority of an arbitrary stream at the end of $k$ passes. We want to show that any $k$-pass streaming algorithm that solves the $\tcoins$ requires $\Omega(\frac{t\log n}{k} )$ bits of memory. Formally, the input is a sequence of $nt$ updates of the form $(x_j, s_j)\in \zo\times [t]$; here updates with $s_j=s$ are interpreted as the input stream for the $s$th instance of the coin problem. After $k$-passes over these updates, the algorithm is given a random index $s^*\in [t]$ and the goal is to output the majority for the input stream corresponding to the $s^*$th instance. We refer to the sequence $\{s_j\}_{j\in[nt]}$ as the order of the stream, which dictates how the $t$ instances are interleaved. Given an order $\{s_j\}_{j\in[nt]}$, for all $s\in[t]$, let $J_s\subseteq [nt]$ be the set of $j$ for which $s_j=s$. Let $q_s\colon [|J_s|]\rightarrow [nt]$ be defined as follows: $q_s(u)=j$ if $x_j$ is the $u$-th element corresponding to the $s$th instance of the coin problem ($s_j=s$). We will first prove a memory/information lower bound for algorithms on inputs with a fixed \textit{good} order (as defined in \cite{braverman2020coin}).
\begin{definition}[Good Order]\label{def:gdord}
An order $\{s_j\}_{j\in[nt]}$ is called \emph{good} if for all $s\in [t]$,
$$|J_s|\ge n/2,$$ 
and for all $s\in[t], \sqrt{n}\le u$ and $u<v \le |J_s|$,
$$q_s(v)-q_s(v-u)\ge \frac{k}{2}u.$$
\end{definition}
It is easy to see that if we get the first input bit for all instances, then get the second bit for all of them and so on, then it is a good order. That is, $s_j=((j-1)\mod t) +1, j\in[nt]$, is a good order. 
\cite{braverman2020coin} showed the following for one-pass streaming algorithms solving the $\tcoins$:
\begin{theorem}[\cite{braverman2020coin}, Theorem 16]
\label{thmbgw:tcoins}
Given a fixed good order $\{s_j\}_{j\in[nt]}$ (Definition \ref{def:gdord}) and a stream of $nt$ updates of the form $(X_j, s_j)$ where $X_j$s are $i.i.d.$ uniform $\zo$ bits, let $\sfO$ be an $nt$-length single-pass streaming algorithm (that possibly uses private randomness) for the $\tcoins$. For all constants $\ve>0$, there exists $\de>0$, such that if
\begin{equation*}
		\bbE_{\sfO_{nt}}\left[\bbE\left[ \left( \sum_{j\in J_s} X_j \right) ~ \middle| ~
		\sfO_{nt}=o_{nt}\right]^2  \right] > \ve |J_s| 
		\end{equation*} for at least $1/2$ fraction of $s\in[t]$, then
			\begin{equation*}
	\sum_{j=1}^{n t}  \sum_{\ell=1}^j\I(\sfO_{j}; X_{\ell}|\sfO_{\ell-1})>\de n t^2 \log n.
	\end{equation*}

\end{theorem}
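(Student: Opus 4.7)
My plan is to reduce Theorem~\ref{thmbgw:tcoins} to its single-instance counterpart Theorem~\ref{thm:previous1} via a per-instance projection argument. For each $s\in[t]$, I would define a projected one-pass streaming algorithm $\widetilde{\sfO}^s$ on the instance-$s$ bits $(X_{q_s(u)})_{u=1}^{|J_s|}$ whose state after reading $u$ bits is $\widetilde{\sfO}^s_u := \sfO_{q_s(u+1)-1}$, i.e., the state of $\sfO$ just before the next instance-$s$ update (with the convention $q_s(|J_s|+1)=nt+1$). Because the entire stream is $i.i.d.$ uniform, the other instances' bits and $\sfO$'s private randomness in each between-update window are independent of instance-$s$'s bits, so they can be absorbed into the private randomness of $\widetilde{\sfO}^s$. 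Thus $\widetilde{\sfO}^s$ is a legitimate one-pass streaming algorithm to which Theorem~\ref{thm:previous1} applies.

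Since $\widetilde{\sfO}^s_{|J_s|} = \sfO_{nt}$ identically, the hypothesized variance reduction on $\sfO_{nt}$ transfers verbatim to $\widetilde{\sfO}^s$ for every good $s$. Applying the contrapositive of Theorem~\ref{thm:previous1} (using $|J_s|\ge n/2$ so $\log|J_s| = \Omega(\log n)$) gives, for each such $s$,
\[
\widetilde{IC}^s \;:=\; \sum_{u=1}^{|J_s|}\sum_{v=1}^{u} \I\!\left(\sfO_{q_s(u+1)-1};\, X_{q_s(v)} \,\middle|\, \sfO_{q_s(v)-1}\right) \;=\; \Omega(n\log n).
\]

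The crux is lifting $\sum_s \widetilde{IC}^s$ back to the full $IC(\sfO)$. Here I would first establish a monotonicity lemma: for a product input distribution, $j\mapsto \I(\sfO_j; X_\ell \mid \sfO_{\ell-1})$ is non-increasing on $j\ge \ell$. The proof writes $\sfO_j$ as a function of $\sfO_\ell$, the intermediate inputs $X_{\ell+1},\dots,X_j$ (independent of $X_\ell$), and fresh private randomness; a chain-rule plus data-processing calculation then gives $\I(\sfO_j; X_\ell \mid \sfO_{\ell-1}) \le \I(\sfO_\ell; X_\ell \mid \sfO_{\ell-1})$, and more generally monotonicity in $j$. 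Partitioning $IC(\sfO)$ by the instance containing $\ell$ and by the window $[q_s(u),q_s(u+1))$ containing $j$, monotonicity lower bounds each window's contribution by $(q_s(u+1)-q_s(u))\cdot \I(\sfO_{q_s(u+1)-1};\, X_{q_s(v)} \mid \sfO_{q_s(v)-1})$. Combined with the good-order $\Omega(t)$ average-spacing property, summing then yields $IC(\sfO) \gtrsim t \sum_{s\text{ good}} \widetilde{IC}^s \gtrsim t\cdot (t/2) \cdot n\log n = \Omega(nt^2\log n)$.

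The main technical obstacle I foresee is the bookkeeping around the good-order hypothesis: Definition~\ref{def:gdord} guarantees an $\Omega(t)$ bound only on \emph{average} spacings over spans of length $\ge \sqrt{n}$, not on individual spacings, so one cannot naively pull a factor of $t/2$ out of each window. To handle this I would group each instance's updates into super-blocks of size $\sqrt{n}$; each super-block has total spacing $\ge \Omega(t\sqrt{n})$ by the definition, and within each super-block monotonicity of $\I$ in $j$ allows me to pull out the in-block minimum, delivering the desired $\Omega(t)$ amplification on average. The residual $O(\sqrt{n})$ boundary updates at each instance contribute only a lower-order $o(n\log n)$ to $\widetilde{IC}^s$ (using the trivial per-pair bound $\I \le O(\log n)$), so they cannot spoil the main bound.
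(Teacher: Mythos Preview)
This theorem is quoted from \cite{braverman2020coin} and is not proved in the present paper, so there is no in-paper proof to compare against; I evaluate your proposal on its own merits.

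Your projection and monotonicity steps are sound: $\widetilde{\sfO}^s$ is a legitimate one-pass algorithm on the instance-$s$ bits (the other bits and $\sfO$'s coins are absorbed as private randomness), its final state equals $\sfO_{nt}$, the map $j\mapsto \I(\sfO_j;X_\ell\mid\sfO_{\ell-1})$ is indeed non-increasing for $j\ge\ell$ by the data-processing argument you sketch, and Theorem~\ref{thm:previous1} then yields $\widetilde{IC}^s=\Omega(n\log n)$ for each good $s$.

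The gap is in the lifting, specifically in your accounting of boundary terms. For each fixed $v$, the super-block argument amplifies by $\Omega(t)$ only the ``far'' pairs $u\ge v+\sqrt{n}$; the ``close'' pairs $v\le u< v+\sqrt{n}$ receive no amplification. These close pairs are not $O(\sqrt{n})$ per instance but $O(\sqrt{n})$ \emph{per value of $v$}, hence $\Theta(|J_s|\sqrt{n})\approx n^{3/2}$ pairs in total. Since the only per-term bound available is $\I(\,\cdot\,;X_{q_s(v)}\mid\,\cdot\,)\le 1$ (a single bit; your ``$O(\log n)$'' is too generous, but the conclusion would be the same), the close contribution to $\widetilde{IC}^s$ can be as large as $\Theta(n^{3/2})\gg n\log n$. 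Nothing in the hypotheses rules out $\widetilde{IC}^s$ being concentrated entirely in close pairs; in that scenario your amplified far part contributes only $t\cdot o(n\log n)$ to $IC(\sfO)$, while the close part embeds into $IC(\sfO)$ with no $t$-factor, and you recover only $IC(\sfO)=\Omega(tn\log n)$ --- a full factor of $t$ short of the target $\Omega(nt^2\log n)$.

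Black-boxing Theorem~\ref{thm:previous1} does not appear to suffice here: one likely needs to open its proof (which proceeds via a level-set argument that naturally isolates contributions at each distance scale) to obtain a lower bound directly on the far part of the per-instance information cost, or else to argue the amplification differently.
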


We prove the following theorem for a $k$-pass algorithm $\sfM$ solving the $\tcoins$. The proof is similar to proof of Theorem \ref{th:main} and is deferred to Appendix \ref{ap:multiplecoin}.
\begin{theorem}
\label{thm:tcoinsvar}
Given a fixed good order $\{s_j\}_{j\in[nt]}$ (Definition \ref{def:gdord}) and a stream of $nt$ updates of the form $(X_j, s_j)$ where the $X_j$s are $i.i.d.$ uniform $\zo$ bits, let $\sfM$ be an $nt$-length $k$-pass streaming algorithm (possibly using private randomness) for the $\tcoins$. For all constants $\ve>0$ and $n$ greater than a large enough constant, there exists $\de,\lam>0$, such that if $k,t\le n^\lam$, $\Ent(\sfM_i)\le n^\lam, \forall i\in \{0,\ldots,k\}$ and
\begin{equation}
		\label{eqorder:1}
		\bbE_{\sfM_{(k,nt)}}\left[\bbE\left[ \left( \sum_{j\in J_s} X_j \right) ~ \middle| ~
		\sfM_{(k,nt)}=m_{(k,nt)}\right]^2  \right] > \ve |J_s| 
		\end{equation} for at least a $1/2$ fraction of $s\in[t]$, then
			\begin{equation}
	\label{eqorder:2}
	\miccoin(\sfM)=\sum_{j=1}^{n t}  \sum_{\ell=1}^j\I(\sfM_{(\le k,j)}; X_{\ell}|\sfM_{(\le k,\ell-1)},\sfM_{<k})>\de n t^2 \log n.
	\end{equation}
\end{theorem}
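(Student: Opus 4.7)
The natural strategy is to mirror the proof of Theorem~\ref{th:main}: construct a single-pass simulation $\sfO$ of $\sfM$, apply the BGW one-pass $t$-coins lower bound (Theorem~\ref{thmbgw:tcoins}) to $\sfO$, and convert this back to a lower bound on $\miccoin(\sfM)$. The simulation will use $\Imi$ (Algorithm~\ref{al:im}) for imitating the $k$ passes, exactly as in the single-coin case, but will run $t$ parallel copies of $\Apr$ (Algorithm~\ref{al:apr}), one per instance, to track the cumulative modification $\sum_{j\in J_s}(Y_j-X'_j)$ for each $s\in[t]$. The good-order assumption on $\{s_j\}$ is only used at the end when invoking Theorem~\ref{thmbgw:tcoins}, so our simulation need not care about it beyond preserving the order on its output stream.

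\textbf{Information cost of $\sfO$.} First I would argue, as in Claim~\ref{cl:imd}, that $\Imi$ faithfully reproduces the joint distribution of $(X,\sfM_{<k},\sfM_{(\le k,[0,nt])})$ via $(X',\sfM'_{<k},\sfM'_{(\le k,[0,nt])})$, so the data-processing inequality gives $\sum_{j=1}^{nt}\sum_{\ell=1}^{j}\I(\Imi_j;Y_\ell\mid\Imi_{\ell-1})\le \miccoin(\sfM)$ as in Claim~\ref{cl:iminf}. Then, analogously to Lemma~\ref{cl:ainf}, I would disentangle the contribution of the $t$ parallel $\Apr$-instances. Proposition~\ref{cl:aprinf} bounds the entropy of each $\Apr$ memory state by $40+6\log\log n+2\log(B/(\gamma\sqrt{|J_s|}))$; with $\gamma=\Theta(\ve)$ and $B=\Theta(\sqrt{k|J_s|n^\lambda})$, each $\Apr$-state has entropy $O(\log\log n+\log(kn^\lambda/\ve^2))$. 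Since $\sfO$ carries $t$ such states, the per-step entropy overhead is $O(t(\log\log n+\log(kn^\lambda/\ve^2)))$, so summing across the $nt$ time steps contributes $O(nt^2(\log\log n+\log(kn^\lambda/\ve^2)))$ on top of $\miccoin(\sfM)$. Choosing $\lambda$ small compared to the constant $\delta'$ delivered by Theorem~\ref{thmbgw:tcoins} makes this overhead at most $(\delta'/2)nt^2\log n$.

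\textbf{Output quality of $\sfO$.} I would then show the analogue of Lemma~\ref{cl:asum} per instance. The key input is a per-instance version of Claim~\ref{cl:acomb}: by conditioning on the memory states that affect instance $s$ and applying the same KL-to-Hellinger chain of inequalities, $\bbE[\sum_{j\in J_s}\one_{Y_j\ne X'_j}]\le \sqrt{|J_s|\cdot \Ent(\sfM_{<k})}\le \sqrt{k|J_s|n^\lambda}$, which justifies setting $B=\Theta(\sqrt{k|J_s|n^\lambda})$. Proposition~\ref{cl:aprsum} then implies $\bbE[(\Apr^{(s)}(\e,r)-\sum_{j\in J_s}(Y_j-X'_j))^2]\le (\ve/10)^2 |J_s|$. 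Combining this with the hypothesis~\eqref{eqorder:1} on $\sfM$'s per-instance variance reduction, the triangle inequality on $\ell_2$-distances transfers the variance reduction from $\sum_{j\in J_s}X'_j$ to $\sum_{j\in J_s}Y_j=\sum_{j\in J_s}X'_j+\sum_{j\in J_s}\e_j$, so $\sfO_{nt}$ reduces the variance of $\sum_{j\in J_s}Y_j$ by an $\Omega(\ve)$ factor for at least half of all $s\in[t]$.

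\textbf{Finishing and main obstacle.} Having verified the hypothesis of Theorem~\ref{thmbgw:tcoins} for $\sfO$, that theorem yields $IC(\sfO)>\delta' nt^2\log n$, and subtracting the overhead gives $\miccoin(\sfM)>\delta nt^2\log n$ for $\delta=\delta'/2$, as required. The step I expect to be most delicate is the bookkeeping for the per-instance modification approximation: in the single-coin proof we could afford one global $\Apr$ because the parameter $B$ depended only on $n$ and $\Ent(\sfM_{<k})$, but with $t$ interleaved streams we need $t$ independent $\Apr$ processes and a careful chain-rule decomposition of $IC(\sfO)$ across them so that their entropies add to $O(t\cdot\text{polylog})$ per step rather than $O(t\log n)$. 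Getting the constants right so that the overhead stays strictly below $\delta' nt^2\log n$ — and simultaneously ensuring that Claim~\ref{cl:acomb}'s per-instance refinement applies even when the order of updates interleaves instances arbitrarily (subject to the good-order condition needed only for Theorem~\ref{thmbgw:tcoins}) — is where the argument requires the most care.
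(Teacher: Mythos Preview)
Your proposal is correct and follows essentially the same approach as the paper: build $\sfO$ by running $\Imi$ together with $t$ parallel copies of $\Apr$, bound $IC(\sfO)\le\miccoin(\sfM)+O(nt^2)\cdot\text{polylog}$ via the entropy of the $\Apr$ states, transfer the per-instance variance reduction via the $\ell_2$ triangle inequality, and invoke Theorem~\ref{thmbgw:tcoins}. The only cosmetic difference is that the paper uses a single uniform parameter $B=\sqrt{kt\cdot n^{1+\lambda}}$ for all $\Apr$ instances (bounding each per-instance modification by the global $\sqrt{nt\cdot\Ent(\sfM_{<k})}$ rather than your tighter $\sqrt{|J_s|\cdot\Ent(\sfM_{<k})}$), which makes the bookkeeping slightly simpler at the cost of an extra $\log t$ inside the overhead; both choices work once $\lambda$ is taken small enough relative to $\delta'$.
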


Theorem \ref{thm:tcoinsvar} immediately implies a memory lower bound for the $\tcoins$, where the goal is to output the majority of a random instance given at the end of the stream.
\begin{corollary}\label{cor:tcoinsfixed}
Fix a \emph{good order} $\{s_j\}_{j\in[nt]}$. Suppose we are given a sequence of $nt$ $i.i.d.$ stream updates, the $j$-th of which has the form $(X_j, s_j)$, where $X_j$ is chosen uniformly in $\zo$. We interpret this as $t$ independent instances of the coin problem, where the $s$-th instance of the coin problem consists of the sequence of bits $X_{q_s(1)}, \ldots, X_{q_s(|J_s|)}$. Suppose there is a $k$-pass streaming algorithm $\sfM$ which goes over the stream of updates $k$ times in order, and given an $s^*\in_R [t]$ at the end of the stream, outputs the majority bit of the $s^*$-th instance of the coin problem with probability at least $1-\frac{1}{2000}$. Let $k<\log n$ and $t<n^{c}$ for a sufficiently small constant $c>0$. Then $\sfM$ uses $\Omega\left( \frac{t\log n}{k}\right)$ bits of memory.
\end{corollary}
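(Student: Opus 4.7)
My plan is to derive Corollary \ref{cor:tcoinsfixed} by combining three ingredients: (i) the variance-reduction characterization of majority from \cite{braverman2020coin}, Claim 6; (ii) the multi-pass lower bound on $\miccoin$ established in Theorem \ref{thm:tcoinsvar}; and (iii) the upper bound $\miccoin(\sfM) \le ksnt$ from Lemma \ref{lem:upperboundofmic'}, applied to the product distribution on $\zo^{nt}$ of stream bits. The argument is a standard ``sandwich''—lower bound $\miccoin$ from correctness, upper bound it from memory size, and solve for $s$.

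Concretely, I would proceed by contradiction. Suppose $\sfM$ is a $k$-pass streaming algorithm with memory $s$ satisfying the stated success condition, and suppose $s \le c' \cdot \frac{t \log n}{k}$ for a small constant $c'>0$ to be chosen. First, I verify the entropy hypothesis of Theorem \ref{thm:tcoinsvar}: since $k < \log n$ and $t < n^c$ for small $c>0$, we have $\Ent(\sfM_i) \le s \le c' t \log n / k \le n^\lambda$ for the requisite $\lambda$ from Theorem \ref{thm:tcoinsvar}, provided $n$ is large. Next, I translate the majority-output success into the variance-reduction condition \eqref{eqorder:1}. The overall success probability being $\ge 1 - 1/2000$ over a uniformly random $s^*\in[t]$ implies, by Markov's inequality, that for at least half of the indices $s\in[t]$, the algorithm outputs the majority of the $s$-th instance with probability at least $1-1/1000$. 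For each such $s$, applying (\cite{braverman2020coin}, Claim 6) to the induced postprocessor of $\sfM_{(k,nt)}$ yields
\[
\bbE_{\sfM_{(k,nt)}}\!\left[\bbE\!\left[\sum_{j\in J_s} X_j \,\middle|\, \sfM_{(k,nt)}\right]^{\!2}\,\right] > \ve \,|J_s|
\]
for some absolute constant $\ve>0$, which is exactly hypothesis \eqref{eqorder:1} of Theorem \ref{thm:tcoinsvar}.

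Applying Theorem \ref{thm:tcoinsvar} then gives $\miccoin(\sfM) > \de n t^2 \log n$ for some constant $\de = \de(\ve) > 0$. On the other hand, since the input bits $X_1,\ldots,X_{nt}$ are i.i.d.\ uniform, hence drawn from a product distribution on $\zo^{nt}$, Lemma \ref{lem:upperboundofmic'} (applied with stream length $nt$) yields $\miccoin(\sfM) \le k\cdot s\cdot nt$. Chaining these two bounds gives $ksnt \ge \de n t^2 \log n$, so $s \ge \de \cdot \frac{t \log n}{k}$. Choosing $c' < \de$ contradicts the assumed upper bound on $s$, completing the proof.

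The steps are essentially routine given Theorem \ref{thm:tcoinsvar}; the only point requiring care is the translation from majority correctness to the second-moment bound \eqref{eqorder:1}. The subtlety is that the success probability over a \emph{random} instance $s^*$ must be converted into per-instance correctness for a constant fraction of coordinates before Claim 6 of \cite{braverman2020coin} can be invoked individually on each of them; Markov's inequality suffices, but one has to pick the constant $1/2000$ in the hypothesis carefully so that after the averaging loss one still lands in the regime ($\ge 0.999$ advantage) where the variance-reduction characterization applies. Everything else is bookkeeping in the parameters $k$, $t$, $s$, $\lambda$, $c$, $c'$.
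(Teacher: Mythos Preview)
Your proposal is correct and follows essentially the same approach as the paper: assume the memory is too small, verify the entropy hypotheses of Theorem~\ref{thm:tcoinsvar} using $\Ent(\sfM_i)\le s$ together with the constraints $k<\log n$, $t<n^c$, convert the averaged majority success into the per-instance variance condition~\eqref{eqorder:1} for at least half of the coordinates via Markov and (\cite{braverman2020coin}, Claim~6), apply Theorem~\ref{thm:tcoinsvar} to get $\miccoin(\sfM)>\delta nt^2\log n$, and contradict this with the upper bound $\miccoin(\sfM)\le ks\cdot nt$ from Lemma~\ref{lem:upperboundofmic'}. Your write-up in fact makes explicit the Markov step and the role of the constant $1/2000$, which the paper only alludes to.
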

Here, the probability of success is over the input stream, the private randomness used by $\sfM$, and $s^*\in_R [k]$. 
Corollary \ref{cor:tcoinsfixed} follows from Theorem \ref{thm:tcoinsvar} similar to how (\cite{braverman2020coin}, Corollary 17) follows from Theorem \ref{thmbgw:tcoins}. The output of any such algorithm $\sfM$ would satisfy Equation \ref{eqorder:1} and let us assume $\sfM$ uses $<< \frac{t\log n}{k}$ memory. Then, with $k<\log n$ and $t<n^{c}$ for sufficiently small constant $c>0$, conditions on memory states for Theorem \ref{thm:tcoinsvar} are satisfied. Thus, it implies an $\Omega(n t^2 \log n)$ information lower bound on $\miccoin(\sfM)$, which in turn is upper bounded by $knt$ times the memory used by $\sfM$ (Lemma \ref{lem:upperboundofmic'}), giving a contradiction.

When the order $\{s_j\}_{j\in[nt]}$ is such that $s_j=((j-1)\mod t) +1$, then we refer to the above defined $\tcoins$ for this order as the $\simtcoins$. (\cite{braverman2020coin}, Remark 2) proved that a random order $\{s_j\in_R [t]\}_{j\in[nt]}$ is a good order with probability at least $1-\frac{1}{4n}$. Similar to (\cite{braverman2020coin}, Corollary 18), we can then prove the following corollary for the $\tcoins$ for a random order.

\begin{corollary}\label{cor:tcoinsran}
Suppose we are given a sequence of $nt$ $i.i.d.$ stream updates, the $j$-th of which has the form $(X_j, S_j)$, where $X_j$ is chosen uniformly in $\zo$ and $S_j$ is chosen uniformly in $[t]$. We interpret this as $t$ independent instances of the coin problem, where the $s$-th instance of the coin problem consists of the sequence of bits $X_{j_1}, \ldots, X_{j_v}$, if and only if $S_{j_u}=s, \forall 1\le u\le v$. Suppose there is a $k$-pass streaming algorithm $\sfM$ which goes over the stream of updates $k$ times in order, and given an $s^*\in_R [t]$ at the end of the stream, outputs the majority bit of the $s^*$-th instance of the coin problem with probability at least $1-\frac{1}{4000}$. Let $k<\log n$ and $t<n^{c}$ for a sufficiently small constant $c>0$. Then $\sfM$ uses $\Omega\left( \frac{t\log n}{k}\right)$ bits of memory. 
\end{corollary}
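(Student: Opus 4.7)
The plan is to reduce Corollary~\ref{cor:tcoinsran} to Corollary~\ref{cor:tcoinsfixed} via a simple averaging argument over the random order. Let $\sigma = \{S_j\}_{j \in [nt]}$ denote the (random) order, and let $p(\sigma)$ denote the success probability of the algorithm $\sfM$ conditioned on the specific order being $\sigma$ (where this conditional probability is over the input bits, $\sfM$'s private randomness, and $s^* \in_R [t]$). By hypothesis, $\bbE_\sigma[p(\sigma)] \ge 1 - \tfrac{1}{4000}$.

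The key input from \cite{braverman2020coin} quoted just before the statement is that a uniformly random order $\sigma$ satisfies Definition~\ref{def:gdord} (i.e.\ is \emph{good}) with probability at least $1 - \tfrac{1}{4n}$. Writing the expectation as a sum over good and bad orders and using $p(\sigma) \le 1$,
\begin{align*}
1 - \tfrac{1}{4000} \;\le\; \bbE_\sigma[p(\sigma)] &\;=\; \bbE_\sigma\bigl[p(\sigma) \cdot \one_{\sigma \text{ good}}\bigr] + \bbE_\sigma\bigl[p(\sigma) \cdot \one_{\sigma \text{ bad}}\bigr] \\
&\;\le\; \bbE_\sigma\bigl[p(\sigma) \cdot \one_{\sigma \text{ good}}\bigr] + \tfrac{1}{4n}.
\end{align*}
Hence $\bbE_\sigma\bigl[p(\sigma) \cdot \one_{\sigma \text{ good}}\bigr] \ge 1 - \tfrac{1}{4000} - \tfrac{1}{4n}$, which by averaging produces a \emph{specific} good order $\sigma^*$ for which $p(\sigma^*) \ge 1 - \tfrac{1}{4000} - \tfrac{1}{4n} \ge 1 - \tfrac{1}{2000}$, provided $n$ is larger than a suitable constant.

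Fixing the order to be $\sigma^*$, the algorithm $\sfM$ now constitutes a $k$-pass streaming algorithm on a stream of $nt$ updates $(X_j, s^*_j)$ arranged in a fixed good order, which succeeds in recovering the majority of a random instance with probability at least $1 - \tfrac{1}{2000}$. Applying Corollary~\ref{cor:tcoinsfixed} with the parameters $k < \log n$ and $t < n^c$ yields the desired memory lower bound of $\Omega(t \log n / k)$ bits. There is no serious obstacle here: all of the hard work is in Corollary~\ref{cor:tcoinsfixed}; the only thing to be mildly careful about is the constant tracking so that the $\tfrac{1}{4000}$ error budget, after paying the $\tfrac{1}{4n}$ cost for the bad-order event, still fits within the $\tfrac{1}{2000}$ error budget required by the fixed-order statement.
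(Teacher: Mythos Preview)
Your proposal is correct and follows the same approach the paper indicates: average over the random order, use that a random order is good with probability at least $1-\tfrac{1}{4n}$ (from \cite{braverman2020coin}), extract a fixed good order on which $\sfM$ still succeeds with probability at least $1-\tfrac{1}{2000}$, and invoke Corollary~\ref{cor:tcoinsfixed}. The paper itself omits the details and simply points to the analogous argument in \cite{braverman2020coin}, so your write-up is in fact more explicit than the paper's.
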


\subsection{Data Stream Applications}\label{sec:application}
We prove the theorems for the data stream problems defined in the introduction. For the first theorem, we use Corollary
\ref{cor:singlemaj}.

\begin{theorem}\label{thm:counterF}(MultiPass Counter in Strict Turnstile Streams) 
    Any $k$-pass strict turnstile streaming algorithm for maintaining a counter in a stream of length $n$ up to a sufficiently small constant factor with probability $1-\gamma$ for a sufficiently small constant $\gamma > 0$, requires $\Omega(\frac{\log n}{k})$ bits of memory.  
\end{theorem}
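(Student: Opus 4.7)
The plan is to reduce the coin problem (Corollary~\ref{cor:singlemaj}) to approximate counting in strict turnstile streams. Given a stream $X_1,\dots,X_n$ of i.i.d.\ uniform $\{-1,1\}$ bits, I would construct a strict turnstile instance by prepending a block of $\alpha\sqrt{n}$ insertion updates (each $+1$) for a sufficiently large constant $\alpha>0$, and then interpreting each $X_i$ as an update to the counter $C$ (with $+1$ a true insertion and $-1$ a deletion of a previously inserted item, so the stream remains valid). A $k$-pass strict turnstile algorithm $\mathcal{B}$ that approximates the final counter value $C^\star = \alpha\sqrt{n} + \sum_i X_i$ to within a $(1\pm\varepsilon)$ factor with probability $\geq 1-\gamma$ then induces a $k$-pass algorithm on $X_1,\dots,X_n$ that, after prepending the (deterministic) block, outputs $\hat C$ with $|\hat C - C^\star|\leq \varepsilon C^\star \leq 2\varepsilon\alpha\sqrt{n}$ with the same probability.

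Next I would verify the reduction is legal, i.e., the running counter value is non-negative at every intermediate step, so that the deletions are well-defined in the strict turnstile model. Writing $S_j=\sum_{i\leq j} X_i$ for the partial sum of the coin stream, the running counter during the $X$ portion is $\alpha\sqrt{n} + S_j$, and I need $\min_{j\leq n}(\alpha\sqrt{n}+S_j)\geq 0$. By the reflection principle (or Doob's maximal inequality combined with Hoeffding), $\Pr[\min_{j\leq n} S_j \leq -\alpha\sqrt{n}]\leq 2\exp(-\alpha^2/2)$, which for large enough $\alpha$ is at most an arbitrarily small constant $\gamma'$. Conditioned on this event not occurring, the constructed stream is a valid strict turnstile stream of length $n+\alpha\sqrt{n}=O(n)$, and the approximation guarantee of $\mathcal{B}$ still holds with probability at least $1-\gamma-\gamma'$.

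The last step is to use $\hat C$ to compute the majority bit of $X$ with advantage at least $0.999$. Since $\hat C = \alpha\sqrt{n} + \sum_i X_i \pm 2\varepsilon\alpha\sqrt{n}$, the sign of $\hat C - \alpha\sqrt{n}$ (after a small rounding correction) matches $\mathrm{sign}(\sum_i X_i)$ whenever $|\sum_i X_i| \geq 3\varepsilon\alpha\sqrt{n}$. By the Berry-Esseen theorem (or the central limit theorem plus anti-concentration for random walks), there is a constant $c>0$ such that $\Pr[|S_n|\geq c\sqrt{n}]\geq 0.9995$. Choosing $\varepsilon$ small enough that $3\varepsilon\alpha < c$, the output bit agrees with $\mathrm{sign}(S_n)$ with probability $\geq 1-\gamma-\gamma'-0.0005 \geq 0.999$ for $\gamma$ and $\gamma'$ chosen small enough. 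This yields a $k$-pass algorithm for the coin problem using the same amount of memory as $\mathcal{B}$, so Corollary~\ref{cor:singlemaj} gives the claimed $\Omega(\log n/k)$ lower bound.

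The main obstacle is choosing the constants $\alpha$, $\varepsilon$, and $\gamma$ consistently so that all three events (non-negativity of the walk, success of $\mathcal{B}$, and anti-concentration of $S_n$) occur simultaneously with enough slack to reach the $0.999$ success probability required by Corollary~\ref{cor:singlemaj}; this is a purely calculational balancing act, with no technical subtlety beyond the quantitative tails of the simple random walk.
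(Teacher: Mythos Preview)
Your proposal is correct and follows essentially the same reduction as the paper: prepend $\Theta(\sqrt{n})$ insertions, use a tail bound on the running minimum of the simple random walk to ensure strict turnstile validity, and combine the algorithm's multiplicative guarantee with anti-concentration of $S_n$ to recover the majority bit. The only cosmetic difference is that the paper bounds the maximum excursion via Markov on its expectation while you invoke the reflection principle directly.
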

\begin{proof}
Let $S$ be the sum of $n$ coins, each in $\{-1,1\}$ corresponding to the input in Corollary \ref{cor:singlemaj}. The expected value of the maximum absolute value that $S$ takes, during the course of the $n$ stream updates, corresponds to the maximum deviation from the origin of an unbiased $1$-dimensional random walk and is $\Theta(\sqrt{n})$ (see, e.g., \cite{cffh98}), and consequently we have that by a Markov bound with probability at least $1-\gamma$ for an arbitrarily small constant $\gamma > 0$, that for a suitably large constant $C > 0$ depending on $\gamma$, that $|S| \leq C \sqrt{n}$ at all times during the stream. Hence, if we prepend $C \sqrt{n}$ coins, each with value $+1$, to the beginning of the stream, then the resulting counter $v$ satisfies that $v \geq 0$ at all times during the stream. Thus, $v$ is a counter in the strict turnstile model. Call this event $\mathcal{E}$. If event $\mathcal{E}$ holds, then the streaming algorithm has correctness probability at least $1-\gamma$, given that it is a strict turnstile streaming algorithm. 

We also have, by anti-concentration of the binomial distribution (e.g., the probability any particular value of a binomial sum is taken is at most $\frac{{n \choose n/2}}{2^n} = O \left (\frac{1}{\sqrt{n}} \right )$), that with probability at least $1-\gamma$ for an arbitrarily small constant $\gamma > 0$, that the final absolute value $|S|$ satisfies $|S| > C' \sqrt{n}$, where $C' > 0$ is a sufficiently small constant function of $\gamma$. As $S$ is symmetric, with probability $1-\gamma$ we have that if $S > 0$, then $v > (C+C') \sqrt{n}$, whereas if $S \leq 0$, then $v < (C-C') \sqrt{n}$. If event $\mathcal{E}$ also occurs, then if $S > 0$, then $|v| > (C+C') \sqrt{n}$, while if $S \leq 0$, then $|v| < (C-C')\sqrt{n}$. Call this event $\mathcal{F}$.

By a union bound, we have that events $\mathcal{E}$ and $\mathcal{F}$ and the event that the algorithm approximates $|v|$ up to a sufficiently small constant multiplicative factor, all jointly occur with probability at least $1-3\gamma$. Hence, with probability $1-3\gamma$, the output can be used to solve the the coin problem. It follows by Corollary \ref{cor:singlemaj} that for $\gamma > 0$ a sufficiently small constant, that the algorithm must use $\Omega(\frac{\log n}{k})$ bits of memory.  
\end{proof}

For the remaining theorems, we use Corollary \ref{cor:tcoinsfixed}.

\begin{theorem}[$O(1)$-Pass Multi-$\ell_p$-Estimation]\label{thm:multiLpEstimationF}
Suppose $k < \log n$, $t < n^c$ for a sufficiently small constant $c > 0$, and $p > 0$ is a constant. Then any $k$-pass streaming algorithm which solves the {\sf Multi-$\ell_p$-Estimation} Problem on $t$ instances of a stream of $n$ updates for each vector, solving each $\ell_p$-norm estimation problem up to a sufficiently small multiplicative factor and with probability $1-\gamma$ for a sufficiently small constant $\gamma > 0$, requires $\Omega(\frac{t \log n}{k})$ bits of memory. 
\end{theorem}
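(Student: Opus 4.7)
The plan is to reduce from the $\tcoins$ memory lower bound (Corollary~\ref{cor:tcoinsfixed}) by realizing each coin instance as a one-dimensional counter whose $\ell_p$-norm encodes the majority bit, following the same transformation used in the proof of Theorem~\ref{thm:counterF} but applied simultaneously to $t$ interleaved instances.

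Concretely, given $t$ independent streams of $n$ i.i.d.\ uniform $\zo$ bits arranged in the good order $s_j=((j-1)\bmod t)+1$ of Corollary~\ref{cor:tcoinsfixed}, I would associate each coin instance $s\in[t]$ with a one-dimensional vector $v_s$ (so $d=1$, and $\|v_s\|_p=|v_s|$ for every $p>0$). For each instance I prepend a block of $C\sqrt{n}$ updates $+1$ and then process the $\pm1$ updates of that coin stream; the prepended blocks from all $t$ instances are placed at the start of the combined stream (in any good-order-compatible way), and the remaining $nt$ updates follow the good order. By the maximal deviation bound for an unbiased $\pm1$ random walk (as in the proof of Theorem~\ref{thm:counterF}), with probability $1-\gamma$ for any fixed small constant $\gamma>0$, every partial sum of every instance stays non-negative after prepending, so each $v_s$ is a strict-turnstile counter. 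Moreover, by anti-concentration of the binomial sum, there exist constants $C>C'>0$ such that for each fixed $s$, with probability $1-\gamma$, the final value $|v_s|$ lies either above $(C+C')\sqrt{n}$ or below $(C-C')\sqrt{n}$ depending on the sign of the raw sum of the $s$-th coin stream.

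Now suppose $\calA$ is a $k$-pass streaming algorithm that solves {\sf Multi-$\ell_p$-Estimation} on $t$ such instances using $s$ bits of memory, and that for every fixed $s\in[t]$, it outputs an estimate of $\|v_s\|_p$ within a multiplicative factor smaller than $(C+C')/(C-C')$ with probability at least $1-\gamma$. For a uniformly random index $s^*\in[t]$ queried at the end of the stream, I would output ``majority $=+1$'' if the returned estimate of $\|v_{s^*}\|_p$ exceeds $C\sqrt{n}$, and ``$-1$'' otherwise. By a union bound over (i) the event that all counters are strict-turnstile, (ii) the anti-concentration event for the $s^*$-th instance, and (iii) the success of $\calA$ on coordinate $s^*$, this procedure outputs the majority of the $s^*$-th coin stream with probability at least $1-O(\gamma)$ over inputs, the private randomness, and $s^*\in_R[t]$. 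Choosing $\gamma$ a sufficiently small constant makes this at least $1-1/4000$, which is exactly the success hypothesis of Corollary~\ref{cor:tcoinsfixed}, so we conclude $s=\Omega(t\log n/k)$.

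The main thing to verify carefully is that the combined stream is actually compatible with a \emph{good} order in the sense of Definition~\ref{def:gdord}, after inserting the prepended $+1$ blocks and handling the fact that each $v_s$ absorbs an additive $O(\sqrt{n})$ stream prefix; this is why I suggest placing the prepended blocks up front (outside the interleaved suffix) so the suffix is literally the good order $s_j=((j-1)\bmod t)+1$ on the $nt$ coin updates. A minor secondary point is to observe that the additive $O(\sqrt{nt})$ extra memory needed to track the prepended phase is absorbed by the $\Omega(t\log n/k)$ bound for $t<n^c$ with $c$ small, so it does not affect the asymptotic lower bound. Everything else reduces to repeating the concentration/anti-concentration argument from the proof of Theorem~\ref{thm:counterF} once per coordinate and applying Corollary~\ref{cor:tcoinsfixed} as a black box.
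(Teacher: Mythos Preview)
Your approach is the paper's: realize each coin instance as a one-dimensional counter with a $C\sqrt{n}$ prepended offset so that the final $|v_s|$ separates the two majority outcomes, then invoke Corollary~\ref{cor:tcoinsfixed}. One point needs correction, however. You assert that with probability $1-\gamma$ \emph{every} partial sum of \emph{every} instance stays non-negative. The maximal-deviation bound you cite gives failure probability $\gamma$ for a \emph{single} walk; a union over $t$ instances yields failure probability $t\gamma$, which is not $O(\gamma)$ once $t$ grows with $n$. The paper sidesteps this by only invoking the events $\mathcal{E}_{s^*}$ and $\mathcal{F}_{s^*}$ for the single queried index $s^*$. That suffices because {\sf Multi-$\ell_p$-Estimation} is not a strict-turnstile problem: the algorithm must be correct on arbitrary $\pm1$ update streams, so whether the unqueried counters ever dip below zero is irrelevant. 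With this fix your union bound is over three per-instance events and gives $1-3\gamma$, exactly as in the paper.

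Your worries about preserving a good order after prepending and about ``extra memory for the prepended phase'' are both non-issues. The prefix is deterministic, so it can be absorbed into the transition function of the reduced $k$-pass coin-problem algorithm: at the start of each pass, internally simulate processing the prefix from the previous pass's end state. The resulting algorithm runs on exactly the original $nt$-length coin stream in the good order $s_j=((j-1)\bmod t)+1$ with the same memory bound, so Corollary~\ref{cor:tcoinsfixed} applies directly.
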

\begin{proof}
Consider $t$ $1$-dimensional vectors $v^1, \ldots, v^t \in \{-\textrm{poly}(n), \ldots, \textrm{poly}(n)\}$, so each $v^i$ is just a counter. Note that for any $p$-norm for $p > 0$, we have $\|v^i\|_p$ is just the absolute value $|v^i|$ of the counter $v^i$.

As in the proof of Theorem \ref{thm:counterF}, we initialize each $v^i$ to $C \sqrt{n}$ by adding $C \sqrt{n}$ updates of the form $v^i \leftarrow v^i + 1$ to the stream, where $C >0$ is a constant to be determined below. Next, we add to $v^i$ the sum $S^i$ of the coins in the $i$-th instance of the $\simtcoins$ in Corollary \ref{cor:tcoinsfixed}. 

As in the proof of Theorem \ref{thm:counterF}, the expected value of the maximum absolute value that $S^i$ takes, during the course of the $n$ stream updates, is $\Theta(\sqrt{n})$, and consequently for any particular vector $v^i$, we have that by a Markov bound with probability at least $1-\gamma$ for an arbitrarily small constant $\gamma > 0$, that for a suitably large constant $C > 0$ depending on $\gamma$, that $v^i \geq 0$ at all times during the stream. Call this event $\mathcal{E}_i$.

Also as in the proof of Theorem \ref{thm:counterF}, for any particular vector $S^i$, by anti-concentration of the binomial distribution, with probability at least $1-\gamma$ for an arbitrarily small constant $\gamma > 0$, the final absolute value $|S^i|$ of the sum of the $n$ stream updates satisfies $|S^i| > C' \sqrt{n}$, where $C' > 0$ is a sufficiently small constant function of $\gamma$. As $S^i$ is symmetric, with probability $1-\gamma$ we have that if $S^i > 0$, then $v^i > (C+C') \sqrt{n}$, whereas if $S^i \leq 0$, then $v^i < (C-C') \sqrt{n}$. If event $\mathcal{E}_i$ also occurs, then if $S^i > 0$, then $|v^i| > (C+C') \sqrt{n}$, while if $S^i \leq 0$, then $|v^i| < (C-C')\sqrt{n}$. Call this event $\mathcal{F}_i$.

By a union bound, for any particular $i$ we have that the events $\mathcal{E}_i$ and $\mathcal{F}_i$ and the event that the algorithm estimates $\|v^i\|_p$ to a sufficiently small constant multiplicative factor, all jointly occur with probability at least $1-3\gamma$. Hence, with probability $1-3\gamma$ the output can be used to solve the $i$-th instance of the $\simtcoins$. It follows by Corollary \ref{cor:tcoinsfixed} that the algorithm must use $\Omega(\frac{t \log n}{k})$  bits of memory. 
\end{proof}

\begin{theorem}[$O(1)$-Pass Point Query and Heavy Hitters] \label{thm:PQHHF}
Suppose $k < \log n$ and $\epsilon^{-2} < n^c$ for a sufficiently small constant $c > 0$.
Any $k$-pass streaming algorithm which, with probability at least $1-\gamma$ for a sufficiently small constant $\gamma > 0$, solves the {\sf $\ell_2$-Point Query Problem} or the {\sf $\ell_2$-Heavy Hitters Problem} on an underlying vector $x \in \{-\textrm{poly}(n), \ldots, \textrm{poly}(n)\}^d$ in a stream of $n$ updates,  requires $\Omega \left (\frac{\epsl^{-2} \log n}{k} \right )$ bits of memory. 
\end{theorem}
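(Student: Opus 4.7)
The plan is to reduce from the $\simtcoins$ of Corollary~\ref{cor:tcoinsfixed}, closely paralleling the reduction used for Multi-$\ell_p$-Estimation in Theorem~\ref{thm:multiLpEstimationF}. Fix a small constant $c_0>0$ to be chosen later, set $t=c_0\epsilon^{-2}$ and $n_0=n/t=\Theta(n\epsilon^2)$, and interpret the length-$n$ stream as $t$ interleaved coin instances of $n_0$ updates each, laid out in the good ``simultaneous'' order of Definition~\ref{def:gdord}. The underlying $t$-dimensional vector $x$ has $x_s=S^s$ equal to the running $\zo$-sum of the $s$-th instance. Since the theorem assumes $\epsilon^{-2}\le n^c$ for a sufficiently small constant $c$, we have $t\le n_0^{c'}$ and (after modestly shrinking constants) $k<\log n_0$, so Corollary~\ref{cor:tcoinsfixed} applies; hence it suffices to show that an $\ell_2$-point query or heavy hitters algorithm can be used at a random index $s^\ast\in[t]$ to output the majority of the $s^\ast$-th coin instance with probability $1-O(\gamma)$, which will force memory $\Omega(t\log n_0/k)=\Omega(\epsilon^{-2}\log n/k)$.

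Next I would establish two basic facts about $x$. By binomial anti-concentration, for a uniformly chosen $s^\ast\in[t]$ we have $|x_{s^\ast}|\ge C_1\sqrt{n_0}$ with probability at least $1-\gamma$, and the sign of $x_{s^\ast}$ is the majority bit of the $s^\ast$-th instance. Independently, $\mathbb{E}[\|x\|_2^2]=tn_0=n$, so by Markov's inequality $\|x\|_2\le C_2\sqrt{n}$ with probability at least $1-\gamma$ for a constant $C_2$. Union-bounding makes both hold simultaneously.

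Combining these with the streaming guarantees, the point-query output satisfies $|\hat x_{s^\ast}-x_{s^\ast}|\le \epsilon\|x\|_2\le \epsilon C_2\sqrt{n}=C_2\sqrt{c_0}\cdot\sqrt{n_0}$, which is smaller than $|x_{s^\ast}|/2\ge C_1\sqrt{n_0}/2$ whenever $\sqrt{c_0}<C_1/(2C_2)$; hence $\mathrm{sign}(\hat x_{s^\ast})=\mathrm{sign}(x_{s^\ast})$, revealing the majority. For heavy hitters I must additionally check that $s^\ast$ is actually reported by the algorithm: under the standard $\ell_2$ heavy-hitter threshold $x_i^2\ge \epsilon^2\|x\|_2^2$, this reduces to $|x_{s^\ast}|/\|x\|_2\ge (C_1/C_2)\epsilon/\sqrt{c_0}\ge \epsilon$, which likewise holds as soon as $\sqrt{c_0}\le C_1/C_2$; the heavy-hitter output then carries the same $\epsilon\|x\|_2$ additive-error estimate, and the same sign-recovery argument applies.

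Union-bounding the $O(1)$ failure events (anti-concentration of $x_{s^\ast}$, concentration of $\|x\|_2$, and the algorithm's error probability) gives overall success at least $1-O(\gamma)$ for outputting the majority of the $s^\ast$-th coin instance, and then Corollary~\ref{cor:tcoinsfixed} delivers the $\Omega(\epsilon^{-2}\log n/k)$ bound. The only delicate point is to choose the single constant $c_0$ so that two inequalities in $\sqrt{c_0}$ hold simultaneously: small enough for the point-query/heavy-hitter additive error to be dominated by the counter magnitude (sign recovery), and small enough that $s^\ast$ clears the heavy-hitter threshold. Both are mild inequalities in terms of the universal constants $C_1,C_2$ and are easily compatible, so this bookkeeping is the main — but entirely routine — obstacle.
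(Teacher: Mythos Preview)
Your proposal is correct and follows essentially the same approach as the paper's proof: reduce from Corollary~\ref{cor:tcoinsfixed} by making each coordinate of the vector a coin instance, use anti-concentration for the target coordinate and concentration of $\|x\|_2$, then recover the sign from the additive-$\epsilon\|x\|_2$ estimate. The only cosmetic differences are that the paper cites Remark~4 of \cite{braverman2020coin} for the $\|x\|_2$ bound where you use Markov's inequality, and the paper (somewhat loosely) uses $n$ for the per-instance stream length rather than your more careful split into $n_0=n/t$; neither affects the argument.
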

\begin{proof}
We create a $t = C''\epsilon^{-2}$-dimensional vector $v$, where each entry of $v$ is a counter corresponding to a single instance of the $\simtcoins$ in Corollary \ref{cor:tcoinsfixed}.
Here $C'' > 0$ is a sufficiently small constant to be determined. 

By Remark 4 of \cite{braverman2020coin}, there is a constant $C > 0$ so that with probability at least $1-\epsilon$, we have $\|v\|_2^2 \leq C \cdot C'' \cdot n \epsilon^{-2}$ at the end of the stream. Call this event $\mathcal{E}$. 

Also as in the proof of Theorems \ref{thm:counterF} and \ref{thm:multiLpEstimationF}, for any particular coordinate $v^i$, by anti-concentration of the binomial distribution, with probability at least $1-\gamma$ for an arbitrarily small constant $\gamma > 0$, the final absolute value $|v^i|$ of the sum of the $n$ stream updates satisfies $|v^i| > C' \sqrt{n}$, where $C' > 0$ is a sufficiently small constant function of $\gamma$. As $v^i$ is symmetric, with probability $1-\gamma$,  we have that if $v^i > 0$, then $v^i > C'\sqrt{n}$, whereas if $v^i \leq 0$, then $v^i < -C' \sqrt{n}$. Call this event $\mathcal{F}_i$. 

For a particular $i$, if both $\mathcal{E}$ and $\mathcal{F}_i$ occur, then $|v^i|^2 > (C')^2 n$ while $\|v\|_2^2 < C C'' n \epsilon^{-2}$, and by setting $C'' = (C')^2/C$ we have that estimating $v^i$ up to an additive $\epsilon \|v\|_2$ factor can be used to determine if $v^i > 0$ or $v^i < 0$, and thus solves the $i$-th instance of the coin problem.  It follows by Corollary \ref{cor:tcoinsfixed}, that any algorithm for the {\sf $\ell_2$-Point Query Problem} must use $\Omega(\frac{\epsilon^{-2} \log n}{k})$  bits of memory. Here we assume $\epsilon$ is at most a sufficiently small constant so that event $\mathcal{E}$ occurs with probability at least $1-\gamma$, and we can also choose $\gamma > 0$ to be a sufficiently small constant by choosing $C' > 0$ to be sufficiently small. 

For the {\sf $\ell_2$-Heavy Hitters Problem}, notice that for any particular $i$, if both events $\mathcal{E}$ and $\mathcal{F}_i$ occur, which happens with probability at least $1-2\gamma$, then $(v^i)^2 \geq \epsl^2 \|v\|_2^2$, and so the algorithm for the {\sf $\ell_2$-Heavy Hitters Problem} must return index $i$ an  additive $\epsilon \|v\|_2$ approximation to its value. This approximation determines if $v^i > 0$ or $v^i < 0$, and thus solve the $i$-th instance of the coin problem.  It follows by Corollary \ref{cor:tcoinsfixed}, that any algorithm for the {\sf $\ell_2$-Heavy Hitters Problem} must use $\Omega(\frac{\epsilon^{-2} \log n}{k})$  bits of memory.
\end{proof}

\begin{theorem}\label{thm:compressedSensing}
(Bit Complexity of Sparse Recovery). 
Suppose $k < \log n$ and $r < n^c$ for a sufficiently small constant $c > 0$.
Any $k$-pass streaming algorithm which, with probability at least $1-\gamma$ for a sufficiently small constant $\gamma > 0$, solves the {\sf Sparse Recovery Problem} with
guarantee (\ref{eqn:recovery}) for a constant SNR value bounded away from $1$, requires
$\Omega(\frac{r \log n}{k})$ bits of memory. 
\end{theorem}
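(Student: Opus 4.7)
The plan is to reduce the $r$-instance Simultaneous Coin Problem (Theorem~\ref{thm:tcoinsvar} / Corollary~\ref{cor:tcoinsfixed}) to Sparse Recovery, adapting the construction used for Theorem~\ref{thm:PQHHF} by explicitly adding ``noise'' coordinates to pin the Signal-to-Noise Ratio at a chosen constant strictly below $1$. Set $t := r$ and $N := \lfloor n/(2r) \rfloor$, so a good-order coin stream has total length $Nt = \Theta(n)$. From such a stream I construct a $d$-dimensional vector $x$ with $d = \Theta(r)$: the first $r$ coordinates $v_1,\ldots,v_r$ are the counters of the $r$ coin instances (formed naturally from the $\pm 1$ updates), and the remaining $d-r$ coordinates are initialised to $c\sqrt{N}$ by prepending $(d-r)\cdot c\sqrt{N} = O(\sqrt{rn}/c) = o(n)$ positive updates; this extra prefix fits within the $n$-update budget because $r < n^{c_0}$ for small $c_0$. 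Here $c > 0$ is a small constant to be chosen at the end.

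By anti-concentration of the binomial there is a constant $C' > 0$ such that each coin counter satisfies $|v_i| \geq C'\sqrt{N}$ independently with probability at least $1 - \gamma_0$, for any preset small constant $\gamma_0$. A Chernoff bound on the number of heavy counters gives that with constant probability at least $(1 - 2\gamma_0)r$ of them are heavy; picking $c < C'$ ensures that on this event the top-$r$ coordinates of $x$ are exactly the coin counters. This yields $\|x_r\|_2^2 = \Theta(rN) = \Theta(n)$ and $\|x - x_r\|_2^2 = (d-r) c^2 N = \Theta(c^2 n)$, and hence $\mathrm{SNR} = \|x_r\|_2^2/\|x\|_2^2$ is a specific constant in $(0,1)$ that approaches $1$ as $c \to 0$.

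Apply the sparse-recovery algorithm to obtain an $r$-sparse $\hat{x}$ with $\|\hat{x} - x\|_2^2 \leq \Delta^2 \|x - x_r\|_2^2 = O(\Delta^2 c^2 n)$. Let $H$ be the set of heavy coin indices and $S$ the support of $\hat{x}$. Every index in $H \setminus S$ contributes at least $(C')^2 N$ to the squared error, so $|H \setminus S| \leq O(\Delta^2 c^2 r/(C')^2)$, which is a small constant fraction of $r$ once $c$ is a small constant multiple of $C'/\Delta$. On $H \cap S$, Markov's inequality shows that at least half of these indices have $|\hat{v}_i - v_i|^2 \leq O(\Delta^2 c^2 N)$, and for $c$ small enough this is less than $v_i^2/4$, so $\mathrm{sign}(\hat{v}_i) = \mathrm{sign}(v_i)$, recovering the majority bit of the $i$-th coin instance.

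All constants can be chosen so that, with constant probability over inputs and randomness, the majority is recovered for at least a $(1/2 + \Omega(1))$-fraction of the $r$ coin instances; by symmetry of the reduction in the instance index, each fixed instance $s$ is output correctly with probability at least $1/2 + \Omega(1)$, which yields the variance-reduction hypothesis $\bbE[\bbE[\sum_{j \in J_s} X_j \mid \sfM_{(k,nt)}]^2] = \Omega(N)$ for every $s$. Theorem~\ref{thm:tcoinsvar} then forces $\miccoin(\sfM) = \Omega(Nr^2 \log N)$, whereas Lemma~\ref{lem:upperboundofmic'} bounds $\miccoin(\sfM) \leq k s \cdot Nr$; combining gives $s = \Omega(r \log N / k) = \Omega(r \log n / k)$, using $\log N = \Theta(\log n)$ whenever $r < n^{c_0}$. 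The main obstacle is the tension in the choice of $c$: to keep the SNR strictly below $1$ we need $c > 0$, but to keep the per-coordinate recovery error below the coin signal we need $c$ small. The freedom that the theorem allows any constant SNR in $(0,1)$ (in particular, arbitrarily close to $1$) resolves this by taking $c$ to be any sufficiently small constant independent of $n$.
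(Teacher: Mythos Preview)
Your hard instance --- coin counters on the first $r$ coordinates, plus $\Theta(r)$ small constant noise coordinates to keep the SNR strictly below $1$ --- is exactly the paper's construction (the paper takes $d=3r$ with noise level $\zeta C'\sqrt n$). The gap is in how you turn the recovery guarantee into a coin-problem reduction.

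Your Markov step on the indices of $H\cap S$ yields only that \emph{at least half} of them have per-coordinate squared error below twice the average; since $|H\cap S|\le r$, this recovers at most $r/2$ signs, which is \emph{at most} a $1/2$ fraction, not ``$1/2+\Omega(1)$''. The right move (and what the paper does) is to apply Markov at a much smaller tail level: from $\sum_{i\le r}(\hat v_i-v_i)^2\le\|\hat x-x\|_2^2\le\Delta^2\|x-x_r\|_2^2=O(\Delta^2 c^2 rN)$ one deduces that at most a $\gamma_0$-fraction of $i\in[r]$ can have $|\hat v_i-v_i|>C'\sqrt N$, provided $c$ is chosen with $\Delta^2 c^2/\gamma_0<(C')^2$. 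Combined with the event $|v_i|\ge C'\sqrt N$ (probability $1-\gamma_0$ for each $i$), this recovers the majority of a \emph{uniformly random} instance with probability $1-O(\gamma_0)$, after which you invoke Corollary~\ref{cor:tcoinsfixed} directly. Your route through the variance hypothesis of Theorem~\ref{thm:tcoinsvar} is then unnecessary, and in fact the step ``per-instance advantage $1/2+\Omega(1)\Rightarrow\bbE\!\big[\bbE[\sum_j X_j\mid\sfM]^2\big]=\Omega(N)$'' is not a general fact about predictors of $\mathrm{sign}(\sum X_j)$; the claim the paper cites for this conversion (\cite{braverman2020coin}, Claim~6) requires success probability $0.999$, not merely $1/2+\Omega(1)$.
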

\begin{proof}
We create a $t = 3r$-dimensional vector $x$, where each of the first $r$ entries of $x$ is a counter corresponding to a single instance of the coin problem in Corollary \ref{cor:tcoinsfixed}, with the $t$ of that corollary set to $t = r$. 

By Remark 4 of \cite{braverman2020coin}, there is a constant $C > 0$ so that with probability at least $1-\frac{1}{\sqrt{r}}$, we have $\|x\|_2^2 \leq C \cdot r n + \sum_{j=r+1}^{3r} x_j^2$ at the end of the stream. Call this event $\mathcal{E}$. 

For any particular coordinate $x_i$, $1 \leq i \leq r$, by anti-concentration of the binomial distribution, with probability at least $1-\gamma$ for an arbitrarily small constant $\gamma > 0$, the final absolute value $|x_i|$ of the sum of the $n$ stream updates satisfies $|x_i| > C' \sqrt{n}$, where $C' > 0$ is a sufficiently small constant function of $\gamma$. As $x_i$ is symmetric, with probability $1-\gamma$ we have that if $x_i > 0$, then $x_i > C'\sqrt{n}$, whereas if $x_i \leq 0$, then $x_i < -C' \sqrt{n}$. Call this event $\mathcal{F}_i$. 

We set $x_{r+1} = x_{r+2} = \cdots = x_{3r} = \zeta C' \sqrt{n}$ for $\zeta > 0$ a sufficiently small constant, and so if event $\mathcal{E}$ holds, then the SNR is at most $1-\Omega(1)$. Indeed, in this case if $x_S$ is the vector obtained by zero-ing out all coordinates of $x$ outside of a subset $S$ of exactly $r$ coordinates, for any $S$ we have $\|x_S\|_2^2 \leq (1-\Omega(1))\|x\|_2^2$. We choose $3r$ coordinates of $x$ to ensure that if the first $r$ coordinates of $x$ are small and if the algorithm for the {\sf Sparse Recovery Problem} chooses a different set of $r$ coordinates, the SNR is still bounded away from $1$. 
Note that
\begin{eqnarray}\label{eqn:2}
\|x-x_r\|_2^2 \leq 2r \zeta^2 (C')^2 n, 
\end{eqnarray}
which follows since $\|x-x_r\|_2 \leq \|x-x'\|_2$, where $x'$ agrees with $x$ on its first $r$ coordinates. Note since the input is distributional, we can assume the streaming algorithm is deterministic by averaging. For any constant $\Delta \geq 1$, if we set $\zeta$ such that $\frac{\gamma}{\zeta^2} > \Delta$, then the output of the algorithm for the {\sf Sparse Recovery Problem} would satisfy the constant factor multiplicative approximation of (\ref{eqn:recovery}) only if 
\begin{eqnarray}\label{eqn:1}
\|x - \hat{x}\|_2^2 < \gamma r (C')^2 n, 
\end{eqnarray}
which happens with probability $1-\gamma$. Let's call this event $\calG$. Equation \eqref{eqn:1} implies that for all but $\gamma$ fraction of the coordinates in the first $r$ coordinates satisfy 
\begin{eqnarray}\label{eqn:individual}
\hat{x}_i \in [x_i - C' \sqrt{n}, x_i + C' \sqrt{n}].
\end{eqnarray}
Let $\calG_i$ be the event that Equation \ref{eqn:individual} holds. The probability that both $\mathcal{E}$ and $\calG$ holds is at least $1-2\gamma$. 
For a particular $i$, if $\mathcal{F}_i$ and $\calG_i$ occurs, then $|x_i|^2 > (C')^2 n$ and hence $\hat{x}_i$ can compute the majority.  
Hence, for a uniformly random $i\in[r]$, the probability that either $\mathcal{E}$ does not hold, $\calG$ does not hold, $\mathcal{F}_i$ does not hold, or $\calG_i$ doesn't not hold, is at most $1/\sqrt{r} + 3\gamma$, which we can assume to be an arbitrarily small constant if $r$ is a sufficiently large constant. 
It follows by Corollary \ref{cor:tcoinsfixed} that any algorithm for the {\sf Sparse Recovery Problem} uses $\Omega(\frac{r \log n}{k})$  bits of memory. 

We note that we can assume $r$ is a sufficiently large constant, since otherwise we just seek an $\Omega \left ( \frac{\log n}{k} \right )$ bit lower bound. This can be obtained by instead reducing from Corollary \ref{cor:singlemaj}, and copying the single coin $r$ times. We again create a $3r$-dimensional vector. The only difference in the proof is that to bound $\|x\|_2^2$, we instead use the fact used in the proof of Theorem \ref{thm:counterF} that a single counter will be at most $C \sqrt{n}$ in absolute value with probability $1-\gamma$ if we choose $C > 0$ to be a sufficiently large constant. 
\end{proof}

\section{Multi-Pass Lower Bound for the Needle Problem}\label{sec:mainlemma}
In the needle problem, to evaluate the accuracy of a streaming algorithm $\calA$, we define its \textit{advantage} on distinguishing $ \boldsymbol{D_0}$ and $ \boldsymbol{D_1}$ by
\[
\Adv_{\calA}( \boldsymbol{D_0},  \boldsymbol{D_1}) := \left|\Pr[\calA( \boldsymbol{D_0})=1]-\Pr[\calA( \boldsymbol{D_1})=1]\right|. 
\]
This notion of advantage is widely used in cryptography and complexity. 
To simplify calculations, we also use  the following definition called \textit{error rate}: 
\[
Err_{\calA}( \boldsymbol{D_0},  \boldsymbol{D_1}) := 1 - \Adv_{\calA}( \boldsymbol{D_0},  \boldsymbol{D_1})= 1-\left|\Pr[\calA( \boldsymbol{D_0})=1]+\Pr[\calA( \boldsymbol{D_1})=0]-1\right|.
\]
Without loss of generality, we assume that $\Pr[\calA( \boldsymbol{D_0})=1]+\Pr[\calA( \boldsymbol{D_1})=0]\leq 1$. Otherwise, we can simply flip the output of $\calA$, which makes the advantage unchanged. In this case, we have that,
\[
Err_{\calA}( \boldsymbol{D_0},  \boldsymbol{D_1}) = \Pr[\calA( \boldsymbol{D_0})=1]+\Pr[\calA( \boldsymbol{D_1})=0].
\]

To simplify the proofs, we assume $\sfM$ does not do any operations during the last pass in this section. We note that the restriction does not affect our main theorem since for any given algorithm that distinguishes $\boldsymbol{D_0}$ and $\boldsymbol{D_1}$, we can construct another restricted algorithm that also distinguishes between $\boldsymbol{D_0}$ and $\boldsymbol{D_1}$. To be more specific, given any $k$-passes and $s$-space bounded streaming algorithm $\calA$ with $Err_{\calA}(\boldsymbol{D}_0,\boldsymbol{D}_1)\leq \epsilon$, we can easily construct an $(k+1)$-passes and $s$-space bounded streaming algorithm $\calA'$ with $Err_{\calA'}(\boldsymbol{D}_0,\boldsymbol{D}_1)\leq \epsilon$ as follows: 
\begin{enumerate}
    \item $\calA'$ simulates $\calA$ in the first $k$ passes.
    \item $\calA'$ does not do any operation during the $(k+1)$th pass. 
\end{enumerate}
It is easy to verify that $Err_{\calA'}(\boldsymbol{D}_0,\boldsymbol{D}_1) = Err_{\calA}(\boldsymbol{D}_0,\boldsymbol{D}_1)$, and if the $(k+1)sp^2n=\Omega(1)$ trade-off holds for $\calA'$, it also holds for $\calA$. 

In this section, we establish a lower bound of information complexity for algorithms with small errors.
\multilower*
Notice that we use $\mic(\calA)$ as an abbreviation for $\mic(\calA,\boldsymbol{D}_0)$ in this section. Then, theorem \ref{thm:streaming_lower_bound} follows directly by Lemma \ref{lem:Multi-passupperbound} and Lemma \ref{thm:lowerbound}: 
\[
2 k  sn\geq \mic(\calA)\geq \Omega(1/p^2).
\]

\subsection{Proof Sketch}\label{sec:intromain}
Since the proof of Lemma \ref{thm:lowerbound} is technical, we first give an informal sketch. Recall that in the needle problem, the algorithm $\calA$ aims to distinguish the following two distributions.

\begin{itemize}
    \item $\boldsymbol{D}_0$: Each of the $n$ samples is uniformly sampled from the domain $[t]$.
    \item $\boldsymbol{D_1}$: First uniformly sample a needle $\alpha \in [t]$. Then each element is sampled independently with probability $p$ equals $\alpha$, and otherwise uniformly sampled from $[t]$. 
\end{itemize}
\paragraph{Average Information Analysis: Decomposition-and-Reduction}Our novel approach (the average information analysis) has two steps: a decomposition and a reduction. Concretely, we first decompose the distribution $\boldsymbol{D_1}$ to a linear combination of many \textit{local needle distributions}. Then, we analysis the local information costs related to those local needle distributions by a reduction to the communication complexity problem called the  \textit{MostlyEq} problem, and we finally conclude our main lemma by taking the average on all those local information costs. This is different from previous proofs such as \cite{alon1996space,Andoni08,ToC16, lovett2023streaming}. Previous approaches were built on direct reductions to the unique disjointness problem. The decomposition is a crucial step allowing us to save the reduction loss.

\paragraph{\textsf{MostlyEq} Problem.}We first introduce the communication problem involved in the reduction step. After introducing this problem, the intuition of the decomposition step will also be more clear.
\begin{definition}[$m$-party \textsf{MostlyEq} problem]\label{def:mostlyeq}
    There are $m$ parties in the communication problem, where the $i$-th party holds an integer $z_i\in[t]$. We promise that $(z_1,\dots,z_m)$ are sampled from either of the following distributions : 
    \begin{enumerate}
        \item Uniform distribution (denoted by $\boldsymbol{P}_{U}$): each $z_i$ is sampled from $[t]$ independently and uniformly.  
        \item Mostly equal distribution (denoted by $\boldsymbol{P}_{Eq}$): first uniformly sample an element $\alpha$ (needle) from $[t]$. Then each $z_i$ independently with probability $1/2$ equals $\alpha$, and uniform otherwise. 
    \end{enumerate}
The goal of the players is to distinguish which case it is.
\end{definition}
\noindent
 The following lemma established the information complexity lower bound for \textsf{MostlyEq}.
\begin{restatable}[]{lemma}{MostlyEq}\label{lem:ICresultforCCproblem}
For any communication protocol $\Pi$ that solves the $m$-party, where $m\leq t/100$, \textsf{MostlyEq} problem with failure probability smaller than $0.1$, we have that,
\begin{align*}
\I\bigg(\Pi(\boldsymbol{P}_U);\boldsymbol{P}_U\bigg) =\Omega(1),
\end{align*}
In other words, the information complexity of $\Pi$ is $\Omega(1)$.
\end{restatable}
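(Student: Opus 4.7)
The plan is to reduce the information-theoretic lower bound to standard tools: Fano's inequality, followed by a transfer argument between the two hypotheses.

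First, I will introduce a uniformly random hypothesis indicator $W \in \{U, Eq\}$ and let $Z$ be drawn from $\boldsymbol{P}_U$ if $W = U$ and from $\boldsymbol{P}_{Eq}$ if $W = Eq$. Since the protocol succeeds with probability at least $0.9$, Fano's inequality applied to this binary test gives $\I(\Pi; W) \geq 1 - \Ent(0.1) = \Omega(1)$. Because $\Pi$ is a randomized function of $Z$ alone, data processing yields $\I(\Pi; W) \leq \I(\Pi; Z)_{\mathrm{mix}}$, where the marginal distribution on $Z$ is $\tfrac{1}{2}\boldsymbol{P}_U + \tfrac{1}{2}\boldsymbol{P}_{Eq}$. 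Expanding $\I(\Pi; W, Z)$ two ways via the chain rule and using $\I(\Pi; W \mid Z) = 0$ yields the identity
\[
\I(\Pi; Z)_{\mathrm{mix}} \;=\; \I(\Pi; W) + \tfrac{1}{2}\,\I(\Pi; Z)_U + \tfrac{1}{2}\,\I(\Pi; Z)_{Eq}.
\]
This alone does not isolate $\I(\Pi; Z)_U$ from $\I(\Pi; Z)_{Eq}$, so an additional step is required.

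To isolate the information under $\boldsymbol{P}_U$, I plan to exploit the key symmetry of \textsf{MostlyEq}: the single-coordinate marginals of $\boldsymbol{P}_U$ and $\boldsymbol{P}_{Eq}$ are both uniform on $[t]$, and $\boldsymbol{P}_{Eq}$ decomposes as a mixture over the needle $A \sim U([t])$ whose conditional slices $\boldsymbol{P}_{Eq}^\alpha$ are the product distributions $q_\alpha^{\otimes m}$, where $q_\alpha = \tfrac{1}{2}\delta_\alpha + \tfrac{1}{2}u$ and $u$ is the uniform distribution on $[t]$. By convexity of total variation together with a Markov argument on $\bbE_\alpha\, \mathrm{TV}(\Pi(\boldsymbol{P}_U), \Pi(\boldsymbol{P}_{Eq}^\alpha)) \geq 0.8$, the protocol distinguishes $\boldsymbol{P}_U$ from $\boldsymbol{P}_{Eq}^\alpha$ with constant advantage for a constant fraction of $\alpha \in [t]$. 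On each such slice the product form enables a direct-sum argument in the spirit of Bar-Yossef, Jayram, Kumar and Sivakumar, working per coordinate; combined with the shared uniform single-coordinate marginal, this transfers per-coordinate information bounds from the $\boldsymbol{P}_{Eq}^\alpha$ regime back to the $\boldsymbol{P}_U$ regime and aggregates to $\I(\Pi; Z)_U = \Omega(1)$.

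The hardest step will be this transfer. A naive data-processing bound via $\mathrm{KL}(\boldsymbol{P}_{Eq}^\alpha \,\|\, \boldsymbol{P}_U) = m \cdot \mathrm{KL}(q_\alpha \,\|\, u) = \Theta(m \log t)$ is far too loose to give a constant lower bound on $\I(\Pi; Z)_U$. The correct approach has to operate at the single-coordinate scale, using Hellinger-style quantities (the per-coordinate squared Hellinger $\mathrm{H}^2(q_\alpha, u) = \Theta(1)$, which tensorizes well) together with the explicit convex decomposition $q_\alpha = \tfrac{1}{2}\delta_\alpha + \tfrac{1}{2} u$. The hypothesis $m \leq t/100$ enters here to ensure that accidental collisions under $\boldsymbol{P}_U$ are rare enough not to drown out the signal produced by the needle.
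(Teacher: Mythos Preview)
Your high-level architecture partly matches the paper: both decompose $\boldsymbol{P}_{Eq}$ over the needle $\alpha$ and argue by Markov that the protocol distinguishes $\boldsymbol{P}_U$ from $\boldsymbol{P}_{Eq}^\alpha$ for a constant fraction of $\alpha$. But the Fano opener is a detour: as you yourself note, the identity $\I(\Pi;Z)_{\mathrm{mix}} = \I(\Pi;W) + \tfrac{1}{2}\I(\Pi;Z)_U + \tfrac{1}{2}\I(\Pi;Z)_{Eq}$ gives no lower bound on $\I(\Pi;Z)_U$, so that whole paragraph can be dropped without loss.

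The genuine gap is in the step you flag as hardest. You say you will use ``BJKS-style Hellinger'' and the convex decomposition $q_\alpha = \tfrac{1}{2}\delta_\alpha + \tfrac{1}{2}u$ to ``transfer per-coordinate information bounds from the $\boldsymbol{P}_{Eq}^\alpha$ regime back to the $\boldsymbol{P}_U$ regime,'' but there is no mechanism given. Knowing $H^2(q_\alpha,u)=\Theta(1)$ or that $\Pi$ separates $q_\alpha^{\otimes m}$ from $u^{\otimes m}$ does not by itself say $\I(\Pi;Z)$ is large \emph{when $Z\sim u^{\otimes m}$}; Hellinger/cut-and-paste lemmas bound distances between transcript laws, not mutual information under a designated null distribution. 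Some concrete device is needed to turn ``$\Pi$ notices whether $z_i=\alpha$'' into ``$\Pi$ leaks bits about uniform $z_i$.''

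The paper supplies exactly that device, and it is quite different from what you outline. It \emph{resamples} the uniform input through auxiliary independent bits $\{\boldsymbol{X}_{i,j},\boldsymbol{Y}_{i,j}\}_{i\in[m],j\in[t]}$ so that conditioning on $\boldsymbol{X}_{i,j}=\boldsymbol{Y}_{i,j}=1$ forces $z_i=j$ with constant probability (with $\beta$ tuned so this probability is $1/2$), while conditioning on $\boldsymbol{X}_{i,j}=0$ leaves $z_i$ near uniform. Since $(\widetilde{\boldsymbol{X}},\widetilde{\boldsymbol{Y}})\to Z\to\Pi$ is a Markov chain and the columns $(\widetilde{\boldsymbol{X}}_j,\widetilde{\boldsymbol{Y}}_j)$ are independent across $j$, data processing plus superadditivity give $\I(\Pi;\boldsymbol{P}_U)\ge \sum_{j=1}^t \I(\Pi;\widetilde{\boldsymbol{X}}_j,\widetilde{\boldsymbol{Y}}_j)$. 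For each good $j$, one then conditions on $\widetilde{\boldsymbol{X}}_j=e_i$ (probability $\Theta(1/t)$) and recognizes the remaining problem over $\boldsymbol{Y}_{i,j}$ as an embedded $\text{AND}_m$ instance on inputs $\boldsymbol{U}_i$; the Gronemeier/Jayram lower bound for $\text{AND}_m$ yields $\I(\Pi;\widetilde{\boldsymbol{X}}_j,\widetilde{\boldsymbol{Y}}_j)=\Omega(1/t)$. Summing over $\Omega(t)$ good $j$'s gives $\Omega(1)$. The hypothesis $m\le t/100$ is used to bound $\|\boldsymbol{P}_U-\boldsymbol{P}_U'\|_{TV}\le m/t\le 0.01$ when conditioning on $\boldsymbol{X}_{i,j}=\boldsymbol{Y}_{i,j}=0$, so the reduced $\text{AND}_m$ protocol still has error at most $0.2$. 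None of this appears in your proposal; the auxiliary-variable resampling and the explicit $\text{AND}_m$ reduction are the missing ideas.
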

\noindent
Here, we define the failure probability of the protocol $\Pi$ as $$Err_{\Pi}(\boldsymbol{P}_U,\boldsymbol{P}_{Eq}):=\Pr[\Pi(\boldsymbol{P}_U)=1]+\Pr[\Pi(\boldsymbol{P}_{Eq})=0].$$
Notice that the \textsf{MostlyEq} problem is similar to the AND function. The main difference is that the inputs $z_1,\dots,z_m$ are integers in \textsf{MostlyEq}. The proof of this lemma builds on the information complexity lower bounds for \text{AND} function which we have introduced before. We defer the proof to Appendix \ref{sec:ICforMostlyEq}.


Now, we describe our decomposition step for the needle distribution $\boldsymbol{D_1}$. Notice that the only difference between $\boldsymbol{D_1}$ and $\boldsymbol{D}_0$ is that $\boldsymbol{D_1}$ contains many needles in random positions. To explain our decomposition, we consider an alternative sampling process for $\boldsymbol{D_1}$.

\begin{framed}
\begin{enumerate}
    \item Sample a set $S\subseteq[n]$ with each element $j\in[n]$ contained in $S$ independently with probability $2p$. 
    \item Uniformly sample a needle $\alpha \in[t]$. 
    \item For each $j\notin S$, the $j$-th streaming sample is uniformly random. 
    \item For each $j\in S$, the $j$-th streaming sample equals to $\alpha$ with probability $1/2$ and uniformly random otherwise.
\end{enumerate}
\end{framed}
In this new sampling process for $\boldsymbol{D}_1$, we first sample a set $S$ containing candidate positions of needles. For each $S$, we use $\boldsymbol{D}^S$ to denote the distribution $(\boldsymbol{D_1}\mid_S)$, i.e., the needle distribution condition on the sampled set in step 1 equals $S$, and we call it the local needle distribution. Now we can decompose $\boldsymbol{D_1}$ as
\[
\boldsymbol{D_1} = \sum a_S\cdot \boldsymbol{D}^S,
\]
where $a_S = (2p)^{|S|}(1-2p)^{t-|S|}$. Here, the summation over distribution denotes the linear combination of distributions $\boldsymbol{D}^S$ with coefficients $a_S$. Furthermore, We can similarly decompose the error of an algorithm $\calA$, i.e,
\begin{align*}
Err_{\mathcal{A}}(\boldsymbol{D}_0, \boldsymbol{D_1}) &= \Pr[\mathcal{A}(\boldsymbol{D}_0)=1]+\Pr[\mathcal{A}(\boldsymbol{D_1})=0] \\
&= \sum_{S} a_{S}\cdot \Pr[\mathcal{A}(\boldsymbol{D}_0)=1]+\sum_{S} a_{S}\cdot \Pr[\mathcal{A}(\boldsymbol{D}^S)=0]\\
&=\sum_{S} a_{S}\cdot Err_{\calA}(\boldsymbol{D}_0,\boldsymbol{D}^S). 
\end{align*}

We then prove Lemma \ref{thm:lowerbound} by two steps. Let $\calA$ be an algorithm that distinguishes $\boldsymbol{D}_0$ and $\boldsymbol{D_1}$ with $Err_{\calA}(\boldsymbol{D}_0,\boldsymbol{D_1})\leq 0.002$,
\begin{itemize}
    \item We show that $Err_{\calA}(\boldsymbol{D}_0,\boldsymbol{D}^S)\leq 0.1$ for an $\Omega(1)$ fraction of sets $S$, and we call the set $S$ with $Err_{\calA}(\boldsymbol{D}_0,\boldsymbol{D}^S)\leq 0.1$ a \textit{good set}. 
    \item For each good set $S$, we show it contributes a good amount of information cost to $\mic(\calA)$. (Formalized by Lemma \ref{lem:sup} below). 
\end{itemize}

Step 1 follows by a simple average argument. Step 2 is more complicated, and we prove it by a reduction to \textsf{MostlyEq}. 

\begin{restatable}[]{lemma}{lemsup}
\label{lem:sup}
Given $S=\{p_1,\cdots,p_m\}\subseteq [n]$, if algorithm $\calA$ distinguishes $\boldsymbol{D}_{0}$ and $\boldsymbol{D}^S$ with error rate 
\[
    Err_{\calA}(\boldsymbol{D}_{0},\boldsymbol{D}^S)\leq 0.1,
\] 
then 
\[
\sum_{i=1}^{k-1}\sum_{j=1}^m\sum_{{\ell}=1}^j \I(\sfM_{(i,p_{j+1}-1)};X_{p_{\ell}}\mid \sfM_{ (\leq i,p_{{\ell}}-1)},\sfM_{( \leq i-1,p_{j+1}-1)})=\Omega(1),
\]
Specially, we define $p_{m+1}$ as $p_1$ here. 
\end{restatable}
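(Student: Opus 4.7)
The plan is to reduce the assumption on $\calA$ to the $m$-party \textsf{MostlyEq} lower bound from Lemma~\ref{lem:ICresultforCCproblem}, and then unfold the resulting external information cost into the sum that appears in the statement.

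First I would construct a communication protocol $\Pi$ for $m$-party \textsf{MostlyEq} from $\calA$. The $m$ players share public randomness $R$ that samples a uniform symbol in $[t]$ at every coordinate of $[n]\setminus S$, and player~$\ell$ identifies its input $z_\ell$ with the stream coordinate $X_{p_\ell}$. The protocol runs for $k-1$ rounds, one per non-trivial pass of $\calA$. In round~$i$ the players speak in cyclic order: player~$\ell$ receives $\sfM_{(i,p_\ell-1)}$ from the previous speaker, feeds $X_{p_\ell}=z_\ell$ and then the uniform symbols at positions $p_\ell+1,\ldots,p_{\ell+1}-1$ (drawn from $R$) into $\calA$'s transition rule, and forwards the updated state $\sfM_{(i,p_{\ell+1}-1)}$ to player~$\ell+1$; the cyclic convention $p_{m+1}=p_1$ encodes that player~$m$ passes the state through the end of pass~$i$ into the beginning of pass~$i+1$. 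By construction the simulated stream is distributed as $\boldsymbol{D}_0$ when $(z_1,\ldots,z_m)\sim \boldsymbol{P}_U$ and as $\boldsymbol{D}^S$ when $(z_1,\ldots,z_m)\sim \boldsymbol{P}_{Eq}$, so the hypothesis $Err_\calA(\boldsymbol{D}_0,\boldsymbol{D}^S)\le 0.1$ translates into $Err_\Pi(\boldsymbol{P}_U,\boldsymbol{P}_{Eq})\le 0.1$. Lemma~\ref{lem:ICresultforCCproblem} then yields
\[
\I\!\left(\Pi;\,X_{p_1},\ldots,X_{p_m}\right)\;=\;\Omega(1),
\]
with inputs drawn uniformly, where the transcript $\Pi$ is the ordered list of memory states $\{\sfM_{(i,p_{j+1}-1)}\}_{i\in[k-1],\,j\in[m]}$.

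The second step is to bound this mutual information from above by the sum in the lemma. I would expand $\I(\Pi;X_{p_1},\ldots,X_{p_m})$ by two chain rules, first over the messages in the order they are sent and then over the inputs, obtaining
\[
\I(\Pi;X_{p_1},\ldots,X_{p_m}) \;=\; \sum_{i=1}^{k-1}\sum_{j=1}^{m}\sum_{\ell=1}^{m}\I\!\bigl(\sfM_{(i,p_{j+1}-1)};\,X_{p_\ell}\,\bigm|\,X_{p_{<\ell}},\,\Pi_{<(i,j)}\bigr),
\]
where $\Pi_{<(i,j)}$ is the transcript prefix preceding party~$j$'s message in round~$i$. For $\ell>j$ the summand vanishes: the state $\sfM_{(i,p_{j+1}-1)}$ has not yet read $X_{p_\ell}$ in pass~$i$, and its dependence on $X_{p_\ell}$ from earlier passes is already captured by $\sfM_{(\leq i-1,p_{j+1}-1)}\subset\Pi_{<(i,j)}$; this is an instance of Claim~\ref{claim:micindependence} applied to the windows $[p_\ell,p_{j+1}-1]$ and $[p_{j+1},n]$.

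For $\ell\le j$ the remaining step is to absorb the conditioning on $X_{p_{<\ell}}$ and the extra messages inside $\Pi_{<(i,j)}$ into the two specific memory tuples that appear in the lemma. Both $\sfM_{(\leq i,p_\ell-1)}$ (the messages of party~$\ell-1$ across all rounds up to the current one) and $\sfM_{(\leq i-1,p_{j+1}-1)}$ (the messages of party~$j$ in earlier rounds) are sub-tuples of $\Pi_{<(i,j)}$. Under the product distribution $\boldsymbol{D}_0$, Claim~\ref{claim:micindependence} decouples $X_{[p_\ell,p_{j+1}-1]}$ from $X_{[1,p_\ell-1]}\cup X_{[p_{j+1},n]}$ given the pair $(\sfM_{(\leq i,p_\ell-1)},\sfM_{(\leq i-1,p_{j+1}-1)})$, and Properties~\ref{itemmi2} and~\ref{itemmi3} of mutual information from Section~\ref{sec:prelim} allow this independence to be pushed into each chain-rule summand, bounding it above by
\[
\I\!\bigl(\sfM_{(i,p_{j+1}-1)};\,X_{p_\ell}\,\bigm|\,\sfM_{(\leq i,p_\ell-1)},\,\sfM_{(\leq i-1,p_{j+1}-1)}\bigr).
\]
Summing these inequalities gives $\sum_{i,j,\ell\le j}\I\bigl(\sfM_{(i,p_{j+1}-1)};X_{p_\ell}\mid \sfM_{(\leq i,p_\ell-1)},\sfM_{(\leq i-1,p_{j+1}-1)}\bigr)\ge \I(\Pi;X_{p_1},\ldots,X_{p_m})=\Omega(1)$, which is the conclusion.

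The main obstacle is this last manipulation: carefully propagating the stream's independence structure so as to collapse the full transcript-prefix conditioning onto the two memory-state tuples required by the lemma, without losing the inequality. The cyclic index convention $p_{m+1}=p_1$ — which encodes a transition between consecutive passes inside a single round — adds a little bookkeeping at the $j=m$ boundary but introduces no new conceptual difficulty; the rest follows routine chain-rule manipulations analogous to those used in Section~\ref{sec:upper} to prove the upper bound on $\mic$.
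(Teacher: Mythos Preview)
Your reduction to \textsf{MostlyEq} via the simulating protocol $\Pi$ is exactly the paper's approach, and applying Lemma~\ref{lem:ICresultforCCproblem} to obtain $\I(\Pi;\widetilde X)=\Omega(1)$ (with $\widetilde X=(X_{p_1},\dots,X_{p_m})$) is the right move.

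Where your plan diverges from the paper is in the chain-rule unfolding, and there you have a genuine gap. After your double chain rule you want, for each $\ell\le j$, the term-by-term inequality
\[
\I\bigl(\sfM_{(i,p_{j+1}-1)};X_{p_\ell}\bigm|X_{p_{<\ell}},\Pi_{<(i,j)}\bigr)\;\le\;\I\bigl(\sfM_{(i,p_{j+1}-1)};X_{p_\ell}\bigm|\sfM_{(\leq i,p_\ell-1)},\sfM_{(\leq i-1,p_{j+1}-1)}\bigr).
\]
To drop the extra conditioning via Property~\ref{itemmi2} you would need that, given $(\sfM_{(\leq i,p_\ell-1)},\sfM_{(\leq i-1,p_{j+1}-1)},X_{p_\ell})$, the remainder of $\Pi_{<(i,j)}$ is independent of $\sfM_{(i,p_{j+1}-1)}$. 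But for $\ell<j$ the transcript prefix $\Pi_{<(i,j)}$ contains the round-$i$ messages $\sfM_{(i,p_{\ell'}-1)}$ for $\ell<\ell'\le j$; these states sit \emph{inside} the window $[p_\ell,p_{j+1}-1]$ and are functions of the same block of inputs and randomness that determines $\sfM_{(i,p_{j+1}-1)}$. Claim~\ref{claim:micindependence} gives you nothing here, and the required conditional independence simply does not hold in general, so Property~\ref{itemmi2} does not apply.

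The paper avoids this by doing only a single chain rule over messages and then, for each $(i,j)$, using the independence structure (Claim~\ref{claim:block}) to collapse $\I(\sfM_{(i,p_{j+1}-1)};\widetilde X\mid\text{previous messages})$ down to the single input $X_{p_j}$: given the ``neighbor'' states $(\sfM_{(\leq i,p_j-1)},\sfM_{(\leq i-1,p_{j+1}-1)})$ and $X_{p_j}$, the message $\sfM_{(i,p_{j+1}-1)}$ is independent of all other $X_{p_{\ell'}}$ \emph{and} of all non-neighbor transcript entries. With only the $\ell=j$ term retained, the window is $[p_j,p_{j+1}-1]$, every leftover transcript entry lies outside it, and Property~\ref{itemmi2} cleanly removes them. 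Since the lemma's sum runs over all $\ell\le j$, the diagonal $\ell=j$ terms already suffice to dominate $\I(\Pi;\widetilde X)$; your plan is easily repaired by making this reduction first rather than trying to control each $\ell<j$ term separately.
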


\noindent For each set $S$, we denote the \textit{information cost contributed by $S$} as,
\[
\mic^{S}:=\sum_{i=1}^{k-1}\sum_{j=1}^m\sum_{{\ell}=1}^j \I(\sfM_{(i,p_{j+1}-1)};X_{p_{\ell}}\mid \sfM_{( \leq i,p_{\ell }-1)},\sfM_{ (\leq i-1,p_{j+1}-1)}).
\]

Intuitively, the information cost $\mic^{S}$ measures the mutual entropy of $X_{p_{\ell}}$ and $\sfM_{(i,p_{j+1}-1)}$. Here $X_{p_{\ell}}$ is a potential needle in $\boldsymbol{D}^S$ and $\sfM_{(i,p_{j+1}-1)}$ is the memory transcript before receiving another potential needle $p_{j+1}$. If $\calA$ distinguishes $\boldsymbol{D}^S$ from $\boldsymbol{D}_{0}$, then $\sfM_{(i,p_{j+1}-1)}$ must preserve some information about $p_1,\dots,p_{j}$, otherwise it would not catch that $p_{j+1}$ is a needle. We give a formal proof in Section \ref{Lastsubsec}. 

Now we are ready to prove Lemma \ref{thm:lowerbound} by an average information analysis. 

\subsection{Proof of Lemma \ref{thm:lowerbound}} \label{subSec:MAIN}

\begin{proof}[\textbf{Proof of Lemma \ref{thm:lowerbound}}]

Recall that $\boldsymbol{D_1} = \sum_S a_S \boldsymbol{D}^S$. We then have that,
\[
\mathbb{E}_S[\mic^S] = \sum_{S} a_S\cdot \mic^S\geq \sum_{S\text{ is good}}a_S \cdot \mic^S
\]
By Lemma \ref{lem:sup}, we know that $\mic^S=\Omega(1)$ for every good $S$, hence  
\[
\mathbb{E}_S[\mic^S] = \Omega\left(\Pr_{S}\left[S\text{ is good}\right]\right)
\]
Recall that a set $S$ is good if $Err_{\calA}(\boldsymbol{D}_0,\boldsymbol{D}^S)\leq 0.1$. Since $\calA$ has a small error on distinguishing $\boldsymbol{D}_0$ and $\boldsymbol{D_1}$, we have that,
\[
\mathbb{E}_{S}[{Err_{\calA}(\boldsymbol{D}_0,\boldsymbol{D}^S)}]=Err_{\calA}(\boldsymbol{D}_0,\boldsymbol{D_1})\leq 0.01.
\]
Then by a Markov's inequality, 
\[
\Pr_S[S\text{ is not good}]\leq \frac{\mathbb{E}[{Err_{\calA}(\boldsymbol{D}_0,\boldsymbol{D}^S)}]}{0.1} \leq 1/2,
\]
Therefore we show that $\Pr_S[S\text{ is good}] \geq 1/2$ and $\mathbb{E}_S[\mic^S] = \Omega(1)$ as a consequence.
We then build connections between $\mathbb{E}_S[ \mic^S]$ with $ \mic(\calA)$. Recall the definition,
\begin{align*}
\mic(\calA):=&\sum_{i=1}^{k}\sum_{j=1}^n\sum_{{\ell}=1}^j \mutualent(\sfM_{(i,j)};X_{{\ell}}\mid \sfM_{(\leq i,{\ell}-1)}, \sfM_{(\leq i-1,j)})\\ &+\sum_{i=1}^{k}\sum_{j=1}^n\sum_{{\ell}=j+1}^n \mutualent(\sfM_{(i,j)};X_{{\ell}}\mid \sfM_{(\leq i-1,{\ell}-1)}, \sfM_{(\leq i-1,j)}),
\end{align*}
For a random $S$, each term $\I(\sfM_{(i,j)};X_{\ell}|\sfM_{(\leq i,{\ell}-1)},\sfM_{(\leq i-1,j)})$ with $j\geq {\ell}$ appears in $ \mic^S$ with probability exactly $4p^2$ since this happens if only if both $j+1$ and ${\ell}$ are chosen by $S$. Similarly, each term $\I(\sfM_{(i,j)};X_{\ell}|\sfM_{(\leq i-1,{\ell}-1)},\sfM_{(\leq i-1,j)})$ with $j< {\ell}$ appears in $ \mic^S$ with probability at most $4p^2$ since this happens if and only if both $j+1$ and ${\ell}$ are chosen by $S$ and $j+1$ is the smallest element in $S$. Thus, we have 
\[
\mic(\calA)\geq \mathbb{E}_S[ \mic^S]/(4p^2) \geq \Omega(1/p^2)
\]
as desired. 
\end{proof}

\subsection{Proof of Lemma \ref{lem:sup} by a Reduction to \textsf{MostlyEq}}
\label{Lastsubsec}
We first restate this lemma: 
\lemsup*
\noindent
The proof is a combination of two steps.

\begin{enumerate}
    \item\label{step:1} We first prove the following bound via some information theoretical calculations.  
    $$\sum_{i=1}^{k-1}\sum_{j=1}^m\sum_{{\ell}=1}^j \I(\sfM_{(i,p_{j+1}-1)};X_{p_{\ell}}\mid \sfM_{ (\leq i,p_{{\ell}}-1)},\sfM_{ (\leq i-1,p_{j+1}-1)})\geq \I(\widetilde{\sfM};\widetilde{X}),$$
    here $\widetilde{\sfM}$ and $\widetilde{X}$ are defined by $\widetilde{\sfM}:=(\sfM_{(1,p_{1}-1)},\sfM_{(1,p_2-1)},\cdots,\sfM_{(2,p_{1}-1)},\cdots,\sfM_{(k-1,p_{m}-1)},\sfM_{(k,p_1-1)})$
    and $\widetilde{X}:=(X_{p_1},X_{p_2},\cdots,X_{p_m})$.
    \item \label{step:2} We then prove $I(\widetilde{\sfM};\widetilde{X}) \geq \Omega(1)$ by building a reduction to \textsf{MostlyEq}.    
\end{enumerate}
Step \ref{step:1} follows by the this claim:

\begin{restatable}[]{claim}{lemlmx}
\label{lem:lmx}
For any $k$-pass streaming algorithm $\calA$ running on independent inputs $X_1,\cdots,X_n$, and for a fixed set $S=\{p_1,p_2,\cdots,p_m\}\subseteq [n]$, we have that,
\[
\sum_{i=1}^{k-1}\sum_{j=1}^m \mutualent(\sfM_{(i,p_{j+1}-1)};X_{p_j}\mid \sfM_{(\leq i,p_j-1)}, \sfM_{(\leq i-1,p_{j+1}-1)}) \geq \I(\widetilde{\sfM};\widetilde{X}).
\]
\end{restatable}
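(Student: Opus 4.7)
The plan is to prove the inequality via a chain-rule decomposition of $\I(\widetilde{\sfM};\widetilde{X})$, matching each resulting term with an LHS summand using conditional-independence properties of the streaming Markov chain under a product input distribution. Concretely, I will process the components of $\widetilde{\sfM}$ one at a time in their listed order and write
\[
\I(\widetilde{\sfM};\widetilde{X}) \;=\; \sum_{(i,q)} \I(\sfM_{(i,p_q-1)};\widetilde{X}\mid \widetilde{\sfM}_{<(i,q)}),
\]
where $\widetilde{\sfM}_{<(i,q)}$ denotes all components of $\widetilde{\sfM}$ listed before $\sfM_{(i,p_q-1)}$. For each pair $(i,q)$, the immediate streaming predecessor of $\sfM_{(i,p_q-1)}$ among already-processed checkpoints is $\sfM_{(i,p_{q-1}-1)}$ when $q\geq 2$ and $\sfM_{(i-1,p_m-1)}$ when $q=1$, and between this predecessor and $\sfM_{(i,p_q-1)}$ exactly one new index from $\widetilde{X}$ is consumed, namely $X_{p_{q-1}}$ under the cyclic convention $p_0=p_m$.

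The central reduction is to invoke Claim~\ref{claim:micindependence} together with the mutual independence of $X_1,\dots,X_n$ to argue that, conditioned on $\widetilde{\sfM}_{<(i,q)}$ and $X_{p_{q-1}}$, the state $\sfM_{(i,p_q-1)}$ is a function of inputs outside $\widetilde{X}$ and of fresh randomness. This collapses each chain-rule term into a single-variable mutual information,
\[
\I(\sfM_{(i,p_q-1)};\widetilde{X}\mid \widetilde{\sfM}_{<(i,q)}) \;=\; \I(\sfM_{(i,p_q-1)};X_{p_{q-1}}\mid \widetilde{\sfM}_{<(i,q)}).
\]
Setting $j=q-1$ (with $j=m$ when $q=1$), one then converts the generic conditioning set $\widetilde{\sfM}_{<(i,q)}$ into the canonical LHS conditioning $(\sfM_{(\leq i,p_j-1)},\sfM_{(\leq i-1,p_{j+1}-1)})$ using the two standard conditioning-manipulation identities for mutual information stated in the preliminaries: a variable conditionally independent of $X_{p_j}$ given the current conditioning may be added, and a variable conditionally independent of $X_{p_j}$ given $\sfM_{(i,p_{j+1}-1)}$ together with the current conditioning may be dropped, and both operations can only increase the conditional mutual information. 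Summing the resulting bounds over all $(i,q)$ yields the claimed inequality.

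The main obstacle is this final conditioning-alignment step, especially in the wrap-around case $q=1$. There the LHS conditioning $(\sfM_{(\leq i,p_m-1)},\sfM_{(\leq i-1,p_1-1)})$ contains checkpoints such as $\sfM_{(i,p_m-1)}$ that lie \emph{after} $\sfM_{(i,p_1-1)}$ in the streaming order and are therefore not part of $\widetilde{\sfM}_{<(i,1)}$. Introducing such a ``future'' conditioning variable without loss of the inequality requires a careful verification of the relevant conditional independences, all of which nonetheless reduce to the Markov structure of the algorithm together with the independence of the $X_j$'s. Once the conditioning alignment is justified, the rest of the argument is routine symbolic manipulation of mutual informations followed by the summation over $(i,q)$.
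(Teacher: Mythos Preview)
Your high-level approach is exactly the paper's: chain-rule expand $\I(\widetilde{\sfM};\widetilde{X})$ over the listed components of $\widetilde{\sfM}$, use the Markov structure of the streaming algorithm to collapse each term to a single $X_{p_j}$, and then reduce the conditioning to the pair $(\sfM_{(\le i,p_j-1)},\sfM_{(\le i-1,p_{j+1}-1)})$. The paper packages the key independence as Claim~\ref{claim:block}: conditioned on the two ``neighbor'' checkpoint stacks $\widetilde{\sfM}_{neighbor}=(\sfM_{(\le i-1,p_{j+1}-1)},\sfM_{(\le i,p_j-1)})$ together with $X_{p_j}$, the new component $\sfM_{(i,p_{j+1}-1)}$ is independent of all remaining previously-revealed checkpoints $\widetilde{\sfM}_{non\text{-}neigh}$ \emph{and} of $\widetilde{X}_{\neq j}$. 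This one statement handles both the collapse step and the conditioning-alignment step (via Property~\ref{itemmi2}) uniformly, including the wrap-around case.

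Your stated obstacle at $q=1$ is an artifact of an off-by-one in the pass index. The chain-rule component $\sfM_{(i,p_1-1)}$ should be matched to the LHS summand with pass index $i-1$ and $j=m$; equivalently, in the paper's convention the symbol $\sfM_{(i',p_{m+1}-1)}$ denotes $\sfM_{(i'+1,p_1-1)}$. With that correspondence the LHS conditioning for this term is $(\sfM_{(\le i-1,p_m-1)},\sfM_{(\le i-2,p_{m+1}-1)})$, which is a \emph{subset} of $\widetilde{\sfM}_{<(i,1)}=\sfM_{(<i,p_{[m]}-1)}$; only dropping conditioning (Property~\ref{itemmi2}) is ever needed, never adding. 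Your proposed remedy---introducing the ``future'' checkpoint $\sfM_{(i,p_m-1)}$ via Property~\ref{itemmi3}---does not actually go through: the required hypothesis $\I(\sfM_{(i,p_m-1)};X_{p_m}\mid \sfM_{(<i,p_{[m]}-1)})=0$ is false in general, because $\sfM_{(i,p_m-1)}$ depends on $X_{p_m}$ through the end-of-pass state $\sfM_{(i-1,n)}$, which is not among the conditioned checkpoints. Once the index shift is corrected, the wrap-around case needs no special treatment and your argument becomes identical to the paper's.
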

\noindent 
The claim is out of a lot calculations, so we defer the proof to Appendix \ref{Apen:A}. 
With this claim, we know that:
\begin{align*}
&\sum_{i=1}^{k-1}\sum_{j=1}^m\sum_{{\ell}=1}^j \I(\sfM_{(i,p_{j+1}-1)};X_{p_{\ell}}\mid \sfM_{ (\leq i,p_{\ell}-1)},\sfM_{( \leq i-1,p_{j+1}-1)}) \\\geq& 
\sum_{i=1}^{k-1}\sum_{j=1}^m \mutualent(\sfM_{(i,p_{j+1}-1)};X_{p_j}\mid \sfM_{(\leq i,p_j-1)}, \sfM_{(\leq i-1,p_{j+1}-1)}) \geq\I(\widetilde{\sfM};\widetilde{X}).
\end{align*}
Now, we give the proof for Lemma \ref{lem:sup}, in which we show how the reduction to \textsf{MostlyEq} (Step \ref{step:2}) works.
\begin{proof}[\textbf{Proof of Lemma \ref{lem:sup}}]
As we discussed above, it suffices to prove:
\[
I(\widetilde{\sfM};\widetilde{X})\geq \Omega(1).
\]
The idea is to construct a communication protocol $\Pi$ (Algorithm \ref{algo}) that simulates the streaming algorithm $\mathcal{A}$. From the description of the protocol $\Pi$, we see it fully simulates the algorithm $\calA$. In the simulation, when the input to the protocol $\Pi$ follows $\boldsymbol{P}_U$, the input to $\calA$ follows $\boldsymbol{D}_0$; when the input to the protocol $\Pi$ follows $\boldsymbol{P}_{Eq}$, the input to $\calA$ then follows $\boldsymbol{D}^S$. Hence we have that,

   \begin{itemize}
       \item $Err_{\Pi}(\boldsymbol{P}_U,\boldsymbol{P}_{Eq}) = Err_{\mathcal{A}}(\boldsymbol{D}_0,\boldsymbol{D}^{S})$.
       \item $\I(\widetilde{\sfM};\widetilde{X}) = \I\big(\Pi(\boldsymbol{P}_U);\boldsymbol{P}_U\big).$
   \end{itemize}
Here, the two statements come directly from the simulating process and the restriction which indicates  the memory of $\calA$ retains the same at the last pass. 
We then conclude the proof by Lemma \ref{lem:ICresultforCCproblem}.

\begin{algorithm}[H]\label{algo}
        \caption{Communication Protocol For $\textsf{MostlyEq}$}
        \LinesNumbered
        \KwIn{$\boldsymbol{z}\in [t]^{m}$}
        \KwOut{$\text{ans}\in \{0,1\}$}
        \textbf{Recall:} $S=\{p_1,\dots,p_{m}$\} \;
         \For{player $j$ from $1$ to $m$}{
             let $X_{p_j}=z_j$\; 
             uniformly sample $X_{p_j+1},\cdots,X_{p_{j+1}-1}$  from $[t]$\;
            }
        Player $m$ simulates $\calA_{(1,p_1-1)} = \calA(X_1,\dots,X_{p_1-1})$ and sends $\calA_{(1,p_1-1)}$ to Player $1$\;
        \For{$i$ from $1$ to $k-1$}
            {
            \For{$j$ from $1$ to $m$}{
                Player $j$ simulates $\calA_{(i,p_{j+1}-1)} = \calA(\calA_{i,p_j-1},X_{p_j},\dots, X_{p_{j+1}-1})$\;
                Player $j$ sends  $\calA_{(i,p_{j+1}-1)}$ to Player $j+1$ (send to Player $1$ when $j=m$)\;
            }
        }
        \textbf{return the output of Player $m$};  
     \qedhere
    \end{algorithm}
\end{proof}

\section{Upper Bounds for the Needle Problem} \label{sec:upper_bound}

\noindent
We give the two following algorithms, giving tight upper bounds for the needle problem in a large range of parameters, and near-tight upper bounds for other ranges. Our algorithms significantly improve the algorithm for the needle problem implied by the algorithm of \cite{braverman2014optimal}. 
\begin{enumerate}
    \item In the one-way $n$-player communication game, we give an algorithm $\calA_1$ which solves the needle problem for $p\geq \frac{1}{\sqrt{n}}$ using  $O((\log \log n) (\log\log\log n))$ bits of space. 
    \item We propose a general one-pass streaming algorithm $\calA_2$ which works in both the $n$-player communication model as well as the standard streaming model, which solves the needle problem for any $p\leq \tiny{\frac{1}{\sqrt{n\log^{3} n}}}$ and using $O(\frac{1}{p^2n})$ bits of space. 
\end{enumerate}
Both algorithms distinguish between $\boldsymbol{D}_0$ and $\boldsymbol{D}_1$ with error probability less than an arbitrarily small constant. 

$\calA_1$ shows that our analysis and our lower bound results are tight up to a $\log \log n\cdot \log\log\log n$ factor for the communication game we consider throughout this paper, for the important case when $p=1/\sqrt{n}$. As mentioned in the introduction, this result gives a separation between the coin problem and the needle problem in the $n$-player one-way communication model, and also shows that the needle problem cannot be used to derive an $\Omega(\log n)$ bit lower bound for  $F_2$-estimation in the insertion-only model. We note that an $\Omega(\log n)$ lower bound does hold for $F_2$-estimation in the insertion-only streaming model based on a reduction from the Equality problem \cite{alon1996space}. This lower bound is not an information cost lower bound, and so cannot be used in direct sum arguments, and also does not hold if one does not charge the streaming algorithm to store randomness (i.e., in a the random oracle model). We note that since for $p = 1/\sqrt{n}$ we work in a communication model, each player ``knows its name", that is, this corresponds to a streaming model with a clock, so that given the $i$-th element in the stream, we know the number $i$. In such a model we obtain a true streaming algorithm rather than only a communication protocol. 

$\calA_2$ shows that our lower bound is optimal for the needle problem even in the streaming model of computation, whenever $p\leq \frac{1}{\sqrt{n\log^3 n}}$. Compared to previous upper bounds, this shows that our lower bound is tight in a larger range, improving upon the previous constraint that $p\leq n^{-2/3}$, as implied by the frequency estimation algorithm of \cite{braverman2014optimal}. One could instead use so called $\ell_2$-heavy hitter algorithms, the most efficient of which are given in \cite{BCIW16,BCINWW17}, but they would be suboptimal by a $\log n$ factor. 

\subsection{Algorithm for $p\geq 1/\sqrt{n}$}\label{sec:algo1}
We first design an algorithm $\calA_1$ for $p=1/\sqrt{n}$. For general $p\geq 1/\sqrt{n}$, we only consider the first $1/p^2$ data in the data stream, and apply the algorithm $\calA_1$ to this sub-stream with length $1/p^2$ and needle probability $p$. We then begin with the setup for $\calA_1$. We define the two following random hash functions: 
\begin{itemize}
    \item $h_1: \left[\sqrt{n}\right] \rightarrow {[t]\choose C_1 t/\sqrt{n}}$, 
    where ${[t]\choose C_1 t/\sqrt{n}}$ is defined by all subsets $S\subseteq [t]$ with $|S| = C_1 t/\sqrt{n}$. Here $C_1$ is a constant to be determined later.
    \item $h_2: [t]\rightarrow [C_2]$, where $C_2$ is another constant satisfying $C_2=100C_1$. 
\end{itemize}
Note that in the implementation of $\calA_1$, both $h_1,h_2$ are chosen in a random fashion, and are uniformly sampled from all valid functions. Before $\calA_1$ starts to process the data stream, it first divides the length-$n$ data stream into $\sqrt{n}$ continuous parts\footnote{$\calA_1$ is able to do this since it is in the communication model where players know their name, i.e., the $i$-th player knows the value $i$.} called \textit{data groups}, and each data group $i$ contains $\sqrt{n}$ contiguous items from the data stream. The basic memory unit for $\calA_1$ is called a \textit{counter}, which is a tuple $c=(c_1,c_2,c_3)$ consisting of three entries: (1) the \textit{hash value} $c_1$, (2) the \textit{lifespan} $c_2$, (3) the \textit{number of occurrences} $c_3$. We will further talk about the three entries in the formal definition of $\calA_1$. Then, $\calA_1$ proceeds as follows: 

\begin{enumerate}
    \item When $\calA_1$ enters a new data group $i$, it inserts $C_2$ counters into $\calA_1$'s memory (for each $j\in[C_2]$, it inserts a counter $(c_1 = j,c_2=0,c_3=0)$).
    \item When receiving a data $x_k$ in group $i$, $\calA_1$ enumerates every counter $(c_1,c_2,c_3)$ stored in memory, and checks: if $x_k$ is the first data element \footnote{To do this, we only need to add one bit for each counter to represent if this counter has been updated in the current group.} in group $i$ that (1) lies in $h_1(i-c_2)$; (2) and if it holds that $h_2(x_k)=c_1$. If both are satisfied, $\calA_1$ updates $c_3$ with $c_3+1$. 
    \item After scanning the whole data group $i$, $\calA_1$ enumerates every counter $(c_1,c_2,c_3)$ to see: if $c_3\geq c_2 /3$ or $c_2\leq 100$. If both are not satisfied, $\calA_1$ removes this counter from the memory. For the remaining counters, $\calA_1$ updates $c_2$ with $c_2+1$, and continues to process group $i+1$. 
    \item If at some point, there exists a counter $(c_1,c_2,c_3)$ with $c_2\geq 10 \log \log n$, we begin to track this counter (which could be a potential needle) in the following $10 \log n$ groups without considering other counters. If $c_3\geq c_2 /3$ no longer holds when we are tracking, we clear all the counters and continue to process the next group; otherwise, we output $1$ after processing the $10 \log n$ groups for which $c_3\geq c_2 /3$ holds throughout the tracking. 
    \item If $\calA_1$ has processed all $\sqrt{n}$ groups without outputting $1$, it outputs $0$. 
\end{enumerate}
We first upper bound the space complexity of the algorithm: 
\begin{lemma}
    $\calA_1$ uses $O((\log \log n)( \log\log\log n))$ space. 
\end{lemma}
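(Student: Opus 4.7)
The plan is to separate the analysis into the two natural phases of $\calA_1$ (pre-tracking and tracking), bound the size of a single counter in each phase, and then bound the number of simultaneously stored counters using the explicit memory-budget restart mechanism described in Section~\ref{sec:algooverview}.

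First I would bound the bit-size of an individual counter $(c_1,c_2,c_3)$. The hash value $c_1\in[C_2]$ occupies $O(1)$ bits since $C_2$ is a constant. In the pre-tracking phase the counter is either removed or triggers tracking as soon as $c_2\geq 10\log\log n$, so while the counter is alive we have $c_2<10\log\log n$, which fits in $O(\log\log\log n)$ bits. Because $c_3$ is incremented at most once per group (due to the ``first data element'' rule), we have $c_3\leq c_2$, so $c_3$ also fits in $O(\log\log\log n)$ bits. Hence every pre-tracking counter uses $O(\log\log\log n)$ bits.

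Next I would invoke the memory-budget rule stated in the overview: whenever inserting the next batch of $C_2$ counters at the beginning of a group would push the total used space above $O((\log\log n)(\log\log\log n))$ bits, $\calA_1$ discards all current counters and restarts. This rule directly caps the pre-tracking bit-usage at $O((\log\log n)(\log\log\log n))$, and dividing this budget by the per-counter cost of $O(\log\log\log n)$ bits yields at most $O(\log\log n)$ live counters at any time, matching the target bound. For the tracking phase, only a single counter is retained; its lifespan grows by at most $10\log n$ additional groups beyond the threshold $10\log\log n$, so $c_2,c_3=O(\log n)$ and each fits in $O(\log\log n)$ bits, comfortably inside the claimed bound. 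Finally, the hash functions $h_1,h_2$ are public randomness (or a random oracle in the streaming setting) and do not count against space.

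The main subtlety is not in the space bound itself, which is essentially enforced by the restart mechanism, but in the fact that invoking this mechanism is compatible with the algorithm's correctness; in particular, later lemmas must show that restarts are infrequent and that they do not wipe out the needle-tracking counter before tracking begins. That analysis, however, lies outside the scope of this lemma, which reduces to the two-phase case analysis above together with a direct size accounting of each counter's three fields.
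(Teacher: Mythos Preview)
Your per-counter size analysis in both phases is correct and matches the paper. The difference is in how you bound the \emph{number} of live counters in the pre-tracking phase. You invoke the memory-budget restart mechanism from the overview (Section~\ref{sec:algooverview}) to cap the space by fiat, deferring the real work to the correctness analysis. However, the formal description of $\calA_1$ in Section~\ref{sec:algo1} does not include this restart step, and the paper's proof does not use it either. Instead, the paper observes that in the pre-tracking phase every live counter has lifespan $c_2<10\log\log n$ (once any counter reaches $c_2\geq 10\log\log n$, tracking begins and all other counters are discarded); hence every live counter was inserted within the last $10\log\log n$ groups, and since exactly $C_2$ counters are inserted per group, at most $C_2\cdot 10\log\log n=O(\log\log n)$ counters are alive at any time.

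This direct counting argument gives the space bound deterministically for the algorithm as actually specified, with no restarts and no side condition pushed into the later error analysis. Your route would work for the informal variant sketched in the overview, but for the formal $\calA_1$ it leaves a gap; the lifespan-cap argument both closes that gap and shows the restart mechanism is in fact unnecessary.
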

\begin{proof}
    The space used by $\calA_1$ comes from the following two parts: 
    \begin{enumerate}
        \item When not tracking potential needles, $\calA_1$ uses at most $O (\log \log n )$ counters since no counter with lifespan bigger than $10\log\log n$ exists by the definition of $\calA_1$. Also, each counter $(c_1,c_2,c_3)$ uses $O(\log \log \log t)$ space since $c_1=O(1),c_3\leq c_2 = O(\log \log n)$. The space complexity is $O((\log \log n)( \log\log\log n))$. 
        \item When tracking potential needles, $\calA_1$ needs only one counter $(c_1,c_2,c_3)$ using $O(\log \log n)$ space since $c_1=O(1),c_3\leq c_2=O(\log n)$. The space complexity is also $O(\log \log n)$. 
    \end{enumerate}
    Thus, we conclude that $O(\log \log n)$ bits of space is able to implement $\calA_1$.  
\end{proof}
\noindent
Next, we bound the error rate of $\calA_1$ under the two distributions: $\boldsymbol{D}_0, \boldsymbol{D}_1$. Formally, we prove the following two lemmas: 
\begin{lemma}
    $\Pr[\calA_1(\boldsymbol{D}_0)=1]\leq 1/n $. 
\end{lemma}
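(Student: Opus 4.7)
The plan is to union bound over all counters ever created during an execution of $\calA_1$, observing that (i) the total number of counters created throughout the entire run is at most $C_2\sqrt{n}$, since $C_2$ fresh counters are inserted at the start of each of the $\sqrt{n}$ groups and a failed tracking attempt only clears counters, and (ii) the algorithm outputs $1$ only if some particular counter makes it to the end of its tracking window with $c_3 \geq c_2/3$, in which case its lifespan at the moment of output satisfies $c_2 \geq 10\log n$. It therefore suffices to bound, for a single fixed counter, the probability of the event $c_3 \geq c_2/3$ at the time $c_2 = 10\log n$.

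Fix a counter created in group $i_0$ with hash label $c_1 \in [C_2]$, and fix the hash functions $h_1,h_2$. Write $A := h_1(i_0)\cap h_2^{-1}(c_1)$. A standard Chernoff argument over the randomness of $h_2$ shows that except on an event of probability $\exp(-\Omega(C_1 t/(C_2\sqrt{n})))$, we have $|A|/t \leq 2C_1/(C_2\sqrt{n})$, and since $t \geq 100\,n$ this bad event is negligible and can be absorbed into the final union bound. Under $\boldsymbol{D}_0$ the $\sqrt{n}$ items in each group are i.i.d.\ uniform in $[t]$, and items in distinct groups are independent, so the indicators $Y_i$ that at least one item in group $i>i_0$ lies in $A$ are i.i.d.\ Bernoulli with parameter $q \leq 1-(1-|A|/t)^{\sqrt{n}} \leq 2C_1/C_2 = 1/50$. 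Since $\calA_1$ increments $c_3$ at most once per group, we have $c_3 = \sum_{i=i_0+1}^{i_0+c_2} Y_i$ with $\bbE[c_3] \leq c_2/50 \ll c_2/3$. The Chernoff bound in Equation \eqref{eq:ch2}, applied with the multiplicative deviation $(1+\delta)q=1/3$, gives
\begin{equation*}
\Pr\!\left[c_3 \geq c_2/3\right] \leq \exp(-\Omega(c_2)) \leq n^{-C}
\end{equation*}
at $c_2=10\log n$, where the constant $C>0$ can be made larger than $3/2$ by the choice $C_2=100C_1$. A union bound over the $C_2\sqrt{n}$ counters then yields $\Pr[\calA_1(\boldsymbol{D}_0)=1] \leq C_2\sqrt{n}\cdot n^{-C} \leq 1/n$.

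The main obstacle is essentially bookkeeping rather than conceptual. One has to verify that, once $h_1,h_2$ are fixed, the indicators $Y_i$ are genuinely mutually independent across groups (which is immediate from the product structure of $\boldsymbol{D}_0$), and handle the randomness of the hash functions so that the Chernoff calculation applies uniformly across all $C_2\sqrt{n}$ counters rather than just one. The constants in the algorithm are comfortably generous, so no tight optimization is required: the essential point is that $q \ll 1/3$ with a constant-sized gap, which converts the Chernoff exponent into a genuine power-of-$n$ decay.
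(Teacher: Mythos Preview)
Your proposal is correct and takes essentially the same approach as the paper: a union bound over the at most $C_2\sqrt{n}$ counters, together with a Chernoff/Hoeffding tail bound showing each fixed counter survives to lifespan $10\log n$ with probability $n^{-\Omega(1)}$. The paper packages the per-counter estimate as Claim~\ref{claim:expdecay} and simply asserts that the per-round increments $inc_i$ are independent with mean at most $C_1/C_2$; your explicit conditioning on $h_1,h_2$ and Chernoff control of $|A|=|h_1(i_0)\cap h_2^{-1}(c_1)|$ is a slightly more careful rendering of that same step.
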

\begin{proof}
    This can be concluded via the following argument together with a union bound:  
    \begin{restatable}[]{claim}{expdecay}\label{claim:expdecay}
        Under $\boldsymbol{D}_0$, a counter of $\calA_1$ survives $r> 100$ rounds with probability at most $e^{-r/5}$. 
    \end{restatable}
    We define a round for $\calA_1$ as processing an entire data group. For the the formal proof of this claim, we refer the reader to Section \ref{sec:omittedupperboundproofs}. With this claim, we have that a counter survives $10 \log n$ rounds with probability at most $1/t^2$. By a union bound, there exists a counter surviving $10 \log n$ rounds with probability at most $C_2/n^{1.5} \leq 1/n$ since there are at most $C_2 \sqrt{n}$ counters and we assume $C_2$ is a constant smaller than $\sqrt{n}$ here. 
\end{proof}
\begin{lemma}\label{lem:A11}
    $\Pr[\calA_1(\boldsymbol{D}_1)=0]\leq  e^{-C_1} +0.05+ o(1)$. 
\end{lemma}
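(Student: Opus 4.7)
The plan is to decompose the failure event $\{\calA_1(\boldsymbol{D}_1)=0\}$ and bound each part separately. Call a group $i \in [\sqrt{n}]$ \emph{useful} if $\alpha \in h_1(i)$; since $h_1$ is a random function into subsets of $[t]$ of size $C_1 t/\sqrt{n}$, each group is independently useful with probability $C_1/\sqrt{n}$. Hence the probability that no useful group exists is $(1-C_1/\sqrt{n})^{\sqrt{n}} \le e^{-C_1}+o(1)$, which accounts for the $e^{-C_1}$ term in the stated bound. All that remains is to show that, conditioned on the existence of at least one useful group, the algorithm outputs $1$ with probability at least $1-0.05-o(1)$.

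Conditioning on the existence of a useful group, I will focus on the counter with $c_1 = h_2(\alpha)$ created at a useful group $i^*$---call this the \emph{useful counter}. In every subsequent group, the needle $\alpha$ appears at least once with probability $1-(1-p)^{\sqrt{n}} = 1-1/e+o(1) > 0.63$, and whenever it does it matches both filters $h_1(i^*)$ and $h_2(\cdot)=c_1$, so $c_3$ increments. Since this per-round increment rate is well above the survival threshold $1/3$, a Chernoff bound (equation~\eqref{eq:ch1}), or equivalently a random-walk argument on the supermartingale $M_r = c_3 - c_2/3$ with positive drift $\ge 0.63 - 1/3 > 0.29$, implies that the check $c_3 \ge c_2/3$ holds throughout the $10\log\log n$ growth phase and the full $10\log n$ tracking phase with probability at least $1-o(1)$, and in particular $\calA_1$ outputs $1$ \emph{provided the useful counter is not preempted by interference}.

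The main obstacle is therefore to control the interference from \emph{useless} counters---those born in non-useful groups, or those with $c_1 \ne h_2(\alpha)$ in any group. If a useless counter reaches $c_2 \ge 10\log\log n$ before the useful counter does, it triggers tracking mode, freezes the useful counter, and (upon the inevitable failure of its own tracking) clears all counters, destroying the useful counter. By Claim~\ref{claim:expdecay}, whose proof carries over to useless counters under $\boldsymbol{D}_1$ (the needle, not lying in $h_1$ of the counter's birth group, contributes no increments, so the increment dynamics are dominated by those under $\boldsymbol{D}_0$), each useless counter survives $r>100$ rounds with probability at most $e^{-r/5}$. At $r = 10\log\log n$ this is $(\log n)^{-\Theta(1)}$, so across the $C_2\sqrt{n}$ useless counters the expected number of useless-tracking episodes is $o(\sqrt{n})$, and each episode lasts $O(1)$ rounds in expectation by another Chernoff bound. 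I will combine this with a union bound over the $\Theta(C_1)$ useful groups (whose positions are independent of the useless-counter dynamics conditional on $\alpha$) to show that with probability at least $1-0.05-o(1)$, at least one useful group's counter reaches tracking mode cleanly and then succeeds via the random-walk argument of the second paragraph, yielding the overall bound $e^{-C_1}+0.05+o(1)$.
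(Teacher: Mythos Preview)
Your decomposition is right, but you have swapped the sources of the $0.05$ and the $o(1)$. The claim that the useful counter survives ``with probability at least $1-o(1)$'' is incorrect: the failure probability is $\sum_{r>100}\Pr[c_3^r<r/3]$, and since each summand is bounded by $e^{-\Theta(r)}$ via Hoeffding, the sum is a \emph{constant} dominated by its first few terms, not $o(1)$. This constant is exactly the $0.05$ in the lemma; the paper calls this event $B$ and bounds $\Pr[B]\le\sum_{r\ge 100}e^{-r/18}\le 0.05$ using the (looser) increment rate $\ge 1/2$.

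Interference, on the other hand, contributes only $o(1)$, not $0.05$. Your route to this --- globally counting useless-tracking episodes and then invoking independence of useful-group positions from the useless-counter dynamics --- has a gap: that independence is not exact, since both are determined by the same hash function $h_1$. The paper's argument is local and sidesteps this entirely. It considers only the \emph{first} useful group $i'$; by minimality of $i'$, $\alpha\notin h_1(g)$ for every $g<i'$, so every counter born before $i'$ is genuinely useless and Claim~\ref{claim:expdecay} applies to each without any independence assumption. The useful counter at $i'$ can fail to be tracked only if some such counter triggers tracking while $\calA_1$ is processing a group in the window $[i'-10\log n,\,i'+10\log\log n)$. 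A union bound over these $O(\log n)$ groups, each contributing probability $O((\log n)^{-2})$ by the geometric tail of Claim~\ref{claim:expdecay}, gives $o(1)$ directly --- no global accounting of episode durations, no appeal to randomness of useful-group placement, and no multiple useful groups are needed.
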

\begin{proof}
    We first define the two events: (1) we define $A$ as the event that $\calA_1$ fails to begin tracking the needle; and (2) we define event $B$ to be the event that the counter for the potential needle fails to survive $10 \log n$ rounds. Thus, we have $\Pr[\calA_1(\boldsymbol{D}_1)=0]\leq \Pr[A\cup B]\leq \Pr[A]+\Pr[B]$. 

    We first assume the needle equals $\alpha \in[t]$, and upper bound $\Pr[A]$. Event $A$ can be decomposed via the following two events: 
    \begin{itemize}
        \item event $A_1$: for any group $i$, $\alpha\notin h_1(i)$;
        \item event $A_2$: there exists a group $i$ such that $\alpha\in h_1(i)$, however, $\calA_1$ fails to track it. 
    \end{itemize} 
    Then, the following inequality comes directly from the definitions $$\Pr[A] = \Pr[A_1\cup A_2]\leq \Pr[A_1]+\Pr[A_2].$$ From a straightforward calculation, we have
    \begin{align*}
        \Pr[A_1] \leq \left(1-\frac{C_1}{\sqrt{n}}\right)^{\sqrt{n}} \leq e^{-C_1}. 
    \end{align*}

    To bound $\Pr[A_2]$, we consider another event $A_2'$: for the smallest $i$ such that $\alpha\in h_1(i)$, $\calA_1$ fails to track it. Clearly we have $A_2\subseteq A_2'$, and thus $\Pr[A_2]\leq \Pr[A_2']$. Then, it suffices to bound $\Pr[A_2']$. Here, we use $i'$ to denote this smallest index. Note that $\calA_1$ fails to track $i'$ in the following case: at some point $\calA_1$ processes data groups indexed in $[i'-10 \log n,i'+10\log\log n)$, there is a counter with lifespan larger than $10 \log \log n$ and $\calA_1$ begins to track this counter as the potential needle. We denote this event as $A_2'$. Then, we have $\Pr[A_2]\leq \Pr[A_2']$. 

    When $\calA_1$ is processing a data group in $[i'-10 \log n,i'+10 \log\log n)$, the probability that there exists a counter surviving $10\log \log n$ rounds is less than $$\frac{1}{1-1/e^{1/5}}e^{-10\log \log n/5}\leq \frac{1}{(1-1/e^{1/5})\log^2 n}.$$ This comes from a similar argument as Claim \ref{claim:expdecay} (the proof is also similar since the counters before $i'$ do not track the needle) together with a summation of a geometric series. Thus, by a union bound, we have that $$\Pr[A_2']\leq 10\log t\cdot \frac{1}{(1-1/e^{1/5})\log^2 n}\leq \frac{100}{\log n} =o(1).$$
    Hence, it follows that 
    \[\Pr[A_2]\leq \Pr[A_2']\leq  o(1).\]

    It suffices to bound $\Pr[B]$.  Note that for every group $i$, the probability that at least one needle appears in $i$ is $1-(1-1/\sqrt{n})^{\sqrt{n}}\geq 1-1/e\geq 1/2$. Assuming the counter for the needle is $(c_1,c_2,c_3)$, we have that the expected value of $c_3$ after $r$ rounds, the corresponding random variable denoted by $c_3^r$, is larger than $$\mathbb{E}[c_3^r]\geq r/2.$$
    Together with Hoeffding's Inequality, we have that \[\Pr[c_3^r \leq r/3] \leq e^{-r/18}.\]
    Summing over $100 \leq r \leq 10\log n $, we have that 
    \[
    \Pr[B]\leq \sum_{r = 100}^{10\log n } \Pr[c_3^r \leq r/3] \leq 0.05. 
    \]
    We then concluded the lemma by: 
    \[
    \Pr[A]+\Pr[B]\leq e^{-C_1} + 0.05 +o(1). \qedhere
    \]
\end{proof}
\noindent
Combining the three lemmas above, we know $\calA_1$ satisfies that: 
\begin{itemize}
    \item $Err_{\calA_1}(\boldsymbol{D}_0,\boldsymbol{D}_1)\leq 0.1$ for a large enough constant $C_1$;
    \item $\calA_1$ uses $O(\log \log n)$ bits of space for any constant $C_1$. 
\end{itemize}
This concludes the $O((\log \log n)(\log\log\log n))$ space complexity upper bound for the needle problem when $p=1/\sqrt{n}$. 

\paragraph{Extension to hybrid order streams.} Consider a setting of hybrid order data streams where the needle appears $\approx \sqrt{n}$ times in a random order while other items are in an arbitrary order. Also, we have the following constraints on the data stream: 1) the $F_2$-frequency moment of non-needle items is $O(n)$; 2) non-needle items have a number of occurrences at most a small constant times $\log n$. Then, our algorithm $\sfM_1$ also works on this hybrid order streams by similar analysis. However, since this setting is not so related to the topic of this paper, we omit the details. 
\subsection{Streaming Algorithm for $p\leq 1/\sqrt{n\log^3 n}$}\label{sec:algo2}
The idea of $\calA_2$ is similar to $\calA_1$. Again, we start by setting up $\calA_2$: 
\begin{itemize}
    \item First, partition the whole domain $t$ into $\frac{1}{p^2n}$ \textit{blocks}. The $\ell$-th block contains elements in $[(\ell-1)p^2nt+1,\ell p^2nt]$, where we use $\mathcal{B}^{\ell}$ to denote the domain of the $\ell$-th block. 
    \item $\calA_2$ is able to partition the entire data stream into $pn$ number of \textit{data groups} via $pn$ independent Poisson distributions: each data group $i$ contains $N_i \sim P(\frac{1}{4p})$ contiguous data items in the data stream. Here, $P(\frac{1}{4p})$ denotes the Poisson distribution with parameter $\frac{1}{4p}$. Note that the total size $N=\sum_{i=1}^{pn} N_i$ also follows a Poisson distribution $N\sim P(\frac{n}{4})$, and may exceed $n$. In that case, the algorithm $\calA_2$ would fail. However, we will show that the probability of that case is small in later calculations.
    \item $\calA_2$ uses a register $index$ to store the index of the current data group, $index$ uses $O(\log n)$ bits of space since the number of data groups is at most $pn = O(n)$. 
\end{itemize}
We then design $\calA_2$ as a composed algorithm, running $\frac{1}{p^2n}$ independent algorithms $\calA^1,\calA^2,\cdots,\calA^{\frac{1}{p^2n}}$ at the same time, each $\calA^{\ell}$ works similarly to $\calA_1$, and checks if the needle is in the $\ell$-th block. Then, $\calA_2$ outputs $1$ if and only if at least one $\calA^{\ell}$ outputs $1$, and outputs $0$ if and only if all $\calA^{\ell}$ output $0$. Precisely, $\calA^{\ell}$ uses two random hash functions: (1) $h_1^{\ell}: [pn] \rightarrow {{\mathcal{B}^{\ell}}\choose{C_1tp}}$, where ${{\mathcal{B}^{\ell}}\choose{C_1tp}}$ is defined by all subsets $S\subseteq \mathcal{B}^{\ell }$ with $|S|= C_1tp$, and (2) $h_2^{\ell}: \mathcal{B}^{\ell}\rightarrow [C_2]$. The basic memory unit for $\calA^{\ell}$ is also the \textit{counter} $c=(c_1,c_2,c_3)$. $C_1$ is a constant to be determined later, and $C_2$ equals $1000C_1$. Next, we formally define  $\calA^{\ell }$: 
\begin{enumerate}
    \item When $\calA^{\ell}$ enters a new data group $i$, it inserts $C_2$ counters into its memory in the following way: for each $j\in[C_2]$, it inserts a counter $(c_1 = j,c_2=0,c_3=0)$.
    \item When receiving a data element $x_k$ in the data group $i$, first check if $x_k\in \mathcal{B}^{\ell}$. If not, $\calA^{\ell}$ skips the current data $x_k$. Otherwise, $\calA^{\ell}$ enumerates every counter $(c_1,c_2,c_3)$  in its memory, and checks if $x_k$ is the first data element \footnote{Similar to $\calA_1$, we can add one bit to each counter in order to do this.} in data group $i$ that: (1) lies in $h_1^{\ell}(i-c_2)$; (2) and it holds that $h_2^{\ell}(x_k)=c_1$. If both are satisfied, $\calA^{\ell}$ updates $c_3$ with $c_3+1$. 
    \item After scanning the entire data group $i$, $\calA^{\ell}$ enumerates every counter $(c_1,c_2,c_3)$ to see if $c_3\geq  c_2/100 $ or $c_2\leq 10000$ holds. If neither holds, $\calA^{\ell}$ removes this counter from its memory. For the remaining counters, $\calA_1$ updates $c_2$ with $c_2+1$, and continues to process the next data group indexed $i+1$. 
    \item If at some point, there exists a counter $(c_1,c_2,c_3)$ with lifespan $c_2\geq 3\cdot 10^6 \log n$, $\calA^{\ell}$ stops to process the remaining data and outputs $1$. 
    \item If $\calA^{\ell }$ has processed all $pt$ data groups without outputting $1$, then $\calA^{\ell}$ outputs $0$. 
\end{enumerate}

The analysis of the composed algorithms $\calA_2$ follows a similar idea to $\calA_1$'s analysis, and we directly give our result and defer the proofs to Appendix \ref{sec:omittedupperboundproofs}: 
\begin{restatable}[]{theorem}{generalp}\label{thm:general_p}
    When $p\leq \frac{1}{\sqrt{n \log^3 n}}$, $\calA_2$ uses at most $O(\frac{C'}{p^2n})$ bits of memory, where $C'$ is a constant, with probability $1-o(1)$ and solves the needle problem with $Err_{\calA_2}(\boldsymbol{D}_0,\boldsymbol{D}_1)\leq 0.2$.
\end{restatable}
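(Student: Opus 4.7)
The plan is to prove three things about $\calA_2$: (a) under $\boldsymbol{D}_0$, no sub-algorithm $\calA^{\ell}$ triggers output $1$ except with probability $o(1)$; (b) the total memory $M(t)=\sum_{\ell} M^{\ell}(t)$ stays below $C'/(p^2n)$ for every time step $t$ with probability $1-o(1)$; and (c) under $\boldsymbol{D}_1$, the sub-algorithm $\calA^{\ell^*}$ corresponding to the block $\mathcal{B}^{\ell^*}$ containing the needle outputs $1$ with probability at least $0.9$. Combining (a) and (c) gives $Err_{\calA_2}\le 0.2$. For (a), I would adapt Claim~\ref{claim:expdecay} to $\calA^{\ell}$: each element lies in $h_1^{\ell}(i^*)$ and hashes to $c_1$ via $h_2^{\ell}$ with joint probability $C_1 p/C_2$; multiplied by the Poisson group size $E[N_i]=1/(4p)$ this makes the expected per-round increment to $c_3$ equal to $C_1/(4C_2)=1/4000$, whereas survival requires $c_3\ge c_2/100$, a factor of $40$ above the mean. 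Chernoff therefore gives counter survival probability $\le e^{-cr}$ for some constant $c>0$; setting $r=3\cdot 10^6\log n$ and union bounding over the at most $C_2 pn\cdot 1/(p^2n)=C_2/p\le \mathrm{poly}(n)$ total counters yields $o(1)$ false-positive probability.

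For (b), I condition on the event in (a), so every live counter satisfies $c_2\le 3\cdot 10^6\log n$ and occupies $O(\log\log n)$ bits. The exponential decay from (a) gives $E[M^{\ell}(t)]=O(1)$ and the almost-sure bound $M^{\ell}(t)\le \tau:=O(\log n\log\log n)$. Using a Poissonization argument (elements falling into distinct blocks $\mathcal{B}^{\ell}$ under $\boldsymbol{D}_0$ are independent after Poissonizing the total stream length), the $M^{\ell}(t)$'s become independent across $\ell$. Applying Bernstein's inequality with total variance $V=O(\tau/(p^2n))$ at deviation $s=C'/(p^2n)$, the exponent becomes $\Omega(s^2/V)=\Omega(1/(\tau p^2 n))$. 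Using $1/(p^2n)\ge \log^3 n$, this is $\Omega(\log^2 n/\log\log n)$, so the per-time-step failure probability is $n^{-\omega(1)}$; a union bound over $n$ time steps is still $o(1)$. Under $\boldsymbol{D}_1$ only block $\ell^*$ behaves differently, and it contributes at most $\tau=O(\log n\log\log n)\ll 1/(p^2n)$ additional bits, so the bound persists.

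For (c), the needle $\alpha$ lies in a unique block $\mathcal{B}^{\ell^*}$. For each group $i\in[pn]$, $\Pr[\alpha\in h_1^{\ell^*}(i)]=C_1 tp/(p^2 n t)=C_1/(pn)$, and these events are independent across groups, so with probability at least $1-(1-C_1/(pn))^{pn}\ge 1-e^{-C_1}\ge 0.99$ for large $C_1$, some group $i^*$ satisfies $\alpha\in h_1^{\ell^*}(i^*)$. The counter with $c_1=h_2^{\ell^*}(\alpha)$ created in group $i^*$ gets an increment to $c_3$ in every subsequent round with probability at least $1-e^{-1/4}\ge 0.22$, since each needle occurrence (rate $p$ per item) matches both conditions and the probability that no needle lands in a Poisson group of mean $1/(4p)$ equals $e^{-p/(4p)}=e^{-1/4}$. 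Chernoff across rounds $r\in(10000,3\cdot 10^6\log n]$ then shows $c_3\ge r/100$ persists throughout with probability $1-o(1)$, so the counter lives to lifespan $3\cdot 10^6\log n$ and $\calA_2$ outputs $1$.

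The main obstacle will be the high-probability memory bound in step (b): a naive Markov argument only yields constant failure probability, so the Bernstein concentration across the $1/(p^2n)$ independent blocks is essential. The hypothesis $p\le 1/\sqrt{n\log^3 n}$ is precisely what makes $1/(p^2n)\ge \log^3 n$, leaving enough slack over the per-block almost-sure bound $\tau=O(\log n\log\log n)$ for the exponent $\Omega(\log^2 n/\log\log n)$ to dominate the union bound over $n$ time steps. A secondary subtlety is that the memory and correctness analyses are interleaved: the $O(\log\log n)$ per-counter bit bound depends on the no-false-positive event in (a), which must be conditioned on before invoking concentration.
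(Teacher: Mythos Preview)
Your plan is essentially the paper's: exponential counter-survival decay for (a), Bernstein across the $1/(p^2n)$ blocks for (b), and a Hoeffding-type bound on the needle counter for (c). Three small corrections. First, the almost-sure per-block cap $\tau$ in (b) holds \emph{unconditionally} by algorithm design (any counter reaching lifespan $3\cdot 10^6\log n$ triggers output and termination of that $\calA^\ell$), so you need not and should not condition on (a) before invoking Bernstein; the independence across $\ell$ comes directly from Poisson thinning of the already-Poisson group sizes, not from an additional Poissonization step. Second, in (c) the needle-counter survival probability is $1-\epsilon$ for a small \emph{constant} $\epsilon$: the sum $\sum_{r\ge 10000} e^{-cr}$ is a fixed constant, not $o(1)$. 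Third, you omit the event that the total Poisson length $N=\sum_i N_i$ exceeds $n$, in which case $\calA_2$ runs out of stream; the paper handles this by Markov ($\mathbb{E}[N]=n/4$, so $\Pr[N>n]\le 1/4$) and divides the infinite-stream error bounds by $0.75$, which is precisely why the final error is $\le 0.2$ rather than the $0.1+o(1)$ your (a)+(c) would suggest.
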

Note that although the $\calA_2$ may exceed our memory constraint with small probability, we can easily transform it into a bounded memory algorithm with a small error probability by simulating $\calA_2$ and aborting if the memory exceeds the constraint.

\normalem
\bibliographystyle{alpha}
\bibliography{reference}

\appendix

\section{Omitted Proofs in Section \ref{sec:upper}}\label{apendix:sec3}
\subsection{Proof of Claim \ref{cl:indep1}}
We restate Claim \ref{cl:indep1} and prove it: 
\indepi*

\begin{proof}
We prove the claim using induction. For $i=0, j\in [n]$, the fact that 
\[\I\left(X_{[1,j]}, \Ra_{(\le k, [1,j])}; X_{[j+1,n]}, \Ra_{(\le k, [j+1,n])}\mid \sfM_{0}\right)=0,\]
follows easily from independence of random variables $X_1,X_2,\ldots,X_n$, $\{\Ra_{i,j}\}_{i\in[k],j \in[n]}$ and $\sfM_0$ (the starting memory state of $\sfM$, which is independent of the input and private randomness used by the algorithm). Given
\begin{equation}
\label{eq:indep1}
\I\left(X_{[1,j]}, \Ra_{(\le k, [1,j])}; X_{[j+1,n]}, \Ra_{(\le k, [j+1,n])}\mid \sfM_{\le i}, \sfM_{(\le i, j)}\right)=0, \forall j\in[n],
\end{equation} we prove Equation~\eqref{eqcl:indep1} for $i$ and Equation \eqref{eqcl:indep2} for $i+1$, in two steps:
\begin{enumerate}
\item First, we only condition on the $j$-th memory state of the $(i+1)$-th pass (which gives Equation~\eqref{eqcl:indep1}) and show that
\begin{equation}
 \label{eq:indep12}
\I\left(X_{[1,j]}, \Ra_{(\le k, [1,j])}; X_{[j+1,n]}, \Ra_{(\le k, [j+1,n])}\mid \sfM_{\le i}, \sfM_{(\le i, j)}, \sfM_{(i+1,j)}\right)=0.
\end{equation}
\item Then, we condition on the end memory state of the $(i+1)$-th pass (which gives Equation~\eqref{eqcl:indep2}) and show that 
\begin{equation}
 \label{eq:indep13}
 \I\left(X_{[1,j]}, \Ra_{(\le k, [1,j])}; X_{[j+1,n]}, \Ra_{(\le k, [j+1,n])}\mid \sfM_{\le i}, \sfM_{(\le i, j)}, \sfM_{(i+1,j)}, \sfM_{(i+1)}\right)=0,
 \end{equation}
 which proves the induction case. 
\end{enumerate}
We prove Equation \eqref{eq:indep12} as follows:
\begin{align*}
&\I\left(X_{[1,j]}, \Ra_{(\le k, [1,j])}; X_{[j+1,n]}, \Ra_{(\le k, [j+1,n])}\mid \sfM_{\le i}, \sfM_{(\le i, j)}, \sfM_{(i+1,j)}\right)\\
&\;\;\;\;\;\;\;\;\;\;\;\;\le \I\left(X_{[1,j]}, \Ra_{(\le k, [1,j])}, \sfM_{(i+1,j)}; X_{[j+1,n]}, \Ra_{(\le k, [j+1,n])}\mid \sfM_{\le i}, \sfM_{(\le i, j)}\right)\\
&\;\;\;\;\;\;\;\;\;\;\;\;=\I\left(X_{[1,j]}, \Ra_{(\le k, [1,j])}; X_{[j+1,n]}, \Ra_{(\le k, [j+1,n])}\mid \sfM_{\le i}, \sfM_{(\le i, j)}\right)\\
&\;\;\;\;\;\;\;\;\;\;\;\;\;\;\;\;\;\;\;\;\;\;\;\;+\I\left( \sfM_{(i+1,j)}; X_{[j+1,n]}, \Ra_{(\le k, [j+1,n])}\mid X_{[1,j]}, \Ra_{(\le k, [1,j])},\sfM_{\le i}, \sfM_{(\le i, j)}\right) \tag{Chain rule}\\
&\;\;\;\;\;\;\;\;\;\;\;\; =\I\left( \sfM_{(i+1,j)}; X_{[j+1,n]}, \Ra_{(\le k, [j+1,n])}\mid X_{[1,j]}, \Ra_{(\le k, [1,j])},\sfM_{\le i}, \sfM_{(\le i, j)}\right) \tag{using Equation \eqref{eq:indep1}}\\
&\;\;\;\;\;\;\;\;\;\;\;\;=0\tag{as $\sfM_{(i+1,j)}$ is a deterministic function of $\sfM_{i}, X_{[1,j]}$ and $\Ra_{(i+1,[1,j])}$}
\end{align*}
We prove Equation \eqref{eq:indep13} as follows:
\begin{align*}
&\I\left(X_{[1,j]}, \Ra_{(\le k, [1,j])}; X_{[j+1,n]}, \Ra_{(\le k, [j+1,n])}\mid \sfM_{\le i}, \sfM_{(\le i, j)}, \sfM_{(i+1,j)},\sfM_{(i+1)}\right)\\
&\;\;\;\;\;\;\;\;\;\;\;\;\le \I\left(X_{[1,j]}, \Ra_{(\le k, [1,j])}; X_{[j+1,n]}, \Ra_{(\le k, [j+1,n])}, \sfM_{(i+1)}\mid \sfM_{\le i}, \sfM_{(\le i, j)}, \sfM_{(i+1,j)}\right)\\
&\;\;\;\;\;\;\;\;\;\;\;\;=\I\left(X_{[1,j]}, \Ra_{(\le k, [1,j])}; X_{[j+1,n]}, \Ra_{(\le k, [j+1,n])}\mid \sfM_{\le i}, \sfM_{(\le i, j)}, \sfM_{(i+1,j)}\right)\\
&\;\;\;\;\;\;\;\;\;\;\;\;\;\;\;\;\;\;\;\;\;\;\;\;+\I\left(X_{[1,j]}, \Ra_{(\le k, [1,j])}; \sfM_{(i+1)}\mid X_{[j+1,n]}, \Ra_{(\le k, [j+1,n])}, \sfM_{\le i}, \sfM_{(\le i, j)}, \sfM_{(i+1,j)}\right) \tag{Chain rule}\\
&\;\;\;\;\;\;\;\;\;\;\;\; =\I\left(X_{[1,j]}, \Ra_{(\le k, [1,j])}; \sfM_{(i+1)}\mid X_{[j+1,n]}, \Ra_{(\le k, [j+1,n])}, \sfM_{\le i}, \sfM_{(\le i, j)}, \sfM_{(i+1,j)}\right)  \tag{using Equation \eqref{eq:indep12}}\\
&\;\;\;\;\;\;\;\;\;\;\;\;=0\tag{as $\sfM_{(i+1)}$ is deterministic function of $\sfM_{(i+1,j)}, X_{[j+1,n]}$ and $\Ra_{(i+1,[j+1,n])}$}
\end{align*}
This proves the claim using induction.
\end{proof}
\noindent
\subsection{Proof of Corollary \ref{cor:indep1} and \ref{cor:indep2}}
We first review the statement of Corollary \ref{cor:indep1}, and give a formal proof. 
\indepii*

\begin{proof}
We prove the corollary as follows:
\begin{align*}
&\I\left(X_{[1,j]}, \sfM_{(\le i, [0,j-1])}; X_{[j+1,n]}, \sfM_{(\le i, [j+1,n])}\mid \sfM_{< i}, \sfM_{(\le i, j)}\right)\\
&\;\;\;\;\;\le \I\left(X_{[1,j]}, \sfM_{(\le i, [0,j-1])}, \Ra_{(\le k, [1,j])}; X_{[j+1,n]}, \sfM_{(\le i, [j+1,n])}, \Ra_{(\le k, [j+1,n])}\mid \sfM_{< i}, \sfM_{(\le i, j)}\right)\\
&\;\;\;\;\;=\I\left(X_{[1,j]}, \Ra_{(\le k, [1,j])}; X_{[j+1,n]}, \sfM_{(\le i, [j+1,n])}, \Ra_{(\le k, [j+1,n])}\mid \sfM_{< i}, \sfM_{(\le i, j)}\right)\\
&\;\;\;\;\;\;\;\;\;\;+\I\left( \sfM_{(\le i, [0,j-1])}; X_{[j+1,n]}, \sfM_{(\le i, [j+1,n])}, \Ra_{(\le k, [j+1,n])}\mid X_{[1,j]},\Ra_{(\le k, [1,j])}, \sfM_{< i}, \sfM_{(\le i, j)}\right)\tag{Chain rule}\\
&\;\;\;\;\;=\I\left(X_{[1,j]}, \Ra_{(\le k, [1,j])}; X_{[j+1,n]}, \sfM_{(\le i, [j+1,n])}, \Ra_{(\le k, [j+1,n])}\mid \sfM_{< i}, \sfM_{(\le i, j)}\right) \tag{as $\forall i_1\le i$, $\sfM_{(i_1,[0,j-1])}$ are deterministic functions of $\sfM_{i_1-1}$, $X_{[1,j-1]}$, and $\Ra_{(i_1, [1,j-1])}$}\\
&\;\;\;\;\;=\I\left(X_{[1,j]}, \Ra_{(\le k, [1,j])}; X_{[j+1,n]}, \Ra_{(\le k, [j+1,n])}\mid \sfM_{< i}, \sfM_{(\le i, j)}\right)\\
&\;\;\;\;\;\;\;\;\;\;+\I\left(X_{[1,j]}, \Ra_{(\le k, [1,j])}; \sfM_{(\le i, [j+1,n])}\mid X_{[j+1,n]}, \Ra_{(\le k, [j+1,n])}, \sfM_{< i}, \sfM_{(\le i, j)}\right)\tag{Chain rule}\\
&\;\;\;\;\;=0.
\end{align*}
The last equality follows from  Equation~\eqref{eqcl:indep1} in Claim \ref{cl:indep1} and the fact that $\forall i_1\le i$, $\sfM_{(i_1,[j+1,n])}$ are deterministic functions of $\sfM_{(i_1,j)}$, $X_{[j+1,n]}$, and $\Ra_{(i_1, [j+1,n])}$.
\end{proof}
\noindent
Then, we prove Corollary \ref{cor:indep2}:
\indepiii*
\begin{proof}
We prove the corollary as follows:
\begin{align*}
&\I\left(X_j; \sfM_{(i+1,j-1)}\mid \sfM_{\le i}, \sfM_{(\le  i, j-1)}\right)\\
&\le \I\left(X_j; \sfM_{(i+1,j-1)}, X_{[1,j-1]},\Ra_{(i+1,[1,j-1])}\mid \sfM_{\le i}, \sfM_{(\le i, j-1)}\right)\\
&= \I\left(X_j;  X_{[1,j-1]},\Ra_{(i+1,[1,j-1])}\mid \sfM_{\le i}, \sfM_{(\le i, j-1)}\right)+ \I\left(X_j; \sfM_{(i+1,j-1)}\mid X_{[1,j-1]},\Ra_{(i+1,[1,j-1])},\sfM_{\le i}, \sfM_{(\le i, j-1)}\right)\\
&=0.
\end{align*}
The last equality follows from Equation~\eqref{eqcl:indep2} in Claim \ref{cl:indep1} and the fact that $\forall (i+1)\in[k]$, $\sfM_{(i+1,j-1)}$ is deterministic function of $\sfM_{i}$, $X_{[1,j-1]}$, and $\Ra_{(i+1, [1,j-1])}$.
\end{proof}

\section{Omitted Proofs in Section \ref{sec:single}}

\subsection{Proofs of Claims \ref{cl:imd} and \ref{cl:iminf}}\label{sec:apim}
\proof[\textbf{Proof of Claim \ref{cl:imd}}]
The joint distribution on variables $X, \sfM_{< k}, \{\sfM_{(\le k, [0,n])}\}$ is decided as the $k$-pass algorithm unfolds. We first sample $X$, then $\sfM_0=\sfM_{(1,0)}$ which is independent of $X$, then run the first pass which identifies variables $\sfM_{(1,[1,n])}$ (where $\sfM_1=\sfM_{(1,n)}=\sfM_{(2,0)}$), then we run the second pass and so on. 
However, to prove the equivalence of these joint distributions, we sample the variables for the $k$-passes in parallel. 
We prove the claim using the chain rule by conditioning in the following order (similarly for $\sfM'$ and $X'$):
\begin{enumerate}
\item $(\sfM_0,\sfM_1,\ldots,\sfM_{k-1})$.
\item $(\sfM_{(1,0)},\sfM_{(2,0)},\ldots,\sfM_{(k,0)})$.
\item Repeat Step \ref{it:imd1} to Step \ref{it:imd2} for $j=1$ to $n$.
\item \label{it:imd1} $X_j$ conditioned on $\sfM_{< k}$, $X_{[1,j-1]}$ and $\sfM_{(\le k,[0,j-1])}$.
\item \label{it:imd2} $(\sfM_{(1,j)},\sfM_{(2,j)},\ldots, \sfM_{(k,j)})$ conditioned on $X_j$, $\sfM_{< k}$, $X_{[1,j-1]}$ and $\sfM_{(\le k,[0,j-1])}$.
\end{enumerate}
As $(m'_0,m'_1,\ldots,m'_{k-1})$ is drawn from the joint distribution on $(\sfM_0,\sfM_1,\ldots,\sfM_{k-1})$, by definition 
\[\Pr[\sfM'_{< k}=m'_{< k}]=\Pr[\sfM_{< k}=m'_{< k}], \; \forall \;m'_{< k}.\]
Similarly, by definitions of $\sfM'_{(\le k,0)}$ and $\sfM_{(\le k,0)}$, the distributions of these random variables conditioned on $\sfM'_{<k}$ and $\sfM_{<k}$ respectively, are identical. 
Next, we look at the conditional distribution in Step \ref{it:imd1} for $j\in[n]$. It is easy to see that, given a value of $\beta_j$ in Step \ref{st:im1} of Algorithm \ref{al:im}, $x_j'=1$ with probability $\frac{1}{2}+\beta_j$, as $Y_j\sim \Ber(1/2)$. Therefore $\forall \; m'_{< k}, m'_{([1,k],0)}=m'_{[0,k-1]}, x'_{[1,j-1]}, m'_{(\le k,[1,j-1])} $, 
\begin{align*}
&\Pr\left[X'_j=1\mid \sfM'_{< k}=m'_{< k}, X'_{[1,j-1]}=x'_{[1,j-1]},  \sfM'_{(\le k,[0,j-1])}=m'_{(\le k,[0,j-1])}\right]\\
&\;\;\;\;\;=\Pr\left[X_j=1\mid \sfM_{(\le k,j-1)}=m'_{(\le k,j-1)}, \sfM_{< k}=m'_{< k}\right]\\
&\;\;\;\;\;=\Pr\left[X_j=1\mid \sfM_{< k}=m'_{< k}, X_{[1,j-1]}=x'_{[1,j-1]},  \sfM_{(\le k,[0,j-1])}=m'_{(\le k,[0,j-1])}\right].
\end{align*}
The last equality follows from Corollary \ref{cor:indep1}, which implies that
\[\left(X_j\indep X_{[1,j-1]}, \sfM_{(\le k, [0,j-2])}\middle| \sfM_{< k},\sfM_{(\le k,j-1)} \right).\]
Finally, we look at the conditional distribution in Step \ref{it:imd2}. As $(m'_{(1,j)},m'_{(2,j)}, \ldots, m'_{(k,j)})$ is drawn from a distribution that only depends on values $m'_{(\le k,j-1)}$, $m'_{< k}$ and $x_j'$ (see Step \ref{st:im2} of Algorithm \ref{al:im}), we have, $\forall\; m'_{< k}, m'_{([1,k],0)}=m'_{[0,k-1]}, x'_{[1,j]}, m'_{(\le k,[1,j])} $,  
\begin{align*}
&\Pr\left[\sfM'_{(\le k,j)}=m'_{(\le k,j)}\middle| X'_j=x'_j, \sfM'_{< k}=m'_{< k}, X'_{[1,j-1]}=x'_{[1,j-1]}, \sfM'_{(\le k,[0,j-1])}=m'_{(\le k,[0,j-1])} \right]\\
&\;\;\;\;\;=\Pr\left[\sfM_{(\le k,j)}=m'_{(\le k,j)}\mid \sfM_{(\le k,j-1)}=m'_{(\le k,j-1)},\; \sfM_{< k}=m'_{< k}, X_j=x_j'\right]\\
&\;\;\;\;\;=\Pr\left[\sfM_{(\le k,j)}=m'_{(\le k,j)}\mid X_j=x_j', \sfM_{< k}=m'_{< k}, X_{[1,j-1]}=x'_{[1,j-1]},  \sfM_{(\le k,[0,j-1])}=m'_{(\le k,[0,j-1])}\right]
\end{align*}
The last equality follows from Corollary \ref{cor:indep1}, which implies that
\begin{align*}
\left(\sfM_{(\le k, j)}, X_j\indep X_{[1,j-1]}, \sfM_{(\le k, [0,j-2])}\middle| \sfM_{< k},\sfM_{(\le k,j-1)} \right). &\qedhere
\end{align*}

\begin{proof}[\textbf{Proof of Claim \ref{cl:iminf}}]
As algorithm $\Imi$ remembers $(m'_{(1,j)},m'_{(2,j)}, \ldots, m'_{(k,j)},m'_0,m'_1,\ldots,m'_{k-1})$ after $j$th time-step ($j\in\{0,1,\ldots,n\}$), we rewrite $IC(\Imi)$ as follows:
\begin{align*}
IC(\Imi)&=\sum_{j=1}^n\sum_{\ell=1}^j \I\left(\Imi_{j};Y_{\ell}|\Imi_{\ell-1}\right) \\
&\le \sum_{j=1}^n\sum_{\ell=1}^j \I\left(\Imi_{j};X'_\ell, Y_{\ell}|\Imi_{\ell-1}\right)\\
&=\sum_{j=1}^n\sum_{\ell=1}^j \I\left(\Imi_{j};X'_\ell|\Imi_{\ell-1}\right)+\sum_{j=1}^n\sum_{\ell=1}^j\I\left(\Imi_{j}; Y_{\ell}|X'_\ell,\Imi_{\ell-1}\right)\\
&=\sum_{j=1}^n\sum_{\ell=1}^j \I\left(\Imi_{j};X'_\ell|\Imi_{\ell-1}\right)\tag{explained below}\\
&=\sum_{j=1}^n\sum_{\ell=1}^j \I\left(\sfM'_{(\le k,j)}, \sfM'_{<k};X'_{\ell}|\sfM'_{<k},\sfM'_{(\le k, \ell-1)}\right)\\
&=\sum_{j=1}^n\sum_{\ell=1}^j \I\left(\sfM'_{(\le k,j)};X'_{\ell}|\sfM'_{<k},\sfM'_{(\le k, \ell-1)}\right)\\
&=\sum_{j=1}^n\sum_{\ell=1}^j \I\left(\sfM_{(\le k,j)}; X_{\ell}|\sfM_{<k},\sfM_{(\le k, \ell-1)}\right) \tag{using Claim \ref{cl:imd}}\\
&=\miccoin(\sfM).
\end{align*}
We prove the third equality as follows. 
\begin{align*}
\sum_{j=1}^n\sum_{\ell=1}^j\I\left(\Imi_{j}; Y_{\ell}|X'_\ell,\Imi_{\ell-1}\right)&\le \sum_{j=1}^n\sum_{\ell=1}^j\I\left(\Imi_{j}\Imi_{\ell}; Y_{\ell}|X'_\ell,\Imi_{\ell-1}\right)\\
&\le \sum_{j=1}^n\sum_{\ell=1}^j\I\left(\Imi_{\ell}; Y_{\ell}|X'_\ell,\Imi_{\ell-1}\right)+\sum_{j=1}^n\sum_{\ell=1}^j\I\left(\Imi_{j}; Y_{\ell},X'_\ell,\Imi_{\ell-1}|\Imi_{\ell}\right).
\end{align*}
Since given $x'_\ell$, $im_{\ell-1} = (m'_{(\le k, \ell-1)},m'_{<k}$), $m'_{(\le k, \ell )}$ ($im_\ell$) is drawn from a distribution independent of value $y_{\ell}$ (Step \ref{st:im2} of Algorithm \ref{al:im}), 
\[\I\left(\Imi_{\ell}; Y_{\ell}|X'_\ell,\Imi_{\ell-1}\right)=0, \;\forall \ell\le j\le n.\]
Since $\Imi$ is a one-pass algorithm over a stream of independent input bits, 
\[\I\left(\Imi_{j}; Y_{\ell},X'_\ell,\Imi_{\ell-1}|\Imi_{\ell}\right)=0, \;\forall \ell\le j\le n;\]
because conditioned on $\Imi_\ell$, $\Imi_j$ only depends on the input stream after the $\ell$th time-step and is independent of the inner workings of the algorithm before the $\ell$th time-step. This proves the claim. 
\end{proof}

\subsection{Proofs of Propositions \ref{cl:aprsum} and \ref{cl:aprinf}}\label{sec:apapr}
\proof[\textbf{Proof of Proposition \ref{cl:aprsum}}]
For a given input $\e=(\e_1,\ldots,\e_n)$, let $\ell_\e$ be the smallest index $\ell$ such that $\sum_{j=1}^\ell\one_{\e_j\neq 0}= 80\B\log n$. If $\sum_{j=1}^n\one_{\e_j\neq 0}<80\B\log n$, then set $\ell_\e=n+1$. 
While executing $\Apr$ using randomness $r$, let $j_\e^r$ denote the index when count $\Cb$ reaches $20\log n\cdot p\B$, else set $j_\e^r=n+1$ -- when count $\Cb$ is always less than $20\log n\cdot p\B$. 
Let $\calG_\e$ be an event defined as follows: $\calG_\e=\{r\mid j_\e^r>\ell_\e\}$.

We provide no non-trivial approximation guarantees for the output of the algorithm $\Apr$ conditioned on $\calG_\e$, and use the Chernoff bound to show that $\forall \e\in\{-1,0,1\}^n$, $\Pr[\calG_\e]\le 1/n^5$ (over the private randomness $r\sim R^\Apr$). Let $Z_1,\ldots,Z_{80\B\log n}$ be independent random variables such that 
\[Z_j=\begin{cases}
    1       & \quad \text{with probability } p \\
    0  & \quad \text{otherwise}.
  \end{cases}\]
  Note that if $\ell_\e=n+1$, then $\Pr[\calG_\e]=0$. Otherwise, $\Pr[\calG_\e]=\Pr_{r}\left[j_\e^r>\ell_\e\right]$ is the probability that count $\Cb$ is less than $20\log n \cdot p\B$ after reading $\ell_\e$ elements of $\e$. Note that, as long as $\Cb<20\log n \cdot p\B$, it increases by $\one_{\e_j\neq 0}$ with probability $p$ at the $j$th time-step. Therefore, $\Pr_{r}\left[j_\e^r>\ell_\e\right]$ is equal to the probability that a sum of $\ell_\e$ independent random variables (where $j$th one is $\one_{\e_j\neq 0}$ with probability $p$ and 0 otherwise) is less than $ 20\log n \cdot p\B$. As time-steps with $\e_j=0$ contribute 0 to the sum always and other time-steps contribute $1$ with probability $p$, we have that
\begin{align*}
\Pr[\calG_\e]&=\Pr_{r\mid\forall j,\; r_j\sim \Ber(p)}\left[j_\e^r>\ell_\e\right]\\
&=\Pr\left[\sum_{j=1}^{80\B\log n}Z_j< 20\log n \cdot p\B\right] &\left(\text{as }\sum_{j\le \ell_\e}\one_{\e_j\neq 0}= 80\B\log n\right)\\
&=\Pr\left[\left|\sum_jZ_j-\bbE\left[\sum_jZ_j\right]\right|>40\log n\cdot p\B\right]&\left(\text{as }\bbE\left[\sum_jZ_j\right]=80\log n\cdot p\B\right)\\
&\le 2 \exp{\left(-\frac{\frac{1}{4}(80\log n\cdot p\B)}{3}\right)}&\text{(Chernoff bound, Equation \eqref{eq:ch1})}\\
&<2\exp{\left(-6\log n\cdot p\B\right)}\\
&<1/n^5 &\text{($p\B>1$ as $\B>\gamma\sqrt{n}$ and $\gamma>2/\sqrt{n}$)}
\end{align*}

Note that while executing $\Apr$ on input $\e$ and randomness $r$, $\Deltasm/p+\Deltalg$ can be written as $\left(\sum_{j=1}^{j_\e^r} r_j \e_j\right)/p+\sum_{j=j_\e^r+1}^n\e_j$. Next, we first use Bernstein's inequality to show that $\forall \ell\le \ell_a$, 
\begin{equation}
\label{eq:aprsum1}
\left|\frac{\sum_{j=1}^\ell r_j \e_j}{p}-\sum_{j=1}^{\ell}\e_j\right|\le \frac{\gamma}{2}\sqrt{n} \text{ with probability $1-\frac{1}{2n^4}$ when $r_1,\ldots, r_n\sim i.i.d.\; \Ber(p)$}
\end{equation}
Let $Z_1',\ldots,Z_n'$ be independent random variables such that $Z_j'=\e_j$ with probability $p$ and 0 otherwise. Thus, $\forall j,\; |Z_j'|\le 1$ and $\Var(Z_j')\le p$. For $p=1$, Equation \eqref{eq:aprsum1} is trivially true, and for $p<1$, the above probability can be rewritten as follows:
\begin{align*}
&\Pr_{\forall j, \;r_j\sim\Ber(p)}\left[\left|\frac{\sum_{j=1}^\ell r_j \e_j}{p}-\sum_{j=1}^{\ell}\e_j\right|> \frac{\gamma}{2}\sqrt{n}\right]\\
&\;\;\;\;\;\;\;=\Pr_{\forall j,\; r_j\sim\Ber(p)}\left[\left|\sum_{j=1}^\ell r_j \e_j-\bbE\left[\sum_{j=1}^{\ell}r_j\e_j\right]\right|> \frac{\gamma}{2}p\sqrt{n}\right]\\
&\;\;\;\;\;\;\;=\Pr\left[\left|\sum_{j\le \ell\mid a_j\neq 0} Z_j'-\bbE\left[\sum_{j\le \ell\mid a_j\neq 0}Z_j'\right]\right|> \frac{\gamma}{2}p\sqrt{n}\right]\\
&\;\;\;\;\;\;\;\le 2\exp\left(-\frac{\frac{1}{2}\cdot \left(\frac{\gamma}{2}p\sqrt{n}\right)^2}{p\left(\sum_{j=1}^\ell\one_{\e_j\neq 0}\right)+\frac{1}{3}\left(\frac{\gamma}{2}p\sqrt{n}\right)}\right)
&&\text{(Berstein's inequality, Equation \eqref{eq:ber})}\\
&\;\;\;\;\;\;\;\le 2\exp\left(-\frac{\frac{1}{2}\cdot \left(\frac{\gamma}{2}p\sqrt{n}\right)^2}{p\left(80\log n\B\right)+\frac{1}{3}\left(\frac{\gamma}{2}p\sqrt{n}\right)}\right)
&&\left(\text{as } \ell\le \ell_\e \text{ and }\sum_{j\le \ell_\e}\one_{\e_j\neq 0}= 80\B\log n\right)\\
&\;\;\;\;\;\;\;\le 2\exp\left(-\frac{\frac{1}{2}\cdot \left(\frac{\gamma}{2}p\sqrt{n}\right)^2}{2p\left(80\log n\B\right)}\right)
&&\left(\text{as } \B>\gamma\sqrt{n}\right)\\
&\;\;\;\;\;\;\;\;=2\exp\left(-p\cdot \frac{\gamma^2n}{1280\log n\B}\right)\\
&\;\;\;\;\;\;\;\;<\frac{1}{2n^4}
&&\left(\text{as }p<1\text{ implies } p=6000\log^2 n\cdot\frac{\B}{\gamma^2 n}\right)
\end{align*}
Let $\calG_\e'$ be the event as follows: $\calG_\e'=\left\{r\mid\exists \ell\le \ell_\e, \left|\frac{\sum_{j=1}^\ell r_j \e_j}{p}-\sum_{j=1}^{\ell}\e_j\right|>\frac{\gamma}{2}\sqrt{n}\right\}$. 
Above, we proved that $\Pr_{r\mid \forall j, \;r_j\sim \Ber(p)}[\calG_\e']<\frac{1}{2n^3}$. 
As $\sum_{j=1}^n\e_j\in[-n, n]$, we note that for all real values $v$, \[\left|\max\{\min\{v,n\},-n\}-\sum_{j=1}^n\e_j\right|\le\left|v-\sum_{j=1}^n\e_j\right|.\] Therefore, for all inputs $\e$, as conditioned on $\neg \calG_\e$, $j_\e^r\le \ell_a$,
\begin{align*}
\Pr_{r\mid \forall j, \;r_j\sim \Ber(p)}\left[\left|\Apr(\e,r)-\sum_{j=1}\e_j\right|>\frac{\gamma}{2}\sqrt{n}\right]
&\le\Pr_{r\mid \forall j, \;r_j\sim \Ber(p)}\left[\left|\left(\sum_{j=1}^{j_\e^r} r_j \e_j\right)/p-\sum_{j=1}^{j_\e^r}\e_j\right|>\frac{\gamma}{2}\sqrt{n}\right]\\
&\le \Pr[\calG_\e]+\Pr[\calG_\e']<\frac{1}{n^3}.
\end{align*}

This proves the proposition. The implication follows from the fact that in the worst-case, the output of the algorithm $\Apr$ can be $2n-$far from $\sum_{j=1}^n \e_j$. Therefore, $\forall$ distributions $\calD$ on $\{-1,0,1\}^n$, as $\gamma>4/\sqrt{n}$,
\begin{align*}
\bbE_{\e\sim \calD,r\sim R^\Apr} \left[\left(\Apr(\e, r)-\sum_{j=1}^n \e_j\right)^2\right]< \frac{1}{n^3}\cdot 4n^2 +\frac{\gamma^2n}{4}<\gamma^2n.&\qedhere
\end{align*}

\proof[\textbf{Proof of Proposition \ref{cl:aprinf}}]
As the starting state of $\Apr$ is deterministic, $\Ent(\Apr_0)=0$. At every time-step, $\Apr$ maintains three counters $\Deltasm$, $\Cb$ and $\Deltalg$. As $\Apr$ stops increasing counts $\Deltasm$ and $\Cb$, whenever $\Cb$ reaches $20\log n\cdot p\B$, $\Deltasm$ and $\Cb$ are integers in the sets $\{-20\log n\cdot p\B,\ldots,0,\ldots,20\log n\cdot p\B\}$ and $\{0,\ldots,20\log n\cdot p\B\}$ respectively. And, $\Deltalg$ is an integer in the set $\{-n,\ldots,n\}$. 

Let $\In$ be the set of inputs $\e$ with at most $4\B\log n$ non-zero indices, that is, $$\In=\left\{\e\mid \sum_{j=1}^n\one_{\e_j\neq 0}< 4\B\log n\right\}.$$ As $\bbE_{\e\sim \calD}\left[\sum_{j=1}^n \one_{\e_j\neq 0}\right]\le \B$, using Markov's inequality, we have that
\begin{equation}
\label{eq:aprinf1}
\Pr_{\e\sim \calD}[\e\not\in\In]=\Pr_{\e\sim \calD}\left[\sum_{j=1}^n\one_{\e_j\neq 0}\ge 4\B\log n\right]\le \frac{\bbE_{\e\sim \calD}\left[\sum_{j=1}^n \one_{\e_j\neq 0}\right]}{4\B\log n}\le \frac{1}{4\log n}.
\end{equation}
While executing $\Apr$ on input $\e$ using randomness $r$, let $j_\e^r$ denote the index when count $\Cb$ reaches $20\log n\cdot p\B$, else set $j_\e^r=n+1$ -- when count $\Cb$ is always less than $20\log n\cdot p\B$. Given input $\e\in\In$, let $\Ra_\e$ be the set of private randomness $r$ such that $j_\e^r=n+1$. Next, we prove that 
\begin{equation}
\label{eq:aprinf2}
\Pr_{r\mid \forall j,\;r_j\sim\Ber(p)}[r\not\in\Ra_\e]<\frac{1}{n} \text{ for all inputs }\e\in\In.
\end{equation}
First, we show that, for $a\in\In$, $\Pr_r[j_\e^r=\ell]< 1/n^2,\;\forall \ell\in[n]$. Equation \eqref{eq:aprinf2} follows using the union bound.  Let $c_\e^\ell=\sum_{j=1}^\ell\one_{\e_j\neq 0}$. As $\e\in\In,\; \forall\ell\in[n],\; c_\e^\ell< 4\B\log n$. Let $Z_1,\ldots,Z_{4\B\log n}$ be independent random variables such that 
\[Z_j=\begin{cases}
    1       & \quad \text{with probability } p \\
    0  & \quad \text{otherwise}.
  \end{cases}\]
  $\Pr_{r}\left[j_\e^r=\ell\right]$ is the probability that count $\Cb$ is equal to $20\log n \cdot p\B$ after reading $\ell$ elements of $\e$, and is less than $20\log n \cdot p\B$ after the $(\ell-1)$th time-step. Note that, as long as $\Cb<20\log n \cdot p\B$, it increases by $\one_{\e_j\neq 0}$ with probability $p$ at the $j$th time-step. Therefore, $\Pr_{r}\left[j_\e^r=\ell\right]$ is at most the probability that a sum of $\ell$ independent random variables (where the $j$th one is $\one_{\e_j\neq 0}$ with probability $p$ and 0 otherwise) is at least $ 20\log n \cdot p\B$. As time-steps with $\e_j=0$ contribute 0 to the sum always and other time-steps contribute $1$ with probability $p$, we have that (recall $c_\e^\ell=\sum_{j=1}^\ell\one_{\e_j\neq 0}$) 
  
\begin{align*}
\Pr_{r\mid \forall j,\;r_j\sim\Ber(p)}[j_\e^r=\ell]&\le\Pr\left[\sum_{j=1}^{c_\e^\ell}Z_j\ge 20\log n \cdot p\B\right]\\
&=\Pr\left[\sum_{j=1}^{c_\e^\ell}Z_j\ge p\cdot c_\e^\ell+16\log n\cdot p\B \right]&\left(\text{as } \bbE\left[\sum_{j=1}^{c_\e^\ell}Z_j\right]=p\cdot c_\e^\ell< 4\log n\cdot p\B\right)\\
&\le \exp{\left(-\frac{16\log n\cdot p\B}{3}\right)}&\text{(Chernoff bound, Equation \eqref{eq:ch3})}\\
&<1/n^2 &\text{($p\B>1$ as $\B>\gamma\sqrt{n}$ and $\gamma>2/\sqrt{n}$)}
\end{align*}

While executing $\Apr$ on an input $\e\in\In$ using randomness $r\in\Ra_\e$, count $\Cb$ never reaches $20\log n \cdot p\B$ and hence, $\Deltalg=0$ for the entirety of the algorithm. Using Equations \eqref{eq:aprinf1} and \eqref{eq:aprinf2},
\begin{equation}
\label{eq:aprinf3}
\Pr_{\e\sim\calD,\; r\mid \forall j,\;r_j\sim\Ber(p)}[\e\in\In\;\land \;r\in\Ra_\e]\ge \left(1-\frac{1}{4\log n}\right)\left(1-\frac{1}{n}\right)>1-\frac{1}{3\log n}.
\end{equation}
Let $\Deltasm_j$, $\Cb_j$ and $\Deltalg_j$ be random variables for values $\Deltasm$, $\Cb$ and $\Deltalg$ after $\Apr$ reads the first $j$ input elements. Here, randomness comes from both the input $\e$ to the algorithm $\Apr$, as well as the private randomness $r$ used by the algorithm. Recall that, $\Apr_j$ represents the random variable for the memory state of $\Apr$ after reading $j$ input elements. Therefore, \[\forall j\in[n], \;\Ent(\Apr_j)=\Ent(\Deltasm_j,\Cb_j,\Deltalg_j)\le \Ent(\Deltasm_j)+\Ent(\Cb_j)+\Ent(\Deltalg_j).\]
Algorithm $\Apr$ also remembers values $p$ and $20\log n \cdot p \B$; but as these are constants, they do not have entropy. As discussed before, $\Deltasm$ and $\Cb$ are integers in the sets $\{-20\log n\cdot p\B,\ldots,20\log n\cdot p\B\}$ and $\{0,\ldots,20\log n\cdot p\B\}$ respectively. And, $\Deltalg$ is an integer in the set $\{-n,\ldots,n\}$. Therefore, $\Ent(\Deltasm_j), \Ent(\Cb_j)\le \log (40\log n\cdot p\B+1)$. Equation \eqref{eq:aprinf3} implies that $\Pr[\Deltalg_j=0]>1-\frac{1}{3\log n}$. Let $q_\ell=\Pr[\Deltalg_j=\ell], \forall \ell\in\{-n,\ldots, n\}$. Then,
\begin{align*}
\Ent(\Deltalg_j)&=q_0\log{\left(\frac{1}{q_0}\right)}+\sum_{\ell\in\{-n,\ldots,-1,1,\ldots, n\}}q_\ell\log{\left(\frac{1}{q_\ell}\right)}\\
&\le q_0\log{\left(\frac{1}{q_0}\right)}+(1-q_0)\log{\frac{2n}{1-q_0}}&\text{(Using Jensen's inequality)}\\
&\le 1+(1-q_0)\log 2n\\
&\le 1+\frac{1}{3\log n}\log 2n \; \le 2&\left(q_0>1-\frac{1}{3\log n}\right)
\end{align*}
Therefore, $\Ent(\Apr_j)\le 2\log{(40\log n\cdot p\B+1)} +2$. As $p\le 6000\log^2 n\cdot \left(\frac{\B}{\gamma^2 n}\right)$, we get 
\begin{align*}
\Ent(\Apr_j)&\le 2+ 2\log{\left(50\log n\cdot 6000\log^2 n\cdot \left(\frac{\B}{\gamma^2 n}\right)\B\right)}\\
&< 40+6\log \log n+2\log{\left(\frac{B}{\gamma\sqrt{n}}\right)}.&\qedhere
\end{align*}

\subsection{Omitted Proofs from Subsection \ref{subsec:a}} \label{sec:apsingle}
\proof[\textbf{Proof of Claim \ref{cl:acomb}}]
Note that, by linearity of expectation, 
$\bbE\left[\sum_{j=1}^n\one_{Y_j\neq X_j'}\right]=\sum_{j=1}^n{\bbE\left[\one_{Y_j\neq X_j'}\right]}$.
Next, we focus on $\bbE\left[\one_{Y_j\neq X_j'}\right]$ for a given $j\in n$. By the law of total expectation, 
\begin{equation}
\label{eq:clacomb1}
\bbE\left[\one_{Y_j\neq X_j'}\right]=\bbE_{\sfM'_{<k},\sfM'_{(\le k, j-1)}}\left[\bbE\left[\one_{Y_j\neq X_j'}\middle| \sfM'_{<k}=m'_{<k}, \sfM'_{(\le k, j-1)}=m'_{(\le k, j-1)}\right]\right].
\end{equation}
Conditioned on $\sfM'_{<k}=m'_{<k}, \sfM'_{(\le k, j-1)}=m'_{(\le k, j-1)}$, the relation between $y_j$ and $x'_j$ is determined by Steps \ref{st:a1} to \ref{st:a2} of Algorithm \ref{al:a}. It is easy to see that $x'_j\neq y_j$ with probability $2|\beta_j|$ whenever $y_j=-\text{sign}(\beta_j)$. Here, $\text{sign}(\beta_j)=1$ if $\beta_j> 0$ and $-1$ otherwise. 

Therefore, 
$\bbE\left[\one_{Y_j\neq X_j'}\middle| \sfM'_{<k}=m'_{<k}, \sfM'_{(\le k, j-1)}=m'_{(\le k, j-1)}\right]$ can be rewritten as 
\begin{align*}
2|\beta_j|\cdot \Pr\left[Y_j\neq \text{sign}(\beta_j)\middle| \sfM'_{<k}=m'_{<k}, \sfM'_{(\le k, j-1)}=m'_{(\le k, j-1)}\right]=|\beta_j|,
\end{align*}
where the equality follows from the fact that $Y_j$ is drawn from uniform distribution on $\zo$, independent of $\Imi_{j-1}=(\sfM'_{<k}, \sfM'_{(\le k, j-1)})$. Therefore, using Step \ref{st:a1},
\begin{equation}
\label{eq:clacomb2}
\bbE\left[\one_{Y_j\neq X_j'}\middle| \sfM'_{<k}=m'_{<k}, \sfM'_{(\le k, j-1)}=m'_{(\le k, j-1)}\right]=\left|\Pr\left[X_j=1\;|\; \sfM_{(\le k,j-1)}=m'_{(\le k,j-1)}, \sfM_{< k}=m'_{<k}\right]-\frac{1}{2}\right|.
\end{equation}
Recall that $X_1, X_2,\ldots, X_n$ is the input to the $k$-pass algorithm $\sfM$, drawn from the  uniform distribution on $\zo^n$. 
Therefore, the entropy of the $j$th bit, $\Ent(X_j)=1$. 
We compare the R.H.S. of Equation~\eqref{eq:clacomb2} to 
\[\Ent(X_j)-\Ent(X_j|\sfM_{(\le k,j-1)}=m'_{(\le k,j-1)}, \sfM_{< k}=m'_{<k}).\]
Let $Z$ be a binary random variable taking value $1$ with probability $q$. Then, $\Ent(Z)=-(q\log q+(1-q)\log{(1-q}))$. It follows from easy calculations that \[\forall q\in[0,1],\;\; 1+q\log q+(1-q)\log {(1-q)}\ge \left(q-\frac{1}{2}\right)^2,\]
where equality holds at $q=1/2$.
That is, $1-\Ent(Z)\ge \left(\Pr[Z=1]-\frac{1}{2}\right)^2$. Therefore, we can rewrite Equation~\eqref{eq:clacomb2} as 
\begin{align*}
\bbE\left[\one_{Y_j\neq X_j'}\middle| \sfM'_{<k}=m'_{<k}, \sfM'_{(\le k, j-1)}=m'_{(\le k, j-1)}\right]&=\left|\Pr\left[X_j=1\;|\; \sfM_{(\le k,j-1)}=m'_{(\le k,j-1)}, \sfM_{< k}=m'_{<k}\right]-\frac{1}{2}\right|\\
&\le \sqrt{1-\Ent(X_j|\sfM_{(\le k,j-1)}=m'_{(\le k,j-1)}, \sfM_{< k}=m'_{<k})}\\
&=\sqrt{\Ent(X_j)-\Ent(X_j|\sfM_{(\le k,j-1)}=m'_{(\le k,j-1)}, \sfM_{< k}=m'_{<k})}.
\end{align*}
Coming back to Equation~\eqref{eq:clacomb1}, we can bound the R.H.S. by 
\[\bbE_{(m'_{(\le k,j-1)},m'_{<k})\sim (\sfM'_{(\le k,j-1)},\sfM'_{< k})}\left[\sqrt{\Ent(X_j)-\Ent(X_j|\sfM_{(\le k,j-1)}=m'_{(\le k,j-1)}, \sfM_{< k}=m'_{<k})}\right]\]
As the joint distribution on $(\sfM'_{(\le k,j-1)},\sfM'_{< k})$ is identical to that on $(\sfM_{(\le k,j-1)},\sfM_{< k})$ (Claim \ref{cl:imd}), we can rewrite the R.H.S. as follows (using concavity of the $\sqrt{z}$ function)
\begin{align*}
&\bbE_{(m'_{(\le k,j-1)},m'_{<k})\sim (\sfM_{(\le k,j-1)},\sfM_{< k})}\left[\sqrt{\Ent(X_j)-\Ent(X_j|\sfM_{(\le k,j-1)}=m'_{(\le k,j-1)}, \sfM_{< k}=m'_{<k})}\right]\\
&\;\;\;\;\;\le \sqrt{\bbE_{(m'_{(\le k,j-1)},m'_{<k})\sim (\sfM_{(\le k,j-1)},\sfM_{< k})}\left[\Ent(X_j)-\Ent(X_j|\sfM_{(\le k,j-1)}=m'_{(\le k,j-1)}, \sfM_{< k}=m'_{<k})\right]}\\
&\;\;\;\;\;=\sqrt{\I\left(X_j;\sfM_{(\le k,j-1)}, \sfM_{< k}\right)}.
\end{align*}
Substituting in Equation \eqref{eq:clacomb1} and summing over $j$, we get that 
\begin{align*}
\bbE\left[\sum_{j=1}^n\one_{Y_j\neq X_j'}\right]&\le \sum_{j=1}^n\sqrt{\I\left(X_j;\sfM_{(\le k,j-1)}, \sfM_{< k}\right)}\\
&\le \sqrt{n\cdot \sum_{j=1}^n \I\left(X_j;\sfM_{(\le k,j-1)}, \sfM_{< k}\right)}. \tag{Using Cauchy–Schwarz inequality}
\end{align*}
Next, we show that $\sum_{j=1}^n \I\left(X_j;\sfM_{(\le k,j-1)}, \sfM_{< k}\right)\le \Ent(\sfM_{<k})$, which proves the claim. By the chain rule, 
\begin{align*}
&\sum_{j=1}^n \I\left(X_j;\sfM_{(\le k,j-1)}, \sfM_{< k}\right)\\
&=\sum_{j=1}^n\sum_{i=0}^{k-1} \I\left(X_j;\sfM_{(i+1,j-1)}, \sfM_{i}\middle| \sfM_{(\le i,j-1)}, \sfM_{<i}\right)\\
&=\sum_{i=0}^{k-1} \sum_{j=1}^n \left(\I\left(X_j; \sfM_{i}\middle| \sfM_{(\le i,j-1)}, \sfM_{<i}\right)+\sum_{i=1}^k \I\left(X_j;\sfM_{(i+1,j-1)}\middle| \sfM_{(\le i,j-1)}, \sfM_{\le i}\right)\right)\\
&=\sum_{i=0}^{k-1} \sum_{j=1}^n \I\left(X_j; \sfM_{i}\middle| \sfM_{(\le i,j-1)}, \sfM_{<i}\right) \tag{$\I\left(X_j;\sfM_{(i+1,j-1)}\middle| \sfM_{(\le i,j-1)}, \sfM_{\le i}\right)=0$ using Corollary \ref{cor:indep2}}\\
&\le \sum_{i=0}^{k-1} \Ent(\sfM_i| \sfM_{<i}) \tag{proved below}\\
&=\Ent(\sfM_{<k}). \tag{Chain rule}
\end{align*}
Recall that $\sfM_i$ is the end memory state of the $i$th pass. For $i=0$, $\I\left(X_j; \sfM_{i}\middle| \sfM_{(\le i,j-1)}, \sfM_{<i}\right)$ can be written as $\I\left(X_j; \sfM_{0}\right)$, which is 0 for all $j\in[n]$. Next, we prove that 
\[\forall i\in[k-1],\;\sum_{j=1}^n \I\left(X_j; \sfM_{i}\middle| \sfM_{(\le i,j-1)}, \sfM_{<i}\right)\le \Ent(\sfM_i|\sfM_{<i}),\] as follows
\begin{align*}
&\sum_{j=1}^n \I\left(X_j; \sfM_{i}\middle| \sfM_{(\le i,j-1)}, \sfM_{<i}\right)\\
&\le \sum_{j=1}^n \I\left(X_j; \sfM_{i}, X_{[1,j-1]}, \sfM_{(\le i,[0,j-2])}\middle| \sfM_{(\le i,j-1)}, \sfM_{<i}\right)\\
&=\sum_{j=1}^n \I\left(X_j; X_{[1,j-1]}, \sfM_{(\le i,[0,j-2])}\middle| \sfM_{(\le i,j-1)}, \sfM_{<i}\right)+\sum_{j=1}^n \I\left(X_j; \sfM_{i}\middle| \sfM_{(\le i,j-1)}, \sfM_{<i}, X_{[1,j-1]}, \sfM_{(\le i,[0,j-2])} \right)\\
&=\sum_{j=1}^n \I\left(X_j; \sfM_{i}\middle| \sfM_{(\le i,j-1)}, \sfM_{<i}, X_{[1,j-1]}, \sfM_{(\le i,[0,j-2])}\right)\tag{using Corollary \ref{cor:indep1}}\\
&\le \sum_{j=1}^n \I\left(X_j, \sfM_{(\le i,j-1)} ; \sfM_{i}\middle| \sfM_{<i}, X_{[1,j-1]}, \sfM_{(\le i,[0,j-2])}\right)\\
&=\I(X_{[1,n]}, \sfM_{(\le i,[0,n-1])},\sfM_i|\sfM_{<i})\tag{Chain rule}\\
&\le \Ent(\sfM_i|\sfM_{<i}).&\qedhere
\end{align*}

\begin{proof}[\textbf{Proof of Lemma \ref{cl:ainf}}]
Recall that input $Y$ to Algorithm $\sfO$ (Algorithm \ref{al:a} with $\gamma=\frac{\ve}{10}$ and $\B=\sqrt{k\cdot n^{1+\lambda}}$) is drawn from the uniform distribution on $\zo^n$. Let $\sfO_j$ ($j\in\{0,\ldots,n\}$) represent the random variable for the $j$ memory state of Algorithm $\sfO$. According to Step \ref{st:a3} of Algorithm \ref{al:a}, $\sfO_j=(\Imi_j,\Apr_j)$. Here, $\Imi_j$ represents the random variable for the $j$th memory state of algorithm \ref{al:im} ($\Imi$) when run on input $Y$, and $\Apr_j$ represents the random variable for the $j$th memory state of algorithm \ref{al:apr} ($\Apr$), when run on input $Y-X'$, with parameters $\gamma=\frac{\ve}{10}$ and $\B=\sqrt{k\cdot n^{1+\lambda}}$. With these parameters, Claim \ref{cl:acomb} implies that input $Y-X'$ to $\Apr$ satisfies conditions of Proposition \ref{cl:aprinf}. Thus, using claim \ref{cl:aprinf}, we get 
\begin{equation}
\label{eq:clainf1}
\forall j\in \{0,1,\ldots,n\}, \;\Ent(\Apr_j) \le 40+6\log \log n +2\log{\left(\frac{\B}{\gamma \sqrt{n}}\right)}\le 50+6\log \log n +\log{\left(\frac{k\cdot n^{\lambda}}{\ve^2}\right)}.
\end{equation}
We analyze information cost of Algorithm $\sfO$ as follows:
\begin{align}
\nonumber
IC(\sfO)&=\sum_{j=1}^n \sum_{\ell=1}^j\I\left(\sfO_j;Y_\ell\middle|\sfO_{\ell-1}\right)\\
\nonumber
&=\sum_{j=1}^n \sum_{\ell=1}^j\I\left(\Imi_j,\Apr_j;Y_\ell\middle|\Imi_{\ell-1},\Apr_{\ell-1}\right)\\
\label{eq:clainf2}
&=\sum_{j=1}^n \sum_{\ell=1}^j\I\left(\Imi_j;Y_\ell\middle|\Imi_{\ell-1},\Apr_{\ell-1}\right)+\sum_{j=1}^n \sum_{\ell=1}^j\I\left(\Apr_j;Y_\ell\middle|\Imi_{\ell-1},\Apr_{\ell-1},\Imi_j\right)
\end{align}
We bound these two quantities separately. We first prove that $\I\left(\Imi_j;Y_\ell\middle|\Imi_{\ell-1},\Apr_{\ell-1}\right)\le \I\left(\Imi_j;Y_\ell\middle|\Imi_{\ell-1}\right)$ for all $1\le \ell\le j\le n$. This can be shown by proving that 
\[\I\left(\Imi_j;\Apr_{\ell-1}\middle|\Imi_{\ell-1},Y_\ell\right)=0.\] Recall that $\Ra^{\Apr}_{j}$ represents the private randomness used by Algorithm \ref{al:apr} ($\Apr$) at the $j$th time-step, which is independent of variables $Y, X'$ and $\{\Imi_{j}\}_{j\in\{0,\ldots,n\}}$.
\begin{align*}
&\I\left(\Imi_j;\Apr_{\ell-1}\middle|\Imi_{\ell-1},Y_\ell\right)\\
&\le \I\left(\Imi_j;\Apr_{\ell-1},Y_{[1,\ell-1]},X'_{[1,\ell-1]},\Ra^{\Apr}_{[1,\ell-1]}\middle|\Imi_{\ell-1},Y_\ell\right)\\
&=\I\left(\Imi_j;Y_{[1,\ell-1]},X'_{[1,\ell-1]},\Ra^{\Apr}_{[1,\ell-1]}\middle|\Imi_{\ell-1},Y_\ell\right)+\I\left(\Imi_j;\Apr_{\ell-1}\middle|\Imi_{\ell-1},Y_\ell,Y_{[1,\ell-1]},X'_{[1,\ell-1]},\Ra^{\Apr}_{[1,\ell-1]}\right)\\
&=\I\left(\Imi_j;Y_{[1,\ell-1]},X'_{[1,\ell-1]},\Ra^{\Apr}_{[1,\ell-1]}\middle|\Imi_{\ell-1},Y_\ell\right)\tag{$\Apr_{\ell-1}$ is deterministic function of $Y_{[1,\ell-1]},X'_{[1,\ell-1]}$ and $\Ra^{\Apr}_{[1,\ell-1]}$}\\
&=\I\left(\Imi_j;Y_{[1,\ell-1]},X'_{[1,\ell-1]}\middle|\Imi_{\ell-1},Y_\ell\right)+\I\left(\Imi_j;\Ra^{\Apr}_{[1,\ell-1]}\middle|\Imi_{\ell-1},Y_\ell,Y_{[1,\ell-1]},X'_{[1,\ell-1]}\right)\\
&=0.
\end{align*}
The last equality follows from the fact that 1) $\Imi_j$ is independent of the input and the inner workings of the first $\ell-1$ steps conditioned on $\Imi_{\ell-1}$, and 2) $\Apr$'s private randomness $\Ra^\Apr_{[1,\ell-1]}$ is independent of $Y,X'$ and memory states of $\Imi$. Therefore, we can write the first quantity in Expression \eqref{eq:clainf2} as
\begin{align}
\nonumber
\sum_{j=1}^n \sum_{\ell=1}^j\I\left(\Imi_j;Y_\ell\middle|\Imi_{\ell-1},\Apr_{\ell-1}\right)&\le \sum_{j=1}^n \sum_{\ell=1}^j\I\left(\Imi_j;Y_\ell\middle|\Imi_{\ell-1}\right)\\
\nonumber
&=IC(\Imi)\\
\label{eq:clainf3}
&\le \miccoin(\sfM).
\end{align}
The last inequality follows from Claim \ref{cl:iminf}. Next, we analyze the second quantity in Expression \eqref{eq:clainf2} for a given $j\in[n]$, as follows:
\begin{align*}
&\sum_{\ell=1}^j\I\left(\Apr_j;Y_\ell\middle|\Imi_{\ell-1},\Apr_{\ell-1},\Imi_j\right)\\
&\le \sum_{\ell=1}^j\I\left(\Apr_j;Y_\ell,\Imi_{[0,\ell-2]},\Apr_{[0,\ell-2]},Y_{[1,\ell-1]}\middle|\Imi_{\ell-1},\Apr_{\ell-1},\Imi_j\right)\\
&=\sum_{\ell=1}^j\I\left(\Apr_j;\Imi_{[0,\ell-2]},\Apr_{[0,\ell-2]},Y_{[1,\ell-1]}\middle|\Imi_{\ell-1},\Apr_{\ell-1},\Imi_j\right)\\
&\;\;\;\;\;\;\;\;\;+\sum_{\ell=1}^j\I\left(\Apr_j;Y_\ell\middle|\Imi_{\ell-1},\Apr_{\ell-1},\Imi_{[0,\ell-2]},\Apr_{[0,\ell-2]},Y_{[1,\ell-1]},\Imi_j\right)\\
&=\sum_{\ell=1}^j\I\left(\Apr_j;Y_\ell\middle|\Imi_{\ell-1},\Apr_{\ell-1},\Imi_{[0,\ell-2]},\Apr_{[0,\ell-2]},Y_{[1,\ell-1]},\Imi_j\right)\tag{explained below}\\
&\le\sum_{\ell=1}^j\I\left(\Apr_j;\Imi_{\ell-1},\Apr_{\ell-1},Y_\ell\middle|\Imi_{[0,\ell-2]},\Apr_{[0,\ell-2]},Y_{[1,\ell-1]},\Imi_j\right)\\
&=\I\left(\Apr_j;\Imi_{[0,j-1]},\Apr_{[0,j-1]},Y_{[1,j]}\middle|\Imi_j\right)\tag{Chain rule}\\
&\le \Ent(\Apr_j).
\end{align*}
The second equality follows from the fact that, for a single-pass algorithm $\sfO$, $$\I\left(\sfO_j;\sfO_{[0,\ell-2]},Y_{[1,\ell-1]}\middle|\sfO_{\ell-1}\right)=0,$$ which implies that 
\[\I\left(\Apr_j, \Imi_j;\Apr_{[0,\ell-2]},\Imi_{[0,\ell-2]},Y_{[1,\ell-1]}\middle|\Apr_{\ell-1},\Imi_{\ell-1}\right)=0.\]
Summing over $j\in[n]$ and using Equation \eqref{eq:clainf1}, we get 
\begin{align}
\label{eq:clainf4}
\sum_{j=1}^n \sum_{\ell=1}^j\I\left(\Apr_j;Y_\ell\middle|\Imi_{\ell-1},\Apr_{\ell-1},\Imi_j\right)\le n\cdot \left(50+6\log \log n +\log{\left(\frac{k\cdot n^{\lambda}}{\ve}\right)}\right).
\end{align}
The claim follows from substituting Equations~\eqref{eq:clainf3} and \eqref{eq:clainf4} in Expression \eqref{eq:clainf2}.
\end{proof}

\begin{proof}[\textbf{Proof of Lemma \ref{cl:asum}}]
Note that as $Y$ is drawn from uniform distribution over $\zo^n$, we have $\bbE\left[\left(\sum_{j=1}^nY_j\right)^2\right]=n$. Therefore, 
\begin{align*}
\bbE_{\sfO_n}\left[\Var\left(\sum_{j=1}^n Y_j\middle| \sfO_n=\an_n\right)\right]&=\bbE_{\sfO_n}\left[\bbE\left(\sum_{j=1}^n Y_j\right)^2\middle| \sfO_n=\an_n\right]-\bbE_{\sfO_n}\left(\bbE\left[\sum_{j=1}^n Y_j\middle| \sfO_n=\an_n\right]\right)^2\\
&=n-\bbE_{\sfO_n}\left(\bbE\left[\sum_{j=1}^n Y_j\middle| \sfO_n=\an_n\right]\right)^2.
\end{align*}
We will prove that an upper bound of $\left(1-\frac{\ve}{2}\right)n$ on the expected variance of $\sum Y_j$ conditioned on $\sfO_n$ to prove the claim. As $X$ is also distributed uniformly over $\zo^n$, 
\begin{equation}
\nonumber
\bbE_{\sfM_{(k,n)}}\left(\bbE\left[\sum_{j=1}^n X_j\middle| \sfM_{(k,n)}=m_{(k,n)}\right]\right)^2\ge \ve n\implies \bbE_{\sfM_{(k,n)}}\left[\Var\left(\sum_{j=1}^n X_j\middle| \sfM_{(k,n)}=m_{(k,n)}\right)\right]\le n\cdot (1-\ve). 
\end{equation}
As $X,\sfM_{(k,n)}$ are identically distributed to $X',\sfM'_{(k,n)}$ (Claim \ref{cl:imd}), we get 
\begin{equation}
\label{eq:clasum1}
\bbE_{\sfM'_{(k,n)}}\left[\Var\left(\sum_{j=1}^n X'_j\middle| \sfM'_{(k,n)}=m'_{(k,n)}\right)\right]\le n\cdot (1-\ve). 
\end{equation}
Recall that $\Ee_j=Y_j-X'_j$ represents the random variable for the $j$th input element to subroutine $\Apr$ in Algorithm $\sfO$, $\Apr_n$ represents the random variable for the output of $\Apr$ and  $\Ra^{\Apr}_{j}$ represents the private randomness used at the $j$th time-step. Note that random variable $\Ee$ has joint distribution $\calD$ on $(\Ee_1,\ldots, \Ee_n)$. With parameters $\gamma=\frac{\ve}{10}$ and $\B=\sqrt{k\cdot n^{1+\lambda}}$, input $\Ee$ to algorithm $\Apr$ satisfies conditions of Proposition \ref{cl:aprsum}, which implies that
\begin{equation}
\label{eq:clasum2}
\bbE\left[\left(\sum_{j=1}^n\Ee_j-\Apr_n\right)^2\right]=\bbE_{\e\sim \calD,r\sim\Ra^\Apr}\left[\left(\sum_{j=1}^n\e_j-\Apr(\e ,r)\right)^2\right]\le \gamma^2 n.
\end{equation}
We upper bound the expected variance of $\sum Y_j$ conditioned on $\sfO_n$ as follows:
\begin{align*}
&\bbE_{\sfO_n}\left[\Var\left(\sum_{j=1}^n Y_j\middle| \sfO_n=\an_n\right)\right]\\
&=\bbE_{\sfO_n}\left[\Var\left(\sum_{j=1}^n Y_j-\sum_{j=1}^n X'_j + \sum_{j=1}^n X'_j\middle| \sfO_n=\an_n\right)\right]\\
&\le \bbE_{\sfO_n}\left[\Var\left(\sum_{j=1}^n Y_j-\sum_{j=1}^n X'_j\middle| \sfO_n=\an_n\right)\right]
+\bbE_{\sfO_n}\left[\Var\left(\sum_{j=1}^n X'_j\middle| \sfO_n=\an_n\right)\right]\\
&\;\;\;\;\;\;\;+2\bbE_{\sfO_n}\sqrt{\Var\left(\sum_{j=1}^n Y_j-\sum_{j=1}^n X'_j \middle| \sfO_n=\an_n\right)\cdot\Var\left(\sum_{j=1}^n X'_j\middle| \sfO_n=\an_n\right)}\\
&\le \bbE_{\sfO_n}\left[\Var\left(\sum_{j=1}^n Y_j-\sum_{j=1}^n X'_j\middle| \sfO_n=\an_n\right)\right]
+\bbE_{\sfO_n}\left[\Var\left(\sum_{j=1}^n X'_j\middle| \sfO_n=\an_n\right)\right]\\
&\;\;\;\;\;\;\;+2\sqrt{\bbE_{\sfO_n}\left[\Var\left(\sum_{j=1}^n Y_j-\sum_{j=1}^n X'_j \middle| \sfO_n=\an_n\right)\right]\cdot\bbE_{\sfO_n}\left[\Var\left(\sum_{j=1}^n X'_j\middle| \sfO_n=\an_n\right)\right]}.\tag{using Cauchy-Schwarz inequality}
\end{align*}
We prove that \[\bbE_{\sfO_n}\left[\Var\left(\sum_{j=1}^n X'_j\middle| \sfO_n=\an_n\right)\right]\le n(1-\ve),\] and 
\[\bbE_{\sfO_n}\left[\Var\left(\sum_{j=1}^n Y_j-\sum_{j=1}^n X'_j\middle| \sfO_n=\an_n\right)\right]\le \gamma^2 n=\frac{\ve^2}{100}n,\]
which implies that 
\[\bbE_{\sfO_n}\left[\Var\left(\sum_{j=1}^n Y_j\middle| \sfO_n=\an_n\right)\right]\le n(1-\ve)+n\cdot\frac{\ve^2}{100}+ 2n\cdot\frac{\ve}{10}\le n\left(1-\frac{\ve}{2}\right).\]
As $\sfO_n=(\Apr_n,\Imi_n)=(\Apr_n, \sfM'_{<k}, \sfM'_{(\le k,n)})$, we can rewrite the variances as 
\begin{align*}
&\bbE_{\sfO_n}\left[\Var\left(\sum_{j=1}^n X'_j\middle| \sfO_n=\an_n\right)\right]\\
&=\bbE_{\Apr_n,\sfM'_{<k}, \sfM'_{(\le k,n)}}\left[\Var\left(\sum_{j=1}^n X'_j\middle| \Apr_n=apr_n,\sfM'_{<k}=m'_{<k}, \sfM'_{(\le k,n)}=m'_{(\le k,n)}\right)\right]\\
&\le\bbE_{\sfM'_{(k,n)}}\left[\Var\left(\sum_{j=1}^n X'_j\middle| \sfM'_{(k,n)}=m'_{(k,n)}\right)\right]\tag{by law of total variance}\\
&\le n\cdot(1-\ve). \tag{using Equation~\eqref{eq:clasum1}}
\end{align*}
Next, we prove the bound on variance of $\left(\sum_{j=1}^n Y_j-\sum_{j=1}^n X'_j\right)$ as follows: 
\begin{align*}
 &\bbE_{\sfO_n}\left[\Var\left(\sum_{j=1}^n Y_j-\sum_{j=1}^n X'_j\middle| \sfO_n=\an_n\right)\right]\\
 &= \bbE_{\Apr_n,\Imi_n}\left[\Var\left(\sum_{j=1}^n Y_j-\sum_{j=1}^n X'_j\middle| \Apr_n=apr_n,\Imi_n=im_n\right)\right]\\
 &=\bbE_{\Apr_n,\Imi_n}\left[\Var\left(\sum_{j=1}^n Y_j-\sum_{j=1}^n X'_j-\Apr_n\middle| \Apr_n=apr_n,\Imi_n=im_n\right)\right]\tag{adding a constant to variance}\\
 &\le \bbE\left[\left(\sum_{j=1}^n Y_j-\sum_{j=1}^n X'_j-\Apr_n\right)^2\right]\\
&=\bbE\left[\left(\sum_{j=1}^n \Ee_j - \Apr_n\right)^2\right] \tag{$(\Ee_1,\Ee_2,\ldots,\Ee_n)$ is the input to algorithm $\Apr$}\\
&\le \gamma^2 n. \tag{using Equation \eqref{eq:clasum2}}
\end{align*}
This finishes the proof of the claim. 
\end{proof}

\subsection{Proof of Theorem \ref{thm:tcoinsvar}}\label{ap:multiplecoin}
Let $\sfM$ be a $k$-pass algorithm on a stream of $nt$ updates of the form $(X_j, s_j), j\in[nt]$ where $X_j$s are $i.i.d.$ uniform $\zo$ bits, and $\{s_j\}_{j\in[nt]}$ is a good order. Let $k,t\le n^\lam$, $\Ent(\sfM_i)\le n^\lam$, $\forall i\in \{0,\ldots, k\}$ ($\lam>0$ to be decided later) and for at least $1/2$ fraction of $s\in[t]$,
\begin{equation*}
		\bbE_{\sfM_{(k,nt)}}\left[\bbE\left[ \left( \sum_{j\in J_s} X_j \right) ~ \middle| ~
		\sfM_{(k,nt)}=m_{(k,nt)}\right]^2  \right] > \ve |J_s|.
		\end{equation*}
  We prove Theorem \ref{thm:tcoinsvar} similarly to the proof of Theorem \ref{th:main} by constructing a single-pass algorithm $\sfO$ for the $\tcoins$ using $\sfM$ and then using Theorem \ref{thmbgw:tcoins}. Using $\sfM$, we construct a single pass algorithm $\sfO$ (Algorithm \ref{al:at}) that uses private randomness such that, given a stream of $nt$ updates of the form $(Y_j,s_j)$ where $Y_j$s $i.i.d.$ uniform $\zo$ bits, the following holds
\begin{enumerate}
\item for the information cost of $\sfO$:
\begin{equation}\label{eq:tcoins1}
IC(\sfO)\; = \;\sum_{j=1}^{nt} \sum_{\ell=1}^ \I(\sfO_{j};Y_\ell| \sfO_{\ell-1})\;\le \; \miccoin(\sfM) + nt \cdot t\left(50+6\log \log {nt} +\log{\left(\frac{kt\cdot n^{\lambda}}{\ve^2}\right)}\right).
\end{equation}
Here, $\sfO_j$ denotes the random variable for the memory state of $\sfO$ after reading $j$ updates.

\item for the output of algorithm $\sfO$: for at least a $1/2$ fraction of $s\in[t]$, \begin{align*}
\bbE_{\sfO_{nt}}\left[\left(\bbE\left[ \sum_{j\in J_s} Y_j ~ \middle| ~
\sfO_{nt} =\an_{nt}\right]\right)^2  \right] \;\ge\; \frac{\ve}{2}\cdot |J_s|. 
\end{align*}
\end{enumerate}
Theorem \ref{thmbgw:tcoins} implies that there exists $\de_\ve>0$ such that $IC(\sfO)\ge  \de_\ve \cdot nt^2\log n$. If $\lam<\frac{\de_\ve}{10}$, then for sufficiently large $n$, 
\[50+6\log \log {nt} +\log{\left(\frac{kt\cdot n^{\lambda}}{\ve^2}\right)}\le 50+6(1+\lam)\log \log n +\log{\left(\frac{1}{\ve^2}\right)}+3\lam\log{n}< \frac{\de_\ve}{2} \log n.\]
Therefore, Equation \eqref{eq:tcoins1} implies that $\miccoin(\sfM)\ge IC(\sfO)-\frac{\de_\ve}{2}\cdot nt^2 \log n\ge \frac{\de_\ve}{2}\cdot nt^2 \log n$. Taking $\de=\de_\ve/2$ proves the theorem. 

Informally, similar to the proof of Theorem \ref{th:main}, $\sfO$ runs $k$ passes of $\sfM$ in parallel. Before reading the input stream $(Y_1,s_1),\ldots, (Y_{nt},s_{nt})$, $\sfO$ samples memory states at the end of first $k-1$ passes from the joint distribution on $(\sfM_0,\ldots,\sfM_{k-1})$. $\sfO$ then modifies the given input $Y$ to $X'$ such that the parallel execution of the $k-1$ passes of the algorithm $\sfM$ on $(X'_1,s_1),\ldots, (X'_{nt},s_{nt})$ end in the sampled memory states. $\sfO$ also maintains an approximation for the modification for all instances, that is of $\sum_{j\in J_s} (X'_j-Y_j)$, $\forall s\in[t]$; this helps $\sfO$ to compute $\sum_{j\in J_s} Y_j$ as long as $\sfM$ computes $\sum_{j\in J_s} X'_j$ after $k$ passes. As before, we want $\sfO$ to have comparable information cost to that of $\sfM$ and we maintain the approximation of the modification using Algorithm \ref{al:apr}. As in the proof of Theorem \ref{th:main}, we choose parameters $\gamma=\frac{\ve}{10}$ and $\B=\sqrt{kt\cdot n^{1+\lam}}$. For each $s\in[t]$, we run a separate copy of the approximation algorithm, which we denote by $\Apr^s$. We describe the one-pass algorithm $\sfO$ formally in Algorithm \ref{al:at}. For imitating $k$ passes of $\sfM$, Algorithm \ref{al:at} executes an Algorithm \ref{al:im} style subroutine that we denote by $\Imi$ -- $\Imi$ modifies input bit $y_j$ at the $j$th time-step to bit $x'_j$, given $im_{j-1}$.  In parallel, $\sfO$ runs $t$ copies of Algorithm \ref{al:apr} on the modification, where $(y_j-x'_j)\in \{-1,0,1\}$ is the next input element to the $s_j$th copy, that is, $\Apr^{s_j}$. At every time-step, $\sfO$ stores latest memory states of algorithms $\Imi$ and $\Apr^s$, $\forall s\in[t]$. We can assume that $\sfO$ stores the order $\{s_j\}_{j\in[nt]}$ throughout the algorithm, which would not affect the information cost as the order is determinisitic. While describing $\sfO$ formally in Algorithm \ref{al:at}, we will use Algorithm \ref{al:apr} as a black-box. It would be useful to define the inverse of function $q_s\colon [|J_s|]\rightarrow [nt]$, where recall that $q_s(u)=j$ if $y_j$ is the $u$-th element corresponding to the $s$th instance of the coin problem ($s_j=s$). 
\[q^{-1}_s(j)=\begin{cases}
0 \text{ if } j=0\\
q^{-1}_s(j-1)+1\text{ if }s_j=s \\
q^{-1}_s(j-1)\text{ if }s_j\neq s.
\end{cases}\]
Thus, $q^{-1}_s(j)$ computes the number of inputs that have been seen for the $s$th instance after $j$ stream updates. We denote the input stream to $\Apr^s$ by $\e^s$. At the $j$th step, we only run the next step for the algorithm $\Apr^{s_j}$ -- feed in its $q^{-1}_{s_j}(j)$th input element, that is, $\e^{s_j}_{q^{-1}_{s_j}(j)}=y_j-x'_j$. 
Let $\Apr_j$ represent the random variable for the memory states of $t$ copies of Algorithm \ref{al:apr} after $\sfO$ reads $j$ stream updates. Note that, as the $s$th copy has seen $q^{-1}_s(j)$ updates till the $j$th time-step, $\Apr_j=\left\{\Apr^{s}_{q^{-1}_s(j)}\right\}_{s\in[t]}$ where $\Apr^{s}_{u}$ is the random variable for $u$th memory state of $s$th copy of Algorithm \ref{al:apr}.
As in Subsection \ref{subsec:a}, let $\Apr^s_u(apr^s_{u-1},\e^s_u)$ denote the random variable for the $u$th memory state, when the $u$th input element is $\e^s_u$, and $(u-1)$th memory state is $apr^s_{u-1}$. 

Let $\Imi_j$ ($j\in\{0,\ldots,nt\}$) denote the random variable associated with values $im_j$ (defined at Step \ref{st:at7} of Algorithm \ref{al:at}). Note that, $\Imi$ can be seen as a one-pass algorithm on input sequence $(Y_j,s_j)_{j\in[nt]}$. As in Subsection \ref{subsec:im}, let $\sfM'_i$ denote the random variable associated with value $m'_i$ ($i\in\{0,1,\ldots, k-1)$. The distribution of $\sfM'_{<k}$ is defined at Step \ref{st:at1} of Algorithm \ref{al:at}. Let $\{\sfM'_{(i,j)}\}_{i\in[k],j\in\{0,\ldots,nt\}}$ denote the random variables associated with values $\{m'_{(i,j)}\}_{i\in[k],j\in\{0,\ldots,nt\}}$. The distribution of $\sfM'_{(i,j)}$ ($j\in[nt]$) is defined at Step \ref{st:at5} of Algorithm \ref{al:at}, and of $\sfM'_{(i,0)}$ is defined at Step \ref{st:at2}. Let $\{X'_j\}_{j\in[nt]}$ denote the random variable for value $x'_j$ in Step \ref{st:at3} of Algorithm \ref{al:at}.  These distributions depend on the joint distribution of $(X,\sfM_{\le k}, \sfM_{(\le k, [1,nt])})$ and the uniform distribution on $Y$. Let $\calD^s$ ($s\in[t]$) be the joint distribution generated by Algorithm \ref{al:at} on inputs to $\Apr^s$ that is, the joint distribution on $\left(Y_{q_s(1)}-X'_{q_s(1)},Y_{q_s(2)}-X'_{q_s(2)},\ldots, Y_{q_s(|J_s|)}-X'_{q_s(|J_s|)}\right)$. Let $\Ee^s_u$ be the random variable for the $u$th input element to $\Apr^s$, that is, $\Ee^s_u=Y_{q_s(u)}-X'_{q_s(u)}$. 

\begin{algorithm}[H]
\caption{Single pass algorithm $\sfO$ using $k$-pass algorithm $\sfM$ for $\tcoins$}
\label{al:at}
\textbf{Input}: a stream of $nt$ updates of the form $(y_j,s_j)$ where $y_j$s are drawn independently and uniformly from $\zo$ and $\{s_j\}_{j\in[nt]}$ is a good order\\
\textbf{Goal}: approximate $\sum_{j\in J_s} y_j$ for all $s\in[t]$.
\begin{algorithmic}[1]
\STATE  \label{st:at1}Sample $m'_0,m'_1,\ldots,m'_{k-1}\sim (\sfM_0,\sfM_1,\ldots,\sfM_{k-1})$
\COMMENT{Sample memory states for the end of first $k-1$ passes}
\STATE $\imi_0\leftarrow (m'_0,m'_1,\ldots,m'_{k-1})$
\STATE \label{st:at2}$\forall i\in[k]$, $m'_{(i,0)}\leftarrow m'_{(i-1)}$  
\COMMENT{Starting memory states for the $k$ passes of $\sfM$}
\STATE Initialize $t$ copies of Algorithm \ref{al:apr}: $\Apr^s$, ($s\in [t]$) with parameters $\gamma=\frac{\ve}{10}$, $\B=\sqrt{kt\cdot n^{1+\lam}}$ and length of input stream being $|J_s|$. 
\STATE For all $s\in[t]$, sample $apr^s_0\sim \Apr^s_0$ \COMMENT{For Algorithm \ref{al:apr}, the starting state is deterministic}
\STATE $apr_0=\{apr^s_0\}_{s\in[t]}$. 
\FOR {$j=1$ to $nt$} 
\STATE \label{st:at3} $\beta_j\leftarrow \left(\Pr\left[X_j=1\;|\; \sfM_{(\le k,j-1)}=m'_{(\le k,j-1)}, \sfM_{< k}=m'_{<k}\right]-\frac{1}{2}\right)$ \COMMENT{Can be calculated using $im_{j-1}$}
\IF {$\beta_j \ge 0$} 
\IF{$y_j=1$}
 \STATE $x_j'\leftarrow y_j$ 
 \ELSE
 \STATE $x_j'\leftarrow y_j$ with probability $1-2\beta_j$, and $x_j'\leftarrow 1$ otherwise
 \ENDIF
\ELSIF{$\beta_j \le 0$}
\IF{$y_j=1$}
 \STATE  $x_j'\leftarrow y_j$ with probability $1+2\beta_j$, and $x_j'\leftarrow -1$ otherwise
 \ELSE
 \STATE $x_j'\leftarrow y_j$
\ENDIF
\ENDIF \label{st:at4}
\STATE $\e^{s_j}_{q^{-1}_{s_j}(j)}\leftarrow (y_j-x_j')$\COMMENT{Setting the next input element to $\Apr^{s_j}$}
\STATE Sample $apr^{s_j}_{q^{-1}_{s_j}(j)}\sim \Apr^{s_j}_{q^{-1}_{s_j}(j)}\left(apr^{s_j}_{\left(q^{-1}_{s_j}(j)-1\right)}, \e^{s_j}_{q^{-1}_{s_j}(j)}\right)$ \\
\COMMENT{Run next step of $\Apr^{s_j}$ given $apr_{j-1}=\left\{apr^s_{q^{-1}_{s}(j-1)}\right\}_{s\in[t]}$, as $q^{-1}_{s_j}(j)-1=q^{-1}_{s_j}(j-1)$}
\STATE $apr_j\leftarrow$ ``$apr_{j-1}$ with $s_j$th coordinate updated to $apr^{s_j}_{q^{-1}_{s_j}(j)}$"
 \STATE  \label{st:at5} Sample $(m'_{(1,j)},m'_{(2,j)}, \ldots, m'_{(k,j)})$ from the joint distribution on \[\left((\sfM_{(1,j)},\sfM_{(2,j)},\ldots, \sfM_{(k,j)})~\middle|~(\sfM_{(\le k,j-1)}=m'_{(\le k,j-1)},\; \sfM_{< k}=m'_{< k},\; X_j=x_j')\right)\]\\
\COMMENT{Given $im_{(j-1)}$, execute $j$th time-step for all passes of $\sfM$ when the $j$th stream update is $(x_j',s_j)$}
 \STATE \label{st:at7}$im_j\leftarrow (m'_{(1,j)},m'_{(2,j)}, \ldots, m'_{(k,j)},m'_0,m'_1,\ldots,m'_{k-1})$\\
\STATE \label{st:at6} $\an_j\leftarrow (im_j,apr_j)$\COMMENT{$j$th memory state of algorithm $\sfO$}
\ENDFOR \\
\textbf{Output}: $\an_n=(im_n, apr_n)=\left(m'_{(\le k,n)},m'_{<k}, \left\{apr^s_{|J_s|}\right\}_{s\in[t]}\right)$
\end{algorithmic}
\end{algorithm}
\noindent
Similarly to Claim \ref{cl:imd}, we prove the following claim.
\begin{claim}\label{cl:tcoinsd}
The joint distribution on $X, \sfM_{< k}, \{\sfM_{(\le k, [0,nt])}\}$ is identical to that on $X',\sfM'_{< k}, \{\sfM'_{(\le k, [0,nt])}\}$.
\end{claim}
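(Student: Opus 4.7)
The plan is to mimic the proof of Claim \ref{cl:imd}, extending it in a straightforward way to the interleaved $\tcoins$ setting, where the underlying stream of bits $X_1,\ldots,X_{nt}$ is still i.i.d.\ uniform over $\zo^{nt}$ and the order $\{s_j\}_{j\in[nt]}$ is deterministic and known to both $\sfM$ and $\sfO$. Since the order is fixed, the joint distribution of $(X,\sfM_{<k},\sfM_{(\le k,[0,nt])})$ is determined purely by the product distribution on $X$ together with the transition functions and private randomness of $\sfM$; the analogous object for $\sfO$ is determined by Steps \ref{st:at1}--\ref{st:at5} of Algorithm \ref{al:at}. So the task is just to verify that these two recipes yield the same joint law.

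First I would write the joint distribution of $(X,\sfM_{<k},\sfM_{(\le k,[0,nt])})$ via the chain rule, in an order that matches the sampling order used by $\sfO$:
\begin{enumerate}
\item $(\sfM_0,\sfM_1,\ldots,\sfM_{k-1})$;
\item $(\sfM_{(1,0)},\ldots,\sfM_{(k,0)})$ conditioned on $\sfM_{<k}$;
\item for $j=1,\ldots,nt$, first $X_j$ conditioned on $\sfM_{<k},X_{[1,j-1]},\sfM_{(\le k,[0,j-1])}$, then $(\sfM_{(1,j)},\ldots,\sfM_{(k,j)})$ conditioned on the same variables together with $X_j$.
\end{enumerate}
Then I would check the corresponding factor on the $\sfO$-side (using $X',\sfM'$) matches at each step. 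Steps 1 and 2 are immediate from Step \ref{st:at1} of Algorithm \ref{al:at}. For Step 3, the key observation is that by Corollary \ref{cor:indep1} applied to the product distribution on $X_1,\ldots,X_{nt}$, we have
\[
\I\!\left(X_j;X_{[1,j-1]},\sfM_{(\le k,[0,j-2])}\;\middle|\;\sfM_{<k},\sfM_{(\le k,j-1)}\right)=0,
\]
and similarly for the joint $(X_j,\sfM_{(\le k,j)})$. This means the conditional distributions needed in Step 3 depend only on $(\sfM_{<k},\sfM_{(\le k,j-1)})$ and possibly $X_j$, which is exactly what $\sfO$ uses in Steps \ref{st:at3}--\ref{st:at5}. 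The rule defining $X_j'$ from $Y_j$ via $\beta_j$ is chosen precisely so that $\Pr[X_j'=1\mid \sfM'_{<k},\sfM'_{(\le k,j-1)}]=\tfrac12+\beta_j$, matching the corresponding conditional for $X_j$ under $\sfM$, and then Step \ref{st:at5} samples $(m'_{(1,j)},\ldots,m'_{(k,j)})$ from the correct conditional by construction. Note that the order $s_j$ is deterministic, and the $s_j$-th instance's bit is the only one affected at time $j$, so this does not interfere with the induction.

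The main (minor) obstacle is the bookkeeping: one must be careful that Corollary \ref{cor:indep1} applies even though the stream is multi-instance, but since $X_1,\ldots,X_{nt}$ is still a product distribution (each $X_j$ is an independent uniform bit regardless of which instance it belongs to), and the claim only requires product-distribution input, this goes through verbatim. The induction therefore closes identically to that of Claim \ref{cl:imd}, establishing the equality of joint laws.
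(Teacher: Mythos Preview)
Your proposal is correct and follows essentially the same approach as the paper's own proof: the same chain-rule decomposition in the same order, the same use of Corollary \ref{cor:indep1} (valid because $X_1,\ldots,X_{nt}$ is still a product distribution), and the same verification that the $\beta_j$-rule and Step \ref{st:at5} reproduce the required conditionals. The paper's proof is just the proof of Claim \ref{cl:imd} with $n$ replaced by $nt$, exactly as you outline.
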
 
\proof
The joint distribution on variables $X, \sfM_{< k}, \{\sfM_{(\le k, [0,nt])}\}$ is decided as the $k$-pass algorithm unfolds. We first sample $X$, then $\sfM_0=\sfM_{(1,0)}$ which is independent of $X$, then run the first pass on $(X_j,s_j)_{j\in[nt]}$, which identifies the variables $\sfM_{(1,[1,nt])}$ (where $\sfM_1=\sfM_{(1,nt)}=\sfM_{(2,0)}$), then we run the second pass and so on. 
However, to prove the equivalence of these joint distributions, we sample the variables for the $k$-passes in parallel. 
We prove the claim using the chain rule by conditioning on the following order (similarly for $\sfM'$ and $X'$):
\begin{enumerate}
\item $(\sfM_0,\sfM_1,\ldots,\sfM_{k-1})$.
\item $(\sfM_{(1,0)},\sfM_{(2,0)},\ldots,\sfM_{(k,0)})$.
\item Repeat Step \ref{it:tcoinsd1} to Step \ref{it:tcoinsd2} for $j=1$ to $nt$.
\item \label{it:tcoinsd1} $X_j$ conditioned on $\sfM_{< k}$, $X_{[1,j-1]}$ and $\sfM_{(\le k,[0,j-1])}$.
\item \label{it:tcoinsd2} $(\sfM_{(1,j)},\sfM_{(2,j)},\ldots, \sfM_{(k,j)})$ conditioned on $X_j$, $\sfM_{< k}$, $X_{[1,j-1]}$ and $\sfM_{(\le k,[0,j-1])}$.
\end{enumerate}
As $(m'_0,m'_1,\ldots,m'_{k-1})$ is drawn from the joint distribution on $(\sfM_0,\sfM_1,\ldots,\sfM_{k-1})$, by definition 
\[\Pr[\sfM'_{< k}=m'_{< k}]=\Pr[\sfM_{< k}=m'_{< k}], \; \forall \;m'_{< k}.\]
Similarly, by definitions of $\sfM'_{(\le k,0)}$ and $\sfM_{(\le k,0)}$, distributions of these random variables conditioned on $\sfM'_{<k}$ and $\sfM_{<k}$ respectively, are identical. 
Next, we look at the conditional distribution in Step \ref{it:tcoinsd1} for $j\in[nt]$. It is easy to see that, given a value of $\beta_j$ in Step \ref{st:at3} of Algorithm \ref{al:at}, $x_j'=1$ with probability $\frac{1}{2}+\beta_j$, as $Y_j\sim \Ber(1/2)$. Therefore $\forall \; m'_{< k}, m'_{([1,k],0)}=m'_{[0,k-1]}, x'_{[1,j-1]}, m'_{(\le k,[1,j-1])} $, 
\begin{align*}
&\Pr\left[X'_j=1\mid \sfM'_{< k}=m'_{< k}, X'_{[1,j-1]}=x'_{[1,j-1]},  \sfM'_{(\le k,[0,j-1])}=m'_{(\le k,[0,j-1])}\right]\\
&\;\;\;\;\;=\Pr\left[X_j=1\mid \sfM_{(\le k,j-1)}=m'_{(\le k,j-1)}, \sfM_{< k}=m'_{< k}\right]\\
&\;\;\;\;\;=\Pr\left[X_j=1\mid \sfM_{< k}=m'_{< k}, X_{[1,j-1]}=x'_{[1,j-1]},  \sfM_{(\le k,[0,j-1])}=m'_{(\le k,[0,j-1])}\right].
\end{align*}
The last equality follows from Corollary \ref{cor:indep1} (input $((X_1,s_1),\ldots,(X_{nt},s_{nt}))$ is drawn from a product distribution), which implies that
\[\left(X_j\indep X_{[1,j-1]}, \sfM_{(\le k, [0,j-2])}\middle| \sfM_{< k},\sfM_{(\le k,j-1)} \right).\]
Finally, we look at the conditional distribution in Step \ref{it:tcoinsd2}. As $(m'_{(1,j)},m'_{(2,j)}, \ldots, m'_{(k,j)})$ is drawn from a distribution that only depends on values $m'_{(\le k,j-1)}$, $m'_{< k}$ and $x_j'$ (see Step \ref{st:at5} of Algorithm \ref{al:im}), we have, $\forall\; m'_{< k}, m'_{([1,k],0)}=m'_{[0,k-1]}, x'_{[1,j]}, m'_{(\le k,[1,j])} $,  
\begin{align*}
&\Pr\left[\sfM'_{(\le k,j)}=m'_{(\le k,j)}\middle| X'_j=x'_j, \sfM'_{< k}=m'_{< k}, X'_{[1,j-1]}=x'_{[1,j-1]}, \sfM'_{(\le k,[0,j-1])}=m'_{(\le k,[0,j-1])} \right]\\
&\;\;\;\;\;=\Pr\left[\sfM_{(\le k,j)}=m'_{(\le k,j)}\mid \sfM_{(\le k,j-1)}=m'_{(\le k,j-1)},\; \sfM_{< k}=m'_{< k}, X_j=x_j'\right]\\
&\;\;\;\;\;=\Pr\left[\sfM_{(\le k,j)}=m'_{(\le k,j)}\mid X_j=x_j', \sfM_{< k}=m'_{< k}, X_{[1,j-1]}=x'_{[1,j-1]},  \sfM_{(\le k,[0,j-1])}=m'_{(\le k,[0,j-1])}\right]
\end{align*}
The last equality follows from Corollary \ref{cor:indep1}, which implies that
\begin{align*}
\left(\sfM_{(\le k, j)}, X_j\indep X_{[1,j-1]}, \sfM_{(\le k, [0,j-2])}\middle| \sfM_{< k},\sfM_{(\le k,j-1)} \right). &\qedhere
\end{align*}

The following claim about the one-pass subroutine $\Imi$ in Algorithm \ref{al:at} follows similarly as Claim \ref{cl:iminf}.
\begin{claim}\label{cl:tcoinsiminf}
The information cost of algorithm $\Imi$ is at most the information cost of $k$-pass algorithm $\sfM$, that is,
\[IC(\Imi)=\sum_{j=1}^{nt}\sum_{\ell=1}^j \I\left(\Imi_{j};Y_{\ell}|\Imi_{\ell-1}\right) \le \sum_{j=1}^{nt}\sum_{\ell=1}^j \I\left(\sfM_{(\le k,j)};X_{\ell}|\sfM_{< k},\sfM_{(\le k,\ell-1)}\right)=\miccoin(\sfM).\]
Here, $X$ and $Y$ are both drawn from uniform distribution on $\zo^{nt}$.

\end{claim}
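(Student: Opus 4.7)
The plan is to mirror the proof of Claim \ref{cl:iminf} essentially verbatim, with only the length of the stream changed from $n$ to $nt$ and with Claim \ref{cl:tcoinsd} playing the role that Claim \ref{cl:imd} played before. In particular, the subroutine $\Imi$ inside Algorithm \ref{al:at} is a one-pass algorithm on the input sequence $(Y_j,s_j)_{j\in[nt]}$ whose memory state after $j$ steps is $\Imi_j=(m'_{(\le k,j)},m'_{<k})$, and this is structurally identical to the $\Imi$ used in the single-coin reduction. Since $Y$ is uniform on $\zo^{nt}$ and the order $\{s_j\}_{j\in[nt]}$ is fixed (not random), it does not contribute entropy and can be treated as a deterministic side-input; the algebra of mutual informations is unaffected.

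First, I would upper bound $\sum_{j=1}^{nt}\sum_{\ell=1}^j\I(\Imi_j;Y_\ell\mid\Imi_{\ell-1})$ by $\sum_{j=1}^{nt}\sum_{\ell=1}^j\I(\Imi_j;X'_\ell,Y_\ell\mid\Imi_{\ell-1})$ using monotonicity, and then expand via the chain rule as
\[
\sum_{j,\ell}\I(\Imi_j;X'_\ell\mid\Imi_{\ell-1})\;+\;\sum_{j,\ell}\I(\Imi_j;Y_\ell\mid X'_\ell,\Imi_{\ell-1}).
\]
The second sum should vanish: given $\Imi_{\ell-1}=(m'_{(\le k,\ell-1)},m'_{<k})$ and $X'_\ell=x'_\ell$, the distribution of the next tuple $\sfM'_{(\le k,\ell)}$ defined in Step \ref{st:at5} of Algorithm \ref{al:at} depends only on $(m'_{(\le k,\ell-1)},m'_{<k},x'_\ell)$ and not on $y_\ell$; by the same single-pass argument as in Claim \ref{cl:iminf} (adding $\Imi_\ell$ and then separating), every $\I(\Imi_j;Y_\ell\mid X'_\ell,\Imi_{\ell-1})=0$ for all $\ell\le j\le nt$.

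It remains to handle $\sum_{j,\ell}\I(\Imi_j;X'_\ell\mid\Imi_{\ell-1})$. Since $\Imi_j=(\sfM'_{(\le k,j)},\sfM'_{<k})$ and $\Imi_{\ell-1}=(\sfM'_{(\le k,\ell-1)},\sfM'_{<k})$, this sum is exactly
\[
\sum_{j=1}^{nt}\sum_{\ell=1}^j\I\bigl(\sfM'_{(\le k,j)};X'_\ell\bigm|\sfM'_{<k},\sfM'_{(\le k,\ell-1)}\bigr).
\]
At this point I would invoke Claim \ref{cl:tcoinsd}, which asserts that the joint distribution of $(X,\sfM_{<k},\sfM_{(\le k,[0,nt])})$ equals that of $(X',\sfM'_{<k},\sfM'_{(\le k,[0,nt])})$, to replace primed variables by unprimed ones and identify the resulting expression with $\miccoin(\sfM)$, yielding the desired inequality. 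The only real content beyond Claim \ref{cl:iminf} is the invocation of Claim \ref{cl:tcoinsd} in place of Claim \ref{cl:imd}, so I do not anticipate any genuine obstacle; the main thing to be careful about is simply that the order sequence $\{s_j\}$, being deterministic, may be freely conditioned upon everywhere without altering any of the mutual information identities.
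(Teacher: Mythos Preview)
Your proposal is correct and is exactly the approach the paper intends: it explicitly says the claim ``follows similarly as Claim \ref{cl:iminf},'' and your plan mirrors that proof verbatim with $n$ replaced by $nt$ and Claim \ref{cl:tcoinsd} substituted for Claim \ref{cl:imd}. Your remark that the fixed order $\{s_j\}$ is deterministic and hence inert in all the mutual-information identities is the only extra observation needed, and it is correct.
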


Next, we prove the following claim for the input distributions $\calD^s, s\in[t]$ to $t$ copies of Algorithm \ref{al:apr} ($\Apr^s, s\in[t]$) used in Algorithm \ref{al:at}. 
\begin{claim}\label{cl:tcoinscomb}
For all $s\in [t]$,
\[\bbE_{\e^s\sim \calD^s}\left[\sum_{u=1}^{|J_s|}\one_{\e^s_u\neq 0}\right]=\bbE\left[\sum_{u=1}^{|J_s|}\one_{Y_{q_s(u)}\neq X'_{q_s(u)}}\right]\le\bbE\left[\sum_{j=1}^{nt}\one_{Y_{j}\neq X'_{j}}\right] \le \sqrt{nt\cdot \Ent(\sfM_{<k})}\le \sqrt{kt\cdot n^{1+\lam}}=\B.\]
\end{claim}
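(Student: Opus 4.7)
The proof should closely parallel the argument already carried out for Claim \ref{cl:acomb} in Subsection \ref{subsec:a}, since the distributional relationship between $X$, $X'$, $Y$, and the simulated memory states $\sfM'_{\le k}, \sfM'_{<k}$ is analogous for Algorithm \ref{al:at} (as established in Claim \ref{cl:tcoinsd}). The proof would proceed in three parts corresponding to the three inequalities in the displayed chain.

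First, the opening equality is immediate: by construction of the input stream to $\Apr^s$, we have $\e^s_u = Y_{q_s(u)} - X'_{q_s(u)}$, so the indicator $\one_{\e^s_u \neq 0}$ is the same as $\one_{Y_{q_s(u)} \neq X'_{q_s(u)}}$. The next inequality, $\bbE[\sum_{u=1}^{|J_s|} \one_{Y_{q_s(u)} \neq X'_{q_s(u)}}] \le \bbE[\sum_{j=1}^{nt} \one_{Y_j \neq X'_j}]$, follows trivially because $\{q_s(u) : u \in [|J_s|]\} \subseteq [nt]$ and all indicators are nonnegative.

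The main step is bounding $\bbE[\sum_{j=1}^{nt} \one_{Y_j \neq X'_j}] \le \sqrt{nt \cdot \Ent(\sfM_{<k})}$. I would follow Claim \ref{cl:acomb} line by line, with $n$ replaced by $nt$. Specifically, for each $j \in [nt]$, use the law of total expectation to condition on $(\sfM'_{<k}, \sfM'_{(\le k, j-1)})$; then observe that conditioned on this information, the probability $x'_j \neq y_j$ equals $|\beta_j| = |\Pr[X_j = 1 \mid \sfM_{<k} = m'_{<k}, \sfM_{(\le k, j-1)} = m'_{(\le k, j-1)}] - 1/2|$ (since $Y_j$ is uniform and independent of $\Imi_{j-1}$). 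Next, apply the binary inequality $|q - 1/2| \le \sqrt{1 - H(\Ber(q))}$ to replace this by $\sqrt{\Ent(X_j) - \Ent(X_j \mid \sfM_{<k}, \sfM_{(\le k, j-1)})}$. Exchanging $\sfM'$ for $\sfM$ via Claim \ref{cl:tcoinsd}, applying Jensen's inequality (concavity of $\sqrt{\cdot}$) to move the expectation inside, summing over $j$, and then Cauchy–Schwarz gives
\[
\bbE\Big[\sum_{j=1}^{nt} \one_{Y_j \neq X'_j}\Big] \le \sqrt{nt \cdot \sum_{j=1}^{nt} \I(X_j; \sfM_{(\le k, j-1)}, \sfM_{<k})}.
\]
Finally, as in Claim \ref{cl:acomb}, the sum $\sum_{j=1}^{nt} \I(X_j; \sfM_{(\le k, j-1)}, \sfM_{<k})$ telescopes, via the chain rule together with Corollary \ref{cor:indep2} (which eliminates the cross terms involving $\sfM_{(i+1, j-1)}$), to at most $\Ent(\sfM_{<k})$.

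The closing inequality $\sqrt{nt \cdot \Ent(\sfM_{<k})} \le \sqrt{kt \cdot n^{1+\lam}}$ uses subadditivity of entropy, namely $\Ent(\sfM_{<k}) = \Ent(\sfM_0, \ldots, \sfM_{k-1}) \le \sum_{i=0}^{k-1} \Ent(\sfM_i) \le k \cdot n^\lam$ by the hypothesis that $\Ent(\sfM_i) \le n^\lam$ for all $i \in \{0, \ldots, k\}$. Multiplying by $nt$ inside the square root gives $nt \cdot k \cdot n^\lam = kt \cdot n^{1+\lam} = \B^2$. There is no real obstacle here — the only place where I would be careful is verifying that Corollary \ref{cor:indep2} still applies verbatim in the multi-copy setting (where the input is the product distribution over $((X_j, s_j))_{j\in[nt]}$ with $s_j$ deterministic and $X_j$ i.i.d. uniform), which is immediate because the $X_j$'s still form a product distribution and the deterministic labels $s_j$ do not affect conditional independence.
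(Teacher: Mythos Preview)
Your proposal is correct and follows essentially the same approach as the paper's own proof: both observe the first equality and inequality are immediate from the definition of $\e^s$ and $J_s\subseteq[nt]$, then reduce the main inequality to a verbatim rerun of Claim~\ref{cl:acomb} with $n$ replaced by $nt$ (using Claim~\ref{cl:tcoinsd} in place of Claim~\ref{cl:imd}), and finish with the entropy bound $\Ent(\sfM_{<k})\le k\cdot n^\lam$.
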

\proof
 At the $j$th time-step, $y_j-x_j'$ is added to the input stream of algorithm $\Apr^{s_j}$. Thus, for all $s\in[t]$, the input stream for algorithm $\Apr^s$ is the sequence $\{y_j-x_j'\}_{j\in J_s}$ and by definition, 
\[\bbE_{\e^s\sim \calD^s}\left[\sum_{u=1}^{|J_s|}\one_{\e^s_u\neq 0}\right]=\bbE\left[\sum_{u=1}^{|J_s|}\one_{Y_{q_s(u)}\neq X'_{q_s(u)}}\right]=\bbE\left[\sum_{j\in J_s}\one_{Y_{j}\neq X'_{j}}\right]<\bbE\left[\sum_{j=1}^{nt}\one_{Y_{j}\neq X'_{j}}\right]. \]
Similarly to the proof of Claim \ref{cl:acomb}, we prove that 
\begin{align*}
\bbE\left[\sum_{j=1}^{nt}\one_{Y_{j}\neq X'_{j}}\right]&\le \sum_{j=1}^{nt} \sqrt{\I(X_j; \sfM_{(\le k, j-1)},\sfM_{<k} )}\\
&\le \sqrt{nt\cdot \sum_{j=1}^{nt}\I(X_j; \sfM_{(\le k, j-1)},\sfM_{<k} )}.
\end{align*}
Again, as in the proof of Claim \ref{cl:acomb}, we show that $\sum_{j=1}^{nt}\I(X_j; \sfM_{(\le k, j-1)},\sfM_{<k} )\le \Ent(\sfM_{<k})$, which finishes the proof.
\qed

We are ready to state our claims regarding information cost and output of Algorithm $\sfO$. 
\begin{claim}\label{cl:tcoinsinf}
For all $\ve>\frac{100}{\sqrt{n}}$, $\lambda>0$, $IC(\sfO)\le \miccoin(\sfM)+nt^2\cdot \left(50+6\log \log nt +\log{\left(\frac{kt\cdot n^{\lambda}}{\ve^2}\right)}\right)$.
\end{claim}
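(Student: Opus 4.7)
\textbf{Proof plan for Claim \ref{cl:tcoinsinf}.}

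The plan is to mirror the proof of Lemma \ref{cl:ainf}, adapting it to the multi-coin setting. First I will write $\sfO_j = (\Imi_j, \Apr_j)$ where $\Apr_j = \{\Apr^s_{q_s^{-1}(j)}\}_{s \in [t]}$ collects the current memory states of all $t$ parallel approximation subroutines, and decompose the information cost via the chain rule:
\[
IC(\sfO) = \sum_{j=1}^{nt}\sum_{\ell=1}^{j} \I(\Imi_j; Y_\ell \mid \Imi_{\ell-1}, \Apr_{\ell-1}) + \sum_{j=1}^{nt}\sum_{\ell=1}^{j} \I(\Apr_j; Y_\ell \mid \Imi_{\ell-1}, \Apr_{\ell-1}, \Imi_j).
\]

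For the first (``$\Imi$'') sum, I will show $\I(\Imi_j;\Apr_{\ell-1}\mid \Imi_{\ell-1}, Y_\ell) = 0$ by observing that each $\Apr^s_{q_s^{-1}(\ell-1)}$ is a deterministic function of $(Y_{[1,\ell-1]}, X'_{[1,\ell-1]}, \Ra^{\Apr^s}_{[1,\ell-1]})$, that the private randomness for all $\Apr^s$ is independent of everything else, and that $\Imi_j$ is conditionally independent of $Y_{[1,\ell-1]}, X'_{[1,\ell-1]}$ given $\Imi_{\ell-1}$. This is the same calculation as in Lemma \ref{cl:ainf}. Property \ref{itemmi3} of mutual information then gives $\I(\Imi_j; Y_\ell \mid \Imi_{\ell-1}, \Apr_{\ell-1}) \le \I(\Imi_j; Y_\ell \mid \Imi_{\ell-1})$, and summing yields $IC(\Imi) \le \miccoin(\sfM)$ by Claim \ref{cl:tcoinsiminf}.

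For the second (``$\Apr$'') sum, I will apply the same argument as in Lemma \ref{cl:ainf}: since $\sfO$ is a one-pass algorithm, $\I(\Apr_j, \Imi_j; \Apr_{[0,\ell-2]}, \Imi_{[0,\ell-2]}, Y_{[1,\ell-1]} \mid \Apr_{\ell-1}, \Imi_{\ell-1}) = 0$, which allows me to ``absorb'' the conditioning and telescope via the chain rule to obtain
\[
\sum_{\ell=1}^{j} \I(\Apr_j; Y_\ell \mid \Imi_{\ell-1}, \Apr_{\ell-1}, \Imi_j) \le \I(\Apr_j; \Imi_{[0,j-1]}, \Apr_{[0,j-1]}, Y_{[1,j]} \mid \Imi_j) \le \Ent(\Apr_j).
\]
Since $\Apr_j$ is the joint state of $t$ independent subroutines, $\Ent(\Apr_j) \le \sum_{s\in[t]} \Ent(\Apr^s_{q_s^{-1}(j)})$. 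I will apply Proposition \ref{cl:aprinf} to each $\Apr^s$ with $\gamma = \ve/10$, $\B = \sqrt{kt \cdot n^{1+\lam}}$, and the stream length $|J_s|$; Claim \ref{cl:tcoinscomb} guarantees that $\bbE[\sum_u \one_{\e^s_u \neq 0}] \le \B$, and the good-order property $|J_s| \ge n/2$ together with $\ve > 100/\sqrt{n}$ ensures the required hypotheses $\gamma > 4/\sqrt{|J_s|}$ and $\B > \gamma\sqrt{|J_s|}$. This yields $\Ent(\Apr^s_{q_s^{-1}(j)}) \le 40 + 6\log\log(nt) + 2\log(\B/(\gamma \sqrt{|J_s|})) \le 50 + 6\log\log(nt) + \log(kt \cdot n^\lam / \ve^2)$.

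Summing these entropy bounds over the $t$ instances and then over $j \in [nt]$ produces the claimed $nt^2 \cdot (50 + 6\log\log nt + \log(kt n^\lam/\ve^2))$ term, and combining with the $\Imi$ bound proves the claim. The only step requiring care will be verifying that the hypotheses of Proposition \ref{cl:aprinf} hold uniformly across the $t$ subroutines despite varying $|J_s|$ --- but the good-order lower bound $|J_s| \ge n/2$ handles this cleanly. The rest is a direct generalization of the single-coin bookkeeping.
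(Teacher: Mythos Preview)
Your proposal is correct and follows essentially the same approach as the paper's proof: the same chain-rule decomposition into an $\Imi$-sum and an $\Apr$-sum, the same conditional-independence argument to drop $\Apr_{\ell-1}$ from the first sum and invoke Claim \ref{cl:tcoinsiminf}, and the same telescoping to $\Ent(\Apr_j)\le \sum_{s}\Ent(\Apr^s_{q_s^{-1}(j)})$ followed by Proposition \ref{cl:aprinf}. One small slip: the inequality $\I(\Imi_j;Y_\ell\mid \Imi_{\ell-1},\Apr_{\ell-1})\le \I(\Imi_j;Y_\ell\mid \Imi_{\ell-1})$ follows from Property \ref{itemmi2}, not Property \ref{itemmi3} (you are \emph{removing} conditioning after showing $\I(\Imi_j;\Apr_{\ell-1}\mid \Imi_{\ell-1},Y_\ell)=0$).
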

\begin{proof}
Recall that input $Y$ to Algorithm $\sfO$ is drawn from the uniform distribution on $\zo^{nt}$. Let $\sfO_j$ ($j\in\{0,\ldots,nt\}$) represent the random variable for the $j$th memory state of Algorithm $\sfO$. According to Step \ref{st:at6} of Algorithm \ref{al:at}, $\sfO_j=(\Imi_j,\Apr_j)$. Here, $\Imi_j$ represents the random variable for the $j$th memory state of algorithm $\Imi$ when run on input $Y$, and $\Apr_j=\{\Apr^s_{q^{-1}_s(j)}\}_{s\in[t]}$. Here $\Apr^s_u$ represents the random variable for the $u$th memory state of algorithm \ref{al:apr} ($\Apr^s$), when run on input sequence $\left\{Y_{(q^s(u))}-X'_{(q^s(u))}\right\}_{u\in[|J_s|]}$, with parameters $\gamma=\frac{\ve}{10}$ and $\B=\sqrt{kt\cdot n^{1+\lambda}}$. With these parameters and the fact the $\frac{n}{2}\le |J_s|\le nt$, Claim \ref{cl:tcoinscomb} implies that input to $\Apr^s$ satisfies conditions of Proposition \ref{cl:aprinf}. Thus, using claim \ref{cl:aprinf}, we get for all $s\in[t]$,
\begin{equation}
\label{eq:tcoinsinf1}
\forall u\in \{0,1,\ldots,|J_s|\}, \;\Ent(\Apr^s_u) \le 40+6\log \log |J_s| +2\log{\left(\frac{\B}{\gamma \sqrt{|J_s|}}\right)}\le 50+6\log \log {nt} +\log{\left(\frac{kt\cdot n^{\lambda}}{\ve^2}\right)}.
\end{equation}
We analyze information cost of Algorithm $\sfO$ as we did in proof of Lemma \ref{cl:ainf}:
\begin{align}
\nonumber
IC(\sfO)&=\sum_{j=1}^{nt} \sum_{\ell=1}^j\I\left(\sfO_j;Y_\ell\middle|\sfO_{\ell-1}\right)\\
\nonumber
&=\sum_{j=1}^{nt} \sum_{\ell=1}^j\I\left(\Imi_j,\Apr_j;Y_\ell\middle|\Imi_{\ell-1},\Apr_{\ell-1}\right)\\
\label{eq:tcoinsinf2}
&=\sum_{j=1}^{nt} \sum_{\ell=1}^j\I\left(\Imi_j;Y_\ell\middle|\Imi_{\ell-1},\Apr_{\ell-1}\right)+\sum_{j=1}^{nt} \sum_{\ell=1}^j\I\left(\Apr_j;Y_\ell\middle|\Imi_{\ell-1},\Apr_{\ell-1},\Imi_j\right)
\end{align}
Exactly as in the proof of Lemma \ref{cl:ainf}, we show that 
\begin{align}
\nonumber
\sum_{j=1}^n \sum_{\ell=1}^j\I\left(\Imi_j;Y_\ell\middle|\Imi_{\ell-1},\Apr_{\ell-1}\right)&\le \sum_{j=1}^n \sum_{\ell=1}^j\I\left(\Imi_j;Y_\ell\middle|\Imi_{\ell-1}\right)\\
\nonumber
&=IC(\Imi)\\
\label{eq:tcoinsinf3}
&\le \miccoin(\sfM).
\end{align}
The last inequality follows from Claim \ref{cl:tcoinsiminf}. Next, we analyze the second quantity in Expression \eqref{eq:tcoinsinf2}, for a given $j\in[nt]$, exactly as done in the proof of Lemma \ref{cl:ainf} and show that:
\begin{align*}
\sum_{\ell=1}^j\I\left(\Apr_j;Y_\ell\middle|\Imi_{\ell-1},\Apr_{\ell-1},\Imi_j\right)
&\le \Ent(\Apr_j)\\
&\le \sum_{s\in[t]}\Ent\left(\Apr^s_{q^{-1}_s(j)}\right).
\end{align*}
Summing over $j\in[nt]$ and using Equation \eqref{eq:tcoinsinf1}, we get 
\begin{align}
\label{eq:tcoinsinf4}
\sum_{j=1}^{nt} \sum_{\ell=1}^j\I\left(\Apr_j;Y_\ell\middle|\Imi_{\ell-1},\Apr_{\ell-1},\Imi_j\right)\le nt\cdot t\left(50+6\log \log {nt} +\log{\left(\frac{kt\cdot n^{\lambda}}{\ve}\right)}\right).
\end{align}
The claim follows from substituting Equations~\eqref{eq:tcoinsinf3} and \eqref{eq:tcoinsinf4} in Expression \eqref{eq:tcoinsinf2}.
\end{proof}

\begin{claim}\label{cl:tcoinssum}
For all $\ve>\frac{100}{\sqrt{n}}$, $s\in[t]$, 
\[E_{\sfM_{(k,nt)}}\left(\bbE\left[\sum_{j\in J_s} X_j\middle| \sfM_{(k,nt)}=m_{(k,nt)}\right]\right)^2\ge \ve |J_s|\implies  \bbE_{\sfO_{nt}}\left(\bbE\left[\sum_{j\in J_s} Y_j\middle| \sfO_{nt}=\an_{nt}\right]\right)^2\ge \frac{\ve}{2} |J_s|.\]
\end{claim}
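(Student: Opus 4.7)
The plan is to mirror the argument for Lemma \ref{cl:asum}, but applied to a single coordinate $s\in[t]$, using the fact that $\sfO$ maintains a separate approximation counter $\Apr^s$ for the $s$-th instance. First, since $Y_j\in\zo$ are uniform and independent, $\bbE\left[\left(\sum_{j\in J_s}Y_j\right)^2\right]=|J_s|$, so it suffices to prove the variance bound
\[
\bbE_{\sfO_{nt}}\!\left[\Var\!\left(\sum_{j\in J_s}Y_j\,\middle|\,\sfO_{nt}\right)\right]\le \left(1-\tfrac{\ve}{2}\right)|J_s|.
\]
Using Claim \ref{cl:tcoinsd} to swap $(X,\sfM_{(k,nt)})$ for $(X',\sfM'_{(k,nt)})$, the hypothesis translates into
\[
\bbE_{\sfM'_{(k,nt)}}\!\left[\Var\!\left(\sum_{j\in J_s}X'_j\,\middle|\,\sfM'_{(k,nt)}\right)\right]\le (1-\ve)|J_s|.
\]

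Next, I would decompose $\sum_{j\in J_s}Y_j=\left(\sum_{j\in J_s}Y_j-\sum_{j\in J_s}X'_j\right)+\sum_{j\in J_s}X'_j$ and expand the conditional variance. Applying the elementary identity $\Var(A+B)\le \Var(A)+\Var(B)+2\sqrt{\Var(A)\Var(B)}$ pointwise in $\sfO_{nt}$ and then applying Jensen/Cauchy--Schwarz after the outer expectation yields
\[
\bbE_{\sfO_{nt}}\!\left[\Var\!\left(\sum_{j\in J_s}Y_j\,\middle|\,\sfO_{nt}\right)\right]\le T_1+T_2+2\sqrt{T_1\cdot T_2},
\]
where $T_1=\bbE_{\sfO_{nt}}[\Var(\sum_{j\in J_s}(Y_j-X'_j)\mid \sfO_{nt})]$ and $T_2=\bbE_{\sfO_{nt}}[\Var(\sum_{j\in J_s}X'_j\mid \sfO_{nt})]$.

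To bound $T_2$, note that $\sfM'_{(k,nt)}$ is a deterministic function of $\sfO_{nt}=(\Imi_{nt},\Apr_{nt})$ (since $\Imi_{nt}$ contains $\sfM'_{(\le k,nt)}$). Hence the law of total variance gives $T_2\le \bbE_{\sfM'_{(k,nt)}}[\Var(\sum_{j\in J_s}X'_j\mid \sfM'_{(k,nt)})]\le (1-\ve)|J_s|$. To bound $T_1$, the crucial observation is that $\Apr^s_{|J_s|}\in \sfO_{nt}$, so subtracting the constant $\Apr^s_{|J_s|}$ inside the conditional variance leaves it unchanged, and then the conditional variance is upper bounded by the conditional second moment:
\[
T_1\le \bbE\!\left[\left(\sum_{j\in J_s}(Y_j-X'_j)-\Apr^s_{|J_s|}\right)^2\right]=\bbE_{\e^s\sim\calD^s,\,r^s\sim R^{\Apr}}\!\left[\left(\sum_{u=1}^{|J_s|}\e^s_u-\Apr^s(\e^s,r^s)\right)^2\right].
\]
Claim \ref{cl:tcoinscomb} certifies $\bbE_{\e^s\sim\calD^s}[\sum_u \one_{\e^s_u\neq 0}]\le \B=\sqrt{kt\cdot n^{1+\lam}}$, which is the hypothesis of Proposition \ref{cl:aprsum} applied to the $s$-th copy (run with $\gamma=\ve/10$, stream length $|J_s|\ge n/2$). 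Therefore $T_1\le \gamma^2 |J_s|=\tfrac{\ve^2}{100}|J_s|$.

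Combining, $T_1+T_2+2\sqrt{T_1T_2}\le (1-\ve)|J_s|+\tfrac{\ve^2}{100}|J_s|+2\sqrt{(1-\ve)\cdot\tfrac{\ve^2}{100}}\,|J_s|\le (1-\tfrac{\ve}{2})|J_s|$ for $\ve$ smaller than an absolute constant (which can be assumed without loss of generality). The main subtlety — as it was in the single coin case — is the legitimacy of replacing $\sum_{j\in J_s}(Y_j-X'_j)$ by its $\Apr^s$-approximation inside the conditional variance; this relies on $\Apr^s_{|J_s|}$ being part of the memory state $\sfO_{nt}$ together with the $\ell_2$-guarantee of Proposition \ref{cl:aprsum}, and on the coordinate-wise structure whereby $\Apr^s$ sees exactly the stream $\{Y_{q_s(u)}-X'_{q_s(u)}\}_u$ relevant for the $s$-th instance. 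The rest of the proof is routine, paralleling Lemma \ref{cl:asum} verbatim with $|J_s|$ in place of $n$.
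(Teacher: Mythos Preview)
Your proposal is correct and follows essentially the same route as the paper's proof: convert to a conditional-variance statement, decompose $\sum_{j\in J_s}Y_j$ into $\sum_{j\in J_s}X'_j$ plus the modification, bound the first term via the law of total variance (since $\sfM'_{(k,nt)}$ is contained in $\sfO_{nt}$) and the second via the $\ell_2$-guarantee of Proposition~\ref{cl:aprsum} after subtracting the stored constant $\Apr^s_{|J_s|}$, and combine. One harmless slip: Proposition~\ref{cl:aprsum} only needs the parameter conditions $\gamma>4/\sqrt{|J_s|}$ and $\B>\gamma\sqrt{|J_s|}$, not the expected-nonzero bound from Claim~\ref{cl:tcoinscomb} (that bound is used for the entropy estimate in Proposition~\ref{cl:aprinf}, not here).
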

Claim \ref{cl:tcoinssum} immediately proves that for at least $1/2$ fraction of $s\in [t]$,
\[\bbE_{\sfO_{nt}}\left(\bbE\left[\sum_{j\in J_s} Y_j\middle| \sfO_{nt}=\an_{nt}\right]\right)^2\ge \frac{\ve}{2} |J_s|.\]

\begin{proof}
Fix $s\in[t]$. Note that as $Y$ is drawn from uniform distribution over $\zo^{nt}$, $\bbE\left[\left(\sum_{j\in J_s} Y_j\right)^2\right]=|J_s|$. Therefore, 
\begin{align*}
\bbE_{\sfO_{nt}}\left[\Var\left(\sum_{j\in J_s} Y_j\middle| \sfO_{nt}=\an_{nt}\right)\right]&=\bbE_{\sfO_{nt}}\left[\bbE\left(\sum_{j\in J_s} Y_j\right)^2\middle| \sfO_{nt}=\an_{nt}\right]-\bbE_{\sfO_{nt}}\left(\bbE\left[\sum_{j\in J_s} Y_j\middle| \sfO_{nt}=\an_{nt}\right]\right)^2\\
&=|J_s|-\bbE_{\sfO_{nt}}\left(\bbE\left[\sum_{j\in J_s} Y_j\middle| \sfO_{nt}=\an_{nt}\right]\right)^2.
\end{align*}
We will prove that an upper bound of $\left(1-\frac{\ve}{2}\right)|J_s|$ on the expected variance of $\sum_{j\in J_s} Y_j$ conditioned on $\sfO_{nt}$ to prove the claim. As $X$ is also distributed uniformly over $\zo^{nt}$, 
\begin{align*}
&\bbE_{\sfM_{(k,nt)}}\left(\bbE\left[\sum_{j\in J_s} X_j\middle| \sfM_{(k,nt)}=m_{(k,nt)}\right]\right)^2\ge \ve |J_s|\\
&\implies \bbE_{\sfM_{(k,nt)}}\left[\Var\left(\sum_{j\in J_s} X_j\middle| \sfM_{(k,nt)}=m_{(k,nt)}\right)\right]\le |J_s|\cdot (1-\ve). 
\end{align*}
As $X,\sfM_{(k,nt)}$ are identically distributed to $X',\sfM'_{(k,nt)}$ (Claim \ref{cl:tcoinsd}), we get 
\begin{align}
\nonumber
&\bbE_{\sfM_{(k,nt)}}\left(\bbE\left[\sum_{j\in J_s} X_j\middle| \sfM_{(k,nt)}=m_{(k,nt)}\right]\right)^2\ge \ve |J_s|\\
\label{eq:tcoinssum1}
&\implies \bbE_{\sfM'_{(k,nt)}}\left[\Var\left(\sum_{j\in J_s}X'_j\middle| \sfM'_{(k,nt)}=m'_{(k,nt)}\right)\right]\le |J_s|\cdot (1-\ve). 
\end{align}
Recall that $\Ee^s_u=Y_{q_s(u)}-X'_{q_s(u)}$ represents the random variable for the $u$th input element to subroutine $\Apr^s$ in Algorithm $\sfO$. $\Apr^s_{|J_s|}$ represents the random variable for the output of $\Apr^s$ and let $\Ra^{\Apr^s}_{u}$ represents the private randomness used at the $u$th time-step. Note that, random variable $\Ee^s$ has joint distribution $\calD^s$ on $(\Ee^s_1,\ldots, \Ee^s_{|J_s|})$. With parameters $\gamma=\frac{\ve}{10}$ and $\B=\sqrt{kt\cdot n^{1+\lambda}}$ and the fact that $n/2\le |J_s|\le nt$, input $\Ee^s$ to algorithm $\Apr^s$ satisfies conditions of Proposition \ref{cl:aprsum}, which implies that 
\begin{equation}
\label{eq:tcoinssum2}
\bbE\left[\left(\sum_{u=1}^{|J_s|}\Ee^s_u-\Apr^s_{|J_s|}\right)^2\right]=\bbE_{\e^s\sim \calD^s,r\sim\Ra^{\Apr^s}}\left[\left(\sum_{u=1}^{|J_s|}\e^s_u-\Apr^s(\e^s ,r)\right)^2\right]\le \gamma^2 |J_s|.
\end{equation}
As in proof of Lemma \ref{cl:asum}, we upper bound the expected variance of $\sum_{j\in J_s} Y_j$ conditioned on $\sfO_{nt}$ as follows:
\begin{align}
\nonumber
&\bbE_{\sfO_{nt}}\left[\Var\left(\sum_{j\in J_s} Y_j\middle| \sfO_{nt}=\an_{nt}\right)\right]\\
\nonumber
&\le \bbE_{\sfO_{nt}}\left[\Var\left(\sum_{j\in J_s} Y_j-\sum_{j\in J_s} X'_j\middle| \sfO_{nt}=\an_{nt}\right)\right]
+\bbE_{\sfO_{nt}}\left[\Var\left(\sum_{j\in J_s} X'_j\middle| \sfO_{nt}=\an_{nt}\right)\right]\\
\label{eq:tcoinssum3}
&\;\;\;\;\;\;\;+2\sqrt{\bbE_{\sfO_{nt}}\left[\Var\left(\sum_{j\in J_s} Y_j-\sum_{j\in J_s} X'_j \middle| \sfO_{nt}=\an_{nt}\right)\right]\cdot\bbE_{\sfO_{nt}}\left[\Var\left(\sum_{j\in J_s} X'_j\middle| \sfO_{nt}=\an_{nt}\right)\right]}.
\end{align}
Exactly as in the proof of Lemma \ref{cl:asum}, we use Equation~\eqref{eq:tcoinssum2} and prove that as output of the algorithm $\sfO_{nt}$ contains $\Apr^s_{|J_s|}$,
\[\bbE_{\sfO_{nt}}\left[\Var\left(\sum_{j\in J_s} Y_j-\sum_{j\in J_s} X'_j\middle| \sfO_{nt}=\an_{nt}\right)\right]=\bbE_{\sfO_{nt}}\left[\Var\left(\sum_{u=1}^{|J_s|} \Ee^s_u\middle| \sfO_{nt}=\an_{nt}\right)\right]\le \gamma^2 |J_s|=\frac{\ve^2}{100}|J_s|.\]
Similarly, as the output of algorithm $\sfO$ contains $\sfM'_{(k,nt)}$, by the law of total variance
\begin{align*}
\bbE_{\sfO_{nt}}\left[\Var\left(\sum_{j\in J_s} X'_j\middle| \sfO_{nt}=\an_{nt}\right)\right]\le\bbE_{\sfM'_{(k,nt)}}\left[\Var\left(\sum_{j\in J_s} X'_j\middle| \sfM'_{(k,nt)}=m'_{(k,nt)}\right)\right] \le |J_s|\cdot(1-\ve). \tag{implied by Equation~\eqref{eq:tcoinssum1}}
\end{align*}

Therefore, if $\bbE_{\sfM_{(k,nt)}}\left(\bbE\left[\sum_{j\in J_s} X_j\middle| \sfM_{(k,nt)}=m_{(k,nt)}\right]\right)^2\ge \ve |J_s|$, then by Equation \eqref{eq:tcoinssum3},
\[\bbE_{\sfO_{nt}}\left[\Var\left(\sum_{j\in J_s} Y_j\middle| \sfO_{nt}=\an_{nt}\right)\right]\le |J_s|(1-\ve)+\frac{\ve^2}{100}|J_s|+ 2|J_s|\cdot\frac{\ve}{10}\le |J_s|\left(1-\frac{\ve}{2}\right).\]
This finishes the proof of the claim. 
\end{proof}

\section{Omitted Proofs in Section \ref{sec:mainlemma}}\label{Apen:A}

\subsection{Proof of Claim \ref{lem:lmx}}

We recall the statement:
\lemlmx*
\begin{proof}

Given $(i,j)$, we define $$\widetilde{\calA}_{neighbor}:=(\calA_{(\leq i-1,p_{j+1}-1)},\calA_{(\leq i,p_{j}-1)}),$$ and $$\widetilde{\calA}_{non-neigh}:=(\calA_{(\leq i,p_{1}-1)},\calA_{(\leq 
 i,p_{2}-1)},\cdots,\calA_{(\leq i,p_{j-1}-1)},\calA_{(\leq i-1,p_{j+2}-1)},\cdots,\calA_{(\leq i-1,p_{m}-1)}).$$ 
 Notice that $$(\widetilde{\calA}_{neighbor},\widetilde{\calA}_{non-neigh})=(\calA_{(i,p_{1}-1)},\cdots,\calA_{(i,p_{j}-1)},\calA_{(<i,p_{[m]}-1)}),$$
 where $\calA_{(<i,p_{[m]}-1)}$ denotes $(\calA_{(<i,p_{1}-1)},\calA_{(<i,p_{2}-1)},\cdots,\calA_{(<i,p_{m}-1)})$. 
Then, we have the following claim: 
\begin{restatable}[]{claim}{claimblock}
\label{claim:block}
Conditioned on $(\widetilde{\calA}_{neighbor},{X}_{p_{j}})$, $\calA_{i,p_{j+1}-1}$ is independent with $(\widetilde{{X}}_{\neq j},\widetilde{\calA}_{non-neigh})$. That is,
\[
\I(\widetilde{{X}}_{\neq j},\calA_{non-neigh};\calA_{(i,p_{j+1}-1)}\mid \widetilde{\calA}_{neighbor},{X}_{p_j})=0.
\]
As a corollary, 
\[
\I(\widetilde{{X}}_{\neq j};\calA_{(i,p_{j+1}-1)}\mid \calA_{(i,p_{1}-1)},\cdots,\calA_{(i,p_{j}-1)},\calA_{(<i,p_{[m]}-1)},{X}_{p_j})=0,
\]
where $\widetilde{{X}}_{\neq j}$ denotes $({X}_{p_{1}},\cdots,{X}_{p_{j-1}},{X}_{p_{j+1}},\cdots,{X}_{p_{m}})$.
\end{restatable}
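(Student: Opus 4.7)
The plan is to split the sample space into an "inside" and an "outside" part with respect to the block $\mathrm{In} := [p_j, p_{j+1}-1]$ (with $\mathrm{Out} := [1,n]\setminus \mathrm{In}$), argue that $\calA_{(i,p_{j+1}-1)}$ is a deterministic function of the conditioning variables together with inside data, and argue that $(\widetilde{X}_{\neq j}, \widetilde{\calA}_{non-neigh})$ is a deterministic function of the conditioning variables together with outside data. Once both representations are in hand, the desired conditional independence will follow from the conditional independence of inside and outside given the boundary memory states, via a standard data-processing step.

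The conditional independence of inside and outside is obtained by invoking Claim~\ref{claim:micindependence} (part 1) with pass index $i$, left endpoint $\ell := p_j$ and right endpoint $j_{\mathrm{claim}} := p_{j+1}-1$, giving
\[
\I\bigl(X_{\mathrm{In}}, \calR_{(\leq k,\mathrm{In})}\,;\, X_{\mathrm{Out}}, \calR_{(\leq k,\mathrm{Out})} \,\big|\, \calA_{(\leq i, p_j-1)}, \calA_{(\leq i-1, p_{j+1}-1)}\bigr)=0.
\]
Because $X_{p_j}$ lies in $\mathrm{In}$, further conditioning on $X_{p_j}$ preserves the conditional independence of the remaining inside variables from the outside variables. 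That $\calA_{(i,p_{j+1}-1)}$ is a deterministic function of $\widetilde{\calA}_{neighbor}$, $X_{p_j}$, $X_{[p_j+1,p_{j+1}-1]}$ and $\calR_{(i,\mathrm{In})}$ is immediate from unrolling the streaming transition $f^{\calA}_{(i,\cdot)}$ starting from $\calA_{(i, p_j-1)} \in \widetilde{\calA}_{neighbor}$.

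The main obstacle is the symmetric claim that $(\widetilde{X}_{\neq j}, \widetilde{\calA}_{non-neigh})$ is a deterministic function of $\widetilde{\calA}_{neighbor}$ together with outside variables $(X_{\mathrm{Out}}, \calR_{(\leq k, \mathrm{Out})})$. The $\widetilde{X}_{\neq j}$ part is trivial since its coordinates all lie in $\mathrm{Out}$, but the non-neighbor memory states must be handled carefully. For a state $\calA_{(i'',p_\ell-1)}$ with $\ell < j$ and $i'' \leq i$, the naive unrolling depends on the end-of-previous-pass state $\calA_{(i''-1,n)}$, which in turn depends on $X_{\mathrm{In}}$; I would eliminate this by re-expressing $\calA_{(i''-1, n)}$ as the result of running pass $i''-1$ forward from $\calA_{(i''-1, p_{j+1}-1)} \in \widetilde{\calA}_{neighbor}$ (note $i''-1 \leq i-1$) through positions $p_{j+1}, \ldots, n$ using only $X_{[p_{j+1},n]}$ and $\calR_{(i''-1,[p_{j+1},n])}$. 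For a state $\calA_{(i'',p_\ell-1)}$ with $\ell \geq j+2$ and $i'' \leq i-1$, one similarly starts from $\calA_{(i'',p_{j+1}-1)} \in \widetilde{\calA}_{neighbor}$ and runs forward through outside inputs $X_{[p_{j+1},p_\ell-1]}$. An induction on the pass index handles the remaining recursive dependence on yet-earlier passes, and the key technical content is showing that every dependency path can be rerouted to avoid the inside block by going through a boundary state at $p_j-1$ or $p_{j+1}-1$.

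Combining the two representations with the conditional independence from the previous paragraph yields the claim via the elementary fact that if $U = f(C,A)$, $V = g(C,B)$ and $A \indep B \mid C$, then $U \indep V \mid C$. The stated corollary then follows immediately from the chain rule $\I(A,B;C \mid D) = \I(B;C \mid D) + \I(A;C \mid D, B)$ applied with $A = \widetilde{X}_{\neq j}$, $B = \widetilde{\calA}_{non-neigh}$, $C = \calA_{(i,p_{j+1}-1)}$ and $D = (\widetilde{\calA}_{neighbor}, X_{p_j})$, noting that $\widetilde{\calA}_{neighbor} \cup \widetilde{\calA}_{non-neigh}$ exactly enumerates the conditioning $\bigl(\calA_{(i,p_1-1)}, \ldots, \calA_{(i,p_j-1)}, \calA_{(<i, p_{[m]}-1)}\bigr)$ appearing in the corollary.
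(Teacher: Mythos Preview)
Your proposal is correct and follows essentially the same approach as the paper: establish conditional independence of inside and outside data given $\widetilde{\calA}_{neighbor}$ (the paper states a slightly strengthened version with $X_{p_j}$ in the conditioning and cites the proof style of Claim~\ref{claim:micindependence}, while you obtain it from Claim~\ref{claim:micindependence} directly and then add the $X_{p_j}$ conditioning), then observe that $\calA_{(i,p_{j+1}-1)}$ is a function of conditioning plus inside data while $(\widetilde X_{\neq j}, \widetilde\calA_{non\text{-}neigh})$ is a function of conditioning plus outside data. The paper glosses over exactly the rerouting-through-boundary argument you spell out for the non-neighbor states, so your write-up is in fact more detailed on that point; the corollary derivation via the chain rule is identical.
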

\noindent
We defer the proof of this claim to the end of this section. With this property, we have
\begin{align*}
\I(\widetilde{\calA};\widetilde{{X}})=&\sum_{i=1}^{k-1}\sum_{j=1}^m \I(\calA_{(i,p_{j+1}-1)};\widetilde{{X}} \mid \calA_{(i,p_{1}-1)},\cdots,\calA_{(i,p_{j}-1)},\calA_{(<i,p_{[m]}-1)})\\
= & \sum_{i=1}^{k-1}\sum_{j=1}^m \I(\calA_{(i,p_{j+1}-1)};{X}_{p_j} \mid \calA_{(i,p_{1}-1)},\cdots,\calA_{(i,p_{j}-1)},\calA_{(<i,p_{[m]}-1)})
\\ &+\I(\calA_{(i,p_{j+1}-1)};\widetilde{{X}}_{\neq j} \mid \calA_{(i,p_{1}-1)},\cdots,\calA_{(i,p_{j}-1)},\calA_{(<i,p_{[m]}-1)},{X}_{p_j})
\\
=&\sum_{i=1}^{k-1}\sum_{j=1}^m \I(\calA_{(i,p_{j+1}-1)};{X}_{p_j} \mid \calA_{(i,p_{1}-1)},\cdots,\calA_{(i,p_{j}-1)},\calA_{(<i,p_{[m]}-1)})\\
=&\sum_{i=1}^{k-1}\sum_{j=1}^m \I(\calA_{(i,p_{j+1}-1)};{X}_{p_j} \mid \widetilde{\calA}_{neighbor},\widetilde{\calA}_{non-neighbor})\\
\leq &\sum_{i=1}^{k-1}\sum_{j=1}^m \I(\calA_{(i,p_{j+1}-1)};{X}_{p_j} \mid \widetilde{\calA}_{neighbor})
\end{align*}
as desired. The first two equality are by the chain rule. The third equality is due to Claim \ref{claim:block}, and the inequality is by Property \ref{itemmi2} and $\I(\widetilde{\calA}_{non-neigh};\calA_{(i,p_{j+1}-1)} \mid \widetilde{\calA}_{neighbor},{X}_{p_j})=0$.    
\end{proof}
\begin{proof}[\textbf{Proof of Claim \ref{claim:block}}]
We strengthen the statement by proving $$\I\big(( {X} _{[1,{p_{j}-1}]}, {X} _{[p_{j+1},n]},\Ra_{(\leq k, [1,{p_{j}-1}])}, \Ra_{(\leq k,[p_{j+1},n])}); ( {X} _{[{p_{j}},p_{j+1}-1]}, {\Ra} _{(\leq k,[{p_{j}},p_{j+1}-1])}) \mid \widetilde{ \calA}_{neighbor}, {X} _{p_j} \big)=0.$$ This strengthened statement is similar to Claim \ref{claim:micindependence}, and can be proved in a similar way. We omit the details here. Then, the original statement follows by the statement above since $\widetilde{ \calA}_{non-neigh},\widetilde{ {X} }_{\neq j}$  could be fully determined by $( {X} _{[1,{p_{j}-1}]}, {X} _{[p_{j+1},n]},\Ra_{(\leq k, [1,{p_{j}-1}])}, \Ra_{(\leq k,[p_{j+1},n])})$ condition on $\widetilde{ \calA}_{neighbor}$ and $ {X} _{p_j}$, while $\calA_{(i,p_{j+1}-1)}$ could be fully determined by $( {X} _{[{p_{j}},p_{j+1}-1]}, {\Ra} _{(\leq k,[{p_{j}},p_{j+1}-1])})$ condition on $\widetilde{ \calA}_{neighbor}$ and $ {X} _{p_j}$. We  then finish the proof. 
\end{proof}

\subsection{Information Complexity for \textsf{MostlyEq}}\label{sec:ICforMostlyEq}
We first restate the result for the \textsf{MostlyEq} problem: 
\MostlyEq*

Proving Lemma \ref{lem:ICresultforCCproblem} directly is challenging. We use the information complexity lower bound for the $\text{AND}_k$ problem from communication complexity. 

\begin{definition}
In the $\text{AND}_k$ setting, there are $k$ players, the $i$-th player has a single bit input $y_i\in \{0,1\}$ in hand. A good protocol $\Pi$ to resolve $\text{AND}_k$ computes whether the input is $0^k$ or $1^k$ with a low error rate. We formally define the error rate as follows, \[Err_{\Pi}(0^k,1^k):=\Pr[\Pi(1^k)=0]+\Pr[\Pi(0^k)=1]\]
\end{definition}
\noindent In \cite{gronemeier2009asymptotically,Jay09}, the author proved the following lemma, obtaining a lower bound of the information complexity on the $\text{AND}_k$ problem. 

\begin{lemma}[\cite{gronemeier2009asymptotically,Jay09}]\label{lem:AND}
    For any protocol $\Pi$ with that
    \[
    Err_{\Pi}(0^k,1^k)\leq 0.2
    \]
    we have that,
    \[
    \sum_{i=1}^k\mutualent\big(\boldsymbol{U}_i;\Pi(\boldsymbol{U}_i)\big) \geq \Omega(1),
    \]
    where $\Pi(\boldsymbol{U}_i)$ denotes the transcripts of $\Pi$ under the input distribution $\boldsymbol{U}_i$. 
\end{lemma}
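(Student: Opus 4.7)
The plan is to follow the information-statistics approach of Bar-Yossef–Jayram–Kumar–Sivakumar, later refined by Jayram and Gronemeier, which converts the mutual-information terms into Hellinger distances and then exploits a Pythagorean-type property of communication protocols.

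First I would interpret $\boldsymbol{U}_i$ as the distribution that fixes $y_j=0$ for all $j\ne i$ and draws $y_i$ uniformly from $\{0,1\}$. Under this input distribution the mutual information $\mutualent(\boldsymbol{U}_i;\Pi(\boldsymbol{U}_i))$ equals the Jensen–Shannon divergence $\mathrm{JS}(\Pi(0^k),\Pi(e_i))$, where $e_i$ denotes the unit vector with a single $1$ in coordinate $i$. Using the classical inequality $\mathrm{JS}(P,Q)\ge \tfrac{1}{2}h^2(P,Q)$, where $h$ denotes the Hellinger distance, it suffices to prove $\sum_{i=1}^{k} h^2(\Pi(0^k),\Pi(e_i))=\Omega(1)$. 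The correctness hypothesis $Err_{\Pi}(0^k,1^k)\le 0.2$ implies the total-variation distance $\|\Pi(0^k)-\Pi(1^k)\|_{TV}\ge 0.6$, and the standard inequality relating Hellinger distance to total variation then gives $h^2(\Pi(0^k),\Pi(1^k))=\Omega(1)$.

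The crux is then to establish a Pythagorean-style inequality
\[
h^2\bigl(\Pi(0^k),\Pi(1^k)\bigr)\;\le\;C\sum_{i=1}^{k} h^2\bigl(\Pi(0^k),\Pi(e_i)\bigr)
\]
for some absolute constant $C$. This cannot be obtained by the triangle inequality alone, since triangle plus Cauchy–Schwarz would only yield $\Omega(1/k)$; we genuinely need the squared Hellinger distances to add up, up to a constant. Following Gronemeier's induction on the depth of the protocol tree, at the first round the protocol transmits a single bit determined by one player's input together with shared randomness, and conditioning on that bit partitions the transcript into subprotocols and decomposes $h^2$ into a weighted sum via the Pythagorean property of the Bhattacharyya/Hellinger inner product. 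Applying the multi-party cut-and-paste lemma along the hybrid path $0^k\to e_1\to e_1+e_2\to\cdots\to 1^k$ identifies each consecutive squared Hellinger distance with $h^2(\Pi(0^k),\Pi(e_i))$, assembling the desired inequality.

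The main obstacle is executing this Pythagorean step in the multi-party message-passing model, since the direct-sum theorem for Hellinger distances crucially uses that each transmitted bit depends on only one player's input conditional on previous messages; I would carefully mimic Gronemeier's induction on protocol depth to carry this through, handling public randomness by averaging and private randomness with no extra difficulty. Combining this inequality with the $\Omega(1)$ endpoint distance then yields $\sum_i h^2(\Pi(0^k),\Pi(e_i))=\Omega(1)$, and unwinding the Hellinger-to-mutual-information conversion completes the proof.
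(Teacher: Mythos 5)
This lemma is not proved in the paper at all: it is imported verbatim from Gronemeier and Jayram, and the paper's ``proof'' is just the citation. So the relevant question is whether your sketch correctly reconstructs that known argument, and it does not. The preliminary conversions are fine: $\mutualent(\boldsymbol{U}_i;\Pi(\boldsymbol{U}_i))$ equals the Jensen--Shannon divergence between $\Pi(0^k)$ and $\Pi(e_i)$, this dominates a constant times the squared Hellinger distance, and the error hypothesis gives $h^2\big(\Pi(0^k),\Pi(1^k)\big)=\Omega(1)$ via total variation. The gap is in the step that carries all the difficulty, namely the claimed inequality $h^2\big(\Pi(0^k),\Pi(1^k)\big)\le C\sum_{i}h^2\big(\Pi(0^k),\Pi(e_i)\big)$ with $C$ independent of $k$. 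Your mechanism for it is an incorrect use of cut-and-paste: the multiparty cut-and-paste lemma preserves Hellinger distance only when the two inputs exchange values on coordinates where they differ; it does not allow you to change the common value of coordinates on which the two inputs agree. Hence it does not identify the consecutive hybrid distance $h\big(\Pi(e_1+\cdots+e_{i-1}),\Pi(e_1+\cdots+e_i)\big)$ (where players $1,\dots,i-1$ hold $1$'s) with $h\big(\Pi(0^k),\Pi(e_i)\big)$ (where they hold $0$'s). Moreover, even if such an identification were available, assembling the hybrids by the triangle inequality plus Cauchy--Schwarz reintroduces exactly the factor-$k$ loss you correctly say must be avoided, so the proposed assembly can only yield $\Omega(1/k)$, i.e., the Chakrabarti--Khot--Sun-type bound, not the stated $\Omega(1)$.

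What the cited works actually contribute is precisely a replacement for this step: Gronemeier runs a delicate induction over the protocol tree with inequalities tailored to the product (rectangle) structure of transcript distributions, and Jayram introduces the Hellinger volume of the $k+1$ distributions $\Pi(0^k),\Pi(e_1),\dots,\Pi(e_k)$ exactly because pairwise Hellinger distances chained along hybrids are lossy. Your sketch neither supplies such a mechanism nor verifies that the clean pairwise inequality you posit is even a consequence of their arguments (it is not what they prove, and it would need its own proof). As it stands, the heart of the lemma is assumed rather than established; to fix this you should either reproduce Gronemeier's tree induction or Jayram's Hellinger-volume argument in full, or simply cite the lemma as the paper does.
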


Define $\boldsymbol{U}_i$ to be the distribution that the input samples from $\{0^k,e_i\}$ uniformly at random, where $e_i$ denotes the input with $y_i = 1$ and $\forall j\neq i, y_j=0$.

Therefore, we introduce some random variables here and break down the whole information complexity using our newly defined random variables.

Assuming $\boldsymbol{z}=(\boldsymbol{z}_1,\boldsymbol{z}_2,\cdots,\boldsymbol{z}_m)\sim \boldsymbol{P}_U$, then, for each $\boldsymbol{z}_i$, we redefine the sample process of it by a list of random variables $\{\boldsymbol{X}_{i,j},\boldsymbol{Y}_{i,j}\}$, where $i\in[m]$ and $j\in [t
]$. For every pair $(i,j)$, we define $\boldsymbol{X}_{i,j}$ by:

\begin{equation}
\label{eq6}
\boldsymbol{X}_{i,j}=\left\{
\begin{aligned}
1 & ,~~~~  \text{ with probability } \frac{\beta}{n}\\
0 & ,~~~~  \text{ with probability }1-\frac{\beta}{n}, 
\end{aligned}
\right.\nonumber
\end{equation}
where $\beta$ is a constant to be determined later, and define $\boldsymbol{Y}_{i,j}$ to be a uniform bit on $\{0,1\}$.  We sample all $\boldsymbol{X}_{i,j},\boldsymbol{Y}_{i,j}$ independently. We sample $\boldsymbol{z}_i$ as follows:

\begin{framed}
\begin{center}
Sample $\boldsymbol{z}_i$ by auxiliary random variables
\end{center}    
\begin{enumerate}
    \item Randomly sample $\boldsymbol{X}_{i,1},\dots,\boldsymbol{X}_{i,n}$ and $\boldsymbol{Y}_{i,1},\dots,\boldsymbol{Y}_{i,n}$.
    \item Use $T$ to denote the set of such $j\in[t]$ that satisfies $\boldsymbol{X}_{i,j}=\boldsymbol{Y}_{i,j}=1$. 
    \item If $T\neq \emptyset$, uniformly sample (and output) an element from $T$. 
    \item Otherwise, uniformly sample (and output) an element from $[t]$.
\end{enumerate}
\end{framed}

It is easy to verify that this distribution is identical to $\boldsymbol{P}_U$ regardless of the choice of $\beta$, since every $\boldsymbol{z}_i$ is independently and uniformly sampled from $[t]$. There is another important observation about $\{\boldsymbol{X}_{i,j}\}$ and $\{\boldsymbol{Y}_{i,j}\}$: condition on $\boldsymbol{X}_{i,j}=\boldsymbol{Y}_{i,j}=1$, $\boldsymbol{z}_i$ equals $j$ with constant probability; condition on $\boldsymbol{X}_{i,j}=0$ or $\boldsymbol{Y}_{i,j}=0$, $\boldsymbol{z}_i$ is close to the uniform distribution. Intuitively, we can say that $\boldsymbol{X}_{i,j}$ and $\boldsymbol{Y}_{i,j}$ contain a large amount of information about whether $\boldsymbol{z}_i$ equals $j$. 

With those auxiliary random variables, we can prove Lemma \ref{lem:ICresultforCCproblem} by the following steps:
\begin{enumerate}
    \item \label{apenstep:1} Decompose the information complexity via $\{\boldsymbol{X}_{i,j},\boldsymbol{Y}_{i,j}\}$ by showing that
    $$I \big(\Pi(\boldsymbol{P}_U);\boldsymbol{P}_U \big) \geq I\left(\Pi(\boldsymbol{P}_U);\widetilde{\boldsymbol{X}},\widetilde{\boldsymbol{Y}}\right)\geq \sum_{i=1}^nI\left(\Pi(\boldsymbol{P}_U);\widetilde{\boldsymbol{X}}_j,\widetilde{\boldsymbol{Y}}_j\right),$$
    where $\widetilde{\boldsymbol{X}}_j:=(\boldsymbol{X}_{1,j},\boldsymbol{X}_{2,j},\cdots,\boldsymbol{X}_{m,j})$ and $\widetilde{\boldsymbol{Y}}_j:=(\boldsymbol{Y}_{1,j},\boldsymbol{Y}_{2,j},\cdots,\boldsymbol{Y}_{m,j})$. 
    \item \label{apenstep:2} Lower bound the increment, arguing that for most $j\in[t]$, it holds $$I\left(\Pi(\boldsymbol{P}_U); \widetilde{\boldsymbol{X}}_j,\widetilde{\boldsymbol{Y}}_j\right)\geq \Omega(1/n).$$
\end{enumerate}

In step \ref{apenstep:1}, the first inequality comes from the fact that $\widetilde{\boldsymbol{X}},\widetilde{\boldsymbol{Y}} \rightarrow \boldsymbol{z} \rightarrow \Pi(\boldsymbol{P}_U)$ forms a Markov's chain together with the data processing inequality. The second inequality comes from the fact that $$I(\boldsymbol{A};\boldsymbol{B},\boldsymbol{C})\geq I(\boldsymbol{A};\boldsymbol{B})+I(\boldsymbol{A};\boldsymbol{C})$$ if $\boldsymbol{B},\boldsymbol{C}$ are independent. 
\noindent
The proof of step \ref{step:2} is more technical, we prove it via the following two claims. 
\begin{claim}\label{claimmany}
If protocol $\Pi$ distinguishes $\boldsymbol{P}_U$ and $\boldsymbol{P}_{Eq}$ with error rate $Err_{\Pi}(\boldsymbol{P}_U,\boldsymbol{P}_{Eq})\leq 0.1$, there exists more than $n/3$ different $j$s, such that
\[
    Err_{\Pi}(\boldsymbol{P}_U,\boldsymbol{P}_{Eq}^j)\leq 0.15, 
\]
where $\boldsymbol{P}_{Eq}^j$ is defined by the distribution of $\boldsymbol{P}_{Eq}$ condition on the needle equals $j$. 
\end{claim}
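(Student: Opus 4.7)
The proof is essentially a direct Markov-style averaging argument, so the plan is quite short. The key observation is that $\boldsymbol{P}_{Eq}$ is, by its very definition, the uniform mixture $\frac{1}{t}\sum_{j} \boldsymbol{P}_{Eq}^{j}$ over the choice of the needle $j\in [t]$. Hence if we set $\alpha := \Pr[\Pi(\boldsymbol{P}_U)=1]$ and $\beta_j := \Pr[\Pi(\boldsymbol{P}_{Eq}^j)=0]$, the hypothesis on the error rate rewrites as
\[
\alpha + \tfrac{1}{t}\sum_{j=1}^{t} \beta_j \;=\; \Pr[\Pi(\boldsymbol{P}_U)=1] + \Pr[\Pi(\boldsymbol{P}_{Eq})=0] \;\le\; 0.1.
\]
In particular, $\alpha \le 0.1$ and $\mathbb{E}_j[\beta_j] \le 0.1 - \alpha$.

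Next, a $j$ is ``good'' (in the sense of the claim) precisely when $\alpha + \beta_j \le 0.15$, i.e.\ when $\beta_j \le 0.15 - \alpha$. Since $\beta_j \ge 0$, Markov's inequality applied to the non-negative quantity $\beta_j$ (with $j$ uniform in $[t]$) yields
\[
\Pr_{j}\bigl[\beta_j > 0.15 - \alpha\bigr] \;\le\; \frac{\mathbb{E}_j[\beta_j]}{0.15 - \alpha} \;\le\; \frac{0.1 - \alpha}{0.15 - \alpha}.
\]
For $\alpha\in[0,0.1]$ the right-hand side is maximised at $\alpha=0$, where it equals $2/3$. Consequently the fraction of good $j$'s is at least $1-2/3 = 1/3$, giving more than $t/3$ good indices (which is the intended $n/3$ in the statement; here the relevant alphabet size is $t$).

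The only thing to be a bit careful about is that one really does have the additivity $\Pr[\Pi(\boldsymbol{P}_{Eq})=0] = \mathbb{E}_j[\beta_j]$, but this is immediate from the definition of $\boldsymbol{P}_{Eq}$ as first sampling the needle uniformly and then sampling coordinates conditioned on that needle. There is no real obstacle in this step; the whole claim amounts to a one-line Markov inequality once the mixture structure is written out. Thus I would present the proof as: (i) write the error as a sum over $j$ using the mixture decomposition, (ii) extract the bound $\mathbb{E}_j[\beta_j]\le 0.1$, and (iii) conclude by Markov's inequality with threshold $0.15-\alpha$.
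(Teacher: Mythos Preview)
Your proposal is correct and matches the paper's approach exactly: the paper simply says the claim follows from the mixture decomposition $\boldsymbol{P}_{Eq}=\frac{1}{t}\sum_j \boldsymbol{P}_{Eq}^j$ together with Markov's inequality, which is precisely what you spell out. Your observation that the ``$n/3$'' in the statement should really be ``$t/3$'' is also correct.
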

\noindent
and, 
\begin{restatable}[]{claim}{claimlast}\label{claim:singleincrement}
If protocol $\Pi$ distinguishes $\boldsymbol{P}_U$ and $\boldsymbol{P}_{Eq}^j$ with error rate $Err_{\Pi}(\boldsymbol{P}_U,\boldsymbol{P}_{Eq}^j)\leq 0.15$,
then 
\[
I\left(\Pi(\boldsymbol{P}_U);\widetilde{\boldsymbol{X}}_j,\widetilde{\boldsymbol{Y}}_j\right)=\Omega(1/t).
\]
\end{restatable}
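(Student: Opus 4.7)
The plan is to build a reduction from the $m$-party $\mathrm{AND}_m$ problem to the task of distinguishing $\boldsymbol{P}_U$ from $\boldsymbol{P}_{Eq}^j$ and then invoke Lemma~\ref{lem:AND}. The rarity of the event $\{\boldsymbol{X}_{i,j}\boldsymbol{Y}_{i,j}=1\}$ under $\boldsymbol{P}_U$ (which has probability $\beta/(2t)$) is what will ultimately produce the $\Omega(1/t)$ factor; that is the whole point of introducing the auxiliary variables.

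I would first fix $j$ and set $\boldsymbol{B}_i := \boldsymbol{X}_{i,j}\cdot \boldsymbol{Y}_{i,j}$. Given an $\mathrm{AND}_m$ input $c=(c_1,\ldots,c_m)\in\{0,1\}^m$, the players use public randomness to sample every $(\boldsymbol{X}_{i,j'},\boldsymbol{Y}_{i,j'})$ with $j'\neq j$ from its $\boldsymbol{P}_U$ marginal, and then player~$i$ sets $(\boldsymbol{X}_{i,j},\boldsymbol{Y}_{i,j})=(1,1)$ when $c_i=1$ and samples from the conditional distribution on $\boldsymbol{B}_i=0$ when $c_i=0$. Each player then computes $\boldsymbol{z}_i$ according to the sampling rule and runs its part of $\Pi$, producing a protocol~$\Pi'$ for $\mathrm{AND}_m$. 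Under $c=0^m$ the induced input distribution equals $\boldsymbol{P}_U$ conditioned on the event $\bigcap_i\{\boldsymbol{B}_i=0\}$, which by a union bound has $\boldsymbol{P}_U$-probability $\geq 1-\beta m/(2t)$; under $c=1^m$ every $T_i$ contains $j$, so $\boldsymbol{z}_i=j$ with probability $\geq 1-O(\beta)$ and a coupling with $\boldsymbol{P}_{Eq}^j$ costs only $O(\beta)+O(2^{-m})$ in total variation. Choosing $\beta$ to be a sufficiently small constant (independent of $t$ and $m\leq t/100$), the assumed $0.15$ error of $\Pi$ plus these TV slacks keeps the $\mathrm{AND}_m$ error of $\Pi'$ below $0.2$, which is the hypothesis of Lemma~\ref{lem:AND}.

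Lemma~\ref{lem:AND} then yields $\sum_{i=1}^m \I(\boldsymbol{U}_i;\Pi'(\boldsymbol{U}_i))=\Omega(1)$. To transfer this to the target quantity, I would observe that the TV-distance between the transcript distributions of $\Pi$ conditioned on $\boldsymbol{B}_i=0$ and on $\boldsymbol{B}_i=1$ (marginalized over the other auxiliaries) is, up to the small TV-slacks above, the same under $\boldsymbol{P}_U$ as it is under the $\boldsymbol{U}_i$-embedding. Via a Pinsker-type comparison, both mutual informations are governed by the square of this TV-distance, but the $\boldsymbol{P}_U$ version carries the prefactor $\Pr[\boldsymbol{B}_i=1]=\beta/(2t)$ while the $\boldsymbol{U}_i$ version carries $\tfrac{1}{2}$. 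This yields
\[
\I(\Pi(\boldsymbol{P}_U);\boldsymbol{B}_i)\;\geq\;\Omega\!\left(\tfrac{1}{t}\right)\cdot \I(\boldsymbol{U}_i;\Pi'(\boldsymbol{U}_i)).
\]
Because $(\boldsymbol{X}_{i,j},\boldsymbol{Y}_{i,j})$ are independent across $i$ under $\boldsymbol{P}_U$, superadditivity of mutual information for independent coordinates gives
\[
\I(\Pi(\boldsymbol{P}_U);\widetilde{\boldsymbol{X}}_j,\widetilde{\boldsymbol{Y}}_j)\;\geq\;\sum_{i=1}^m\I(\Pi(\boldsymbol{P}_U);(\boldsymbol{X}_{i,j},\boldsymbol{Y}_{i,j}))\;\geq\;\sum_{i=1}^m\I(\Pi(\boldsymbol{P}_U);\boldsymbol{B}_i)\;=\;\Omega(1/t),
\]
as required.

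The main obstacle will be the quantitative control in the two TV-comparison steps: the embedding does not produce $\boldsymbol{P}_U$ and $\boldsymbol{P}_{Eq}^j$ exactly, and the Pinsker-to-mutual-information translation is cleanest only for pairs of distributions close in TV. I would handle this by fixing $\beta$ small enough that both (i) the $\mathrm{AND}_m$ failure probability of $\Pi'$ stays strictly below the $0.2$ threshold of Lemma~\ref{lem:AND}, and (ii) the conditional transcript distributions under the embedding differ from their $\boldsymbol{P}_U$ analogues by at most a constant factor in mutual information, preserving the rarity factor $\beta/(2t)=\Theta(1/t)$. The bound $m\leq t/100$ from the hypothesis of Lemma~\ref{lem:ICresultforCCproblem} is exactly what makes a fixed small $\beta$ suffice to absorb all the slacks.
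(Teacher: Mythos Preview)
Your reduction idea is the right one, but the argument breaks at the $c=1^m$ case. Under $\boldsymbol{P}_{Eq}^j$ each coordinate satisfies $\Pr[\boldsymbol{z}_i=j]=\tfrac12+\tfrac{1}{2t}$, not $\approx 1$. With your embedding (set $(\boldsymbol{X}_{i,j},\boldsymbol{Y}_{i,j})=(1,1)$ when $c_i=1$ and $\beta$ small), you force $j\in T_i$ and then $\Pr[\boldsymbol{z}_i=j]=\mathbb{E}[1/|T_i|]\ge 1-O(\beta)$. So the distribution you produce on input $1^m$ is essentially the point mass at $(j,\ldots,j)$, which has total variation $1-2^{-m}$ from $\boldsymbol{P}_{Eq}^j$, not $O(\beta)+O(2^{-m})$. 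The hypothesis only tells you $\Pi$ separates $\boldsymbol{P}_U$ from $\boldsymbol{P}_{Eq}^j$; it says nothing about separating $\boldsymbol{P}_U$ from the all-$j$ input. Consequently you cannot certify $Err_{\Pi'}(0^m,1^m)\le 0.2$, and Lemma~\ref{lem:AND} is not available.

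The paper repairs this by choosing $\beta$ to be a \emph{specific constant} (in $[1,10]$) so that, conditioned on $\boldsymbol{X}_{i,j}=\boldsymbol{Y}_{i,j}=1$, one has $\Pr[\boldsymbol{z}_i=j]$ exactly $\tfrac12+\tfrac{1}{2t}$; then the $1^m$-embedding reproduces $\boldsymbol{P}_{Eq}^j$ exactly, and the $0^m$ slack is controlled not by making $\beta$ small but by $m\le t/100$, giving $\|\boldsymbol{P}_U-\boldsymbol{P}_U'\|_{TV}\le m/t\le 0.01$. Your plan of ``take $\beta$ small enough to absorb all slacks'' is therefore incompatible with making the $1^m$ side match.

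A second, softer issue: your transfer step via a Pinsker comparison is both vague and unnecessary. The paper avoids Pinsker entirely by writing
\[
\I\!\left(\Pi(\boldsymbol{P}_U);\widetilde{\boldsymbol{Y}}_j\ \middle|\ \widetilde{\boldsymbol{X}}_j\right)\ \ge\ \sum_{i=1}^m \Pr[\widetilde{\boldsymbol{X}}_j=e_i]\cdot \I\!\left(\Pi(\boldsymbol{P}_U);\boldsymbol{Y}_{i,j}\ \middle|\ \widetilde{\boldsymbol{X}}_j=e_i\right),
\]
noting $\Pr[\widetilde{\boldsymbol{X}}_j=e_i]=\Theta(1/t)$, and observing that conditioned on $\widetilde{\boldsymbol{X}}_j=e_i$ the distribution of $\boldsymbol{z}$ depends on $\boldsymbol{Y}_{i,j}$ exactly as $\Pi'$ depends on $y_i$ under $\boldsymbol{U}_i$ (since whether $j\in T_{i'}$ is determined by the product $\boldsymbol{X}_{i',j}\boldsymbol{Y}_{i',j}$). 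This gives a clean equality $\I(\Pi(\boldsymbol{P}_U);\boldsymbol{Y}_{i,j}\mid\widetilde{\boldsymbol{X}}_j=e_i)=\I(\Pi'(\boldsymbol{U}_i);\boldsymbol{U}_i)$ and the $\Omega(1/t)$ follows immediately from Lemma~\ref{lem:AND}, with no quantitative loss from a Pinsker-and-back maneuver.
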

The first claim can be easily proved by the fact that $\boldsymbol{P}_{Eq}= \sum_{j=1}^t\frac{1}{t}\boldsymbol{P}_{Eq}^j$ and Markov's Inequality. However, the proof for Claim \ref{claim:singleincrement} is more complicated, involving a reduction to the $\text{AND}_m$ communication problem. The detailed proof is to be discussed later. 
With the two claims above, we can easily prove the Lemma \ref{lem:ICresultforCCproblem} by: 
\[I\left(\Pi(\boldsymbol{P}_U);\boldsymbol{P}_U\right)\geq \frac{t}{3} \cdot \Omega(\frac{1}{t})=\Omega(1).\]

\noindent
Now, it suffices to show Claim \ref{claim:singleincrement}, and
we first restate it: 
\claimlast*

We prove this claim by performing a reduction to the $\text{AND}_m$ gadget. We create a communication protocol $\Pi'$ that solves the $\text{AND}_m$ problem using the \textsf{MostlyEq} protocol $\Pi$, and use the information complexity lower bound for $\text{AND}_m$ to establish the claim.
\begin{proof}[\textbf{ Proof of Claim \ref{claim:singleincrement}}]
    Assume $\Pi$ be any protocol which distinguishes $\boldsymbol{P}_U$ from $\boldsymbol{P}_{Eq}^j$ with error rate $Err_{\Pi}(\boldsymbol{P}_U,\boldsymbol{P}_{Eq}^j) \leq 0.15$. We use $\Pi$ to construct a protocol $\Pi'$ for $\text{AND}_{m}$ that distinguishes between $0^m$ and $1^m$ with high accuracy.

      Recall that the input of $\text{AND}_m$ is a vector $y\in \{0,1\}^{m}$, where $y_i\in \{0,1\}$ denotes the input of the $i$-th player. The idea is to construct a random vector $\boldsymbol{z}'$ using $y\in \{0,1\}^{m}$, and feed $\boldsymbol{z}'$ to the protocol $\Pi'$.
      
      The vector $y$ translates $\boldsymbol{z}'$ in the following way:  For each $\boldsymbol{z}_i'$, the corresponding sampling process depends on $y_i$ as follows: 
        \begin{equation}
        \label{eq7}
        \boldsymbol{z}_i':=\left\{
        \begin{aligned}
        \text{$\boldsymbol{z}_i$ conditioned on $\boldsymbol{X}_{p_i,j}=\boldsymbol{Y}_{p_i,j}=0$} & ~~~~,  \text{if $y_i=0$}\\
        \text{$\boldsymbol{z}_i$ conditioned on $\boldsymbol{X}_{p_i,j}=\boldsymbol{Y}_{p_i,j}=1$} & ~~~~,  \text{if $y_i=1$}
        \end{aligned}
        \right.\nonumber
        \end{equation}
    Recall the sampling process of $\boldsymbol{z}_i$ with $\boldsymbol{X}_{i,j},\boldsymbol{Y}_{i,j}$. We know that:
    \begin{itemize}
        \item When $y_i=1$, i.e., $\boldsymbol{X}_{p_i,j}=\boldsymbol{Y}_{p_i,j}=1$, $\boldsymbol{z}_i'$ equals the $j$ with a constant probability. We can set a proper constant value for the parameter $\beta$ here, allowing that $\boldsymbol{z}_i'$ equals the $j$ with probability $1/2$ and otherwise uniformly samples from $[t]$. Notice that  $\beta\in[1,10]$ here. 
        \item When $y_i=0$, i.e., $\boldsymbol{X}_{p_i,j}=\boldsymbol{Y}_{p_i,j}=0$, $\boldsymbol{z}_i'$ is close to the uniform sample from $[t]$. 
    \end{itemize} 
    The two facts are crucial in bounding $Err_{\Pi'}(0^m,1^m)$ later. $\boldsymbol{z}'$ is a random variable here, and we use $z'$ to denote an instance sampled by $\boldsymbol{z}'$. 
    After sampling a vector $z'$ based on $y$, the protocol $\Pi'$ simulates $\Pi$ on the input $z'$.

    First, we would like to bound $Err_{\Pi'}(0^m,1^m)$. When $y = 1^m$, $\boldsymbol{z'}$ shares the same distribution with $\boldsymbol{P}_{Eq}^j$ by definitions, and when $y = 0^m$, $\boldsymbol{z'}$ is somehow close to the uniform distribution $\boldsymbol{P}_U$. We use $\boldsymbol{P}_U'$ to denote its distribution here. We compute the total variance distance between the two distributions: 
    \begin{align*}
        \lVert \boldsymbol{P}_U - \boldsymbol{P}_U'\rVert_{TV} &=\frac{1}{2}\sum_q |\Pr[\boldsymbol{P}_U=q] - \Pr[\boldsymbol{P}_U'=q]|\\
        &\leq 1-(1-\frac{1}{n})^m\\
        &\leq m/t\\
        &\leq 0.01.
    \end{align*}
    The last inequality holds since $m\leq  t/100$. 
    Thus, we know that this protocol $\Pi$ satisfies that $Err_{\Pi'}(0^m,1^m)\leq 0.2$ from the fact that 
    \begin{align*}
        Err_{\Pi'}(AND_{m}) &= Err_{\Pi}(\boldsymbol{P}_U',\boldsymbol{P}_{Eq}^j)\\
        &\leq Err_{\Pi}(\boldsymbol{P}_U,\boldsymbol{P}_{Eq}^j) + 2\lVert \boldsymbol{P}_U - \boldsymbol{P}_U'\rVert_{TV}. 
    \end{align*}
    The last inequality comes from the property of communication protocols.
    Thus, we have
    
    \begin{align*}
        I\left(\Pi(\boldsymbol{P}_U);\widetilde{\boldsymbol{Y}}_j|\widetilde{\boldsymbol{X}}_j\right)&=
        \sum_{x} \Pr[\widetilde{\boldsymbol{X}}_j=x]I\left(\Pi(\boldsymbol{P}_U);\widetilde{\boldsymbol{Y}}_j|\widetilde{\boldsymbol{X}}_j=x\right)\\
        &\geq \sum_{e_i} \Pr[\widetilde{\boldsymbol{X}}_j=e_i] I\left(\Pi(\boldsymbol{P}_U);\widetilde{\boldsymbol{Y}}_j|\widetilde{\boldsymbol{X}}_j=e_i\right)\\
        &\geq \frac{\beta}{t}\cdot(1-\frac{\beta}{t})^{m}\sum_{e_i} I\left(\Pi(\boldsymbol{P}_U);\boldsymbol{Y}_{i,j}|\widetilde{\boldsymbol{X}}_j=e_i\right)
        \\
        &\geq \Omega(1/t).
    \end{align*}
   In the second inequality and third inequality, $e_i$ stands for a special instance of $\widetilde{\boldsymbol{X}}_j$, which satisfies that only $\boldsymbol{X}_{i,j}$ equals $1$ while others equal $0$. 
   The last inequality is from: when we let the input $y$ to $\Pi'$ to be a random variable which is uniformly sampled from $\{0,0,\cdots,0\}$ and $\{0,0,\cdots,y_i=1,\cdots,0\}$ and use $\boldsymbol{U}_i$ to denote its distribution here, we have \[I\bigg(\Pi'(\boldsymbol{U}_i);\boldsymbol{U}_i\bigg)=I\bigg(\Pi(\boldsymbol{P}_U);\boldsymbol{Y}_{i,j}|\widetilde{\boldsymbol{X}}_j=e_i\bigg),\]
   from the construction of $\Pi'$. Then, together with Lemma \ref{lem:AND}, $m\leq t/100$ and $\beta \in[1,10]$, we know that \[\frac{\beta}{t}\cdot(1-\frac{\beta}{t})^{m}\sum_{e_i} I\left(\Pi(\boldsymbol{P}_U);\boldsymbol{Y}_{i,j}|\widetilde{\boldsymbol{X}}_j=e_i\right)\geq \Omega(1/t).\]
    Thus, we have that
    \begin{align*}
        \Omega (1/t)\leq I\left(\Pi(\boldsymbol{P}_U);\widetilde{\boldsymbol{Y}}_j|\widetilde{\boldsymbol{X}}_j\right)\leq I\left(\Pi(\boldsymbol{P}_U);\widetilde{\boldsymbol{X}}_j,\widetilde{\boldsymbol{Y}}_j\right)
    \end{align*}
    as desired. 
\end{proof}

\section{Omitted Proofs in Section \ref{sec:upper_bound}}\label{sec:omittedupperboundproofs}
\subsection{Proof for Claim \ref{claim:expdecay}}
In this subsection, we recall and prove Claim \ref{claim:expdecay}. 
\expdecay*
\begin{proof}
    First, for a counter $(c_1,c_2,c_3)$ after $r$ rounds, $c_3$ can be written in the following form:
    \[c_3 = \sum_{i=1}^r inc_i.\]
    Here, $inc_i$s are independent random variables taking values from $\{0,1\}$, denoting the increase of $c_3$ in $i$-th round. From some easy calculations, we know $\mathbb{E}[inc_i] \leq C_1/C_2$. Thus, we know that $\mathbb{E}[c_3] = \frac{C_1}{C_2}\cdot r$. From Hoeffding's Inequality, 
    \[\Pr[c_3> r/3]=\Pr[c_3-\mathbb{E}[c_3]>  (1/3-C_1/C_2)r]\leq e^{-r/5}.\]
    The last inequality holds since $C_1/C_2\leq 0.01$. 
\end{proof}
\subsection{Analysis of $\calA_2$}
In this subsection, we analyze algorithm $\calA_2$ when $p\leq \frac{1}{\sqrt{n \log^3 n}}$, and prove: 
\generalp*
In the following lemmas and analysis, we always assume $p\leq \frac{1}{\sqrt{n \log^3 n}}$. For simplicity of analysis, we first assume $\calA_2$ runs on an infinitely long data stream sampled by uniform distribution or the needle distribution with probability $p$, where we use $\boldsymbol{D}_0'$ or $\boldsymbol{D}_1'$ to denote them. We analysis the performance of $\calA_2$ on $\boldsymbol{D}_0'$ and $\boldsymbol{D}_1'$, then extend the result to $\boldsymbol{D}_0$ and $\boldsymbol{D}_1$. To begin with, we bound the space complexity of $\calA_2$ on $\boldsymbol{D}_0'$ and $\boldsymbol{D}_1'$: 
\begin{lemma}
    When input data stream follows $\boldsymbol{D}_0'$, the memory of $\calA_2$ exceeds $\frac{C_0'}{p^2n}$ with probability at most $1/n$, where $C'_0$ is a constant.
\end{lemma}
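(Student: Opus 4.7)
\medskip

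\noindent\textbf{Proof proposal.} Let $N := \tfrac{1}{p^2 n}$ denote the number of blocks, and for each $\ell \in [N]$ and each time-step $\tau$ (i.e., after $\calA_2$ has processed $\tau$ stream items), let $M_\ell^\tau$ be the number of bits used by the sub-algorithm $\calA^\ell$ at time $\tau$. Since the total memory of $\calA_2$ equals $\sum_\ell M_\ell^\tau$ up to an additive $O(\log n)$ overhead for the group index register, it suffices to bound $\sum_\ell M_\ell^\tau$. My plan is to (i) show that each $M_\ell^\tau$ has bounded expectation and variance, (ii) exploit a clean independence structure across blocks under $\boldsymbol{D}_0'$, and (iii) apply Bernstein's inequality followed by a union bound over~$\tau$.

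\medskip

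\noindent\emph{Step 1 (single-block analysis).} Under $\boldsymbol{D}_0'$, conditioned on the Poisson group sizes $\{N_i\}$, the items that fall inside block $\ell$ are i.i.d.\ uniform on $\mathcal{B}^\ell$. I would argue, by a direct Chernoff estimate in the spirit of Claim~\ref{claim:expdecay} (now with the constants $C_2 = 1000 C_1$ and the survival threshold $c_3 \geq c_2/100$), that a counter of $\calA^\ell$ survives $r \geq 10000$ rounds with probability at most $e^{-\Omega(r)}$. Each surviving counter with lifespan $r$ stores $c_1 \in [C_2]$ together with $c_2 \le r$ and $c_3 \le c_2$, so it uses at most $O(\log r)$ bits. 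Summing the geometric tail gives $\mathbb{E}[M_\ell^\tau] = O(1)$ and $\mathrm{Var}(M_\ell^\tau) = O(1)$; moreover, as long as no counter of $\calA^\ell$ has yet reached lifespan $3 \cdot 10^6 \log n$ (which would cause $\calA_2$ to terminate), one has the deterministic bound $M_\ell^\tau \leq B$ for some $B = O(\log n \cdot \log\log n)$, since all active counters have lifespan at most $3\cdot 10^6 \log n$ and there are at most $C_2 \cdot 3 \cdot 10^6 \log n$ of them.

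\medskip

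\noindent\emph{Step 2 (independence across blocks).} The hash functions $(h_1^\ell, h_2^\ell)$ are drawn independently across $\ell$, and under $\boldsymbol{D}_0'$ each stream item independently lies in block $\mathcal{B}^\ell$ with probability $p^2 n$ and is uniformly distributed within whichever block it lands in. Therefore, conditional on the Poisson group sizes (which are themselves independent of the data), the sub-streams fed into the different $\calA^\ell$'s are mutually independent. Consequently, at any fixed time $\tau$, the random variables $\{M_\ell^\tau\}_{\ell \in [N]}$ are mutually independent.

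\medskip

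\noindent\emph{Step 3 (concentration and union bound).} Applying Bernstein's inequality (Equation~\eqref{eq:ber}) to the sum of $N$ independent, bounded-variance, bounded-range variables $M_\ell^\tau$ yields
\[
\Pr\!\left[\sum_{\ell=1}^N M_\ell^\tau > \frac{C_0'}{p^2 n}\right] \;\leq\; \exp\!\left(-\Omega\!\left(\frac{N}{\log n \cdot \log\log n}\right)\right).
\]
Under the hypothesis $p \leq 1/\sqrt{n \log^3 n}$ one has $N \geq \log^3 n$, so this probability is at most $1/n^2$. A union bound over the at most $n$ relevant time-steps (which suffices because the total stream length up to potential termination is $O(n)$ by concentration of the Poisson sums, an event I would handle separately) then yields the claimed $1/n$ bound on the probability that memory ever exceeds $C_0'/(p^2 n)$.

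\medskip

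\noindent\textbf{Main obstacle.} I expect the subtle point to be the clean conditioning argument used in Step~2: one must verify that, after conditioning on the Poisson group sizes, the memory evolutions of the various $\calA^\ell$ are genuinely independent and not coupled through the shared group index $\mathit{index}$ or through the fact that the total stream length is itself random. A secondary (but routine) issue is controlling the maximum-per-block quantity $B$ uniformly over time; if the naive $O(\log n \log\log n)$ bound is too weak for Bernstein to close the gap between the hypothesis $p \leq 1/\sqrt{n \log^3 n}$ and the calculation above, one can sharpen Step~1 by truncating to counters of lifespan at most $C \log\log n$ and treating longer-lived counters as a separate rare event absorbed into the $1/n$ failure budget.
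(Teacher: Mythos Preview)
Your proposal is correct and matches the paper's approach: decompose the total memory as the index register plus $\sum_\ell$(per-block memory), bound the per-block memory's mean and variance by a geometric tail on counter survival (the paper states this as a direct analogue of Claim~\ref{claim:expdecay}), use a deterministic per-block cap, apply Bernstein, and union-bound over time.

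Two small differences are worth noting. First, the paper indexes by \emph{data group} $i$ rather than global time $\tau$: it sets $X_i^\ell$ to be $\calA^\ell$'s memory while processing group $i$ and bounds $X_i=\sum_\ell X_i^\ell$. This choice exactly dissolves what you flag as your main obstacle: because the group sizes are Poisson and each item independently lands in block $\ell$ with probability $p^2 n$, Poisson splitting makes the per-block sub-streams (item counts per group, item values, and hash functions $h_1^\ell,h_2^\ell$) jointly independent across $\ell$, so $\{X_i^\ell\}_\ell$ are genuinely independent for every fixed $i$ without any delicate conditioning. Indexing by $\tau$ reintroduces coupling through the random group boundary $i(\tau)$, so the paper's indexing is the cleaner route. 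Second, the paper uses the looser deterministic cap $X_i^\ell \le O(\log^2 n)$ and takes Bernstein deviation $x = C\log^{1.5}n\sqrt{\mathbb{E}[X_i]}$; your tighter $O(\log n\log\log n)$ cap also works and makes the arithmetic a bit cleaner.
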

\begin{proof}
    The memory usage of $\calA_2$ comes from two parts: 
    \begin{itemize}
        \item $O(\log n)$ space to store the index of current data group;
        \item space for each $\calA^{\ell}$ to store those counters. 
    \end{itemize}
    Note that when $p\leq \frac{1}{\sqrt{n\log^3 n}}$, the space for the index is $O(\log n)=O(\frac{1}{p^2n})$. It suffices to show that the total memory used by $\calA^{\ell}$s would not exceed $O(\frac{1}{p^2n})$ with high probability. 

    While $\calA^{\ell}$ processing the data group $i$, we define the size of memory used by $\calA^{\ell}$ as $X^{\ell}_i$, and define $X_i$ as $X_i:=\sum_{\ell=1}^{\frac{1}{p^2n}} X^{\ell}_i$. We first calculate $\mathbb{E}[X^{\ell}_i]$ and $\Var[X^{\ell}_i]$, and then give a concentration of $X_i$ by Bernstein's Inequality. To begin with, we first give an important claim. 
    \begin{claim}\label{claim:expdecay2}
        Under $\boldsymbol{D_0}'$, a counter of $\calA^{\ell}$ survives $r>10000$ rounds with probability at most $e^{-10^{-6}r}$. 
    \end{claim}
    The proof is similar to the proof of Claim \ref{claim:expdecay}, and we omit it here. With this claim, we can conclude that: 
    \begin{align*}
        \mathbb{E}[X_i] = \frac{1}{p^2n}\mathbb{E}[X^{\ell}_i] \leq  \frac{1}{p^2n}\Theta\bigg( C_2 \sum_{j=1}^{i} (\log(i-j)+1)e^{10^{-6}(j-i)}\bigg) = \Theta(\frac{1}{p^2n}). 
    \end{align*}
    Here, the first equality comes from the linearity of expectation. The first inequality comes from the calculation of $\mathbb{E}[X_i^{\ell}]$ together with the fact that for a counter with lifespan $i-j$, we only need $O(\log (i-j) +1)$ space to store it. Similarly, we have: 
    \begin{align*}
        \Var[X_i] = \frac{1}{p^2n} \Var[X_i^{\ell}] \leq \frac{1}{p^2n} \mathbb{E}[(X_i^{\ell})^2] \leq O(\frac{1}{p^2n}).
    \end{align*}
    Again, the first equality comes from the independence, and the last inequality comes from $\mathbb{E}[(X_i^{\ell})^2]$, which could be deduced by a similar analysis of $\mathbb{E}[X_i^{\ell}]$. 
    From the definitions of $\calA^{\ell}$, we know that $X_i^{\ell} \leq O(\log^2 n)$ since no counter with lifespan larger than $3\cdot 10^6 \log n$ exists. Then, we apply Bernstein's Inequality on $X_i - \mathbb{E}[X_i]$: 
    \begin{align*}
        \Pr[X_i \geq \mathbb{E}[X_i] + x] \leq \exp\bigg(-\frac{0.5\cdot x^2}{\Var[X_i]+1/3\cdot O(\log^2 n)x}\bigg).
    \end{align*}
    When $p\leq \frac{1}{\sqrt{n \log^3 n}}$, if we take $x=C\log^{1.5} n \sqrt{\mathbb{E}[X_i]}$ for a large enough constant $C$, $\Pr[X_i \geq \mathbb{E}[X_i] + x] \leq 1/n^2$ holds. Also, $\Pr[\exists i, X_i \geq \mathbb{E}[X_i] + x]\leq 1/n$ follows by a union bound. Note that $x=O(\frac{1}{p^2n})$ when $p \leq \frac{1}{\sqrt{n \log^3 n}}$ holds. Thus, we conclude that there exists a constant $C'$ such that: the probability that $\calA_2$'s memory exceeds $\frac{C'}{p^2n}$ is smaller than $1/n$. 
\end{proof}
With similar analysis, we can prove the following lemma, bounding the memory used by $\calA_2$ under distribution $\boldsymbol{D}_1'$. To avoid repetition, we omit the proof here. 
\begin{lemma}
     When input data stream follows $\boldsymbol{D}_1'$, the memory of $\calA_2$ exceeds $\frac{C_1'}{p^2n}$ with probability at most $1/n$, where $C'_1$ is a constant.
\end{lemma}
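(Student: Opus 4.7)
The plan is to mirror the proof structure for the $\boldsymbol{D}_0'$ case but isolate the one block that contains the needle. Concretely, fix the needle $\alpha \in [t]$ sampled by $\boldsymbol{D}_1'$, and let $\ell^* \in [\tfrac{1}{p^2n}]$ be the unique block with $\alpha \in \mathcal{B}^{\ell^*}$. For $\ell \neq \ell^*$, the restriction of the stream to items hashed to $\mathcal{B}^\ell$ is statistically identical to a uniform stream over $\mathcal{B}^\ell$ (non-needle elements are uniform on $[t]$, and needle occurrences are entirely contained in $\mathcal{B}^{\ell^*}$). Hence for every non-needle block, the exponential decay of Claim \ref{claim:expdecay2} applies verbatim, and the same per-round expectation, variance, and $O(\log^2 n)$ boundedness hold for $X_i^\ell$ as in the uniform case.

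I would first handle the needle block $\ell^*$ separately. By construction, $\calA^{\ell^*}$ halts immediately upon detecting a counter with lifespan $\geq 3 \cdot 10^6 \log n$, so its active lifetime is at most $3 \cdot 10^6 \log n$ data groups. During this window, the memory of $\calA^{\ell^*}$ is at most $C_2 \cdot (3 \cdot 10^6 \log n)$ counters, each using $O(\log \log n)$ bits, giving an absolute bound of $O(\log n \cdot \log \log n)$. Since $p \leq 1/\sqrt{n \log^3 n}$, this contribution is $O(\log n \cdot \log \log n) = o(1/(p^2 n))$ and can be absorbed into the constant $C_1'$ deterministically.

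For the other $\tfrac{1}{p^2n} - 1$ blocks, I would run the same calculation as in the $\boldsymbol{D}_0'$ case on $X_i := \sum_{\ell \neq \ell^*} X_i^\ell$. Each $X_i^\ell$ has $\mathbb{E}[X_i^\ell], \mathbb{E}[(X_i^\ell)^2] = O(1)$ by the geometric sum over lifespans weighted by $e^{-10^{-6} \cdot \text{lifespan}}$, so $\mathbb{E}[X_i] = O(1/(p^2 n))$ and $\mathrm{Var}[X_i] = O(1/(p^2 n))$, while $|X_i^\ell| \leq O(\log^2 n)$ since no counter in a non-needle block survives past $3 \cdot 10^6 \log n$ rounds (whp) and there are at most $C_2$ counters per lifespan value. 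Applying Bernstein's inequality with deviation $x = C \log^{1.5} n \cdot \sqrt{\mathbb{E}[X_i]}$ for a sufficiently large constant $C$, we obtain $\Pr[X_i \geq \mathbb{E}[X_i] + x] \leq 1/n^2$, and a union bound over $i$ yields $\Pr[\exists\, i : X_i \geq \mathbb{E}[X_i] + x] \leq 1/n$. Under the assumption $p \leq 1/\sqrt{n \log^3 n}$, we have $x = O(1/(p^2 n))$, so the total memory stays within $C_1'/(p^2 n)$ for a suitable constant $C_1'$ with probability at least $1 - 1/n$.

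The main subtlety — and the only real departure from the uniform-case proof — is verifying that conditioning on the needle being in block $\ell^*$ does not affect the decay analysis of Claim \ref{claim:expdecay2} when applied to the other blocks. This reduces to observing that $\boldsymbol{D}_1'$ restricted to $[t] \setminus \mathcal{B}^{\ell^*}$ is exactly the uniform distribution over that complement (since a needle occurrence never lands outside $\mathcal{B}^{\ell^*}$, and non-needle samples are uniform on $[t]$ hence uniform after restriction). Once this observation is made, the rest is a line-by-line transcription of the $\boldsymbol{D}_0'$ argument, which is why the authors defer to "similar analysis."
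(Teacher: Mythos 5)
Your overall route is the one the paper has in mind (it omits the proof as ``similar analysis''): split off the unique block $\mathcal{B}^{\ell^*}$ containing the needle, observe that the stream restricted to every other block behaves like the uniform case so that the exponential-decay claim and the Bernstein/union-bound computation go through verbatim, and absorb the needle block into the constant. Two remarks on the non-needle blocks: they are not quite ``statistically identical'' to the uniform case --- under $\boldsymbol{D}_1'$ each item lands in a non-needle block with probability $(1-p)p^2n$ rather than $p^2n$ --- but this thinning only lowers the increment probability of every counter, so the decay bound of Claim \ref{claim:expdecay2} and the moment bounds on $X_i^\ell$ hold a fortiori; it is worth saying this explicitly rather than claiming identical distributions. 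Also, the cap $|X_i^\ell|=O(\log^2 n)$ is deterministic (it follows from the halting rule in Step 4 of $\calA^\ell$), not merely a with-high-probability statement.

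The one step that is actually wrong as written is your treatment of $\calA^{\ell^*}$: ``$\calA^{\ell^*}$ halts immediately upon detecting a counter with lifespan $\geq 3\cdot 10^6\log n$, so its active lifetime is at most $3\cdot 10^6\log n$ data groups.'' Halting upon detection bounds the maximal \emph{lifespan} of any counter, not the running time: with constant probability (e.g.\ the event $A$ of Lemma \ref{lem:A11}'s analogue, probability $\approx e^{-C_1}$) the needle is never captured by any $h_1^{\ell^*}(i)$ and $\calA^{\ell^*}$ processes all $pn$ groups without ever halting, so its active lifetime can be $\Theta(pn)\gg \log n$ groups. Fortunately the memory bound you want for this block holds anyway, by a different (and deterministic) argument: at any moment while $\calA^{\ell^*}$ is still running, every live counter has lifespan $c_2< 3\cdot 10^6\log n$, and for each lifespan value there are at most $C_2$ live counters (the survivors among those created $c_2$ groups ago); hence at most $C_2\cdot 3\cdot 10^6\log n$ counters are alive at any time, each occupying $O(\log\log n)$ bits, for a total of $O(\log n\log\log n)=o\bigl(\tfrac{1}{p^2n}\bigr)$ since $p\leq \tfrac{1}{\sqrt{n\log^3 n}}$ gives $\tfrac{1}{p^2n}\geq \log^3 n$. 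With that substitution your argument is complete and matches the intended proof.
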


\noindent
Then, it suffices to bound the error rate of $\calA_2$. Explicitly, we prove the two following lemmas: 
\begin{lemma}
    $\Pr[\calA_2(\boldsymbol{D}_0')=1] = o(1)$.
\end{lemma}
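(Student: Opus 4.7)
\medskip

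\noindent\textbf{Proof proposal.} The plan is to reduce the event $\{\calA_2(\boldsymbol{D}_0')=1\}$ to the existence of a single counter (across all sub-algorithms $\calA^1,\ldots,\calA^{1/(p^2n)}$ and all data groups) whose lifespan $c_2$ reaches $3\cdot 10^6 \log n$, and then apply a union bound against the exponential survival bound of Claim~\ref{claim:expdecay2}. Specifically, by construction, $\calA_2$ outputs $1$ if and only if at least one $\calA^\ell$ outputs $1$, and $\calA^\ell$ outputs $1$ if and only if, at some point during its execution, it holds a counter $(c_1,c_2,c_3)$ with $c_2 \geq 3\cdot 10^6 \log n$.

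First, I would count the total number of counters ever created. Each $\calA^\ell$ inserts $C_2$ new counters at the start of each of the $pn$ data groups, so it creates at most $C_2\, pn$ counters in total. Summing over the $\frac{1}{p^2 n}$ sub-algorithms gives a total of at most $\frac{C_2}{p}$ counters across the entire execution of $\calA_2$. Next, I would invoke Claim~\ref{claim:expdecay2}: under $\boldsymbol{D}_0'$, any fixed counter of any $\calA^\ell$ survives $r>10000$ rounds with probability at most $e^{-10^{-6} r}$. Plugging in $r=3\cdot 10^6\log n$, the survival probability is at most $e^{-3\log n}=1/n^3$.

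A union bound over all at most $C_2/p$ counters then yields
\[
\Pr[\calA_2(\boldsymbol{D}_0')=1] \;\leq\; \frac{C_2}{p}\cdot \frac{1}{n^3}.
\]
Since we are in the regime $p\leq \frac{1}{\sqrt{n\log^3 n}}$, and we may assume without loss of generality that $p \geq 1/n$ (otherwise the needle distribution is statistically indistinguishable from uniform and the problem is trivial, so the algorithm's output can be taken to be a fixed value), we have $C_2/p \leq C_2\, n$ and therefore $\Pr[\calA_2(\boldsymbol{D}_0')=1] \leq C_2/n^2 = o(1)$, as desired.

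The main technical step is really just Claim~\ref{claim:expdecay2}, which is assumed and stated in analogy with Claim~\ref{claim:expdecay}; I do not expect any genuine obstacle in the rest of the argument, which is purely a counting plus union-bound calculation. A minor subtlety to handle cleanly is that once $\calA^\ell$ finds a long-lived counter it stops, so the ``total number of counters created'' is at most $C_2/p$ even though the individual $\calA^\ell$'s may terminate early; this only helps the union bound and does not affect the calculation.
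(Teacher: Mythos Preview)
Your proposal is correct and follows essentially the same approach as the paper: a union bound over all counters combined with the exponential survival bound of Claim~\ref{claim:expdecay2} at $r=3\cdot 10^6\log n$. The paper splits the union bound into two stages (first over the $\frac{1}{p^2n}$ sub-algorithms, then over the $C_2\,pn$ counters within each), whereas you fold these into a single count of $C_2/p$ counters; both routes require (and use) the implicit assumption $\frac{1}{p^2n}\le n$, i.e.\ $p\ge 1/n$, which you make explicit.
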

\begin{proof}
   We separately consider the probability $\Pr[\calA^{\ell}(\boldsymbol{D}_0')=1]$, and then have $$\Pr[\calA_2(\boldsymbol{D}_0')=1] \leq \frac{1}{p^2n} \cdot \Pr[\calA^{\ell}(\boldsymbol{D}_0')=1]\leq n \cdot \Pr[\calA^{\ell}(\boldsymbol{D}_0')=1]. $$ by a union bound. 

   By Claim \ref{claim:expdecay2} and a union bound, we can conclude that :
    \[
    \Pr[\calA^{\ell}(\boldsymbol{D}_0')=1] \leq e^{-3\cdot 10^6\log n \cdot 10^{-6}} \cdot C_2 p n \leq C_2/n^2. 
    \]
    Hence, we have $\Pr[\calA_2(\boldsymbol{D}_0' )=1]=o(1)$ as desired. 
\end{proof}
\begin{lemma}
    $\Pr[\calA_2(\boldsymbol{D}_1')=0] \leq e^{-C_1} + 0.1$
\end{lemma}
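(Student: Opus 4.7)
The plan is to mirror Lemma \ref{lem:A11} but restrict attention to the subroutine $\calA^{\ell^\star}$ acting on the block $\mathcal{B}^{\ell^\star}$ that contains the needle $\alpha$: since $\calA_2$ outputs $0$ only when every $\calA^\ell$ outputs $0$, it suffices to bound $\Pr[\calA^{\ell^\star}(\boldsymbol{D}_1')=0]$. I would decompose this failure as $A\cup B$, where $A$ is the event that no group $i'\leq pn-3\cdot 10^6\log n$ satisfies $\alpha\in h_1^{\ell^\star}(i')$ (so no needle-tracking counter is ever allocated with enough remaining stream to reach the output threshold), and $B$ is the event that such a group $i^\star$ exists but the counter $(c_1{=}h_2^{\ell^\star}(\alpha),\cdot,\cdot)$ initialized at $i^\star$ subsequently fails the survival check $c_3\geq c_2/100$ at some round with $c_2\in(10000,3\cdot 10^6\log n]$.

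Bounding $\Pr[A]$ is a direct product calculation. Since $|h_1^{\ell^\star}(i')|=C_1 tp$ and $|\mathcal{B}^{\ell^\star}|=p^2nt$, we have $\Pr[\alpha\in h_1^{\ell^\star}(i')]=C_1/(pn)$ with mutually independent randomness across groups, yielding $\Pr[A]\leq (1-C_1/(pn))^{pn-3\cdot 10^6\log n}\leq e^{-C_1}+o(1)$, where the $o(1)$ uses $pn\gg 3\cdot 10^6\log n$; this holds in the streaming regime whenever $p\gg \log n/n$, which we may assume (the problem is degenerate otherwise).

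For $\Pr[B\mid\neg A]$, let $Z_i\in\{0,1\}$ indicate whether the needle-tracking counter's matching criterion is satisfied by at least one item in the $i$th group after $i^\star$. Its matching set $h_1^{\ell^\star}(i^\star)$ contains $\alpha$ by conditioning on $\neg A$, and $h_2^{\ell^\star}(\alpha)=c_1$ by construction, so the needle itself always satisfies both conditions; since the number of needle occurrences per group is Poisson with mean $p\cdot\mathbb{E}[N_i]=1/4$, we obtain $\Pr[Z_i=1]\geq 1-e^{-1/4}\geq 1/5$ independently across $i$ (the ``first matching item'' rule does not hurt, because the counter increments at most once per group but does increment as soon as any matching item appears). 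Thus $\mathbb{E}[\sum_{i\leq r} Z_i]\geq r/5$, whereas the survival check merely requires $\sum_{i\leq r} Z_i\geq r/100$, a factor-$20$ buffer; Hoeffding then gives $\Pr[\sum_{i\leq r} Z_i<r/100]\leq e^{-\Omega(r)}$, and a union bound over $r\in\{10001,\ldots,3\cdot 10^6\log n\}$ collapses to a geometric series dominated by its first term $e^{-\Omega(10001)}=o(1)$. Combining the two estimates yields $\Pr[\calA_2(\boldsymbol{D}_1')=0]\leq \Pr[A]+\Pr[B\mid\neg A]\leq e^{-C_1}+0.1$ for $n$ large enough.

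The main point to get right is the boundary condition $pn\gg 3\cdot 10^6\log n$, which both gives enough independent groups for the $\Pr[A]$ estimate and guarantees that the needle-tracking counter has sufficient remaining stream length to complete its lifespan target; once this is in place, the per-group increment probability stays a constant by the single observation that the needle itself is always a matching item, and the concentration step is then routine.
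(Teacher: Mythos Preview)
Your proposal is correct and follows essentially the same approach as the paper: bound $\Pr[\calA^{\ell^\star}(\boldsymbol{D}_1')=0]$ via the same $A\cup B$ decomposition, use the product bound for $\Pr[A]$, and use a per-group constant increment probability plus Hoeffding summed over rounds for $\Pr[B]$. Your version is in fact slightly more careful than the paper's (you handle the boundary condition $pn\gg\log n$ explicitly and use Poisson thinning to get $\Pr[Z_i=1]\geq 1-e^{-1/4}$ cleanly, whereas the paper obtains the weaker constant $0.05$ via a more roundabout estimate).
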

\begin{proof}
    First, without loss of generality, we assume the needle $\alpha$ lies in $\mathcal{B}^1$, the analysis for other cases is the same out of symmetry. 
    Similarly to Lemma \ref{lem:A11}, we first define two events: 
    \begin{enumerate}
        \item we define $A$ as the event that $\calA_2$ fails to begin tracking the needle, namely for any group $i$, $\alpha\notin h_1^1(i)$;
        \item we define $B$ as the counter of $\calA_2$ to track the potential needle fails to survive $15 \log n$ rounds. 
    \end{enumerate}
    From some calculations, we have
    \begin{align*}
        \Pr[A] \leq (1-\frac{C_1}{pn})^{pn} \leq e^{-C_1}. 
    \end{align*}
    It suffices to bound $\Pr[B]$.  Note that for every group $i$, the probability that at least one needle appears in $i$ is bigger than $$ \Pr[\text{group $i$'s size exceeds $\frac{1}{4p}$}]\cdot \bigg(1-(1-p)^{\frac{1}{4p}}\bigg)\geq (1-4p)^{\frac{1}{4p}} \cdot (1-e^{-1/4})\geq 0.05.$$
    
    Assume the counter for the needle is $(c_1,c_2,c_3)$, we  know that the expectation value of $c_3$ after $r$ rounds, the corresponding random variable denoted by $c_3^r$, is bigger than $$\mathbb{E}[c_3^r]\geq 0.05r.$$
    Together with Hoeffding Inequality, we know that the probability that \[\Pr[c_3^r \leq r/100] \leq e^{-r/1000}.\]
    Summing over $10000 \leq r \leq 3\cdot 10^6 \log n $, we know that 
    \[
    \Pr[B]\leq \sum_{r = 10000}^{ 3\cdot 10^6 \log  n } \Pr[c_3^r \leq r/100] \leq 0.1. 
    \]
    We then conclude the lemma by: 
    \[
    \Pr[A]+\Pr[B]\leq e^{-C_1} +0.1
    \]
\end{proof}

On the other hand, the probability that the number of data used by $\calA_2$ exceeds $n$ can be bounded by Markov's Inequality: $\leq 0.25$ since the expected number of used data is $\frac{1}{4p} \cdot pn = \frac{n}{4}$. Combining those facts together, we conclude that: 
\begin{align*}
    Err_{\calA_2}(\boldsymbol{D}_0,\boldsymbol{D}_1)&:= \Pr[\calA_2(\boldsymbol{D_0})=1]+ \Pr[\calA_2(\boldsymbol{D_1})=0] \\
    &\leq \Pr[\calA_2(\boldsymbol{D_0}')=1]/0.75+ \Pr[\calA_2(\boldsymbol{D_1}')=0]/0.75\\
    &\leq 0.2. 
\end{align*}
The last inequality holds for large enough constant $C_1$. Also, the probability that $\calA_2$'s memory exceeds $\frac{C'}{p^2n}$, where we define $C':=\max\{C_0',C_1'\}$, is at most $o(1)/0.75 = o(1)$.

\end{document}